\keywords{Lambda calculus, intersection types, call-by-value, call-by-need}
\theoremstyle{plain} 
\newcommand{\macrospath}{./macros}
\newcommand{\myproof}[1]{
\ifthenelse{\boolean{withproofs}}{#1}{}
}
	\renewcommand{\paragraph}[1]{\subparagraph*{#1}}
\newcommand{\terms}{\Lambda}
\newcommand{\la}[1]{\lambda #1.}
\newcommand{\tm}{t}
\newcommand{\tmtwo}{u}
\newcommand{\tmthree}{s}
\newcommand{\tmfour}{r}
\newcommand{\tmfive}{p}
\newcommand{\var}{x}
\newcommand{\vartwo}{y}
\newcommand{\varthree}{z}
\newcommand{\varfour}{w}
\newcommand{\vartwop}{y'}
\newcommand{\varthreep}{z'}
\newcommand{\dom}[1]{\symfont{dom}(#1)}
\newcommand{\rootRew}[1]{\mapsto_{#1}}
\newcommand{\Rew}[1]{\rightarrow_{#1}}
\newcommand{\Rewp}[1]{\rightarrow_{#1}^+}
\renewcommand{\to}{\Rew{}}
\newcommand{\lRew}[1]{\prescript{}{#1}{\leftarrow}}
\newcommand{\lto}{\lRew{}}
\newcommand{\tostrat}{\Rew{\symfont{str}}}
\renewcommand{\wn}[1]{\mbox{WN}_{#1}}
\newcommand{\sn}[1]{\mbox{SN}_{#1}}
\newcommand{\symfont}[1]{\mathtt{#1}}
\newcommand{\gcsym}{\symfont{gc}}
\newcommand{\gc}{\gcsym}
\newcommand{\gcvsym}{\gcsym\vsym}
\newcommand{\gcv}{\gcvsym}
\newcommand{\esym}{\symfont{e}}
\newcommand{\wsym}{\symfont{w}}
\newcommand{\msym}{\symfont{m}}
\newcommand{\vsym}{\symfont{v}}
\newcommand{\nsym}{\symfont{n}}
\newcommand{\cbv}{{CbV}\xspace}
\newcommand{\cbs}{{CbS}\xspace}
\newcommand{\cbsilly}{\cbs}
\newcommand{\val}{v}
\newcommand{\valtwo}{\val'}
\newcommand{\ctxholep}[1]{\langle #1\rangle}
\newcommand{\ctxhole}{\ctxholep{\cdot}}
\newsavebox{\@brx}
\newcommand{\llangle}[1][]{\savebox{\@brx}{\(\m@th{#1\langle}\)}%
  \mathopen{\copy\@brx\kern-0.7\wd\@brx\usebox{\@brx}}}
\newcommand{\rrangle}[1][]{\savebox{\@brx}{\(\m@th{#1\rangle}\)}%
  \mathclose{\copy\@brx\kern-0.7\wd\@brx\usebox{\@brx}}}
\newcommand{\ctxholefp}[1]{\llangle #1\rrangle}
\newcommand{\ctx}{C}
\newcommand{\ctxtwo}{\ctx'}
\newcommand{\ctxp}[1]{\ctx\ctxholep{#1}}
\newcommand{\ctxtwofp}[1]{\ctxtwo\ctxholefp{#1}}
\newcommand{\nbvctxtwo}[1]{\nbvctxtwo{#1}}
\newcommand{\sctx}{S}
\newcommand{\sctxtwo}{{\sctx'}}
\newcommand{\sctxp}[1]{\sctx\ctxholep{#1}}
\newcommand{\defeq}{:=}
\newcommand{\grameq}{::=}
\newcommand{\esub}[2]{[#1{\shortleftarrow}#2]}
\newcommand{\isub}[2]{\{#1{\shortleftarrow}#2\}}
\newcommand{\rtogc}{\rootRew{\gcsym}}
\newcommand{\rtogcv}{\rootRew{\gcv}}
\newcommand{\togcv}{\Rew{\gcv}}
\newcommand{\llbrace}{\{ \kern -0.27em \vert}
\newcommand{\rrbrace}{\vert \kern -0.27em \}}
\newcommand{\streq}{\equiv}
\renewcommand{\l}{\lambda}
\newcommand{\ie}{i.e.\xspace}
\newcommand{\eg}{e.g.\xspace}
\newcommand{\ih}{{\textit{i.h.}}\xspace}
\newcommand{\fv}[1]{\symfont{fv}(#1)}
\newcommand{\ofv}[1]{\symfont{shfv}(#1)}
\newcommand{\red}[1]{{\color{red} {#1}}}
\newcommand{\blue}[1]{{\color{blue} {#1}}}
\newcommand{\purple}[1]{{\color{purple} {#1}}}
\newcommand{\orange}[1]{{\color{orange} {#1}}}
\definecolor{dgreen}{rgb}{0.0, 0.5, 0.0}
\newcommand{\dgreen}[1]{{\color{dgreen} {#1}}}
\newcommand{\ignore}[1]{}
\newcommand{\colspace}{@{\hspace{.5cm}}}
\newcommand{\myinput}[1]{\ifthenelse{\boolean{withimages}}{\input{#1}}{}}
\newcommand{\inputproof}[1]{\ifthenelse{\boolean{withproofs}}{\input{#1}}{}}
\newcommand{\reflemma}[1]{Lemma~\ref{l:#1}}
\newcommand{\reflemmaeq}[1]{{L.\ref{l:#1}}}
\newcommand{\refthm}[1]{Th.~\ref{thm:#1}}
\newcommand{\refthmp}[2]{Th.~\ref{thm:#1}.\ref{p:#1-#2}}
\newcommand{\refprop}[1]{Prop.~\ref{prop:#1}}
\newcommand{\refpropp}[2]{Prop.~\ref{prop:#1}.\ref{p:#1-#2}} 
\newcommand{\refsect}[1]{Section~\ref{sect:#1}}
	\renewcommand{\refeq}[1]{(\ref{eq:#1})}
\newcommand{\reffig}[1]{Fig.~\ref{fig:#1}}
\newcommand{\refcoro}[1]{Corollary~\ref{coro:#1}}
\newcommand{\refdef}[1]{Def.~\ref{def:#1}}
\newcommand{\refex}[1]{Example~\ref{ex:#1}}
\newcommand{\ax}{\symfont{ax}}
\newcommand{\set}[1]{\{#1\}}
\newcommand{\nat}{\mathbb{N}}
\newcommand{\rtom}{\mapsto_\msym}
\newcommand{\tom}{\Rew{\msym}}
\newcommand{\toe}{\Rew{\esym}}
\newcommand{\tow}{\Rew{\wsym}}
\newcommand{\betav}{\beta_\val}
\newcommand{\tobv}{\Rew{\betav}}
\newcommand{\rtobv}{\rootRew{\betav}}
\newcommand{\cbn}{CbN\xspace}
\newcommand{\cbneed}{CbNeed\xspace}
\newcommand{\wctx}{W}
\newcommand{\wctxp}[1]{\wctx\ctxholep{#1}}
\newcommand{\wctxfp}[1]{\wctx\ctxholefp{#1}}
\newcommand{\wctxtwo}{\wctx'}
\newcommand{\wctxtwop}[1]{\wctxtwo\ctxholep{#1}}
\newcommand{\wctxthree}{\wctx''}
\newcommand{\wctxthreefp}[1]{\wctxthree\ctxholefp{#1}}
\newcommand{\wctxfour}{\wctx'''}
\newcommand{\wctxfourp}[1]{\wctxfour\ctxholep{#1}}
\newcommand{\wctxfourfp}[1]{\wctxfour\ctxholefp{#1}}
\newcommand{\size}[1]{|#1|}
\newcommand{\sizep}[2]{|#1|_{#2}}
\newcommand{\env}{e}
\newcommand{\envtwo}{\env'}
\newcommand{\stack}{\pi}
	\renewcommand{\state}{\symfont{s}}
	\newcommand{\state}{\symfont{Q}}
\newcommand{\statetwo}{\state'}
\newcommand{\statethree}{\state''}
\newcommand{\statefour}{\state'''}
\newcommand{\decode}[1]{\underline{#1}}
\newcommand{\decodep}[2]{\decode{#1}\ctxholep{#2}}
\newcommand{\cons}{::}
\newcommand{\deriv}{\ensuremath{d}}
\newcommand{\derivtwo}{\ensuremath{e}}
\newcommand{\derivthree}{\ensuremath{f}}
\newcommand{\derivfour}{\ensuremath{g}}
\newcommand{\renamenop}[1]{#1^\alpha}
\newcommand{\mach}{\symfont{M}}
\newcommand{\tomachhole}[1]{\leadsto_{#1}}
\newcommand{\tomach}{\tomachhole{}}
\newcommand{\tomachm}{\tomachhole{\msym}}
\newcommand{\tomache}{\tomachhole{\esym}}
\newcommand{\seasym}{\symfont{sea}}
\newcommand{\tomachsea}{\tomachhole{\seasym}}
\newcommand{\tomachseaone}{\tomachhole{\seasym_1}}
\newcommand{\tomachseatwo}{\tomachhole{\seasym_2}}
\newcommand{\sizepr}[1]{\sizep{#1}{\msym,\esym}}
\newcommand{\withproofs}[1]{\ifthenelse{\boolean{withproofs}}{#1}{}}
\newcommand{\withoutproofs}[1]{\ifthenelse{\boolean{withproofs}}{}{#1}}
\newcounter{numberone}
\newcommand{\med}[2]{
($(#1)!.5!(#2)$)
}
\newcommand{\actx}{A}
\newcommand{\actxp}[1]{\actx\ctxholep{#1}}
\newcommand{\actxtwo}{\actx'}
\newcommand{\mamst}[3]{(#1,#2,#3)}
\newcommand{\glamst}[4]{(#1,#2,#3,#4)}
\newcommand{\itm}{i}
\newcommand{\itmtwo}{\itm'}
\newcommand{\itmthree}{\itm''}
\newcommand{\ccbv}{Closed \cbv}
\newcommand{\ocbv}{Open \cbv}
\newcommand{\tmp}{\tm'}
\newcommand{\mult}{\mathsf{m}} 
\newcommand{\typctx}{\Gamma}
\newcommand{\typctxtwo}{\Delta}
\newcommand{\type}{A}
\newcommand{\typetwo}{B}
\newcommand{\ltype}{L}
\newcommand{\ltypetwo}{\ltype'}
\newcommand{\tderiv}{\pi}
\newcommand{\tderivtwo}{\rho}
\newcommand{\tderivthree}{\sigma}
\newcommand{\sem}[1]{[\![#1]\!]}
\newcommand{\nctx}{N}
\newcommand{\nctxtwo}{\nctx'}
\newcommand{\nctxp}[1]{\nctx\ctxholep{#1}}
\newcommand{\Id}{\symfont{I}}
\newcommand{\mtype}{M}
\newcommand{\mtypetwo}{N}
\newcommand{\mtypethree}{O}
\newcommand{\ttype}{T}
\newcommand{\ttypetwo}{\ttype'}
\newcommand{\hastype}{\,{:}\,}
\newcommand{\tarrow}[2]{#1 \rightarrow #2}
\newcommand{\iI}{{i \in I}}
\renewcommand{\mult}[1]{[ #1 ] }
\newcommand{\exder}{%
  \def\exderW[##1]{\triangleright_{##1}\ }%
  \def\exderWO{\triangleright\ }%
  \@ifnextchar[\exderW\exderWO%
  }
\renewcommand{\exder}{\,\triangleright}
\newcommand\Deribbase[5]{{#3}\ {\pmb\vdash}_{#2}^{#1} {#4}\  {:}\  {#5}}
\newcommand{\Deribase}[1]{%
  \def\DeribW[##1]{\Deribbase{##1}{#1}}%
  \def\DeribWO{\Deribbase{}{#1}}%
  \@ifnextchar[\DeribW\DeribWO%
  }
  \newcommand{\Deri}{%
  \def\DeriW_##1{\Deribase{##1}}%
  \def\DeriWO{\Deribase{}}%
  \@ifnextchar_\DeriW\DeriWO%
  }
\newcommand\mplus{\uplus}
\newcommand{\app}{\symfont{app}}
\newcommand{\appresult}{%
  \def\appresultW<##1>{\app_\result^{##1}}%
  \def\appresultWO{\app_\result}%
  \@ifnextchar<\appresultW\appresultWO%
  }
\newcommand{\sm}{\setminus \!\!\! \setminus}
\newcommand\mydots{\hbox to .6em{.\hss.}}
\newcommand{\mset}[1]{[#1]}
\newcommand{\zero}{\blue{\mathbf{0}}}
\newcommand{\emptymset}{\zero}
\newcommand{\ntm}{n}
\newcommand{\ntmtwo}{\ntm'}
\newcommand{\vctx}{V}
\newcommand{\vctxp}[1]{\vctx\ctxholep{#1}}
\newcommand{\letin}[3]{{\sf let}\ #1=#2\ {\sf in}\ #3}
\newcommand{\lctx}{L}
\renewcommand{\lctx}{S}
\newcommand{\lctxtwo}{\lctx'}
\newcommand{\lctxp}[1]{\lctx\ctxholep{#1}}
\newcommand{\lctxtwop}[1]{\lctxtwo\ctxholep{#1}}
\newcommand{\cameratech}[2]{\ifthenelse{\boolean{techr}}{#2}{#1}}
\newcommand{\octx}{W}
\newcommand{\octxfp}[1]{\octx\ctxholefp{#1}}
\newcommand{\ans}{a}
\newcommand{\anstwo}{\ans'}
\newcommand{\ansthree}{\ans''}
\newcommand{\sans}{\ans_s}
\newcommand{\sanstwo}{\sans'}
\newcommand{\rtosigthree}{\rootRew{\sigma_3}}
\newcommand{\sisym}{\symfont{y}}
\newcommand{\tosip}[1]{\Rew{\sisym{#1}}}
\newcommand{\tosi}{\Rew{\sisym}}
\newcommand{\tosinotgcv}{\tosip{\neg\gcv}}
\newcommand{\sieaasym}{\sisym\esym_{\symfont{AY}}}
\newcommand{\sieynsym}{\sisym\esym_{\symfont{YN}}}
\newcommand{\tosieaa}{\Rew{\sieaasym}}
\newcommand{\tosieyn}{\Rew{\sieynsym}}
\newcommand{\tosim}{\tosip\msym}
\newcommand{\tosigcv}{\tosip\gcv}
\newcommand{\atype}{\mathtt{n}}
\newcommand{\arrowtype}[2]{#1 \rightarrow #2}
\newcommand{\ruleAp}{@}
\newcommand{\ruleLam}{\l}
\newcommand{\ruleAx}{\mathsf{ax}}
\newcommand{\ruleAxLam}{\mathsf{ax}_\l}
\newcommand{\ruleES}{\mathsf{ES}}
\newcommand{\ruleMany}{\mathsf{many}}
\newcommand{\esymp}[1]{\esym_{#1}}
\newcommand{\evsymp}[1]{\esym\vsym_{#1}}
\newcommand{\rtoep}[1]{\rootRew{\esymp{#1}}}
\newcommand{\rtoevp}[1]{\rootRew{\evsymp{#1}}}
\newcommand{\msteps}{m}
\newcommand{\mstepstwo}{\msteps'}
\newcommand{\esteps}{e}
\newcommand{\estepstwo}{\esteps'}
\newcommand{\normal}{\atype}
\newcommand{\tyjp}[4]{{#3} \vdash^{#1} #2 \hastype #4}
\newcommand\bigmplus{\mplus}
\newcommand{\sictx}{Y}
\newcommand{\sictxtwo}{\sictx'}
\newcommand{\sictxp}[1]{\sictx\ctxholep{#1}}
\newcommand{\sictxtwop}[1]{\sictxtwo\ctxholep{#1}}
\newcommand{\sictxfp}[1]{\sictx\ctxholefp{#1}}
\newcommand{\sictxtwofp}[1]{\sictxtwo\ctxholefp{#1}}
\newcommand{\cbvaluesym}{\symfont{value}}
\newcommand{\cbsillysym}{\symfont{silly}}
\newcommand{\equivcp}[1]{\simeq^{\mathrm{ctx}}_{#1}}
\newcommand{\eqcvalue}{\equivcp\cbvaluesym}
\newcommand{\eqcsilly}{\equivcp\cbsillysym}
\newcommand{\ctxeq}{\equivcp{}}
\definecolor{LightGray}{gray}{.80}
\definecolor{DarkGray}{gray}{.60}
\newcommand{\vdashp}[2]{\vdash^{(#1,#2)}}
\newcommand{\wesym}{\wsym\esym}
\newcommand{\towm}{\Rew{\wsym\msym}}
\newcommand{\towe}{\Rew{\wesym}}
\newcommand{\towev}{\Rew{\wesym\vsym}}
\newcommand{\towgc}{\Rew{\wsym\gc}}
\newcommand{\towgcv}{\Rew{\wsym\gcv}}
\newcommand{\nesym}{\nsym\esym}
\newcommand{\tonm}{\Rew{\nsym\msym}}
\newcommand{\tone}{\Rew{\nesym}}
\newcommand{\tongc}{\Rew{\nsym\gc}}
\newcommand{\toname}{\Rew{\nsym}}
\newcommand{\tonotgcv}{\Rew{\neg\gcv}}
\newcommand{\townotgcv}{\Rew{\wsym\neg\gcv}}
\newcommand{\Rule}{\mathsf{r}}
\newcommand{\dgray}[1]{{\color{darkgray} {#1}}}
\newcommand{\wgcv}{\wsym\gcv}
\newcommand{\ntype}{\red{\symfont{n}}}
\newcommand{\tonnotgc}{\Rew{\nsym\neg\gcsym}}
\newcommand{\techrep}[1]{\ifthenelse{\boolean{withproofs}}{#1}{}}
\newcommand{\camerar}[1]{\ifthenelse{\boolean{withproofs}}{}{#1}}
\newcommand{\emptylist}{\epsilon}
\newcommand{\States}{\symfont{States}}
\newcommand{\compilrelsym}{\triangleleft}
\newcommand{\compilrel}[2]{#1\compilrelsym#2}
\newcommand{\fourstate}[4]{(#1,#2,#3,#4)}
\newcommand{\pans}{\symfont{C}}
\newcommand{\prsym}{\symfont{pr}}
\newcommand{\tomachpr}{\tomachhole{\prsym}}
\newcommand{\esterms}{\Lambda_{\symfont{es}}}
\newcommand{\run}{r}
\newcommand{\runtwo}{\run'}
\newcommand{\runthree}{\run''}
\renewcommand{\pans}{\symfont{A}}
\renewcommand{\stack}{\symfont{S}}
\renewcommand{\env}{\symfont{E}}
\newcommand{\eqtype}{\equiv^{\mathrm{typ}}}
\newcommand{\stred}{\Rrightarrow}
\newcommand{\stredp}[1]{\Rrightarrow_{#1}}
\newcommand{\stredapp}{\stred_{@l}}
\newcommand{\stredans}{\stred_{\symfont{ans}}}
\newcommand{\asym}{\symfont{a}}
\newcommand{\bsym}{\symfont{b}}
\newcommand{\myparagraph}[1]{\subsection*{#1}}
\newcommand{\nfo}[1]{\symfont{nf}_{\symfont{sea}}(#1)}
\def\checkmark{\tikz\fill[scale=0.4](0,.35) -- (.25,0) -- (1,.7) -- (.25,.15) -- cycle;}
\tikzset{
node distance=1.3cm, auto,
every node/.style={font=\scriptsize },
ocenter/.style={baseline={([yshift=-.5ex, xshift=-.5ex]current bounding box)}},  
labelBeginAbove/.style={postaction={decorate,decoration={markings,mark=at position 0 with {\node[inner sep= 0.6pt, above=1pt]{\tiny #1};}} } },
labelBeginBelow/.style={postaction={decorate,decoration={markings,mark=at position 0 with {\node[inner sep= 0.6pt, below=1pt]{\tiny #1};}}}},
labelEndAbove/.style={postaction={decorate,decoration={markings,mark=at position 1 with {\node[inner sep= 0.6pt, above=1pt]{\tiny #1};}}}},
labelEndBelow/.style={postaction={decorate,decoration={markings,mark=at position 1 with {\node[inner sep= 0.6pt, below=1pt]{\tiny #1};}}}},
labelEndRight/.style={postaction={decorate,decoration={markings,mark=at position 1 with {\node[inner sep= 0.6pt, right=1pt]{\tiny #1};}}}},
labelEndLeft/.style={postaction={decorate,decoration={markings,mark=at position 1 with {\node[inner sep= 0.6pt, left=1pt]{\tiny #1};}}}}
}
\renewcommand{\atype}{\red{\mathtt{norm}}}
\renewcommand{\arrowtype}[2]{#1 \red\rightarrow #2}
\renewcommand{\ltype}{\red{L}}
\renewcommand{\ltypetwo}{\ltype\red{'}}
\renewcommand{\mtype}{\blue{M}}
\renewcommand{\mtypetwo}{\blue{N}}
\renewcommand{\mtypethree}{\blue{O}}
\renewcommand{\ttype}{\purple{T}}
\renewcommand{\ttypetwo}{\ttype\purple{'}}
\renewcommand{\typctx}{\blue{\Gamma}}
\renewcommand{\typctxtwo}{\blue{\Delta}}
\renewcommand{\mset}[1]{\blue{[}#1\blue{]}}
\renewcommand{\tarrow}[2]{#1 \,\red{\rightarrow}\, #2}
\renewcommand{\mult}[1]{\mset{#1}}
\begin{document}

\title{Mirroring Call-by-Need, or Values Acting Silly}

%
\author[B. Accattoli]{Beniamino Accattoli\lmcsorcid{0000-0003-4944-9944}}[a]
\author[A. Lancelot]{Adrienne Lancelot\lmcsorcid{0009-0009-5481-5719}}[b]

\address{Inria \& LIX, Ecole Polytechnique, UMR 7161, France}	
\email{beniamino.accattoli@inria.fr}  

\address{Inria \& LIX, Ecole Polytechnique, UMR 7161, France\\ \and Université Paris Cité, CNRS, IRIF, F-75013, Paris, France}	
\email{adrienne.lancelot@inria.fr}  





\begin{abstract}
  \noindent 	
Call-by-need evaluation for the $\lambda$-calculus can be seen as merging the best of  call-by-name and call-by-value, namely the wise erasing behaviour of the former and the wise duplicating behaviour of the latter. To better understand  how duplication and erasure can be combined, we design a degenerated calculus, dubbed \emph{call-by-silly}, that is symmetric to call-by-need in that it merges the worst of call-by-name and call-by-value, namely  silly duplications by-name and silly erasures by-value. 

We validate the design of the call-by-silly calculus via rewriting properties and multi types. In particular, we mirror the main theorem about call-by-need---that is, its operational equivalence with call-by-name---showing that call-by-silly and call-by-value induce the same contextual equivalence. This fact shows the blindness with respect to efficiency of call-by-value contextual equivalence. We also define a call-by-silly \emph{strategy} and a call-by-silly abstract machine implementing the strategy. Moreover, we measure the number of steps taken by the strategy via tight multi types. Lastly, we prove that the call-by-silly strategy computes evaluation sequences of maximal length in the calculus.

\end{abstract}

\maketitle

\section{Introduction}
\label{sect:intro}
Plotkin's call-by-value and Wadsworth's call-by-need (untyped) $\l$-calculi were introduced in the 1970s as more application-oriented variants of the ordinary call-by-name $\l$-calculus \cite{Wad:SemPra:71,DBLP:journals/tcs/Plotkin75}. The simpler call-by-value (shortened to \cbv) calculus has found a logical foundation in formalisms related to classical logic or linear logic \cite{DBLP:conf/icfp/CurienH00,phdlaurent,DBLP:journals/lisp/Levy06,DBLP:conf/ppdp/EhrhardG16}. 

The foundation of call-by-need (\cbneed) is less developed, particularly its logical interpretation.  The duality related to classical logic can accommodate \cbneed, as shown by Ariola et al. \cite{DBLP:conf/tlca/AriolaHS11,DBLP:conf/flops/AriolaDHNS12}, but it does not provide an explanation for it. Within linear logic, \cbneed is understood as a sort of \emph{affine} \cbv, according to Maraist et al. \cite{DBLP:journals/tcs/MaraistOTW99}. Such an interpretation however is unusual, because it does not match exactly with cut-elimination in linear logic, as for call-by-name (\cbn)~and~\cbv, and it is rather connected with \emph{affine logic}.

\myparagraph{\cbneed Optimizes \cbn} The main foundational theorem for \cbneed is the \emph{operational equivalence of \cbn and \cbneed} by Ariola et al. \cite{DBLP:conf/popl/AriolaFMOW95}, that is, the fact that they induce the same contextual equivalence, despite being based on different evaluation mechanisms. The result formalizes that \cbneed is a semantic-preserving optimization of \cbn. 

Kesner gives an elegant semantic proof of this result \cite{DBLP:conf/fossacs/Kesner16}. She shows that the \cbn multi type system by de Carvalho \cite{Carvalho07,DBLP:journals/mscs/Carvalho18} (considered before his seminal work also by Gardner \cite{DBLP:conf/tacs/Gardner94} and Kfoury \cite{DBLP:journals/logcom/Kfoury00}), which characterizes termination for \cbn (that is, $\tm$ is typable $\Leftrightarrow$ $\tm$ is \cbn terminating), characterizes termination also for \cbneed. Multi types  are also known as \emph{non-idempotent intersection types}, and 
have strong ties with linear logic.

\myparagraph{Duplication and Erasure, Wise and Silly.} The linear logic interpretation of \cbn and \cbv  provides some operational insights about \cbneed. By bringing to the fore duplication and erasure, linear logic underlines a second symmetry between \cbn and \cbv, which is independent of classical logic as it can already be observed in an intuitionistic setting. The idea is that \cbn is \emph{wise} with respect to erasure, because it never evaluates arguments that might be erased (for instance $(\la\var\Id)\Omega \Rew{\cbn} \Id$, where $\Id$ is the identity and $\Omega$ is the looping combinator), while it is \emph{silly} with respect to duplications, as it repeats possibly many times the evaluation of arguments that are used at least once (for instance, $(\la\var\var\var)(\Id\Id) \Rew{\cbn} \Id\Id(\Id\Id)$). Symmetrically, \cbv is silly with respect to erasures, as it reduces in arguments that are not going to be used (for instance looping on $(\la\var\Id)\Omega$), but it is wise for duplications, as it reduces only once arguments that shall be used at least once (\eg $(\la\var\var\var)(\Id\Id) \Rew{\cbv} (\la\var\var\var)\Id$).

\myparagraph{\cbneed is Pure Wiseness.} In this framework, one can see \cbneed as merging the best of \cbn and \cbv, i.e. wise erasure and wise duplication. In \cite{DBLP:conf/esop/AccattoliGL19}, this insight  guides Accattoli et al. in extending Kesner's above-mentioned study of \cbneed via  multi types  \cite{DBLP:conf/fossacs/Kesner16}. Starting from existing multi type systems for \cbn and \cbv, they pick out the aspects for \cbn wise erasures and \cbv wise duplication as to build a \cbneed multi type system. Their system characterizes \cbneed termination, as does the \cbn system in \cite{DBLP:conf/fossacs/Kesner16}. Additionally, it tightly characterizes \cbneed evaluation \emph{quantitatively}:  from type derivations  one can read the \emph{exact} number of \cbneed evaluation steps, which is not possible in the \cbn system.

\myparagraph{Pure Silliness.} This paper studies an unusual---and at first counter-intuitive---combination of duplication and erasure. The idea is to mix together silly erasure and silly duplication, completing the following diagram of strategies with its new \emph{call-by-silly} (\cbs) corner:
\begin{equation}
\begin{tikzpicture}[ocenter]
\node at (0,0) [anchor=center](silly){\large$\orange{CbS}$};
\node at (silly.center) [left =  25pt](sillyer){\small \purple{Silly erasure}};
\node at (silly.center) [ right =  25pt](sillydup){\small \purple{Silly duplication}};

\node at (silly.center) [below left = 35pt and 35pt, anchor=center](name){\large$\orange{CbN}$};
\node at (name.center) [above left = -1pt and 18pt](nameer){\small \dgreen{Wise erasure}};
\node at (name.center) [below left = -3pt and 18pt](namedup){\small \purple{Silly duplication}};

\node at (silly.center) [below right = 35pt and 35pt, anchor=center](value){\large$\orange{CbV}$};
\node at (value.center) [below right = -2pt and 18pt](valueer){\small \purple{Silly erasure}};
\node at (value.center) [above right = -4pt and 18pt](valuedup){\small \dgreen{Wise duplication}};

\node at (silly.center) [below = 70pt, anchor=center](need){\large$\orange{CbNeed}$};
\node at (need.center) [right =  25pt](needer){\small \dgreen{Wise erasure}};
\node at (need.center) [ below left =  -7pt and 25pt](needdup){\small \dgreen{Wise duplication}};

\draw[shorten <=2pt, shorten >=2pt, draw=gray, line width=0.3ex, <->](silly)to node[above right =-6pt and 1pt]{\small \dgray{Duplication}}(value);
\draw[shorten <=2pt, shorten >=2pt, draw=gray, line width=0.3ex, <->](silly)to node[above left =-4pt and 1pt]{\small \dgray{Erasure}}(name);
\draw[shorten <=2pt, shorten >=2pt, draw=gray, line width=0.3ex, <->](name)to node[below left=-6pt and 1pt]{\small \dgray{Duplication}}(need);
\draw[shorten <=2pt, shorten >=2pt, draw=gray, line width=0.3ex, <->](value)to node[below right=-6pt and 1pt]{\small \dgray{Erasure}}(need);
\end{tikzpicture}
\label{eq:strategies-diagram}
\end{equation}
Designing the \cbsilly calculus and a \cbs evaluation strategy is of no interest for programming purposes, as \cbs is desperately inefficient, by construction. It is theoretically relevant, however, because it showcases how  modularly duplication and erasure can be combined. As we shall discuss, the study of \cbs also contributes to the understanding of \cbv.

\myparagraph{Quantitative Goal and Micro Steps.} We introduce \cbs and provide evidence of its good design via contributions that mirror as much as possible the theory of \cbneed. Our ultimate test is to mirror the quantitative study of \cbneed by Accattoli et al. \cite{DBLP:conf/esop/AccattoliGL19}, via a new system of multi types for \cbs. It is a challenging design goal because this kind of quantitative results requires a \emph{perfect matching} between the operational and the multi types semantics.

As we explain at the end of \refsect{tight}, such a goal does not seem to be cleanly achievable within the $\l$-calculus, where duplication is \emph{small-step}, that is, arguments are substituted on \emph{all} the occurrences of a variable at once. It is instead attainable in a \emph{micro-step} setting with explicit substitutions, where one copy at a time is performed, or, similarly, when studying an abstract machine. Moreover, in the $\l$-calculus duplication and erasure are hard to disentangle as they are both handled by $\beta$-reduction.

\myparagraph{Our Framework.} A good setting for our purpose is Accattoli and Kesner's \emph{linear substitution calculus} (LSC) \cite{DBLP:conf/rta/Accattoli12,DBLP:conf/popl/AccattoliBKL14}, a calculus with explicit substitutions which is a variation over a calculus by Milner \cite{DBLP:journals/entcs/Milner07,KesnerOConchuir} exploiting ideas from another work of theirs \cite{DBLP:conf/csl/AccattoliK10}. A key feature of the LSC is that it has \emph{separate} rules for duplication and erasure, and \emph{just one rule for each}. 

The LSC comes in two dialects, \cbn and \cbv, designed over linear logic proof nets \cite{DBLP:conf/ictac/Accattoli18,DBLP:journals/tcs/Accattoli15}. The further \cbneed LSC is obtained by taking the duplication rule of the \cbv LSC and the erasing rule of the \cbn LSC. It was first defined by Accattoli et al. \cite{DBLP:conf/icfp/AccattoliBM14} and then studied or extended by many recent works on \cbneed \cite{DBLP:conf/fossacs/Kesner16,DBLP:journals/pacmpl/BalabonskiBBK17,DBLP:conf/ppdp/AccattoliB17,DBLP:conf/aplas/AccattoliB17,DBLP:conf/fossacs/KesnerRV18,DBLP:conf/ppdp/BarenbaumBM18,DBLP:conf/esop/AccattoliGL19,DBLP:conf/fossacs/KesnerPV21,DBLP:conf/fscd/BalabonskiLM21,DBLP:conf/csl/AccattoliL22}.

Our starting point is the principled definition of the new \emph{silly (linear) substitution calculus} (SSC), the variant of the LSC obtained by mirroring the construction for \cbneed, namely by putting together the \cbn rule for duplication  and the \cbv one for erasure. In this way,  all the corners of the strategy diagram \refeq{strategies-diagram} fit into the same framework. Since \cbneed is usually studied with respect to weak evaluation (that is, not under abstraction), we only consider weak evaluation for the SSC and leave the strong, unrestricted case to future work.

\myparagraph{Contribution 1: Rewriting Properties.} After defining the SSC, we prove some of the rewriting properties that are expected from every well-behaved calculus. Namely, we provide a characterization of its normal forms, that reduction is confluent, and that---as it is the case for all dialects of the LSC---the \cbv erasing rule can be postponed. 

Via the multi types of the next contribution, we also prove \emph{uniform normalization} for the SSC, that is, the fact that a term is weakly normalizing (\ie it reduces to a  normal form) if and only if it is strongly normalizing (\ie it has no diverging reductions). This is trivially true in (weak) \cbv, where diverging sub-terms cannot be erased and terms with redexes cannot be duplicated, and false in \cbn and \cbneed, where diverging sub-terms can be erased. In the SSC, it is true but non-trivial, because terms with redexes can be duplicated. 

\myparagraph{Contribution 2: Operational Equivalence.} Next, we develop the mirrored image of the main qualitative result for \cbneed, that is, we prove the operational equivalence of \cbs and \cbv. We do so by mirroring Kesner's semantic proof via multi types \cite{DBLP:conf/fossacs/Kesner16}. 

To this purpose, we cannot use the existing multi type system for \cbv, due to Ehrhard \cite{DBLP:conf/csl/Ehrhard12}, as Kesner's proof is based on a system for the \emph{slower} of the two systems, in her case \cbn. Therefore, we introduce a silly multi type system, designed in a dual way to the one for \cbneed by Accattoli et al. \cite{DBLP:conf/esop/AccattoliGL19}. 

Our system is a minor variant over a system for \cbn strong normalization by Kesner and Ventura \cite{DBLP:conf/ifipTCS/KesnerV14}, and it also bears strong similarities with a system for \cbv by Manzonetto et al. \cite{DBLP:journals/fuin/ManzonettoPR19,DBLP:conf/fscd/KerinecMR21}. We  prove that it characterizes termination for closed terms in both \cbv and \cbs, from which it follows the coincidence of their respective contextual equivalences. This result shows that \cbv is a semantic-preserving optimization of \cbs, exactly as \cbneed is for \cbn.

Additionally, we show that the relational semantics induced by multi types, where the interpretation $\sem\tm$ of a term is the set of multi type judgements that can be derived for $\tm$, is a denotational model of the SSC, by proving that its kernel relation is an equational theory of the SSC (\refthm{silly-conversion-is-included-in-eqcsilly}).

\myparagraph{\cbv Contextual Equivalence is Blind to Efficiency.} 
Contextual equivalence is usually considered as \emph{the} notion of program equivalence. 
This paper points out a fundamental limitation of contextual equivalence for the effect-free untyped \cbv $\l$-calculus. In absence of effects, contextual equivalence does distinguish between \cbv and \cbn because of their different erasing policies (as they change the notion of termination) but it is unable to distinguish between their duplication policies (which only impact the efficiency of evaluation), because it is \emph{blind to efficiency}: it cannot 'count' how many times a sub-term is evaluated. 

It is well known that, in richer frameworks with effects, \cbv contextual equivalence can 'count'. In presence of state, indeed, more terms are discriminated, as doing once \emph{or} twice the same operation can now change the content of a state and contexts are expressive enough to generate and distinguish these state changes \cite{DBLP:conf/birthday/StovringL09,DBLP:conf/fossacs/BiernackiLP19}. But it is instead not well known, or not fully digested, that in the pure setting this is not possible---thus that \cbv contextual equivalence validates silly duplications---as we have repeatedly noted in discussions with surprised colleagues. 

To the best of our knowledge, such a blindness to efficiency has not been properly established before. Proving contextual equivalence is technically difficult, often leading to hand-waving arguments. Moreover, figuring out a general characterization of silly duplications is tricky. We provide here an easy way out to prove \cbv contextual equivalence for terms related by silly duplications: it is enough to reduce them to the same normal form in the SSC, a calculus qualitatively equivalent to \cbv but based on silly duplications.

Note also that the equivalence of \cbn and \cbneed is interpreted positively (despite implying that \cbneed contextual equivalence is blind to efficiency) because it gives a foundation to \cbneed. Curiously, the mirrored equivalence of \cbv and \cbs crystallizes instead a  negative fact about pure \cbv, since \cbv is  expected to be efficiency-sensitive.


\myparagraph{Intermission: Difference with the Conference Version.} After the results about contextual equivalence, the conference version of this paper introduced a \cbs evaluation strategy and proved some of its properties. In this paper, we shall first detour via a \cbs abstract machine, and only after that we shall introduce the \cbs strategy. This is done both to extend the content of the previous paper with new material and to factorize---and hopefully make more accessible---the concepts for the strategy. The definition of the strategy, indeed, is quite involved while the abstract machine is somewhat simpler, despite being a more fine-grained setting.

\myparagraph{Contribution 3: The \cbs Abstract Machine.} We extend the simplest abstract machine for \cbn, namely the Milner abstract machine or MAM, as to also evaluate shared sub-terms when they are \emph{no longer needed}, \ie, after all the needed copies have already been evaluated. The obtained Silly MAM is operationally very intuitive: when the MAM would stop, it just jumps inside the first entry on the environment and recursively start evaluating the content. This mirror nicely what \cbneed machines do: they jump inside the environment the first time that an entry is needed. The study of the machine follows the distillation technique by Accattoli et al. \cite{DBLP:conf/icfp/AccattoliBM14}. The machine is proved to terminate on a term $\tm$ exactly whenever the calculus terminates on $\tm$.

\myparagraph{Contribution 4: The \cbs Strategy.} After the machine, we specify a tricky notion of \cbs \emph{strategy}. The delicate point is the definition of silly evaluation context. The \cbs strategy is  an extension of the \cbn strategy, on non-erasing steps.

We then prove that the \cbs strategy is (essentially) deterministic (precisely, it is \emph{diamond}, a weakened form of determinism), and that the \cbv erasing rule of the strategy can be postponed. These usually simple results have in our case simple but lengthy proofs, because of the tricky grammar defining \cbs evaluation contexts. We also prove that the Silly MAM implements the \cbs strategy.

\myparagraph{Contribution 5: Tight Types and (Maximal) Evaluation Lengths.} Lastly, we show that the quantitative result for \cbneed by Accattoli et al. \cite{DBLP:conf/esop/AccattoliGL19} is also mirrored by our silly type system. We mimic \cite{DBLP:conf/esop/AccattoliGL19} and we extract the exact length of evaluations for the \cbs strategy from a notion of \emph{tight} type derivation. 

We then use this quantitative result to show that the \cbs strategy actually computes a \emph{maximal} evaluation sequence in the weak SSC. This is not so surprising, given that maximal evaluations and strong normalization are related and measured with similar systems of multi types, as in Bernadet and Lengrand \cite{DBLP:conf/fossacs/BernadetL11}, Kesner and Ventura \cite{DBLP:conf/ifipTCS/KesnerV14}, and Accattoli et al. \cite{DBLP:journals/jfp/AccattoliGK20}. 

At the same time, however, the maximality of \cbs is specific to weak evaluation and different from those results in the literature, which concern strong evaluation (that is, also under abstraction). In particular, \cbs would \emph{not} be maximal in a strong setting (not treated here), as it would erase a value before evaluating it under abstraction (as \cbv also does). This is precisely the difference between our type system and Kesner and Ventura's.


\myparagraph{Related Work.} The only work that might be vaguely reminiscent of ours is by Ariola et al. \cite{DBLP:conf/tlca/AriolaHS11,DBLP:conf/flops/AriolaDHNS12}, who study \cbneed with respect to the classical duality between \cbn and \cbv and control operators. Their work and ours however are orthogonal and incomparable: they do not obtain inefficient strategies and we do not deal with control operators.

\myparagraph{Journal Version and Proofs} This paper is the journal version of the FSCD 2024 conference paper with the same title. It adds some explanations about the variants of the LSC, some results about equational theories, the study of the abstract machine, and some proofs. A few proofs are particularly long and tedious so they are still omitted or partially omitted, but can be found on Arxiv in the technical report \cite{accattoli2024mirroring} associated to the conference paper, which is accessible as version two ("v2", see the bibliography entry for the link) of the present paper on Arxiv.
\section{The Weak Silly Substitution Calculus}
\label{sect:ssc}
\begin{figure}[t!]
\centering
		$\begin{array}{c}
			\!\!\!\!\!\!\!\!\!
			\arraycolsep=3pt
			\begin{array}{c}
				\textsc{GRAMMARS}	
				\\			
				\begin{array}{r@{\hspace{0.3cm}}rll}
					\textsc{Terms} &  \tm,\tmtwo,\tmthree & \grameq& \var \mid \la\var\tm \mid \tm\tmtwo \mid \tm\esub\var\tmtwo
					\\
					\textsc{Values} & \val,\valtwo & \grameq & \la\var\tm
					  \\[4pt]
					\textsc{Substitution contexts} & \sctx,\sctxtwo & \grameq & \ctxhole \mid \sctx\esub\var\tmtwo
					\\
					\textsc{Weak contexts} & \wctx & \grameq & \ctxhole \mid \wctx\tm \mid \tm\wctx \mid \tm\esub\var\wctx \mid \wctx\esub\var\tmtwo
				\end{array}
				\end{array}
			\\[25pt]
			\hline
			\\[-10pt]
			\begin{array}{c|c}
			\textsc{ROOT RULES} 
			\\[4pt]
			\textsc{Multiplicative / Explicit $\beta$}
			\\
			\begin{array}{rll}
			\sctxp{\la\var\tm}\tmtwo & \rtom & \sctxp{\tm\esub\var\tmtwo}
			\end{array}
			&
			\begin{array}{rll}
			\towm  &\defeq&  \wctxp\rtom
			\end{array}
			\\[4pt]
			\textsc{Exponential / Duplication}
			\\
			\begin{array}{rlll}
			\textsc{By name} & \wctxfp\var\esub\var{\tmtwo} & \rtoep\wctx & \wctxfp\tmtwo\esub\var{\tmtwo}
			\\
			\textsc{By value} & \wctxfp\var\esub\var{\sctxp\val} & \rtoevp\wctx & \sctxp{\wctxfp\val\esub\var{\val}}
			\end{array}
			&
			\begin{array}{rll}
			\towe  &\defeq&  \wctxp{\rtoep\wctx}
			\\
			\towev  &\defeq&  \wctxp{\rtoevp\wctx}
			\end{array}
			\\[12pt]
			\textsc{Garbage collection (GC) / erasure}
			\\
			\begin{array}{rllll}
			\textsc{By name} & \tm\esub\var{\tmtwo} & \rtogc & \tm & \mbox{if }\var\notin\fv\tm
			\\
			\textsc{By value} & \tm\esub\var{\sctxp\val} & \rtogcv & \sctxp{\tm} & \mbox{if }\var\notin\fv\tm
			\end{array}
				&
			\begin{array}{rll}
			\towgc  &\defeq&  \wctxp\rtogc
			\\
			\towgcv  &\defeq&  \wctxp\rtogcv
			\end{array}
			\end{array}
			\\[25pt]
			\hline
			\\[-10pt]


				\begin{array}{c}
					\textsc{VARIANTS OF THE LSC}
					\\[4pt]
					\begin{array}{c|c|c|c|c|c}
					\textsc{LSC variant} & \textsc{Mult.} & \textsc{Dup. name}  & \textsc{Dup. value} & \textsc{GC name}  & \textsc{GC  value}
					\\
					&\towm & \towe & \towev & \towgc & \towgcv
					\\\hline\hline
					\text{Call-by-Name} & \checkmark  & \checkmark&&\checkmark
					\\[3pt]
					\hline
					\text{Call-by-Value} & \checkmark  & &\checkmark&&\checkmark					
					\\[3pt]
					\hline
					\text{Call-by-Need} & \checkmark  & &\checkmark&\checkmark					
					\\[3pt]
					\hline
					\text{Call-by-Silly} & \checkmark  & \checkmark&&&\checkmark					
					\end{array}
				\end{array}
				\\
			\\
			\hline
			\\[-10pt]
			\textsc{Weak reduction of the Silly (Linear) Substitution Calculus (SSC)}
			\\
				\begin{array}{rll}
									\tow &\defeq &  \towm \cup \towe \cup \towgcv
				\end{array}
\end{array}$			
				\caption{Variants of the linear substitution calculus (LSC).}
				\label{fig:variants-lsc}
		\end{figure}
In this section, we introduce the silly substitution calculus as a new variant of the linear substitution calculus, of which we recall the \cbn, \cbv, and \cbneed variants.

\myparagraph{Terms} Accattoli and Kesner's linear substitution calculus (LSC) \cite{DBLP:conf/rta/Accattoli12,DBLP:conf/popl/AccattoliBKL14} is a micro-step $\l$-calculus with explicit substitutions. 
Terms of the LSC are given in \reffig{variants-lsc} and extend the $\l$-calculus with \emph{explicit substitutions}  $\tm\esub{\var}{\tmtwo}$ (shortened to ESs), that is 
a more compact notation for $\letin\var\tmtwo\tm$, but where the order of evaluation between $\tm$ and $\tmtwo$ is a priori not fixed. \emph{Micro-step} means that substitutions act on one variable occurrence at a time, rather than \emph{small-step}, that is, on all occurrences at the same time. 

The set $\fv{\tm}$ of \emph{free} variables of a term $\tm$  is defined as expected, in particular,
$\fv{\tm\esub{\var}{\tmtwo}} \defeq
(\fv{\tm} \setminus \set{\var}) \cup \fv{\tmtwo}$. Both $\la{\var}\tm$ and $\tm\esub{\var}{\tmtwo}$ bind $\var$ in $\tm$, and terms are considered up to $\alpha$-renaming. 
A term $\tm$ is \emph{closed} if $\fv{\tm} = \emptyset$, \emph{open} otherwise.
Meta-level capture-avoiding substitution is noted $\tm\isub\var\tmtwo$. Note that values are only abstraction; this choice shall be motivated in the next section. 

\myparagraph{Contexts} Contexts are terms with exactly one occurrence of the \emph{hole} $\ctxhole$, an additional constant, standing for a removed sub-term. We shall use many different contexts. The most general ones in this paper are \emph{weak contexts} $\wctx$, which simply allow the hole to be anywhere but under abstraction. To define the rewriting rules, \emph{substitution contexts} $\lctx$ (\ie lists of explicit substitutions) also play a role.
The main operation about contexts is \emph{plugging} $\wctxp{\tm}$ where the hole $\ctxhole$ in context 
$\wctx$ is replaced by $\tm$. Plugging, as usual with contexts, can
capture variables---for instance $((\ctxhole \tm)\esub\var\tmtwo)\ctxholep\var = (\var\tm)\esub\var\tmtwo$. 
We write $\wctxfp{\tm}$ when we want to stress that the context $\wctx$ does not capture the free variables of $\tm$.

\myparagraph{Variants of the LSC} The LSC exists in two main variants, \cbn and \cbv. The \cbv variant is usually presented with small-step rules, and called \emph{value substitution calculus}, the micro-step (or linear) variant of which appears for instance in Accattoli et al. \cite{DBLP:conf/icfp/AccattoliBM14,DBLP:conf/esop/AccattoliGL19}. The two calculi have similar and yet different duplication and erasing rewriting rules. They are given in \reffig{variants-lsc} and explained below.

A third combination is the \cbneed LSC, that is obtained by taking the duplication rule of \cbv and the erasing rule of \cbn. 
Uniform presentations of \cbn, \cbv, and \cbneed and of their evaluation strategies in the LSC are given by Accattoli et al. in relation to abstract machines in \cite{DBLP:conf/icfp/AccattoliBM14} and to multi types in \cite{DBLP:conf/esop/AccattoliGL19}.

The Silly (Linear) Substitution Calculus (shortened to SSC) studied in this paper is obtained by taking the opposite combination, namely the duplication rule of \cbn and the erasing rule of \cbv. Its evaluation is micro-step. We only define its weak rewriting system, which does not reduce into abstractions, as it is often done in comparative studies between \cbn, \cbv, and \cbneed such as \cite{DBLP:conf/icfp/AccattoliBM14,DBLP:conf/esop/AccattoliGL19}. The SSC deals with possibly open terms. The abstract machine and the \cbs strategy, to be defined in later sections, shall instead deal with closed terms only.

Note that for each calling mechanism we have given the \emph{weak calculus}, that allows one to apply the rewriting rules in every position but under abstraction. Each mechanism then comes with a more restricted \emph{evaluation strategy}; in the literature, calculi and strategies are often confused. We shall give the \cbn strategy in \refsect{MAM} and introduce the \cbs strategy in \refsect{cbs-strategy}. We shall not define \cbv and \cbneed strategies; they can be found for instance in \cite{DBLP:conf/icfp/AccattoliBM14,DBLP:conf/esop/AccattoliGL19}.

\myparagraph{Rewriting Rules.}
The reduction rules of the LSC, and thus of the SSC, are slightly unusual as they use \emph{contexts} both to allow one to reduce redexes located in sub-terms, which is standard, \emph{and} to define the redexes themselves, which is less standard. This approach is 
called \emph{at a distance} by Accattoli and Kesner \cite{DBLP:conf/csl/AccattoliK10} and it is related to cut-elimination in proof nets (from which the terminology \emph{multiplicative} and \emph{exponential} to be discussed next is also taken), see Accattoli \cite{DBLP:journals/tcs/Accattoli15,DBLP:conf/ictac/Accattoli18}. 

The \emph{multiplicative rule} $\rtom$ is essentially the $\beta$-rule, except that the argument goes into a new ES, rather then being immediately substituted, and that there can be a substitution context $\lctx$ in between the abstraction and the argument. Example: $(\la\var\vartwo)\esub\vartwo\tm\tmtwo \rtom \vartwo\esub\var\tmtwo\esub\vartwo\tm$. One with on-the-fly $\alpha$-renaming is $(\la\var\vartwo)\esub\vartwo\tm\vartwo \rtom \varthree\esub\var\vartwo\esub\varthree\tm$.

The \emph{exponential rule} $\rtoep\wctx$ replaces a single variable occurrence, the one appearing in the context $\wctx$. Example: $(\var\var)\esub\var{\vartwo\esub\vartwo\tm} \rtoep\wctx (\var(\vartwo\esub\vartwo\tm))\esub\var{\vartwo\esub\vartwo\tm}$. The notation $\rtoep\wctx$ stresses that the rule is parametric in a notion of context $\wctx$, that specifies where the variable replacements are allowed, and which shall be exploited to define the \cbs strategy in \refsect{cbs-strategy}. 

The \emph{garbage collection (GC) rule by value}  $\rtogcv$ eliminates a value $\val$, keeping the list of substitutions $\lctx$ previously surrounding the value, which might contain terms that are not values, and so cannot be erased with $\val$. This rule is the \cbv erasing rule. Example:  $(\la\varthree\vartwo\vartwo)\esub\var{\Id\esub\varfour\tm} \rtogcv (\la\varthree\vartwo\vartwo)\esub\varfour\tm$, where $\Id$ is the identity.

The three root rules $\rtom$, $\rtoep\wctx$, and $\rtogcv$ are then closed by weak contexts. We shorten $\wctxp\tm \towm \wctxp\tmtwo$ if $\tm\rtom\tmtwo$ with $\towm \defeq \wctxp\rtom$, and similarly for the other rules. The reduction $\tow$ encompasses all possible reductions in the weak SSC.

\myparagraph{Silliness Check and the Silly Extra Copy.} Let us evaluate with the SSC the examples used in the introduction to explain silly and wise behaviour. About silly erasures, $(\la\var\Id)\Omega$ diverges, as in \cbv. One can reduce it to $\Id\esub\var\Omega$, but then there is no way of erasing $\Omega$, which is not a value and does not reduce to a value. About silly duplications, $(\la\var\var\var) (\Id\Id)$ can reduce (in various steps) both to $\Id\Id (\Id\Id)$ and to $(\la\var\var\var) \Id$, as in \cbn. The silly aspect is the fact that one \emph{can} reduce to $\Id\Id (\Id\Id)$, since in \cbv this is \emph{not} possible. The \cbs abstract machine of \refsect{machine} and the \cbv strategy of \refsect{cbs-strategy} shall always select the silly option.

Note that $(\la\var\var\var) (\Id\Id)$ reduces to $(\var\var)\esub\var{\Id\Id}$ and then \emph{three} copies of $\Id\Id$ can be evaluated, as one can reduce to $\Id\Id(\Id\Id)\esub\var{\Id\Id}$ and the copy in the ES has to be evaluated before being erased, because it is not a value. Such a further copy would not be evaluated in \cbn, as terms are erased without being evaluated. We refer to this aspect as to the \emph{silly extra copy}.

\section{Basic Rewriting Notions}
\label{sect:app-rewriting-notions}
Given a rewriting relation $\Rew{\Rule}$, we write $\deriv \colon \tm \Rew{\Rule}^* \tmtwo$ for a $\Rew{\Rule}$-reduction sequence from $\tm$ to $\tmtwo$, the length of which is noted $\size{\deriv}$. Moreover, we use $\size{\deriv}_a$ for the number of $a$-\emph{steps} in $\deriv$, for a sub-relation $\Rew{a}$ of $\Rew{\Rule}$. 

A term $\tm$ is \emph{weakly $\Rule$-normalizing}, noted $\tm\in\wn\Rule$, if $\deriv \colon \tm \Rew{\Rule}^* \tmtwo$ with $\tmtwo$ $\Rule$-normal; and $\tm$ is \emph{strongly $\Rule$-normalizing}, noted $\tm\in\sn\Rule$, if there are no diverging $\Rule$-sequences from $\tm$, or, equivalently, if all its reducts are in $\sn\Rule$.

A relation $\Rew{\Rule}$ is \emph{diamond}\footnote{We use the diamond property phrased for \emph{non-reflexive reductions}, unlike Terese \cite{Terese}. This more permissive definition of the diamond property can already be found in previous works, \eg Dal Lago and Martini \cite{DBLP:journals/tcs/LagoM08}.} if $\tmtwo_1 \,{}_\Rule\!\!\lto \tm \Rew{\Rule} \tmtwo_2$  imply $\tmtwo_1 = \tmtwo_2$ or $\tmtwo_1 \Rew{\Rule} \tmthree \, {}_\Rule\!\!\lto \tmtwo_2$ for some $\tmthree$. 
If $\Rew{\Rule}$ is diamond then:
\begin{enumerate}
	\item \emph{Confluence}:
	$\Rew{\Rule}$ is confluent, that is, $\tmtwo_1 \,{}_\Rule^*\!\!\lto \tm \Rew{\Rule}^* \tmtwo_2$  implies $\tmtwo_1 \Rew{\Rule}^* \tmthree \, {}_\Rule^*\!\!\lto \tmtwo_2$ for some $\tmthree$; 
	\item \emph{Length invariance}: all $\Rule$-evaluations to normal form with the same start term have the same length (\ie if $\deriv \colon \tm \Rew{\Rule}^* \tmtwo$  and $\deriv' \colon \tm \Rew{\Rule}^* \tmtwo$ with $\tmtwo$ $\Rew{\Rule}$-normal then $\size{\deriv} = \size{\deriv'}$);
	\item \emph{Uniform normalization}: $\tm$ is weakly $\Rule$-normalizing if and only if it is strongly $\Rule$-normalizing.
\end{enumerate}
Basically, the diamond property captures a more liberal form of determinism.

\section{Rewriting Properties}
\label{sect:rewriting_properties}
In this section, we study some rewriting properties of the weak SSC. We give a characterization of weak normal forms and prove postponement of garbage collection by value and confluence. 

\myparagraph{Characterization of Normal Forms.} The characterization of (possibly open) weak normal forms requires the following concept. 
\begin{defi}[Shallow free variables]
The set $\ofv\tm$ of \emph{shallow free variables} of $\tm$ is the set of variables with free occurrences out of abstractions in $\tm$:$\label{def:ofv}$
\[\begin{array}{lll @{\hspace{.5cm}} lll}
\ofv\var & \defeq & \set\var 
&
\ofv{\la\var\tm} & \defeq & \emptyset
\\
\ofv{\tm\tmtwo} & \defeq & \ofv\tm \cup \ofv\tmtwo
&
\ofv{\tm\esub\var\tmtwo} & \defeq & (\ofv\tm\setminus\set\var) \cup \ofv\tmtwo
\end{array}\]
\end{defi}
Intuitively, the occurrences of shallow free variables are the only variable occurrences that are actually replaceable in a weak calculus with ESs, because occurrences under abstractions are not reachable in such a setting.
%

\begin{prop}
$\tm$ is $\tow$-normal if and only if $\tm$ is a weak normal term according to the following grammar:\label{prop:weak-nfs}
\[
\begin{array}{r@{\hspace{.5cm}}r@{\hspace{.2cm}}l@{\hspace{.3cm}}lll}
\textsc{Weak answers} &
\ans,\ans & \grameq & \val &\multicolumn{2}{l}{\mid \ans\esub\var\itm \mbox{ with }\var\notin \ofv\ans}
\\
&&&&\multicolumn{2}{l}{\mid \ans\esub\var\anstwo \mbox{ with }\var\in \fv\ans\setminus\ofv\ans}
\\[4pt]
\textsc{Inert terms}
&
\itm,\itmtwo & \grameq & \var &\mid \itm \ntm &\mid  \itm\esub\var\itmtwo \mbox{ with }\var\notin \ofv\itm 
\\ 
&&&&& \mid  \itm\esub\var\ans \mbox{ with }\var\in \fv\itm\setminus\ofv\itm
\\[0pt]
\textsc{Weak normal terms}
&
\ntm,\ntmtwo &  \grameq &  \ans &\mid \itm
\end{array}
\]
\end{prop}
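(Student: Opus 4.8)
The plan is to prove the two directions separately, after isolating a small syntactic lemma on the shape of answers and inert terms. \emph{Auxiliary lemma:} every weak answer $\ans$ has the form $\sctxp\val$ for some substitution context $\sctx$ and value $\val$, while no inert term $\itm$ has this form. I would prove this by a routine (mutual) induction on the grammars. The observation making it work is that $\sctxp\val$ is either an abstraction (when $\sctx=\ctxhole$) or has an outermost explicit substitution whose left-of-ES component is again some $\sctx'p\val$; since the productions for $\itm$ only build a variable, an application, or an ES whose left component is itself inert, none matches $\sctxp\val$; and since each production for $\ans$ is either a value or appends an ES to a term already of the form $\sctxp\val$, all answers have that form.

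\emph{Soundness ($\Leftarrow$).} I would show that every weak normal term is $\tow$-normal, by mutual induction on the three categories $\ans$, $\itm$, $\ntm$. The inductive hypotheses exclude redexes inside the immediate subterms (which occupy weak-context positions), so only redexes "at the root" of the outer constructor remain to be ruled out. An application $\itm\,\ntm$ is a $\towm$-redex only if $\itm=\sctxp\val$, impossible by the lemma. For an ES with inert content, $\itm\esub\var\itmtwo$ or $\ans\esub\var\itm$, the side condition $\var\notin\ofv{\cdot}$ on the left component blocks $\towe$ at that ES, while $\towgcv$ is blocked since an inert term is not of the form $\sctxp\val$. For an ES with answer content, $\itm\esub\var\ans$ or $\ans\esub\var\anstwo$, again $\var\notin\ofv{\cdot}$ blocks $\towe$, and the condition $\var\in\fv{\cdot}$ on the left component blocks $\towgcv$ (even though the content is now $\sctxp\val$). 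A value $\la\var\tm$ carries no weak redex, as $\tow$ does not enter abstractions.

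\emph{Completeness ($\Rightarrow$).} Conversely I would show by structural induction on $\tm$ that a $\tow$-normal term is generated by the grammar. A variable is inert, an abstraction is a value hence an answer. If $\tm=\tmtwo\tmthree$, then $\tmtwo,\tmthree$ are $\tow$-normal (weak contexts), hence weak normal terms by the i.h.; moreover $\tmtwo$ is not of the form $\sctxp\val$ (else $\tm$ is a $\towm$-redex), so by the lemma $\tmtwo$ is not an answer, i.e. it is inert, and $\tm=\itm\,\ntm$ is inert. If $\tm=\tmtwo\esub\var\tmthree$, then $\tmtwo,\tmthree$ are $\tow$-normal and weak normal by the i.h.; absence of a $\towe$-step at this ES forces $\var\notin\ofv\tmtwo$ (otherwise $\tmtwo=\wctxfp\var$ for a weak $\wctx$, an exponential-by-name redex). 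If $\tmthree$ is inert, $\tm$ matches $\itm\esub\var\itmtwo$ or $\ans\esub\var\itm$ according as $\tmtwo$ is inert or an answer, the side condition being exactly $\var\notin\ofv\tmtwo$. If $\tmthree$ is an answer, then $\tmthree=\sctxp\val$ by the lemma, so absence of a $\towgcv$-step at this ES forces $\var\in\fv\tmtwo$; with $\var\notin\ofv\tmtwo$ this yields $\var\in\fv\tmtwo\setminus\ofv\tmtwo$, and $\tm$ matches $\itm\esub\var\ans$ or $\ans\esub\var\anstwo$.

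\emph{Main obstacle.} The soundness direction is essentially bookkeeping; the delicate point is the explicit-substitution case of the completeness direction, where the exact combination of side conditions has to be recovered from the mere absence of exponential-by-name and garbage-collection-by-value redexes, keeping track of the "at a distance" character of the rules — so that $\towgcv$ at an ES fires precisely when the content is a substitution context filled with a value, which by the auxiliary lemma is exactly when that content is an answer. It is this lemma that keeps the case analysis under control.
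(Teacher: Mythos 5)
Your proposal is correct and follows essentially the same route as the paper's proof: structural induction in both directions, ruling out root redexes at each constructor via the observations that weak answers are exactly the terms of shape $\sctxp\val$ and inert terms never are, facts the paper uses inline where you isolate them as an explicit auxiliary lemma. The handling of the at-a-distance redexes and of the side conditions $\var\notin\ofv{\cdot}$ and $\var\in\fv{\cdot}\setminus\ofv{\cdot}$ matches the paper's argument.
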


\begin{proof}
\emph{Direction $\Rightarrow$}. By induction on $\tm$. Cases:
\begin{itemize}
\item \emph{Variable} or \emph{abstraction}. Trivial.

\item \emph{Application}, that is, $\tm=\tmtwo\tmthree$. By \ih, $\tmtwo$ belongs to the grammar. It cannot be a weak answer, otherwise $\tm$ would have a $\towm$ redex. Then $\tmtwo$ is a inert term. By \ih, $\tmthree$ is a $\ntm$ term. Then $\tm$ has shape $\itm\ntm$ and belongs to the grammar.

\item \emph{Substitution}, that is, $\tm=\tmtwo\esub\var\tmthree$. By \ih, $\tmtwo$ belongs to the grammar. We have that $\tmtwo$ cannot be written as $\wctxfp\var$ otherwise there would be a $\towe$ redex. Then $\var\notin\ofv\tmthree$. By \ih, $\tmthree$ belongs to the grammar. Note that if $\tmthree$ is a weak answer then it has shape $\lctxp\val$. Then $\var\in\fv\tmthree$, otherwise there would be a $\togcv$ redex.
\end{itemize}
\emph{Direction $\Leftarrow$}. By induction on $\tm$.
\begin{itemize}
\item \emph{Variable} or \emph{abstraction}. Trivial.

\item \emph{Application}, that is, $\tm=\itm\ntm$. By \ih, both $\itm$ and $\ntm$ are $\tow$-normal. Moreover, $\itm$ does not have shape $\lctxp\val$, so that the root application is not a $\towm$ redex. Thus $\tm$ is $\tow$-normal.

\item \emph{Substitution}, that is, $\tm=\tmtwo\esub\var\tmthree$. It follows from the \ih and the conditions on $\var$ in the grammar, which forbid $\towe$-redexes because $\var\notin\ofv\tmtwo$ and forbid $\towgcv$-redexes because then $\tmthree$ is a weak answer then $\var$ is required to occur in $\tmtwo$ (necessarily under abstraction).\qedhere
\end{itemize}

\end{proof}

\myparagraph{Postponement of GC by Value.} As it is usually the case in all the dialects of the LSC, the erasing rule of the weak SSC, that is $\towgcv$, can be postponed. 
Let $\townotgcv \defeq \towm\cup\towe$. We first prove a local form of postponement, and then extend it to arbitrary reduction sequences.

\begin{lem}[Local postponement of weak GC by value]
\label{l:local-posponement-weak}\hfill
\begin{enumerate}
\item \emph{Weak multiplicative}: if $\tm \towgcv \tm'\towm \tm''$ then $\tm \towm\tm'''\towgcv \tm''$ for some $\tm'''$.
\item \emph{Weak exponential}: if $\tm \towgcv\tm'\towe \tm''$ then $\tm \towe\tm'''\towgcv^+ \tm''$ for some $\tm'''$.
\end{enumerate}
\end{lem}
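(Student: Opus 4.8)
The plan is a local diagram chase. Write the given step as $\tm = \wctxp{\tmtwo\esub\var{\sctxp\val}} \towgcv \wctxp{\sctxp\tmtwo} = \tm'$ with $\var \notin \fv{\tmtwo}$, and let the subsequent step $\tm' \to \tm''$ fire a multiplicative (item~1) or exponential (item~2) redex somewhere in $\tm'$. In both cases $\tm'''$ will be obtained by firing the corresponding redex directly in $\tm$, after which one garbage-collects. The enabling observation, checked by inspecting the rules, is that $\towgcv$ only erases the value $\val$ and relocates the explicit substitutions of $\sctx$ from inside the ES $\esub\var{\cdot}$ to around $\tmtwo$: it creates no explicit substitution and no variable occurrence, and it preserves both the weak-accessibility of surviving occurrences and the shape ``substitution context around an abstraction'' of the left side of an application. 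Hence, whichever multiplicative or exponential redex is fired in $\tm'$, the material it consumes---an explicit substitution together with an occurrence of its variable, or an application whose left side is a substitution-context-wrapped abstraction---is already present in $\tm$, only possibly assembled into a differently shaped, at-a-distance redex because of the relocation; firing the corresponding redex in $\tm$ is thus a sound choice for $\tm'''$. What remains is to check that garbage-collecting $\tm'''$ recovers $\tm''$, and to count the garbage-collection steps needed, which amounts to counting the copies of $\tmtwo\esub\var{\sctxp\val}$ created by the fired step.

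For item~1 I would split on the position of the multiplicative redex $r$ relative to the garbage-collection redex $g := \tmtwo\esub\var{\sctxp\val}$ (equivalently, on the position of its residual $\sctxp\tmtwo$ inside $\tm'$). If $r$ and $g$ are disjoint, or $r$ lies strictly inside $\sctxp\tmtwo$, the two steps plainly commute: the garbage-collection side condition survives because reduction does not create free variables, and $\sctxp\val$ remains a substitution context wrapped around a value. If $\sctxp\tmtwo$ lies inside $r$, then either it sits in the abstraction body or in the substitution context of $r$---and firing $r$ leaves it in place---or it is (inside) the argument of $r$---and firing $r$ carries it into the freshly created ES; either way one garbage-collection step afterwards suffices. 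Finally, if $g$ participates in $r$ at a distance---meaning that $\tmtwo$ is itself a substitution-context-wrapped abstraction and $g$ forms part of the left side of an application---then that application is already a multiplicative redex of $\tm$; firing it leaves the ES $\esub\var{\cdot}$ wrapping the result, still a garbage-collection redex because $\var$ does not occur free in the argument (by the $\alpha$-convention). Since $\towm$ never duplicates, exactly one copy of $g$ survives in every case, giving $\tm\towm\tm'''\towgcv\tm''$.

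Item~2 has the same structure, plus one extra phenomenon, which is exactly why the conclusion is $\towgcv^+$ and not $\towgcv$: if $g$ lies inside the content $\tmthree$ of the explicit substitution $\esub\vartwo\tmthree$ fired by the exponential step $\wctxfp\vartwo\esub\vartwo\tmthree$, then firing that step first in $\tm$ duplicates $\tmthree$---one copy where the variable occurrence stood, one copy left as the content of $\esub\vartwo\tmthree$---hence duplicates $g$, and two garbage-collection steps are then needed to reach $\tm''$. In all other cases a single garbage-collection step suffices, in particular when the fired explicit substitution is one of those of $\sctx$: the replaced occurrence then lies in another term of $\sctx$, untouched by $\towgcv$, so the redex already occurs in $\tm$, inside the collected ES. (The fired explicit substitution cannot be the collected one $\esub\var{\cdot}$, since $\var\notin\fv{\tmtwo}$ leaves no occurrence of $\var$ to replace.) Hence $\tm\towe\tm'''\towgcv^+\tm''$.

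The only delicate point is the bookkeeping forced by the at-a-distance formulation: a multiplicative or exponential redex may straddle the explicit substitution being collected, so for each case one must check that the garbage-collection side condition is preserved---which always reduces to the non-creation of free variables together with the convention that $\var$ and the variables bound by $\sctx$ are chosen fresh---and one must count the residual garbage-collection redexes, which is one except in the single exponential case above. There is no conceptual obstacle; the case analysis is routine but tedious.
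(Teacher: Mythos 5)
Your proposal is correct and follows essentially the same route as the paper: a case analysis on the relative positions of the two steps (the paper organizes it as an induction on $\tm \towgcv \tm'$), pinpointing exactly the same critical case, namely the erased redex lying inside the content of the ES fired by the exponential step, which gets duplicated and forces two GC steps, hence $\towgcv^+$ instead of a single $\towgcv$. The remaining cases, including the at-a-distance interactions you single out, are the routine swaps handled in the paper's (deferred) case analysis.
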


\begin{proof}
Both points are by induction on $\tm \towgcv\tm'$ and are straightforward case analyses, see \cite{accattoli2024mirroring} for the details. In the exponential case, the increase of garbage collection steps happens when $\tm = \tmthree \esub\vartwo{\tmtwo} \towgcv \tmthree \esub\vartwo{\tmtwo'}=\tm'$ and $\tmthree \esub\vartwo{\tmtwo'} \towe \tm''$ is a root step, that is, when $\tmthree=\ctxtwofp\vartwo$ for some $\ctxtwo$. Then:

\begin{equation}
\begin{tikzpicture}[baseline={([yshift=3pt]current bounding box.south)}]
		\node at (0,0)[align = center](source){\normalsize $\ctxtwofp\vartwo \esub\vartwo{\tmtwo}$};
		\node at (source.center)[right = 200pt](source-right){\normalsize $\ctxtwofp\vartwo \esub\vartwo{\tmtwo'}$};
		\node at (source.center)[below = 25pt](source-down){\normalsize $\ctxtwofp{\tmtwo} \esub\vartwo{\tmtwo} $};
		\node at (source-right|-source-down)(target){\normalsize $\ctxtwofp{\tmtwo'} \esub\vartwo{\tmtwo'}$};
		\node at \med{source-down.center}{target.center}(fifthnode){\normalsize $\ctxtwofp{\tmtwo'} \esub\vartwo{\tmtwo} $};
		
		\draw[->](source) to node[above] {\scriptsize $\wgcv$} (source-right);
		\draw[->](source-right) to node[right] {\scriptsize $\wesym$}(target);
		
		\draw[->, dotted](source) to node[left] {\scriptsize $\wesym$}(source-down);
		\draw[->, dotted](source-down) to node[above] {\scriptsize $\wgcv$} (fifthnode);
		\draw[->, dotted](fifthnode) to node[above] {\scriptsize $\wgcv$} (target);
\end{tikzpicture}
\tag*{\qedhere}
\end{equation}
\end{proof}

\begin{prop}[Postponement of garbage collection by value]
If $\deriv:\tm\tow^*\tmtwo$ then $\tm\townotgcv^k\towgcv^h\tmtwo$ with $k=\sizep\deriv{\wsym\neg\gcv}$ and $h\geq \sizep\deriv{\wsym\gcv}$.
\label{prop:gc-postponement}
\end{prop}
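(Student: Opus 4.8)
The plan is to bootstrap the \emph{local} postponement of Lemma~\ref{l:local-posponement-weak} into the global statement by a double induction. The one delicate feature is that the exponential case of local postponement \emph{duplicates} the erasing step — it turns $\towgcv\,\towe$ into $\towe\,\towgcv^+$ — which is exactly why the conclusion only claims $h \geq \sizep\deriv{\wsym\gcv}$ rather than an equality, whereas the number $k$ of non-erasing steps must stay put.

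The outer induction is on $k = \sizep\deriv{\wsym\neg\gcv}$, the number of multiplicative and exponential steps in $\deriv$; its legitimacy as a measure rests on the observation that none of the commutations used below ever changes the number of $\townotgcv$-steps. If $k = 0$ then $\deriv$ is already of the desired shape (all of its steps are $\towgcv$-steps) and $h = \sizep\deriv{\wsym\gcv}$. If $k > 0$, I would isolate the \emph{first} $\townotgcv$-step of $\deriv$, writing $\deriv$ as a prefix $\tm \towgcv^{j} \tm_1 \townotgcv \tm_2$ (the prefix before the first non-erasing step is necessarily all erasing) followed by a tail $\deriv'\colon \tm_2 \tow^* \tmtwo$ with exactly $k-1$ non-erasing steps. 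The key move — an auxiliary lemma, below — is to slide that single $\townotgcv$-step to the very front, past the $j$ erasing steps, obtaining $\tm \townotgcv \tm_3 \towgcv^{j'} \tm_2$ with $j' \geq j$. Then $\tm_3 \towgcv^{j'}\tm_2$ composed with $\deriv'$ is a reduction with $k-1$ non-erasing steps and $j' + \sizep{\deriv'}{\wsym\gcv}$ erasing steps, so the outer induction hypothesis gives $\tm_3 \townotgcv^{k-1}\towgcv^{h}\tmtwo$ with $h \geq j' + \sizep{\deriv'}{\wsym\gcv} \geq j + \sizep{\deriv'}{\wsym\gcv} = \sizep\deriv{\wsym\gcv}$; prepending the front step closes the induction.

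The auxiliary lemma — that $\tm \towgcv^{j} \tm_1 \townotgcv \tm_2$ implies $\tm \townotgcv \tm_3 \towgcv^{j'}\tm_2$ for some $\tm_3$ and some $j' \geq j$ — is itself by induction on $j$. The case $j = 0$ is immediate. For $j > 0$, peel off the leftmost $\towgcv$-step, apply the inner induction hypothesis to the remaining block to reach $\tm \towgcv\cdot\townotgcv\cdot\towgcv^{\ell}\,\tm_2$ with $\ell \geq j-1$, and then feed the leading pair $\towgcv\,\townotgcv$ to Lemma~\ref{l:local-posponement-weak}: in the multiplicative case its first part rewrites the pair to $\towm\,\towgcv$, yielding $\townotgcv\,\towgcv^{1+\ell}$ with $1+\ell \geq j$; in the exponential case its second part rewrites the pair to $\towe\,\towgcv^{r}$ with $r \geq 1$, yielding $\townotgcv\,\towgcv^{r+\ell}$ with $r+\ell \geq 1+(j-1)=j$. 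The extra erasing steps possibly created by the exponential case are harmless here: they land to the \emph{right} of the step being moved, so they obstruct no further commutation and do not affect $k$.

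I do not anticipate any genuine obstacle beyond keeping the two inductive invariants in sync — in particular, building the slack $j' \geq j$ into the auxiliary lemma so that the exponential blow-up is absorbed cleanly, and using the invariance of $k$ under every commutation to justify it as the outer measure. All the actual rewriting work is already encapsulated in Lemma~\ref{l:local-posponement-weak}.
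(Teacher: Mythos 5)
Your proof is correct and follows essentially the same route as the paper: everything reduces to Lemma~\ref{l:local-posponement-weak}, and your double induction is just an explicit, instance-specific proof of Hindley's strong postponement property, which the paper invokes by citation instead of spelling out. Your bookkeeping of the indices (the number $k$ of non-erasing steps is preserved by every local swap, while the slack $j'\geq j$ absorbs the duplication of $\towgcv$-steps in the exponential case) correctly accounts for the $k=\sizep\deriv{\wsym\neg\gcv}$ and $h\geq\sizep\deriv{\wsym\gcv}$ claims.
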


\begin{proof}
The statement follows from local postponement (\reflemma{local-posponement-weak}) via a well-known basic rewriting property. Namely, the local swaps in \reflemma{local-posponement-weak}.1-2 above can be put together as the following local postponement property: if $\tm \togcv \tm' \tonotgcv \tm''$ then $\tm \tonotgcv\tm'''\togcv^* \tm''$ for some $\tm'''$. This local property is an instance of the hypothesis of Hindley's strong postponement property, which implies the postponement property in the statement. Precisely, Hindley's property usually appears for \emph{commutation} (rather than for postponement) of rewriting relations---for instance, as Lemma 3.3.6 in Barendregt's book on the $\l$-calculus \cite{Barendregt84}---but it is standard that postponement of $\Rew{1}$ after $\Rew{2}$ can be seen as commutation of $_1\!\leftarrow$ (the mirror relation of $\Rew{1}$) and $\Rew{2}$.
\qedhere
\end{proof}

\myparagraph{Local Termination.} To prepare for confluence, we recall the following crucial property, which is essentially inherited from the \cbn LSC, given that termination for $\towgcv$ is trivial.

\begin{prop}[Local termination]
 Reductions $\towm$, $\towe$, and $\towgcv$ are strongly normalizing separately.$\label{prop:local-termination}$
\end{prop}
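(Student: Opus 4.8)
The plan is to prove the three reductions strongly normalizing one at a time, since they do not interact in this statement.

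\textbf{Garbage collection by value.} Each $\towgcv$-step erases one value together with its surrounding substitution context, hence strictly decreases the size of the term (measuring, say, the number of constructors, counting the erased value and the removed ES node as strictly positive). Since term size is a natural number, no infinite $\towgcv$-sequence exists. This case is trivial, as anticipated in the paragraph preceding the statement.

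\textbf{Multiplicative reduction.} A $\towm$-step turns $\lctxp{\la\var\tm}\tmtwo$ into $\lctxp{\tm\esub\var\tmtwo}$: it consumes one application node and one abstraction node and creates one ES node. So the number of applications strictly decreases and nothing else that could create new applications is introduced (the argument $\tmtwo$ is moved, not copied). Thus the number of application constructors is a strictly decreasing measure into $\mathbb{N}$, and $\towm$ alone is strongly normalizing. One should be a little careful that plugging into the substitution context $\lctx$ and closure under weak contexts $\wctx$ do not create applications either — they don't, they only relocate subterms — so the count is genuinely monotone.

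\textbf{Exponential (by-name) duplication.} This is the only interesting case, and I expect it to be the main obstacle, because a $\towe$-step $\wctxfp\var\esub\var\tmtwo \rtoep\wctx \wctxfp\tmtwo\esub\var\tmtwo$ copies the subterm $\tmtwo$, so naive size does not decrease. The standard technique — the one used for the \cbn LSC, which the excerpt tells us to inherit from — is a weighted measure: assign to each variable occurrence a weight depending on the number of ESs binding it and on the sizes of the corresponding argument terms, so that replacing an occurrence of $\var$ bound to $\tmtwo$ by a copy of $\tmtwo$ strictly decreases the total, because the new occurrences inside the copied $\tmtwo$ are bound by strictly fewer (or ``smaller'') enclosing ESs than the occurrence it replaced. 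Concretely, I would define, for a term with ESs $\esub{\var_1}{\tmtwo_1},\dots,\esub{\var_n}{\tmtwo_n}$ listed from outermost to innermost, a weight where each occurrence of $\var_i$ contributes a quantity that dominates the sum of the weights of all occurrences appearing in $\tmtwo_i$ (e.g. a polynomial or a nested sum built inductively on the term, processing ESs from the innermost outward so the multiplier for $\var_i$ can be chosen larger than the total weight already accumulated for $\tmtwo_i$). One then checks that a root $\rtoep\wctx$-step strictly decreases this weight — the replaced occurrence is removed and replaced by a copy of $\tmtwo_i$ whose occurrences are now all strictly ``lighter'' since they see one fewer binder — and that the measure is monotone under plugging into weak contexts. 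Since $\towgcv$ is excluded here, there is no subtlety about erasure destroying the invariant; and since $\towm$ is excluded, there is no issue with multiplicative steps creating new ESs. This gives strong normalization of $\towe$ in isolation, completing the proof.
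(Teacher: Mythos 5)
Your overall decomposition matches the paper's, and for $\towm$ and $\towgcv$ your arguments are the paper's own: both rules strictly decrease the number of constructors (for $\towgcv$, note that the surrounding substitution context $\lctx$ is kept, not erased---only the value and the ES node disappear---but this does not affect the size argument; counting only application nodes for $\towm$ works equally well). The genuine divergence is on $\towe$: the paper proves nothing here, it simply observes that $\towe$ is an instance of the exponential rule $\toe$ of the LSC and invokes the strong normalization result of \cite{DBLP:conf/rta/Accattoli12}, whereas you sketch a self-contained weighted-measure proof. That route is viable---it is essentially how the cited result is obtained---and it buys independence from the literature, but be aware that your sketch leaves exactly the delicate part undone. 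The invariant you need is that the weight of a variable occurrence strictly dominates the total weight of the occurrences of the term its ES can substitute for it, so that a step which duplicates occurrences of \emph{other} variables inside the copied term is already paid for by the weight of the occurrence it consumes; a naive weight such as ``size of the argument times number of occurrences'' fails, because a $\towe$ step inside the body of an ES can \emph{increase} the occurrence count of an outer variable, and monotonicity under weak contexts moreover requires all multipliers to remain strictly positive, since weak contexts also reduce inside arguments of ESs whose variable has no occurrence at all. Likewise, ``bound by strictly fewer enclosing ESs'' is only a heuristic: the copy carries its own internal ESs, so the decrease must come from the weights themselves, not from counting binders. With the weights set up so that each occurrence is charged one more than the (recursively defined) weight of what may replace it, the root step decreases the measure by exactly one and the decrease is preserved by the contextual closure, so your plan does go through---it is just considerably more work than the paper's one-line appeal to the known strong normalization of the LSC exponential rule.
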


\begin{proof}
Strong normalization of $\towm$ and $\towgcv$ is trivial, since they decrease the number of constructors. Strong normalization of $\towe$ follows from the fact that $\towe$ is an instance of the exponential rule $\toe$ of the LSC, which is strongly normalizing, as proved in \cite{DBLP:conf/rta/Accattoli12}. \qedhere
\end{proof}

\myparagraph{Confluence.}
The proof of confluence for the weak SSC given here is a minor variant of what would be done for the LSC, except that there is no direct proof of confluence in the literature for the LSC\footnote{Confluence of the LSC holds, as it follows from stronger results about residuals in Accattoli et al. \cite{DBLP:conf/popl/AccattoliBKL14} but here we want to avoid the heaviness of residuals.}. Overviewing the proof allows us to explain a design choice of the SSC.

The proof is based on an elegant technique resting on local diagrams and local termination, namely the Hindley-Rosen method. In our case, it amounts to prove that the three rules $\towm$, $\towe$, and $\towgcv$ are:
\begin{itemize}
\item Confluent separately, proved by local termination and Newman's lemma, and 
\item Commute, proved by local termination and Hindley's strong commutation. 
\item Confluence then follows by Hindley-Rosen lemma, for which the union of confluent and commuting reductions is confluent. 
\end{itemize}
The Hindley-Rosen method is a modular technique often used for confluence of extensions of the $\l$-calculus, for instance in \cite{ArrighiD17,DBLP:conf/lics/FaggianR19,DBLP:conf/csl/Saurin08,DBLP:conf/fossacs/CarraroG14,Revesz92,DBLP:journals/lmcs/BucciarelliKR21,DBLP:conf/popl/AriolaFMOW95,DBLP:conf/flops/AccattoliP12,DBLP:conf/lics/Accattoli22}. 


\begin{lem}[Local confluence]
Reductions $\towm$ and $\towgcv$ are diamond, $\towe$ is locally confluent.\label{l:local-confluence}
\end{lem}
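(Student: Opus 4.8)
The plan is to prove the three claims separately by careful but routine case analyses on the overlapping redexes, as is standard for local (di)confluence of the LSC.

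\textbf{Diamond of $\towm$.} First I would show that two distinct $\towm$-steps $\tm_1 \,{}_{\wm}\!\!\lto \tm \towm \tm_2$ can be joined in one step on each side. The key observation is that a $\towm$-redex is a subterm of the form $\sctxp{\la\var\tmthree}\tmfour$ occurring inside a weak context, and the redex pattern involves only the application node, the surrounding substitution context $\sctx$, and the abstraction. Since weak contexts forbid going under abstraction, two distinct such redexes in $\tm$ are either disjoint or one is contained in an argument/ES-content of the other; in neither case can one step destroy the other (the multiplicative rule neither creates nor erases abstractions at the outer level relevant to the other redex, it only relocates the argument into a fresh ES). So both residuals survive and closing the diagram takes exactly one step on each branch, giving $\tm_1 \towm \tmthree \,{}_{\wm}\!\!\lto \tm_2$. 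The only mild subtlety is the case of nested redexes where the inner one lies inside the $\sctx$ of the outer one; here I would check that the substitution context is preserved by the inner step so the outer redex is still available.

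\textbf{Diamond of $\towgcv$.} A $\towgcv$-redex is $\tmthree\esub\var{\sctxp\val}$ with $\var\notin\fv\tmthree$, fired inside a weak context. Two distinct such redexes in $\tm$ are again either disjoint, or nested, with the inner one inside $\tmthree$, inside the $\sctx$, inside $\val$, or inside the content of some ES of $\sctx$. I would observe that firing one $\towgcv$-redex cannot erase the \emph{value} involved in another one — it can only erase a term, which, if it were the outer value $\val$ or contained the outer body $\tmthree$, must itself be a value or must have $\var$ not free, and a quick check shows the other redex's pattern (in particular the freshness condition $\var\notin\fv{\cdot}$, which is preserved under erasing) is untouched. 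The one genuinely nested case to handle is when the inner redex sits inside $\sctx$ (the list of ESs wrapping the erased value): firing the outer step moves $\sctx$ out around $\tmthree$, but the inner $\towgcv$-redex travels along with it unchanged. So again one step closes each branch.

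\textbf{Local confluence of $\towe$.} Here I expect the real work, because the by-name duplication rule $\towe$ is \emph{not} diamond: a single ES $\tmtwo\esub\var\tmthree$ with two free occurrences of $\var$ gives two distinct $\towe$-steps whose residuals, after replacing one occurrence, still contain the other, so joining them needs \emph{two} steps on each side, not one — hence only local confluence, not diamond. I would do a case analysis on the relative positions of the two $\towe$-redexes $\tm_1 \,{}_{\we}\!\!\lto \tm \towe \tm_2$. The interesting overlaps are: (i) same ES, different occurrences of the bound variable — replace the remaining occurrence on each side and converge in one further step each (modulo the copies being syntactically equal, since we substitute the same $\tmthree$); (ii) two different ESs, one nested inside the replaced subterm or inside the content of the other — replacing one occurrence can \emph{duplicate} the other ES-redex, so the join may require several $\towe$-steps, and I would invoke the fact that $\towe$ is strongly normalizing (\refprop{local-termination}) only for the modularity argument later, not here, but I do need to be careful to produce an actual common reduct rather than just a peak; (iii) the copied term $\tmthree$ of one redex contains the other ES $\esub\vartwo\cdot$ whose variable occurrence being replaced lies in $\tmtwo$ — here the diagram closes because the two steps act on independent syntactic positions. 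The main obstacle is organizing case (ii) cleanly: replacing a variable occurrence inside the argument of another ES creates a second copy of that ES, so I must track how many residual $\towe$-redexes appear and show they can all be fired to reach a common term; this is the standard "linear substitution duplicates ESs" bookkeeping, tedious but not deep, and I would delegate the full enumeration to the technical report as the paper does for similar lemmas.
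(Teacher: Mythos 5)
Your proposal matches the paper's proof in both structure and substance: each point is an induction on one of the spanning steps with a case analysis on the position of the other, the diamond of $\towm$ and $\towgcv$ holding because these steps neither duplicate nor erase other weak redexes (only values, i.e. abstractions, are erased, and weak contexts do not enter them), and the only delicate point for $\towe$ being the duplication at a distance you isolate in your case (ii) --- exactly the case the paper singles out, with the remaining tedious enumeration likewise delegated (the paper packages it via double contexts and a deformation lemma, but that is bookkeeping, not a different idea). One correction: your stated reason why $\towe$ is not diamond is wrong --- a peak arising from two occurrences of the \emph{same} ES closes in one $\towe$-step on each side, as you yourself observe in case (i); the diamond genuinely fails only in your case (ii), where the step substituting for one ES's variable copies a subterm containing an occurrence bound by another ES, so that closing requires two steps on one side and one on the other.
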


\begin{proof}
The three points are all proved by unsurprising inductions on one of the spanning steps and case analysis of the other spanning step. The case of $\towe$ requires a bit of technical work, as it needs:
\begin{itemize}
\item A notion of \emph{double context}, that is, a context with two holes, and some of its basic properties; 
\item A \emph{deformation lemma} analyzing how to retrieve the decomposition $\wctxfp\var$ in $\tmtwo$ when $\wctxfp\var \toe \tmtwo$.
\end{itemize}
This approach mimics the proof of local confluence in Accattoli \cite{DBLP:journals/lmcs/Accattoli23}. All the details can be found in \cite{accattoli2024mirroring}. The only case worth pointing out is the following one, where a duplication of steps happens at a distance, that is, without one of the steps being fully contained in the other one (it is a standard case for the LSC, but the unacquainted reader might find it interesting):
\begin{equation}
\scriptsize
\begin{tikzpicture}[baseline={([yshift=3pt]current bounding box.south)}]
		\node at (0,0)[align = center](source){\normalsize $\wctxtwop{\wctxfourfp\vartwo\esub\vartwo{\wctxthreefp\var}} \esub\var\tmtwo$};
		\node at (source.center)[right = 150pt](source-right){\normalsize $\wctxtwop{\wctxfourfp\vartwo\esub\vartwo{\wctxthreefp\tmtwo}} \esub\var\tmtwo$};

		\node at (source-right|-source-down)(target){\normalsize $\wctxtwop{\wctxfourfp{\wctxthreefp\tmtwo}\esub\vartwo{\wctxthreefp\tmtwo}} \esub\var\tmtwo$};
		\node at (source.center)[below = 25pt](source-down){\normalsize $\wctxtwop{\wctxfourp{\wctxthreefp\var}\esub\vartwo{\wctxthreefp\var}} \esub\var\tmtwo$};
		\node at \med{source-down.center}{target.center}[below=20pt](fifthnode){\normalsize $\wctxtwop{\wctxfourp{\wctxthreefp\tmtwo}\esub\vartwo{\wctxthreefp\var}} \esub\var\tmtwo$};
		
		\draw[|->](source) to node[above] {\scriptsize $\wesym$} (source-right);
		\draw[->](source) to node[left] {\scriptsize $\wesym$}(source-down);
		
		\draw[->, dotted](source-right) to node[right] {\scriptsize $\wesym$}(target);
		\draw[|->, dotted](source-down) to node[above] {\scriptsize $\wesym$} (fifthnode);
		\draw[|->, dotted](fifthnode) to node[above] {\scriptsize $\wesym$} (target);
\end{tikzpicture}
\tag*{\qedhere}
\end{equation}
\end{proof}

In contrast to confluence, commutation of two reductions $\Rew{1}$ and $\Rew{2}$ does \emph{not} follow from their \emph{local} commutation and strong normalization. In our case, however, the rules verify a non-erasing form of Hindley's strong (local) commutation \cite{HindleyPhD} of $\Rew{1}$ over $\Rew{2}$, here dubbed \emph{strict strong commutation}: if $\tmtwo_{1} \lRew{1} \tm \Rew{2} \tmtwo_{2}$ then $\exists\tmthree$ such that $\tmtwo_{1} \Rewp{2} \tmthree \lRew{1} \tmtwo_{2}$, that is, on one side of the commutation there are no duplications of steps, and on both sides there are no erasures, because $\towgcv$ can only erase values, which do not contain redexes since evaluation is weak. Strong commutation and strong normalization do imply commutation.

\begin{lem}[Strict strong local commutations]
Reduction $\towe$ (resp. $\towe$; resp. $\towgcv$) strictly strongly commutes over $\towm$ (resp. $\towgcv$; resp. $\towm$).\label{l:local-strong-commutation}
\end{lem}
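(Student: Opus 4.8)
The plan is to prove the three strict strong commutations separately, each by a local diagram chase: I would argue by induction on the term (equivalently, on the context of one of the two steps) and split into cases according to the relative positions of the two contracted redexes. It is convenient to write each peak with two forward arrows. Point~1 asks that if $\tm \towe \tmtwo_1$ and $\tm \towm \tmtwo_2$ then $\tmtwo_1 \towm^+ \tmthree$ and $\tmtwo_2 \towe \tmthree$ for some $\tmthree$; point~2 that if $\tm \towe \tmtwo_1$ and $\tm \towgcv \tmtwo_2$ then $\tmtwo_1 \towgcv^+ \tmthree$ and $\tmtwo_2 \towe \tmthree$; point~3 that if $\tm \towgcv \tmtwo_1$ and $\tm \towm \tmtwo_2$ then $\tmtwo_1 \towm^+ \tmthree$ and $\tmtwo_2 \towgcv \tmthree$. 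In each case it is the first step that may have copied the other redex: this genuinely forces more than one residual step in points~1 and~2, since a $\towe$-step can duplicate an ES whose body contains the other redex, whereas in point~3 the erasing step, which only removes a redex-free value, duplicates nothing, so there $\tmtwo_1 \towm^+ \tmthree$ is always a single $\towm$-step.

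The positional case analysis is the standard one for the LSC: either the two redexes occur in disjoint sub-terms, and then they commute with one step per side; or one occurs inside a sub-term of the other that the outer step neither duplicates nor erases, and then I close by the \ih; or the two redex patterns overlap "at a distance", i.e.\ one of them straddles a context, which is the only genuinely delicate family of cases. For the overlaps I would reuse the machinery already set up for local confluence (\reflemma{local-confluence}), namely double contexts and the deformation lemma describing the decompositions $\wctxfp\var$ produced by an exponential step, and defer the full enumeration to \cite{accattoli2024mirroring}. The substantive points are those establishing strictness. For point~1: $\towm$ neither duplicates nor erases, so the residual $\towe$-step is unique and nothing is lost, and two $\towm$-steps are needed exactly when the $\towe$-step copies the content $\tmthree$ of an ES that contains the contracted $\towm$-redex, so that $\tmtwo_1$ carries two copies of it and both must be fired (the copy cannot be triplicated, since $\towe$ copies one ES body at most once). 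For points~2 and~3 the key observation is that the value $\val$ erased by a $\towgcv$-step is an abstraction occurring in a weak context and weak contexts never enter abstractions; hence $\val$ contains neither a $\towm$- nor a $\towe$-redex, nor the target variable occurrence of an exponential step fired from an outer ES, so a $\towgcv$-step can never erase another step, and, being non-duplicating as well, it forces no extra residual step.

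The main obstacle I anticipate is the bookkeeping in the at-a-distance overlaps --- for instance when a $\towm$-step fires an abstraction separated from its argument by a substitution context whose explicit substitutions also host the other redex, or when a $\towgcv$-step pulls a substitution context out of an ES and thereby rearranges the surroundings of the other redex. None of these is conceptually hard, but each requires spelling out precisely how the two contexts interlock before the diagram can be drawn; everything outside this family is routine and closes via the \ih.
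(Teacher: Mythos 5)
Your proposal is correct and follows essentially the same route as the paper: an induction on one of the spanning steps with a case analysis of the other, closing the disjoint and nested cases directly, handling the at-a-distance overlaps via the deformation-style machinery, and deferring the full enumeration to the technical report, with strictness justified exactly as in the paper (erased values are abstractions, hence contain no weak redexes, and only $\towe$ duplicates, so only points 1 and 2 may need more than one step on the duplicated side). Your additional accounting of the step counts (at most two residuals, single steps in point 3) is accurate and consistent with the intended diagrams.
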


\begin{proof}
The three points are all proved by unsurprising inductions on one of the spanning steps and case analysis of the other spanning step. All the details can be found in \cite{accattoli2024mirroring}. \qedhere
\end{proof}

\begin{thm}[Confluence]
\label{thm:confluence}
Reductions $\tow$ and $\townotgcv$ are confluent.\label{thm:confluence}
\end{thm}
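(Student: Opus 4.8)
The plan is to assemble the statement from the local lemmas via the Hindley--Rosen method, exactly as outlined just above the theorem. I would first treat $\tow = \towm \cup \towe \cup \towgcv$, and then observe that the argument for $\townotgcv = \towm \cup \towe$ is literally the same one with the third rule dropped (indeed it arises as an intermediate step). So the real work is just to verify the two ingredients that Hindley--Rosen requires: that each of $\towm$, $\towe$, $\towgcv$ is confluent on its own, and that they pairwise commute.

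For separate confluence: by \reflemma{local-confluence}, $\towm$ and $\towgcv$ are diamond, hence confluent by the general fact recalled in \refsect{app-rewriting-notions} (no termination hypothesis is needed in the diamond case). The rule $\towe$ is only locally confluent by \reflemma{local-confluence}, but it is strongly normalizing by \refprop{local-termination}, so Newman's lemma upgrades local confluence to confluence.

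For commutation: \reflemma{local-strong-commutation} states that the three relevant ordered pairs among $\towm$, $\towe$, $\towgcv$ satisfy strict strong commutation, which is an instance of the hypothesis of Hindley's strong-commutation lemma; together with the separate strong normalization of the relations (\refprop{local-termination}) this yields that $\towm$, $\towe$, and $\towgcv$ pairwise commute. Then one concludes in two moves: $\townotgcv = \towm \cup \towe$ is a union of two confluent commuting reductions, hence confluent by Hindley--Rosen (this already settles the $\townotgcv$ half of the statement); and since $\towgcv$ commutes with both $\towm$ and $\towe$, it commutes with their union $\townotgcv$, so $\tow = \townotgcv \cup \towgcv$ is again a union of two confluent commuting reductions and is confluent by a second application of Hindley--Rosen.

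The main obstacle is essentially bookkeeping, since all the substance already lies in \reflemma{local-confluence}, \reflemma{local-strong-commutation}, and \refprop{local-termination}; two points only deserve a line of care. First, the orientation of strong commutation: Hindley's lemma must be applied with the terminating relation on the non-duplicating side, and one should check that \reflemma{local-strong-commutation} provides precisely that orientation for each pair (this is why it is phrased for $\towe$ over $\towm$, $\towe$ over $\towgcv$, and $\towgcv$ over $\towm$, and not the reverse). Second, the passage from pairwise commutation to commutation with the union $\townotgcv$ rests on the standard---but easy to take for granted---fact that commuting with $\to_1$ and with $\to_2$ entails commuting with $\to_1 \cup \to_2$, which follows by a straightforward induction on the length of a $(\to_1 \cup \to_2)$-reduction, pushing the two given commutations through one step at a time.
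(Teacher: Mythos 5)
Your proposal is correct and follows essentially the same route as the paper: separate confluence of the three rules via diamond/Newman plus local termination, pairwise commutation via Hindley's strong commutation together with local termination, and Hindley--Rosen for the unions $\townotgcv$ and $\tow$. The only differences (using diamond directly for $\towm$ and $\towgcv$ instead of Newman, and applying Hindley--Rosen binarily with the standard union-commutation remark rather than in one shot) are cosmetic.
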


\begin{proof}
By Newman lemma, local confluence (\reflemma{local-confluence}) and local termination (\refprop{local-termination}) imply that $\towm$, $\towe$, and $\towgcv$ are confluent separately. By a result of Hindley \cite{HindleyPhD}, strict strong local commutation  (\reflemma{local-strong-commutation}) and local termination imply that $\towm$, $\towe$, and $\towgcv$ are pairwise commuting. By Hindley-Rosen lemma, $\tow = \towm \cup \towe \cup \towgcv$ and $\townotgcv = \towm \cup \towe $ are confluent.\qedhere
\end{proof}

\begin{rem}
The design choice that values are only abstraction is motivated by the commutation of $\towe$ over $\towgcv$. Indeed, if one considers variables as values (thus allowing the erasure of variables by $\togcv$), such a commutation fails, and confluence does not hold, as the following non-commuting (and non-confluent) span shows: $\var  \esub\varthree{\varfour\varfour} 
\lRew{\wsym\gcv} 
\var \esub\vartwo\varthree\esub\varthree{\varfour\varfour} 
\Rew{\esym_\wctx}
 \var \esub\vartwo{\varfour\varfour} \esub\varthree{\varfour\varfour}$.
 
%
Interestingly, something similar happens in \cbneed, where values are only abstractions too. Indeed, if one considers variables as values (thus allowing the duplication of variables in \cbneed), then one has the following non-closable critical pair for the \cbneed strategy, non-closable because $\vartwo$ is not needed in $\Id (\la\varthree\varfour\var) \esub\var\vartwo \esub\vartwo{\Id}$ (the diagram closes in the \cbneed \emph{calculus} but the \cbneed \emph{strategy} is not confluent):
\begin{center}
\begin{tikzpicture}[ocenter]
		\node at (0,0)[align = center](source){\normalsize $\var (\la\varthree\varfour\var) \esub\var\vartwo \esub\vartwo{\Id}$};
		\node at (source.center)[right = 180pt](source-right){\normalsize $\var (\la\varthree\varfour\var) \esub\var\Id \esub\vartwo{\Id}$};
		\node at (source-right.center)[below = 20pt](target){\normalsize $\Id (\la\varthree\varfour\var) \esub\var\Id \esub\vartwo{\Id}$};
				
		\node at (source.center)[below = 20pt](source-down){\normalsize $\vartwo (\la\varthree\varfour\var) \esub\var\vartwo \esub\vartwo{\Id}$};
		\node at \med{source-down.center}{target.center}(fifthnode){\normalsize $\Id (\la\varthree\varfour\var) \esub\var\vartwo \esub\vartwo{\Id}$};
		
		\draw[->](source) to  (source-right);
		\draw[->, dotted](source-right) to (target);
		
		\draw[->](source) to (source-down);
		\draw[->, dotted](source-down) to  (fifthnode);
	\end{tikzpicture}
\end{center}
\end{rem}

\section{Silly Multi Types}
\label{sect:multi_types}
In this section, we introduce multi types and the silly multi type system. 

\myparagraph{Inception.} The design of the silly type system is specular to the one for \cbneed by Accattoli et al. \cite{DBLP:conf/esop/AccattoliGL19}. The \cbneed one tweaks the \cbv system in the literature (due to Ehrhard \cite{DBLP:conf/csl/Ehrhard12}) by changing its rules for applications and ESs, which are the rules using multi-sets, as to accommodate \cbn erasures. The silly one given here tweaks the \cbn system in the literature (due to de Carvalho \cite{DBLP:journals/mscs/Carvalho18}) by changing the same rules to accommodate \cbv erasures. In both cases, the underlying system is responsible for the duplication behavior. The desired erasure behavior is then enforced by changing the rules using multi-sets.

The silly system is in \reffig{silly-types}. It is the variant of the system for \cbn in \cite{DBLP:conf/esop/AccattoliGL19} (itself a reformulation of \cite{DBLP:journals/mscs/Carvalho18} tuned for weak evaluation) obtained by adding '$\mplus\mult\atype$' in the right premise of both rules $\ruleAp$ and $\ruleES$. Such an extra typing for these rules captures the \emph{silly extra copy} mentioned at the end of \refsect{ssc}. More details are given at the end of this section. 
 \begin{figure}[t]
 \centering
$\begin{array}{cccc}
\begin{array}{rrcll}
\textsc{Linear types} & \ltype, \ltypetwo  & \grameq   & \atype \mid  \arrowtype{\mtype}{\ltype}
\\
\textsc{Multi types}
    & \mtype, \mtypetwo  & \grameq  & \mult{\ltype_i}_{i \in I} \ \ \mbox{where $I$ is a finite set}
    \\
\textsc{Generic types}
    & \ttype, \ttypetwo  & \grameq  & \ltype \mid \mtype
\end{array}
\\[20pt]
 {\footnotesize
  \begin{tabular}{c@{\hspace{.3cm}}cc}
      \AxiomC{}
      \RightLabel{\footnotesize$\ruleAx$}
      \UnaryInfC{$\var \hastype \mset\ltype \vdashp 0 1 \var \hastype \ltype$}
      \DisplayProof
      &
      \AxiomC{$(\typctx_i \vdashp{\msteps_i}{\esteps_i} \tm \hastype \ltype_i)_{i\in I}$}
      \RightLabel{\footnotesize$\ruleMany$}
      \UnaryInfC{$\uplus_{i\in I} \typctx_i \vdashp{\sum_\iI\msteps_i}{\sum_\iI\esteps_i} \tm \hastype \mset{ \ltype_i }_{i\in I}$}
      \DisplayProof
      \\[16pt]
      \AxiomC{}
      \RightLabel{\footnotesize$\ruleAxLam$}
      \UnaryInfC{$\vdashp 0 0  \la\var\tm \hastype \atype$}
      \DisplayProof
      &
	      \AxiomC{$\typctx \vdashp\msteps\esteps \tm \hastype  \arrowtype\mtype\ltype $}
      \AxiomC{$\typctxtwo \vdashp\mstepstwo\estepstwo \tmtwo \hastype \mtype\mplus\mult\atype$}
      \RightLabel{\footnotesize$\ruleAp$}
      \BinaryInfC{$\typctx \uplus \typctxtwo \vdashp{\msteps+\mstepstwo+1}{\esteps+\estepstwo} \tm\tmtwo \hastype \ltype$}
      \DisplayProof
      \\[12pt]
      \AxiomC{$\typctx \vdashp\msteps\esteps  \tm \hastype \ltype$}      
      \RightLabel{\footnotesize$\ruleLam$}
      \UnaryInfC{$\typctx \sm \var \vdashp\msteps\esteps \la\var\tm\hastype \arrowtype {\typctx(\var)} \ltype$}
      \DisplayProof
      &
	      \AxiomC{$\typctx \vdashp\msteps\esteps \tm \hastype  \ltype $}
      \AxiomC{$\typctxtwo \vdashp\mstepstwo\estepstwo \tmtwo \hastype \typctx(\var)\mplus\mult\atype$}
      \RightLabel{\footnotesize$\ruleES$}
      \BinaryInfC{$(\typctx \sm\var) \uplus \typctxtwo \vdashp{\msteps+\mstepstwo}{\esteps+\estepstwo} \tm\esub\var\tmtwo \hastype \ltype$}
      \DisplayProof
    \end{tabular}
    }
\end{array}$
\caption{The silly multi type system.}
\label{fig:silly-types}
\end{figure}

\myparagraph{\cbs Types and Judgements.} 
 \emph{Linear types} and \emph{multi(-sets) types} are defined by mutual induction in \reffig{silly-types}. 
 Note the linear constant $\normal$ used to type abstractions, which are normal terms. For conciseness, sometimes we shorten it to $\ntype$. We shall show that every normalizing term is typable with $\atype$, hence its name. The constant $\atype$ shall also play a role in our quantitative study in \refsect{tight}. The empty multi set $\mult{\ }$ is also noted $\zero$. 
  
  A multi type $\mset{\ltype_1, \dots, \ltype_n}$ has to be intended as a conjunction $\ltype_1 \land \dots \land 
\ltype_n$ of linear types $\ltype_1, \dots, \ltype_n$, for a commutative, associative, non-idempotent conjunction 
$\land$ (morally a tensor $\otimes$), of neutral element $\emptymset$.
The intuition is that a linear type corresponds to a single use of a term $\tm$, and that $\tm$ is typed with a 
multiset 
$\mtype$ of $n$ linear types if it is going to be used (at most) $n$ times, that is, if $\tm$ is part of a larger term $\tmtwo$, then a copy 
of $\tm$ shall end up in evaluation position during the evaluation of $\tmtwo$.

The size $\size\mtype$ of a multi type $\mtype$ is the number of its elements, that is, $\size{\mset{\ltype_1, \dots, \ltype_n}}\defeq n$.

\emph{Judgments} have shape $\typctx \vdashp\msteps\esteps \tm \hastype \ttype$ where $\tm$ is a term, $\msteps$ and $\esteps$ are two natural numbers, $\ttype$ is either a multi type or a 
linear type, and $\typctx$ is a \emph{type context}, \ie, a total function from variables to multi types 
such that  $\dom{\typctx} \defeq \{\var \mid \typctx(\var) \neq \emptymset\}$ is finite, usually written as $\var_1 \hastype \mtype_1, \dots, \var_n \hastype \mtype_n$ (with $n \in 
\nat$) if $\dom{\typctx} \subseteq \{\var_1, \dots, \var_n\}$ and $\typctx(\var_i) = \mtype_{i}$ for $1 \leq i \leq 
n$. 

The indices $\msteps$ and $\esteps$ shall be used for measuring the length of evaluation sequences via type derivations, namely to measure the number of $\towm$ and $\towe$ steps. There is no index for $\towgcv$ steps in order to stay close to the type systems in Accattoli et al. \cite{DBLP:conf/esop/AccattoliGL19} for \cbn/\cbv/\cbneed, that do not have an index for GC either; the reason being that GC (by value) can be postponed. A quick look to the typing rules shows that $\msteps$ and $\esteps$ are not really 
needed, as $\msteps$ can be recovered as the number of $\app$ rules, and $\esteps$ as the number of $\ax$ rules in the typing derivation producing the judgement. It is however handy to note them~explicitly. 

We write $\typctx \vdash \tm \hastype \ttype$ when the information given by $\msteps$ and $\esteps$ is not relevant.

\myparagraph{Typing Rules.} The abstraction rule $\ruleLam$ uses the notation ${\typctx \sm \var}$ for the type context defined as $\typctx$ on  every variable but possibly $\var$, for which $(\typctx\sm\var)(\var)=\zero$. It is a compact way to express the rule in both the cases $\var\in\dom\typctx$ and $\var\notin\dom\typctx$. 

Rules $\ruleAp$ and $\ruleES$ require the argument to be typed with $\mtype \uplus\mult\atype$, which is necessarily introduced by rule $\ruleMany$, the hypotheses of which are a multi set of derivations, indexed by a possibly empty set $I$. When $I$ is empty, the rule has one premises, the one for $\atype$.


\myparagraph{Type Derivations.} We write $\tderiv \exder \typctx\vdash \tm \hastype \ltype$ if $\tderiv$ is a (\emph{type}) \emph{derivation} (\ie a tree constructed using the rules in \reffig{silly-types}) of final judgment $\typctx \vdash \tm \hastype \ltype$.

 The \emph{size} $\size\tderiv$ of a  derivation $\tderiv \exder \Deri[(\msteps, \esteps)] {\typctx}{\tm}{\type}$ is the number of non-$\ruleMany$ rules, which is always greater or equal to the sum of the indices, that is, $\size\tderiv \geq\msteps + \esteps$.
 
For listing $n$ identical premises of conclusion $\typctx\vdash \tm \hastype \type$ of a $\ruleMany$ rule, we sometimes use the abbreviation $(\typctx\vdash \tm \hastype \type)_{i=1,\ldots,n}$ on the conclusion of a single derivation. For instance, we denote the derivation having a $\ruleMany$ rule with $n+1$ premises, $n$ of which are all axioms of conclusion $\vartwo\hastype\mult\type \vdash \vartwo \hastype \type$, as follows:
\begin{center}
		\AxiomC{}  
	\RightLabel{$\ruleAx$}
	\UnaryInfC{$(\vartwo\hastype\mult\type \vdash \vartwo \hastype \type)_{i=1,\ldots,n}$}
	\AxiomC{}  
	\RightLabel{$\ruleAx$}
	\UnaryInfC{$\vartwo\hastype\mult\atype \vdash \vartwo \hastype \atype$}
	\RightLabel{$\ruleMany$}
	\BinaryInfC{$\vartwo\hastype\mult{\type^n,\atype} \vdash \vartwo \hastype \mset{\type^n,\atype}$}  
	\DisplayProof
\end{center}

\myparagraph{Further Technicalities about Types.} The type context $\typctx$ is \emph{empty} if $\dom{\typctx} = \emptyset$, and we write $\vdash \tm \hastype \ltype$ when $\typctx$ is empty.  
\emph{Multi-set sum} $\mplus$ is extended to type contexts point-wise,
\ie\  $(\typctx \mplus \typctxtwo)(\var) \defeq \typctx(\var) \mplus \typctxtwo(\var)$ for each variable $\var$.
This notion is extended to a finite family of type contexts as expected, 
in particular $\bigmplus_{i \in J\!} \typctx_i$ is the empty context  when $J = \emptyset$.
Given two type contexts $\typctx$ and $\typctxtwo$ such that $\dom{\typctx} \cap \dom{\typctxtwo} = \emptyset$, the 
type 
context $\typctx, \typctxtwo$ is defined by $(\typctx, \typctxtwo)(\var) \defeq \typctx(\var)$ if $\var \in 
\dom{\typctx}$, $(\typctx, \typctxtwo)(\var) \defeq \typctxtwo(\var)$ if $\var \in \dom{\typctxtwo}$, and $(\typctx, 
\typctxtwo)(\var) \defeq \emptymset$ otherwise.
Note that $\typctx, \var \hastype \emptymset = \typctx$, where we implicitly assume $\var \notin \dom{\typctx}$. 

\myparagraph{Relevance.}
    Note that no weakening is allowed in axioms. An easy induction then shows: 
\begin{lem}[Type contexts and variable occurrences]
Let $\tderiv \exder \Deri[(\msteps, \esteps)] \typctx \tm \ltype$ be a  derivation. Then $\ofv\tm \subseteq \dom\typctx \subseteq \fv\tm$.\label{l:typctx-varocc-tm}
\end{lem}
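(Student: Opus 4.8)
The plan is to prove both inclusions at once by induction on the type derivation $\tderiv$ (equivalently, by structural induction on $\tm$, inspecting the rule concluding $\tderiv$). One subtlety forces a mild strengthening: the cases for $\ruleAp$ and $\ruleES$ recurse on a sub-derivation that types the argument with a \emph{multiset} type rather than a linear one, so I would actually establish, for every derivation $\tderiv\exder\typctx\vdashp\msteps\esteps\tm\hastype\ttype$, that $\dom{\typctx}\subseteq\fv\tm$ always, and that $\ofv\tm\subseteq\dom{\typctx}$ whenever $\ttype$ is a linear type or a \emph{non-empty} multiset. The strengthening is harmless — and it is also the reason why the statement of the lemma is phrased for linearly-typed derivations only — because the sole derivations concluding with the empty multiset $\zero$ are the one-node instances of $\ruleMany$ with $I=\emptyset$, whose type context is empty; and such derivations are never reached from the hypothesis, since the argument premises of $\ruleAp$ and $\ruleES$ are typed with $\mtype\mplus\mult\atype$, which always contains a copy of $\atype$ and is therefore non-empty.

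For the base cases, rule $\ruleAx$ forces $\typctx = \var\hastype\mset\ltype$, so $\dom{\typctx} = \set\var = \ofv\var = \fv\var$ and both inclusions are equalities; rule $\ruleAxLam$ has the empty context and $\ofv{\la\var\tm}=\emptyset$, so both inclusions hold trivially. For $\ruleLam$, with premise $\typctx\vdash\tm\hastype\ltype$ and conclusion context $\typctx\sm\var$, the lower inclusion is immediate since $\ofv{\la\var\tm}=\emptyset$, and the upper one follows from the \ih\ ($\dom{\typctx}\subseteq\fv\tm$) by deleting $\var$ on both sides, using $\dom{\typctx\sm\var}=\dom{\typctx}\setminus\set\var$ and $\fv{\la\var\tm}=\fv\tm\setminus\set\var$. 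For $\ruleMany$ over a family $(\typctx_i\vdash\tm\hastype\ltype_i)_{i\in I}$, the conclusion context has domain $\bigcup_{i\in I}\dom{\typctx_i}$; by the \ih\ each $\dom{\typctx_i}$ sits between $\ofv\tm$ and $\fv\tm$, so this union sits below $\fv\tm$, and, whenever $I\neq\emptyset$, it also sits above $\ofv\tm$.

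For $\ruleAp$ and $\ruleES$ I would apply the \ih\ to the left premise directly, and to the right one through the $\ruleMany$ case above, which applies with $I\neq\emptyset$ because the argument is typed with the non-empty $\mtype\mplus\mult\atype$, yielding $\ofv\tmtwo\subseteq\dom{\typctxtwo}\subseteq\fv\tmtwo$. It then remains to combine the two contexts, using $\dom{\typctx\uplus\typctxtwo}=\dom{\typctx}\cup\dom{\typctxtwo}$, the clauses $\ofv{\tm\tmtwo}=\ofv\tm\cup\ofv\tmtwo$ and $\fv{\tm\tmtwo}=\fv\tm\cup\fv\tmtwo$ in the $\ruleAp$ case, and the clauses $\ofv{\tm\esub\var\tmtwo}=(\ofv\tm\setminus\set\var)\cup\ofv\tmtwo$ and $\fv{\tm\esub\var\tmtwo}=(\fv\tm\setminus\set\var)\cup\fv\tmtwo$ together with $\dom{(\typctx\sm\var)\uplus\typctxtwo}=(\dom{\typctx}\setminus\set\var)\cup\dom{\typctxtwo}$ in the $\ruleES$ case. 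I do not expect a genuine obstacle here: the only point that demands care is the empty-multiset corner of $\ruleMany$, which is precisely what the mandatory $\mult\atype$ added to the argument premises of $\ruleAp$ and $\ruleES$ rules out, so the strengthened induction hypothesis goes through without friction.
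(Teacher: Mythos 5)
Your proof is correct and follows essentially the same route as the paper: a straightforward induction on the derivation, where the key observation — that the mandatory $\mult\atype$ in the argument premises of $\ruleAp$ and $\ruleES$ rules out the zero-premise instance of $\ruleMany$ and thus lets the inductive hypothesis reach the right premise — is exactly the point the paper's (much terser) proof highlights. Your explicit strengthening to multiset-typed derivations just spells out what the paper's sketch leaves implicit, and all the case computations check out.
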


\begin{proof}
Straightforward induction on $\tderiv$. Note that, if one excludes rule $\ruleLam$, the management of type contexts by the rules of the type systems mimics exactly the definition of $\ofv\tm$, in particular in rules $\ruleAp$ and $\ruleES$ one can always apply the \ih to the right premise because of the $\atype$ requirement, which forbids rule $\ruleMany$ with zero premises. Rule $\ruleLam$ is what might allow $\ofv\tm \subsetneq\dom\typctx$, as one might also type under abstraction, thus extending $\dom\typctx$ beyond $\ofv\tm$. The part $\dom\typctx \subseteq \fv\tm$ is due to the absence of weakening in the type system.\qedhere
\end{proof}

\reflemma{typctx-varocc-tm} implies that  derivations of closed terms have empty type context.
Note that free variables of $\tm$ might not be in $\dom{\typctx}$, 
if they only occur in abstractions typed with $\ruleAxLam$.

\myparagraph{Typing the Silly Extra Copy.} The empty multi type $\zero$ is the type for variables that do not occur or whose occurrences are unreachable by weak evaluation. A typical example is $\la\var\vartwo$, that can be typed only with arrow types of the form $\tarrow\zero\ltype$ (plus of course with $\atype$), because of \reflemma{typctx-varocc-tm}. Note that in the silly system every term---even diverging ones---can be typed with $\zero$ by rule $\ruleMany$ (taking 0 premises). In \cbn, an argument for $\la\var\vartwo$ would only need to be typed with $\zero$, as typability with $\zero$ means that the term shall be erased. In the silly system, instead, the application rule $\ruleAp$ requires an argument of $\la\var\vartwo$ to additionally be typed with $\mult\atype$, because of the '$\uplus\mult\atype$' requirement for arguments (and ESs), forcing the argument to be $\tow$ normalizing and capturing silly erasures.

Now, note that the modification $\uplus\mult\atype$ at work in the silly system concerns \emph{all} arguments and ESs, not only those associated to $\zero$. This is what corresponds at the type level to the reduction of the \emph{silly extra copy} of \emph{every} argument/ES (out of abstractions).
  
\myparagraph{Relationship with  the Literature.} Our system is essentially the one used by Kesner and Ventura to measure \cbn strong normalization in the LSC with respect to strong evaluation, i.e. possibly under abstraction \cite{DBLP:conf/ifipTCS/KesnerV14}. There are two differences. First, they tweak the $\ruleAp$ and $\ruleES$ \cbn rules with $\uplus\mult\ltype$ rather than with $\uplus\mult\atype$, that is, they allow an arbitrary linear type for the additional copy to be evaluated. Second, they do not have rule $\ruleAxLam$, because their evaluation is strong. Thus,  $\Id\esub\var{\la\vartwo\Omega}$ is not typable in \cite{DBLP:conf/ifipTCS/KesnerV14}, while here it is.

The \cbv system by Manzonetto et al. \cite{DBLP:journals/fuin/ManzonettoPR19} is similar to ours in that it tweaks the \cbn system and requires multi-sets to be non-empty.

\section{The Weak Calculus, Types, and Strong Normalization}
\label{sect:weak-ssc}
Here, we show that silly multi types characterize strong normalization in the weak SSC. The proof technique is standard: we prove correctness (\ie $\tm$ typable implies $\tm\in\sn\wsym$) via quantitative subject reduction, and completeness (\ie $\tm\in\sn\wsym$ implies $\tm$ typable) via subject expansion and typability of normal forms.
Note that SN in our weak case is simpler than SN in strong calculi, since here erasing steps cannot erase divergence. Actually, they cannot erase \emph{any} step, as guaranteed by strict commutation in \refsect{rewriting_properties}. At the end of the section, we shall indeed obtain \emph{uniform normalization}, i.e. that weak and strong normalization for $\tow$ coincide.

\myparagraph{Correctness.} The quantitative aspect of the following subject reduction property is that the indices decrease with every non-erasing step, and erasing steps decrease the size of the typing derivation. The proof is standard. The point about GC by value is where the features of the new system play a role. The proof of the exponential case relies on the following splitting and linear substitution lemmas, that are typical of subject reduction proofs for multi types.

\begin{lem}[Splitting multi-sets with respect to derivations]
\label{l:types-splitting-multisets}
Let $\tm$ be a term, $\tderiv \exder  \tyjp{(\msteps,\esteps)}{\tm}{\typctx}{\mtype}$ a derivation, and $\mtype = \mtypetwo \mplus \mtypethree$  a splitting. Then there exist two derivations 
\begin{itemize}
  \item $\tderiv_{\mtypetwo} \exder  \tyjp{(\msteps_{\mtypetwo},\esteps_{\mtypetwo})}{\tm}{\typctx_{\mtypetwo}}{\mtypetwo}$, and
  \item $\tderiv_{\mtypethree} \exder  \tyjp{(\msteps_{\mtypethree},\esteps_{\mtypethree})}{\tm}{\typctx_{\mtypethree}}{\mtypethree}$ 
\end{itemize}
 such that:
	\begin{itemize}
	\item \emph{Type contexts}: $\typctx = \typctx_{\mtypetwo} \mplus \typctx_{\mtypethree}$,
	\item \emph{Indices}: $\msteps = \msteps_{\mtypetwo} + \msteps_{\mtypethree}$ and $\esteps = \esteps_{\mtypetwo} + \esteps_{\mtypethree}$.
	\end{itemize}
\end{lem}

\begin{proof}
The last rule of $\tderiv \exder  \tyjp{(\msteps,\esteps)}{\tm}{\typctx}{\mtype}$ can only be $\ruleMany$, thus it is enough to re-group its hypotheses according to $\mtypetwo$ and $\mtypethree$.\qedhere
\end{proof}

\begin{lem}[Linear substitution]
\label{l:open-linear-substitution}
If $\tderiv \exder \Deri[(\msteps, \esteps)]{\typctx, \var \hastype \mtype}{\octxfp{\var}}{\ttype}$ 
then there is a splitting $ \mtype =  \mtypetwo  \mplus \mtypethree$ with $\mtypetwo\neq\zero$ such that for every derivation $\tderivtwo \exder \Deri[(\mstepstwo, \estepstwo)]{\typctxtwo}{\tm}{\mtypetwo}$ with $\var\notin\dom\typctxtwo$ there is a derivation 
$\tderiv' \exder  \Deri[(\msteps+\mstepstwo, \esteps + \estepstwo - \size\mtypetwo)]{\typctx \mplus \typctxtwo, \var \hastype \mtypethree}{\octxfp{\tm}}{\ttype}$. \label{l:open-linear-substitution}
\end{lem}

\begin{proof}
See the technical report \cite{accattoli2024mirroring}.
\end{proof}

\begin{prop}[Quantitative subject reduction for Weak SSC]
  Let $\tderiv\exder  \Deri[(\msteps, \esteps)]\typctx{\tm}\ltype$ be a  derivation.  \label{prop:open-subject-reduction} 
  \begin{enumerate}
    \item \emph{Multiplicative}:  if $\tm\towm\tmtwo$ then $\msteps\geq 1$ and
  there exists
  $\tderivtwo\exder  \Deri[(\mstepstwo, \esteps)]\typctx{\tmtwo}\ltype$ with $\msteps >\mstepstwo$.
  
  \item \emph{Exponential}: 
 if $\tm\towe\tmtwo$ then $\esteps\geq 1$ and
  there exists
  $\tderivtwo\exder  \Deri[(\msteps, \estepstwo)]\typctx{\tmtwo}\ltype$ with $\esteps >\estepstwo$. 

  \item \emph{GC by value}: 
 if $\tm\towgcv\tmtwo$ then 
  there exists
  $\tderivtwo\exder  \Deri[(\msteps, \esteps)]\typctx{\tmtwo}\ltype$ with $\size\tderiv >\size\tderivtwo$. 
  \end{enumerate}
\end{prop}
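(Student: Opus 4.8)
The plan is to prove each of the three items by induction on the weak context enclosing the fired redex: since all three rules are closed under weak contexts, it suffices to settle the three \emph{root} cases and then observe that the rules $\ruleAp$, $\ruleES$, $\ruleLam$, $\ruleMany$ combine indices and sizes additively, so the inequalities propagate through context closure (for a hole sitting in an argument or in an ES content, which is typed by $\ruleMany$ with $k\geq 1$ premises because of the $\uplus\mult\atype$ requirement, one applies the \ih to all $k$ sub-derivations and sums). I would first establish two auxiliary lemmas. A \emph{split/merge lemma} for $\ruleES$-chains: a derivation of $\sctxp\tm$ decomposes into a derivation of $\tm$ plus one derivation for each ES content of $\sctx$, with contexts and indices recombining via $\uplus$ and $+$, and conversely these pieces can be reassembled around a different body provided the body realises the same type-context on the binders of $\sctx$. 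And a micro-step \emph{linear substitution lemma}: if $\typctx\vdash\wctxfp\var\hastype\ltype$, then the $\ruleAx$ leaf typing the displayed occurrence uses some linear type $\ltypetwo$ with $\typctx(\var)=\mult\ltypetwo\uplus\mtype'$, and replacing that leaf by any derivation of $\tmtwo\hastype\ltypetwo$ yields a derivation of $\wctxfp\tmtwo\hastype\ltype$ whose context is $\typctx$ with one $\mult\ltypetwo$ removed at $\var$ and the context of $\tmtwo$ added, and whose indices are those of the original minus the $\ruleAx$ contribution $(0,1)$ plus the indices of the grafted derivation. Both are routine inductions on $\sctx$, resp. on $\wctx$. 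I will freely use that $\msteps$ counts the $\ruleAp$ rules and $\esteps$ the $\ruleAx$ rules of a derivation, which makes the bookkeeping transparent.

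For the \emph{multiplicative root case} $\sctxp{\la\var\tm}\tmtwo\rtom\sctxp{\tm\esub\var\tmtwo}$, inverting $\ruleAp$ gives $\typctx_1\vdashp{\msteps_1}{\esteps_1}\sctxp{\la\var\tm}\hastype\tarrow\mtype\ltype$ and $\typctx_2\vdashp{\msteps_2}{\esteps_2}\tmtwo\hastype\mtype\uplus\mult\atype$ with $\msteps=\msteps_1+\msteps_2+1$, so $\msteps\geq 1$. By the split lemma and inversion of $\ruleLam$ (the abstraction has an arrow type, so it cannot come from $\ruleAxLam$) the left derivation breaks into a derivation $\typctx_0,\var\hastype\mtype\vdash\tm\hastype\ltype$ and the $\ruleES$-chain of $\sctx$. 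I then apply $\ruleES$ to this derivation of $\tm$ and to the derivation of $\tmtwo$ (whose requirement is exactly $\tmtwo\hastype\mtype\uplus\mult\atype$), and rewrap with the $\ruleES$-chain of $\sctx$ — sound because, up to $\alpha$, the binders of $\sctx$ are fresh for $\tmtwo$, hence realise the same type-context on $\tm\esub\var\tmtwo$ as on $\la\var\tm$. The net change is one $\ruleAp$ deleted and one $\ruleES$ added; since $\ruleES$ adds nothing to the indices, $\mstepstwo=\msteps-1<\msteps$ and $\esteps$ is unchanged.

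The \emph{exponential root case} $\wctxfp\var\esub\var\tmtwo\rtoep\wctx\wctxfp\tmtwo\esub\var\tmtwo$ is the main obstacle. Inverting the root $\ruleES$ gives $\typctx'\vdash\wctxfp\var\hastype\ltype$ and $\typctx''\vdash\tmtwo\hastype\typctx'(\var)\uplus\mult\atype$, the latter necessarily via $\ruleMany$. The linear substitution lemma gives $\typctx'(\var)=\mult\ltypetwo\uplus\mtype'$ with $\ltypetwo$ typing the displayed occurrence; inverting that $\ruleMany$ extracts one premise $\tmtwo\hastype\ltypetwo$. Grafting it in place of the $\ruleAx$ leaf via the lemma produces a derivation of $\wctxfp\tmtwo\hastype\ltype$ with type-context $\mtype'$ at $\var$ (no new $\var$-occurrences, as $\var\notin\fv\tmtwo$ up to $\alpha$). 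The remaining premises of the $\ruleMany$ retype $\tmtwo\hastype\mtype'\uplus\mult\atype$ — crucially the $\mult\atype$ summand is untouched, so the reformed ES still satisfies its $\uplus\mult\atype$ side condition — and re-applying $\ruleES$ closes the case, with overall context and type unchanged. The extracted sub-derivation merely migrates from the argument side to the occurrence side, so its indices cancel; the only net change is the disappearance of the $\ruleAx$ leaf, worth $(0,1)$, whence $\esteps\geq 1$ and $\estepstwo=\esteps-1<\esteps$ while $\msteps$ is unchanged. The delicate point is exactly this: pinning the lemma down so that one linear type is peeled off $\typctx'(\var)$ while the $\mult\atype$ witness stays inside the residual ES, and so that the displaced sub-derivation is correctly tracked through a weak context that may capture variables.

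Finally, for the \emph{GC-by-value root case} $\tm\esub\var{\sctxp\val}\rtogcv\sctxp\tm$ with $\var\notin\fv\tm$: inverting the root $\ruleES$ gives a derivation of $\tm\hastype\ltype$ with context $\typctx$ and a derivation of $\sctxp\val\hastype\typctx(\var)\uplus\mult\atype$. By \reflemma{typctx-varocc-tm}, $\var\notin\fv\tm$ forces $\typctx(\var)=\zero$, so $\sctxp\val$ is typed with $\mult\atype$, i.e.\ by $\ruleMany$ with a single premise $\sctxp\val\hastype\atype$, which is necessarily a $\ruleES$-chain for $\sctx$ bottoming out in a $\ruleAxLam$ leaf $\val\hastype\atype$. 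I delete that leaf (size $1$, indices $(0,0)$) and graft the derivation of $\tm$ in its place inside the chain — legitimate because $\ruleAxLam$ contributes the empty context and (up to $\alpha$) $\tm$ shares no variable with the binders of $\sctx$, so $\tm$ realises on them the same type-context ($\zero$) as $\val$ did. All surviving rules are $\ruleES$ (plus the dropped $\ruleMany$, which never counted toward size), so the resulting derivation of $\sctxp\tm\hastype\ltype$ has the same context and indices as the original but strictly smaller size, thanks to the removed $\ruleAxLam$ (and the removed root $\ruleES$). Context closure preserves the strict size decrease, summing over the $\geq 1$ sub-derivations of $\ruleMany$-typed positions.
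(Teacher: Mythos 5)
Your overall plan coincides with the paper's: induction on the context closing the root step, a linear substitution lemma for the exponential case, and a decomposition of the typing of $\sctxp\val$ for the GC-by-value case. Your multiplicative case and your GC case are essentially the paper's argument (your grafting of the derivation of $\tm$ in place of the $\ruleAxLam$ leaf at the bottom of the $\ruleES$-chain is the paper's induction on $\sctx$ unrolled, with the same $(0,0)$-leaf and size-decrease bookkeeping). The genuine gap is in the exponential case, exactly at the point you yourself flag as delicate: your linear substitution lemma is stated in a single-leaf form that is false for the contexts actually allowed by $\rtoep\wctx$. In the SSC the inner context $\wctx$ of the exponential root rule is a \emph{full weak} context, so the replaced occurrence of $\var$ may sit inside an argument or inside an ES content. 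Such a position is typed through a $\ruleMany$ rule with $k\geq 1$ premises (at least one because of the $\uplus\mult\atype$ requirement, but possibly many), \emph{all typing the same subterm}, and each premise contains its own $\ruleAx$ leaf for that occurrence, possibly with different linear types. Replacing ``the'' $\ruleAx$ leaf by a derivation of $\tmtwo$ then does not even produce a well-formed derivation: the modified premise would have subject containing $\tmtwo$ at the hole while its sibling premises still have $\var$ there, and $\ruleMany$ requires all premises to share the same subject. Concretely, for $(\vartwo\,\var)\esub\var\tmtwo \towe (\vartwo\,\tmtwo)\esub\var\tmtwo$ with $\vartwo$ typed $\arrowtype{\mult{\ltype_1,\ltype_2}}{\ltype}$, the displayed occurrence of $\var$ is typed by three $\ruleAx$ leaves (for $\ltype_1$, $\ltype_2$ and the extra $\atype$), and all of them must be replaced at once.

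The repair is the multiset form of the lemma: from $\typctx\vdash\wctxfp\var\hastype\ltype$ one reads off a non-empty multiset $\mtypetwo$ with $\typctx(\var)=\mtypetwo\uplus\mtype'$ collecting the types of all $\ruleAx$ leaves for that occurrence, and replacing the occurrence requires grafting one derivation of $\tmtwo\hastype\ltypetwo_i$ for each element $\ltypetwo_i$ of $\mtypetwo$, with contexts summed and the $\esteps$ index losing as many $\ruleAx$ leaves as $\mtypetwo$ has elements. Correspondingly, in the root exponential case you must extract from the $\ruleMany$ typing the ES argument the whole sub-family of premises matching $\mtypetwo$; the residual premises still type $\tmtwo$ with $\mtype'\uplus\mult\atype$ (multiset arithmetic: the $\atype$ witness survives), so the reformed $\ruleES$ is fine, the context and $\msteps$ are unchanged, and $\estepstwo=\esteps-k$ with $k\geq 1$. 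This still yields the strict inequality the proposition asks for, but your claim that $\esteps$ drops by exactly one is wrong in general for $\towe$ (it is correct only when the occurrence sits in a linearly-typed evaluation position, which is the situation of the later tight subject reduction for the strategy, not of the weak calculus). With the lemma corrected in this way, the rest of your argument goes through.
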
  

\begin{proof}
All three points are by induction on the context closing the root step of the point. The first two points are standard. For the first one, see the technical report \cite{accattoli2024mirroring}. The root case of the second one is given below, to illustrate the use of the substitution lemma above. The interesting case is the root case $\rtogcv$ of the third point, that shows both the use of the silly extra copy and the management of the substitution context $\lctx$; the contextual closure  of $\rtogcv$ is then standard.
\begin{enumerate}
\item[(2)] \emph{Exponential.} We give only the details of the root case because the contextual cases are standard and can be found in the technical report \cite{accattoli2024mirroring}. 
\begin{itemize}
\item \emph{Root step}, \ie  $\tm =  \octxfp{\var} \esub\var\tmthree \rtoep\octx \octxfp{\tmthree} \esub\var\tmthree$. Given a derivation $\tyjp{(\msteps',\esteps')}{\octxfp{\var}}{\typctx'}{\ltype}$, we have $\typctx'(\var)\neq\zero$ by \reflemma{typctx-varocc-tm}, which can be applied because $\var\in\ofv{\octxfp{\var}}$. Then $\tderiv$ has the following form:
\[
\AxiomC{ $\tderiv_{\octxfp{\var}} \exder \tyjp{(\msteps_{\octx},\esteps_{\octx})}{\octxfp{\var}}{\typctx_{\octx}, \var:\mtype}{\ltype}$ }
\AxiomC{ $\tyjp{(\msteps_\tmthree,\esteps_\tmthree)}{\tmthree}{\typctxtwo}{\mtype\mplus\mult\ntype}$ }
\RightLabel{$\ruleES$}
\BinaryInfC{ $\tyjp{(\msteps_{\octx} + \msteps_\tmthree, \esteps_{\octx} + \esteps_\tmthree)}{\octxfp{\var} \esub\var\tmthree}{\typctx_{\octx} \mplus \typctxtwo }{\ltype}$ }
\DisplayProof
\]
		with $\mtype\neq\zero$, $\typctx = \typctx_{\octx}  \mplus  \typctxtwo$,  $\msteps = \msteps_{\octx} + \msteps_\tmthree$,  and $\esteps = \esteps_{\octx} + \esteps_\tmthree$. 
		
	Let $\mtype = \mtypetwo \mplus  \mtypethree$ with $\mtypetwo\neq\zero$  be the splitting of $\mtype$ given by the linear substitution lemma (\reflemma{open-linear-substitution}) applied to $\tderiv_{\octxfp{\var}}$. By multi-sets splitting (\reflemma{types-splitting-multisets}) applied to $\mtype\mplus\mult\ntype$, there exist two derivations: 
\begin{enumerate}
\item  $\tderivtwo_\mtypetwo \exder \Deri[(\msteps_\tmthree^{\mtypetwo}, \esteps_\tmthree^{\mtypetwo})]{\typctxtwo^{\mtypetwo}}{\tmthree}{\mtypetwo}$ and

\item  $\tderivtwo_\mtypethree \exder \Deri[(\msteps_\tmthree^\mtypethree, \esteps_\tmthree^\mtypethree)]{\typctxtwo^\mtypethree}{\tmthree}{\mtypethree\mplus\mult\ntype}$.
\end{enumerate}
such that $\typctxtwo = \typctxtwo^\mtypetwo \mplus \typctxtwo^\mtypethree$, $\msteps_\tmthree = \msteps_\tmthree^\mtypetwo + \msteps_\tmthree^\mtypethree$,  and $\esteps_\tmthree = \esteps_\tmthree^\mtypetwo + \esteps_\tmthree^{\mtypethree}$.

Now, by applying again the linear substitution lemma to $\tderiv_{\octxfp{\var}}$ with respect to $\tderivtwo_\ltypetwo$, we obtain a derivation:
\[
\tderiv_{\octxfp{\tmthree}} \exder  \Deri[(\msteps_{\octx}+\msteps_\tmthree^\mtypetwo, \esteps_{\octx} + \esteps_\tmthree^\mtypetwo-\size\mtypetwo)]{\var:\mtypethree;
  \typctx_{\octx} \mplus \typctxtwo^\mtypetwo}{\octxfp{\tmthree}}{\ltype}
  \]

Then $\tderivtwo$ is built as follows:
\[
\AxiomC{ $\Deri[(\msteps_{\octx}+\msteps_\tmthree^\mtypetwo, \esteps_{\octx} + \esteps_\tmthree^\mtypetwo-\size\mtypetwo)]{\var:\mtypethree;
  \typctx_{\octx} \mplus \typctxtwo^\mtypetwo}{\octxfp{\tmthree}}{\ltype}$ }
\AxiomC{ $\Deri[(\msteps_\tmthree^\mtypethree, \esteps_\tmthree^\mtypethree)]{\typctxtwo^\mtypethree}{\tmthree}{\mtypethree\mplus\mult\ntype}$ }
\RightLabel{$\ruleES$}
\BinaryInfC{ $\tyjp{(\msteps_{\octx} + \msteps_\tmthree^\mtypetwo + \msteps_\tmthree^\mtypethree, \esteps_{\octx} + \esteps_\tmthree^\mtypetwo+ \esteps_\tmthree^\mtypethree - \size\mtypetwo)}{\octxfp{\tmthree} \esub\var\tmthree}{\typctx_{\octx} \mplus \typctxtwo^\mtypetwo \mplus \typctxtwo^\mtypethree}{\ltype}$ }
\DisplayProof
\]
	Now, note that the last judgement is in fact
	$$ \tyjp{(\msteps_{\octx} + \msteps_\tmthree, \esteps_{\octx} + \esteps_\tmthree - \size\mtypetwo)}{\octxfp{\tmthree} \esub\var\tmthree}{\typctx_{\octx} \mplus \typctxtwo}{\ltype} $$
	which in turn is 
	$$ \tyjp{(\msteps, \esteps - \size\mtypetwo)}{\octxfp{\tmthree} \esub\var\tmthree}{\typctx}{\ltype} $$
	as required, in particular $\estepstwo \defeq \esteps - \size\mtypetwo$ is strictly smaller than $\esteps$ because $\mtypetwo\neq\zero$. 
\end{itemize}

\item[(3)] \emph{Root case of GC by value}. We have  $\tm =  \tmthree \esub\var{\lctxp\val} \rtogcv \lctxp\tmthree=\tmtwo$. By induction on $\lctx$. Cases:
\begin{enumerate}
\item $\lctx=\ctxhole$. Then $\tm =  \tmthree \esub\var\val \rtogcv \tmthree=\tmtwo$ with $\var\notin\fv\tmthree$. By \reflemma{typctx-varocc-tm}, given a derivation $\tyjp{(\msteps,\esteps)}{\tmthree}{\typctx_\tmthree}{\ltype}$ one has $\typctx_\tmthree(\var)=\zero$. Then, the derivation $\tderiv$ for $\tm$ has the following shape:
\[
\AxiomC{ $\tderiv_\tmthree \exder \tyjp{(\msteps,\esteps)}{\tmthree}{\typctx}{\ltype}$ }
    \AxiomC{}
    \RightLabel{\footnotesize$\ruleAxLam$}
    \UnaryInfC{ $\tyjp{(0,0)}{\val}{}{\ntype}$ }
	\RightLabel{$\ruleMany$}
	\UnaryInfC{ $\tyjp{(0,0)}{\val}{}{\mult\ntype}$ }
\RightLabel{$\ruleES$}
\BinaryInfC{ $\tyjp{(\msteps, \esteps)}{\tmthree \esub\var\val}{\typctx }{\ltype}$ }
\DisplayProof
\]
Then $\tderivtwo \defeq \tderiv_\tmthree$ verifies the statement, since $\size\tderivtwo = \size\tderiv-2$ (the $\ruleMany$ rule does not count for the size).

\item $\lctx=\lctxtwo\esub\vartwo\tmfour$. Then $\tm =  \tmthree \esub\var{\lctxtwop\val\esub\vartwo\tmfour} \rtogcv \lctxtwop\tmthree\esub\vartwo\tmfour=\tmtwo$ with $\var\notin\fv\tmthree$. By \reflemma{typctx-varocc-tm}, given a derivation $\tyjp{(\msteps,\esteps)}{\tmthree}{\typctx_\tmthree}{\ltype}$ one has $\typctx_\tmthree(\var)=\zero$. Then, the derivation $\tderiv$ for $\tm$ has the following shape:
\[
\AxiomC{ $\tyjp{(\msteps_\tmthree,\esteps_\tmthree)}{\tmthree}{\typctx_\tmthree}{\ltype}$ }
    \AxiomC{ $\tyjp{(\mstepstwo,\estepstwo)}{\lctxtwop\val}{\typctxtwo}{\ntype}$ }
    \AxiomC{ $\tyjp{(\msteps_\tmfour,\esteps_\tmfour)}{\tmfour}{\typctxtwo_\tmfour}{\typctxtwo(\vartwo)\uplus\mult\ntype}$ }
    \RightLabel{$\ruleES$}
    	\BinaryInfC{ $\tyjp{(\mstepstwo+\msteps_\tmfour,\estepstwo+\esteps_\tmfour)}{\lctxtwop\val\esub\vartwo\tmfour}{\typctxtwo\uplus\typctxtwo_\tmfour}{\ntype}$ }
	\RightLabel{$\ruleMany$}
	\UnaryInfC{ $\tyjp{(\mstepstwo+\msteps_\tmfour,\estepstwo+\esteps_\tmfour)}{\lctxtwop\val\esub\vartwo\tmfour}{\typctxtwo\uplus\typctxtwo_\tmfour}{\mult\ntype}$ }
\RightLabel{$\ruleES$}
\BinaryInfC{ $\tyjp{(\msteps_\tmthree + \mstepstwo+\msteps_\tmfour, \esteps_\tmthree +  \estepstwo+\esteps_\tmfour)}{\tmthree \esub\var{\lctxtwop\val\esub\vartwo\tmfour}}{\typctx_\tmthree \uplus \typctxtwo\uplus\typctxtwo_\tmfour }{\ltype}$ }
\DisplayProof
\]
with $\typctx = \typctx_\tmthree \uplus \typctxtwo\uplus\typctxtwo_\tmfour$, $\msteps = \msteps_\tmthree + \mstepstwo+\msteps_\tmfour$, and $\esteps = \esteps_\tmthree +  \estepstwo+\esteps_\tmfour$.
Then consider the following derivation $\tderiv'$, where the substitution $\esub\vartwo\tmfour$ has been removed and of size $\size{\tderiv'}=\size\tderiv-1$:
\[\tderiv'\defeq
\AxiomC{ $ \tyjp{(\msteps_\tmthree,\esteps_\tmthree)}{\tmthree}{\typctx_\tmthree}{\ltype}$ }
    \AxiomC{ $\tyjp{(\mstepstwo,\estepstwo)}{\lctxtwop\val}{\typctxtwo}{\ntype}$ }
	\RightLabel{$\ruleMany$}
	\UnaryInfC{ $\tyjp{(\mstepstwo,\estepstwo)}{\lctxtwop\val}{\typctxtwo}{\mult\ntype}$ }
\RightLabel{$\ruleES$}
\BinaryInfC{ $\tyjp{(\msteps_\tmthree + \mstepstwo, \esteps_\tmthree +  \estepstwo)}{\tmthree \esub\var{\lctxtwop\val}}{\typctx_\tmthree \uplus \typctxtwo }{\ltype}$ }
\DisplayProof
\]
By \ih, we obtain  a derivation $\tderivtwo' \exder \tyjp{(\msteps_\tmthree + \mstepstwo, \esteps_\tmthree +  \estepstwo)}{\lctxtwop\tmthree }{\typctx_\tmthree \uplus \typctxtwo }{\ltype}$ such that $\size{\tderivtwo'}<\size{\tderiv'}$. Finally, by re-introducing $\esub\vartwo\tmfour$ we obtain the derivation $\tderivtwo$ satisfying the statement and such that $\size\tderivtwo = \size{\tderivtwo'}+1<_{\ih}\size{\tderiv'}+1=\size\tderiv$:
\begin{equation*}
\raisebox{\depth}{$
\AxiomC{ $\tderivthree \exder \tyjp{(\msteps_\tmthree + \mstepstwo, \esteps_\tmthree +  \estepstwo)}{\lctxtwop\tmthree }{\typctx_\tmthree \uplus \typctxtwo }{\ltype}$ }
    \AxiomC{ $\tyjp{(\msteps_\tmfour,\esteps_\tmfour)}{\tmfour}{\typctxtwo_\tmfour}{\typctxtwo(\vartwo)\uplus\mult\ntype}$ }
    \RightLabel{$\ruleES$}
\BinaryInfC{ $\tyjp{(\msteps_\tmthree + \mstepstwo+\msteps_\tmfour, \esteps_\tmthree +  \estepstwo+\esteps_\tmfour)}{\lctxtwop\tmthree \esub\vartwo\tmfour}{\typctx_\tmthree \uplus \typctxtwo\uplus\typctxtwo_\tmfour }{\ltype}$ }
\DisplayProof
$}
\tag*{\qedhere}
\end{equation*}
\end{enumerate}
\end{enumerate}
\end{proof}

Note that $\towgcv$ steps do not change the $\msteps$ and $\esteps$ indices. This is a consequence of rules $\ruleAp$ and $\ruleES$ having been modified for \cbs with $\uplus\mult\atype$ (and not with $\uplus\mult\ltype$ as in \cite{DBLP:conf/ifipTCS/KesnerV14}).

\begin{thm}[Weak SSC correctness]
  Let $\tm$ be a term.
  If $\tderiv \exder   \Deri[(\msteps, \esteps)] {\typctx}{\tm}{\ltype}$ then $\tm\in\sn\wsym$. Moreover, if $\deriv:\tm \tow^*\ntm$ is a normalizing sequence then $\sizep\deriv{\wsym\msym}\leq \msteps$ and $\sizep\deriv{\wesym}\leq \esteps$. \label{thm:weak-correctness}
\end{thm}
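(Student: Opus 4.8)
The plan is to reduce both statements to quantitative subject reduction (\refprop{open-subject-reduction}), which is the only non-trivial ingredient and which I may assume. The key idea is to equip typed terms with a single well-founded measure that strictly decreases along every $\tow$-step. Concretely, to a derivation $\tderiv \exder \Deri[(\msteps,\esteps)]\typctx\tm\ltype$ I would associate the pair $\mu(\tderiv) \defeq (\msteps+\esteps,\ \size\tderiv) \in \nat\times\nat$, ordered lexicographically; this order is well-founded since both components are natural numbers (indeed $\size\tderiv \geq \msteps+\esteps$).

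First I would prove strong normalization. Suppose $\tm$ is typable and $\tm \tow \tmtwo$. Since $\tow = \towm \cup \towe \cup \towgcv$, \refprop{open-subject-reduction} produces a derivation $\tderivtwo$ of $\tmtwo$ in each case: for a $\towm$-step, $\msteps$ strictly drops and $\esteps$ is unchanged; for a $\towe$-step, $\esteps$ strictly drops and $\msteps$ is unchanged — so in both cases the first component of $\mu$ strictly decreases; for a $\towgcv$-step both indices are unchanged but $\size\tderiv$ strictly decreases, so the second component strictly decreases. In all cases $\mu(\tderivtwo) < \mu(\tderiv)$. By well-foundedness of the lexicographic order there is no infinite $\tow$-sequence from a typable term, hence $\tm \in \sn\wsym$.

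For the quantitative bounds, given any $\deriv : \tm = \tm_0 \tow \cdots \tow \tm_n$ (normalizing or not), I would iterate \refprop{open-subject-reduction} to obtain derivations $\tderiv_i$ of $\tm_i$ with indices $(\msteps_i,\esteps_i)$ and $(\msteps_0,\esteps_0)=(\msteps,\esteps)$. The same case analysis shows that along $\deriv$ the index $\msteps_i$ is changed only by the $\towm$-steps, each of which decreases it by at least one, while $\towe$- and $\towgcv$-steps leave it fixed; since the $\msteps_i$ are natural numbers, $\sizep\deriv{\wsym\msym} \leq \msteps_0 - \msteps_n \leq \msteps$. Symmetrically, $\esteps_i$ is changed only by $\towe$-steps, each dropping it by at least one, giving $\sizep\deriv{\wesym} \leq \esteps$.

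There is no real obstacle once \refprop{open-subject-reduction} is in place; the only point requiring care is the shape of the measure. A bare $\msteps+\esteps$ does not suffice precisely because of $\towgcv$-steps, which leave the indices untouched — pairing with $\size\tderiv$ in the lower-priority coordinate absorbs them. Note also that the tightness of the index bounds, i.e. the fact that $\towgcv$-steps do not perturb $\msteps$ and $\esteps$ at all, depends on the silly system being tuned with $\uplus\mult\atype$ rather than $\uplus\mult\ltype$ (\cf the remark following \refprop{open-subject-reduction}); with an arbitrary linear type for the extra copy, erasing steps could redistribute the indices and the clean counting argument would break.
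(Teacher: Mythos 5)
Your proposal is correct and follows essentially the same route as the paper: the paper also argues by lexicographic induction on $(\msteps+\esteps,\size\tderiv)$ using quantitative subject reduction (\refprop{open-subject-reduction}), with $\towm$/$\towe$ steps decreasing the first component and $\towgcv$ steps decreasing the derivation size, and derives the \emph{moreover} part from the fact that $\towm$ (resp.\ $\towe$) steps strictly decrease $\msteps$ (resp.\ $\esteps$) while the other steps leave the indices untouched. Your phrasing via a well-founded measure rather than an explicit induction, and your index-tracking along an arbitrary reduction sequence, are just slightly more spelled-out versions of the same argument.
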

\begin{proof} 
By lexicographic induction on $(\msteps+\esteps, \size\tderiv)$ and case analysis on whether $\tm$ reduces or not. If $\tm$ is $\tow$-normal then the statement trivially holds. If $\tm$ is not $\tow$-normal we show that all its reducts are SN for $\tow$, that is, $\tm$ is SN. If $\tm \towm \tmthree$ then by quantitative subject reduction (\refprop{silly-subject-reduction}) there is a derivation $\tderivtwo \exder  \Deri[(\mstepstwo, \esteps)] {\typctx}{\tmthree}{\ltype}$ with $\mstepstwo<\msteps$. By \ih, $\tmthree$ is SN.   If $\tm \towe \tmthree$ or $\tm \towgcv \tmthree$ we reason similarly, looking at the $\esteps$ index for $\towe$ and to the size $\size\tderivtwo$ of the derivation for $\towgcv$.
  The \emph{moreover} part follows from the fact that $\towm$ (resp. $\towe$) steps strictly decrease $\msteps$ (resp. $\esteps$).
\end{proof}

\myparagraph{Completeness.} We first deal with typability of weak normal forms. The two points of the next proposition are proved by mutual induction. The second point is stronger, as it has a universal quantification about linear types, crucial for the induction to go through. 

\begin{prop}[Weak normal forms are typable]
\label{prop:sillyopen-nfs-exists}
\hfill
\begin{enumerate}
\item Let $\ans$ be a weak answer. Then there exists $\tderiv \exder   \Deri[(0, 0)] {\typctx}{\ans}{\atype}$ with $\dom\typctx=\ofv\ans$.
\item Let $\itm$ be an inert term. Then for any linear type $\ltype$ there exist a type context $\typctx$ such that $\dom{\typctx}=\ofv\itm$ and a derivation $\tderiv \exder   \Deri[(0, 0)] {\typctx}{\itm}{\ltype}$.
\end{enumerate}
\end{prop}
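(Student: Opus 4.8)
The plan is to establish the two points simultaneously, by mutual induction on the structure of the weak normal term, following the grammar of \refprop{weak-nfs}, so that in every case both inductive hypotheses are available on strictly smaller subterms. The only conceptual ingredient beyond a case analysis is \reflemma{typctx-varocc-tm}: by the grammar a weak normal term never has a shallow occurrence of a variable bound by one of its own explicit substitutions, so the lemma forces the body's subderivation to assign $\zero$ to that variable. Hence every instance of rule $\ruleES$ that we need is trivial on the context-splitting side, and its right premise reduces to typing the content of the ES with $\zero\mplus\mult\atype=\mult\atype$, that is, to typing it with $\atype$ and then applying a one-premise instance of $\ruleMany$. The side conditions requiring the bound variable to be free in the left component (which in \refprop{weak-nfs} only serve to block $\towgcv$-redexes) play no role in the construction.

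For the first point: if $\ans=\val$ is a value, rule $\ruleAxLam$ types it with $\atype$ in the empty context, and indeed $\ofv\val=\emptyset$. If $\ans=\anstwo\esub\var\itm$ with $\var\notin\ofv\anstwo$, the \ih\ (first point) gives a derivation of $\anstwo\hastype\atype$ with type context $\typctx_\anstwo$ of domain $\ofv\anstwo$, so $\typctx_\anstwo(\var)=\zero$; the \ih\ (second point, instantiated at the linear type $\atype$) gives a derivation of $\itm\hastype\atype$ with type context of domain $\ofv\itm$, which $\ruleMany$ turns into $\itm\hastype\mult\atype$; rule $\ruleES$ then yields $\anstwo\esub\var\itm\hastype\atype$ with type context of domain $\ofv\anstwo\cup\ofv\itm=\ofv{\anstwo\esub\var\itm}$. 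The case where $\ans$ is an explicit substitution whose content is itself a weak answer is identical, except that the right premise is produced by the \ih\ on that answer (first point) rather than by the second point.

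For the second point: if $\itm=\var$, rule $\ruleAx$ directly gives $\var\hastype\mult\ltype\vdash\var\hastype\ltype$ for the prescribed linear type $\ltype$, with type context of domain $\{\var\}=\ofv\var$; this case is precisely why the second point must be stated for \emph{all} linear types. If $\itm=\itmtwo\ntm$, apply the \ih\ (second point) to $\itmtwo$ at the linear type $\tarrow\zero\ltype$, obtaining $\itmtwo\hastype\tarrow\zero\ltype$; the subterm $\ntm$ is a weak normal term, hence an answer or an inert term, so it is typable with $\atype$ by the \ih\ (first point, or second point at $\atype$), and $\ruleMany$ turns this into $\ntm\hastype\mult\atype=\zero\mplus\mult\atype$; rule $\ruleAp$ then gives $\itmtwo\ntm\hastype\ltype$ with type context of domain $\ofv\itmtwo\cup\ofv\ntm=\ofv{\itmtwo\ntm}$. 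Finally, the two explicit-substitution productions of inert terms are handled exactly like the two such productions of the first point (left premise from the \ih, second point, at $\ltype$; right premise from the \ih, first or second point at $\atype$, according to whether the ES content is an answer or inert; the conclusion type passes unchanged through $\ruleES$). The single genuine obstacle — already anticipated right before the statement — is this strengthening of the second point: without a derivation available for \emph{every} linear type one cannot type the head $\itmtwo$ of $\itmtwo\ntm$ with the arrow $\tarrow\zero\ltype$ needed to rebuild $\ltype$ via $\ruleAp$, so the induction would not go through.
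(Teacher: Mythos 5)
Your proof is correct and follows essentially the same route as the paper's own: a mutual structural induction on weak answers and inert terms, typing ES contents and application arguments with $\atype$ (promoted to $\mult\atype$ by a unary $\ruleMany$), typing the head of an application at $\tarrow\zero\ltype$ via the universally quantified second point, and checking in each case that the domain of the resulting type context is exactly the set of shallow free variables. The only cosmetic difference is your appeal to \reflemma{typctx-varocc-tm}, which is not needed because the induction hypothesis already pins down the domain of the context; otherwise the two arguments coincide.
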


\begin{proof}
The proof is by mutual induction on weak answers $\ans$ and inert terms $\itm$ (see \refprop{weak-nfs} for their grammars).
\begin{enumerate}
\item \emph{Weak answers}. Cases:
\begin{itemize}
\item \emph{Abstraction}, \ie $\ans = \la\var\tm$. Then $\tderiv$ is simply given by:
\[      \AxiomC{}
      \RightLabel{\footnotesize$\ruleAxLam$}
      \UnaryInfC{$\Deri[(0, 0)] {}{ \la\var\tm}{\atype}$}
      \DisplayProof
\]

\item \emph{Substitution of inert terms on answers}, \ie $\ans = \anstwo \esub\var\itm$ with $\var\notin\ofv\anstwo$. By \ih on $\anstwo$ (Point 1), there is a derivation $\tderiv_{\anstwo} \exder   \Deri[(0, 0)] {\typctx_{\anstwo}}{\anstwo}{\atype}$ with $\dom\typctx=\ofv\anstwo$. Therefore, $\var\notin\dom\typctx$. By \ih on $\itm$ (Point 2) with respect to $\ltype\defeq\atype$, there is derivation $\tderiv_{\itm} \exder   \Deri[(0, 0)] {\typctx_\itm}{\itm}{\atype}$ with $\dom{\typctx_\itm}=\ofv\itm$. Then $\tderiv$ is given by:
\[
	      \AxiomC{$\tderiv_{\anstwo} \exder   \Deri[(0, 0)] {\typctx_{\anstwo}}{\anstwo}{\atype}$}
      \AxiomC{$\tderiv_{\ansthree} \exder   \Deri[(0, 0)] {\typctx_\itm}{\ntm}{\atype}$}
      \RightLabel{\footnotesize$\ruleES$}
      \BinaryInfC{$\Deri[(0, 0)] {\typctx_{\anstwo} \mplus \typctx_\itm}{\anstwo\esub\var\ntm}{\atype}$}
      \DisplayProof
\]
Note that $\dom{\typctx_{\anstwo} \mplus \typctx_\itm} = \dom{\typctx_{\anstwo}} \cup \dom{\typctx_\itm} =_{\ih} \ofv\anstwo \cup \ofv\itm = \ofv{\anstwo\esub\var\itm}$ (because $\var\notin\ofv\anstwo$ by hypothesis).

\item \emph{Substitution of answers on answers}, \ie $\ans = \anstwo \esub\var\ansthree$ with $\var\in\fv\anstwo\setminus\ofv\anstwo$. By \ih on $\anstwo$ (Point 1), there is a derivation $\tderiv_{\anstwo} \exder   \Deri[(0, 0)] {\typctx_{\anstwo}}{\anstwo}{\atype}$ with $\dom\typctx=\ofv\anstwo$. Therefore, $\var\notin\dom\typctx$. By \ih on $\ansthree$ (Point 1), there is derivation $\tderiv_{\ansthree} \exder   \Deri[(0, 0)] {\typctx_{\ansthree}}{\ansthree}{\atype}$ with $\dom{\typctx_{\ansthree}}=\ofv\ansthree$. Then $\tderiv$ is given by:
\[
	      \AxiomC{$\tderiv_{\anstwo} \exder   \Deri[(0, 0)] {\typctx_{\anstwo}}{\anstwo}{\atype}$}
      \AxiomC{$\tderiv_{\ansthree} \exder   \Deri[(0, 0)] {\typctx_{\ansthree}}{\ntm}{\atype}$}
      \RightLabel{\footnotesize$\ruleES$}
      \BinaryInfC{$\Deri[(0, 0)] {\typctx_{\anstwo} \mplus \typctx_{\ansthree}}{\anstwo\esub\var\ansthree}{\atype}$}
      \DisplayProof
\]
Note that $\dom{\typctx_{\anstwo} \mplus \typctx_{\ansthree}} = \dom{\typctx_{\anstwo}} \cup \dom{\typctx_{\ansthree}} =_{\ih} \ofv\anstwo \cup \ofv\ansthree = \ofv{\anstwo\esub\var\ansthree}$ (because $\var\notin\ofv\anstwo$ by hypothesis).

\end{itemize}

\item \emph{Inert terms}. Cases:
\begin{itemize}
\item \emph{Variable}, \ie $\itm = \var$. Then, for any linear type $\ltype$, $\tderiv$ is simply given by:
\[      \AxiomC{}
      \RightLabel{\footnotesize$\ruleAx$}
      \UnaryInfC{$\Deri[(0, 0)] {\var \hastype \mset\ltype}{ \var}{\ltype}$}
      \DisplayProof
\]

\item \emph{Application}, \ie $\itm = \itmtwo \ntm$. By \ih on $\ntm$ (either Point 1 or Point 2 with respect to $\ltype\defeq \atype$), there is a derivation $\tderiv_{\ntm} \exder   \Deri[(0, 0)] {\typctx_\ntm}{\ntm}{\atype}$ with $\dom{\typctx_\ntm}=\ofv\ntm$. By \ih (Point 2) on $\itmtwo$ with respect to the linear type $\tarrow\zero\ltype$, there is a derivation $\tderiv_{\itmtwo} \exder   \Deri[(0, 0)] {\typctx_{\itmtwo}}{\itmtwo}{\tarrow\zero\ltype}$ with $\dom{\typctx_{\itmtwo}}=\ofv\itmtwo$. Then $\tderiv$ is given by:
\[
	      \AxiomC{$\tderiv_{\itmtwo} \exder   \Deri[(0, 0)] {\typctx_{\itmtwo}}{\itmtwo}{\tarrow\zero\ltype}$}
      \AxiomC{$\tderiv_{\ntm} \exder   \Deri[(0, 0)] {\typctx_\ntm}{\ntm}{\atype}$}
      \RightLabel{\footnotesize$\ruleAp$}
      \BinaryInfC{$\Deri[(0, 0)] {\typctx_{\itmtwo} \mplus \typctx_\ntm}{\itmtwo \ntm}{\ltype}$}
      \DisplayProof
\]
Note that $\dom{\typctx_{\itmtwo} \mplus \typctx_\ntm} = \dom{\typctx_{\itmtwo}} \cup \dom{\typctx_\ntm} =_{\ih} \ofv\itmtwo \cup \ofv\ntm = \ofv{\itmtwo\ntm}$.

\item \emph{Substitution of inert terms on inert terms}, \ie $\itm = \itmtwo \esub\var\itmthree$ with $\var\notin\ofv\itmtwo$. By \ih (Point 1) with respect to $\ltype$, there is a derivation $\tderiv_{\itmtwo} \exder   \Deri[(0, 0)] {\typctx_{\itmtwo}}{\itmtwo}{\ltype}$ with $\dom{\typctx_{\itmtwo}}=\ofv\itmtwo$. Therefore, $\var\notin\dom{\typctx_{\itmtwo}}$. By \ih on $\itmthree$ (Point 2)  with respect to $\ltype\defeq\atype$, there is derivation $\tderiv_{\itmthree} \exder   \Deri[(0, 0)] {\typctx_{\itmthree}}{\itmthree}{\atype}$ with $\dom{\typctx_{\itmthree}}=\ofv{\itmthree}$. Then $\tderiv$ is given by:
\[
	      \AxiomC{$\tderiv_{\itmtwo} \exder   \Deri[(0, 0)] {\typctx_{\itmtwo}}{\itmtwo}{\ltype}$}
      \AxiomC{$\tderiv_{\itmthree} \exder   \Deri[(0, 0)] {\typctx_\ntm}{\itmthree}{\atype}$}
      \RightLabel{\footnotesize$\ruleES$}
      \BinaryInfC{$\Deri[(0, 0)] {\typctx_{\itmtwo} \mplus \typctx_{\itmthree}}{\itmtwo\esub\var{\itmthree}}{\ltype}$}
      \DisplayProof
\]
Note that $\dom{\typctx_{\itmtwo} \mplus \typctx_{\itmthree}} = \dom{\typctx_{\itmtwo}} \cup \dom{\typctx_{\itmthree}} =_{\ih} \ofv\itmtwo \cup \ofv\itmthree = \ofv{\itmtwo\esub\var\itmthree}$ (because $\var\notin\ofv\itmtwo$ by hypothesis).

\item \emph{Substitution of answers on inert terms}, \ie $\itm = \itmtwo \esub\var\ans$ with $\var\notin\ofv\itmtwo$. By \ih (Point 1) with respect to $\ltype$, there is a derivation $\tderiv_{\itmtwo} \exder   \Deri[(0, 0)] {\typctx_{\itmtwo}}{\itmtwo}{\ltype}$ with $\dom{\typctx_{\itmtwo}}=\ofv\itmtwo$. Therefore, $\var\notin\dom{\typctx_{\itmtwo}}$. By \ih on $\ans$ (Point 1), there is derivation $\tderiv_{\ans} \exder   \Deri[(0, 0)] {\typctx_\ans}{\ans}{\atype}$ with $\dom{\typctx_\ans}=\ofv\ans$. Then $\tderiv$ is given by:
\[
	      \AxiomC{$\tderiv_{\itmtwo} \exder   \Deri[(0, 0)] {\typctx_{\itmtwo}}{\itmtwo}{\ltype}$}
      \AxiomC{$\tderiv_{\ans} \exder   \Deri[(0, 0)] {\typctx_\ans}{\ans}{\atype}$}
      \RightLabel{\footnotesize$\ruleES$}
      \BinaryInfC{$\Deri[(0, 0)] {\typctx_\itmtwo \mplus \typctx_\ans}{\itmtwo\esub\var\ans}{\ltype}$}
      \DisplayProof
\]
Note that $\dom{\typctx_{\itmtwo} \mplus \typctx_\ans} = \dom{\typctx_{\itmtwo}} \cup \dom{\typctx_\ans} =_{\ih} \ofv\itmtwo \cup \ofv\ans = \ofv{\itmtwo\esub\var\ans}$ (because $\var\notin\ofv\itmtwo$ by hypothesis).
\qedhere
\end{itemize}
\end{enumerate}
\end{proof}

The rest of the proof of completeness is dual to the one for correctness, with an anti-substitution lemma (\techrep{in the Appendix}\camerar{in the technical report \cite{accattoli2024mirroring}}) needed for subject expansion for $\towe$, the proof of which is specular to the one for subject expansion. We omit the indices because for completeness they are irrelevant.

	\begin{prop}[Subject expansion for Weak SSC]
		Let $\tderiv\exder  \Deri\typctx{\tmtwo}\ltype$ be a  derivation.  
If $\tm\tow\tmtwo$ then
			there exists a  derivation 
			$\tderivtwo\exder  \Deri\typctx{\tm}\ltype$.\label{prop:open-subject-expansion} 
	\end{prop}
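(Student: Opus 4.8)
The plan is to prove \refprop{open-subject-expansion} by induction on the context closing the root step of $\tm\tow\tmtwo$, exactly mirroring the structure of the subject reduction proof (\refprop{open-subject-reduction}); the indices $\msteps$ and $\esteps$ play no role here, as stated, and are dropped. The contextual (inductive) cases are routine: since $\tm\tow\tmtwo$, the term $\tmtwo$ contains the contractum of the root rule at a strictly smaller context, so the last rule of the given derivation $\tderiv\exder\Deri{\typctx}{\tmtwo}{\ltype}$ is forced by the outermost constructor of that context to be an $\ruleAp$, $\ruleES$, or $\ruleMany$; one of its premises types the immediate subterm still containing the contractum, the \ih produces an expanded derivation of that subterm, and one re-packages it with the same rule, the side conditions on type contexts being handled by \reflemma{typctx-varocc-tm} just as in \refprop{open-subject-reduction}. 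Hence the real work is concentrated in the three base cases, one per root rule.

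For the multiplicative root case $\lctxp{\la\var\tm'}\tmtwo' \rtom \lctxp{\tm'\esub\var{\tmtwo'}}$: from a derivation of $\Deri{\typctx}{\lctxp{\tm'\esub\var{\tmtwo'}}}{\ltype}$, a standard decomposition of derivations through a substitution context isolates a sub-derivation of $\tm'\esub\var{\tmtwo'}$ ending in $\ruleES$, with premises $\Deri{\typctx_1}{\tm'}{\ltype}$ and $\Deri{\typctx_2}{\tmtwo'}{\typctx_1(\var)\mplus\mult\atype}$. Applying $\ruleLam$ to the first premise gives $\Deri{\typctx_1\sm\var}{\la\var\tm'}{\arrowtype{\typctx_1(\var)}{\ltype}}$, re-wrapping with the explicit substitutions of $\lctx$ (a routine commutation of substitution contexts with $\ruleLam$, the only bookkeeping point here) gives $\Deri{\typctx_1'}{\lctxp{\la\var\tm'}}{\arrowtype{\typctx_1(\var)}{\ltype}}$, and since the argument type $\typctx_1(\var)\mplus\mult\atype$ already has the shape $\mtype\mplus\mult\atype$ demanded by $\ruleAp$, one concludes with $\ruleAp$ against $\Deri{\typctx_2}{\tmtwo'}{\typctx_1(\var)\mplus\mult\atype}$, recovering $\Deri{\typctx}{\lctxp{\la\var\tm'}\tmtwo'}{\ltype}$.

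For the exponential root case $\wctxfp\var\esub\var{\tmthree} \rtoep\wctx \wctxfp\tmthree\esub\var{\tmthree}$ I would first state and prove an \emph{anti-substitution lemma}, specular to the linear substitution lemma underlying \refprop{open-subject-reduction}.2, which reverses a single micro-step copy: from a derivation of $\Deri{\typctx}{\wctxfp\tmthree}{\ltype}$, where $\wctx$ is a weak context and hence the hole is always in typed position (cf. the $\mplus\mult\atype$ requirement and \reflemma{typctx-varocc-tm}), it extracts a multi type $\mtype$ — the type received by the hole occurrence of $\tmthree$ — a derivation of $\wctxfp\var$ with $\var$ assigned (the previous $\var$-component of $\typctx$ joined with) $\mtype$, and a derivation of $\tmthree$ for $\mtype$, distributing $\typctx$ accordingly; non-idempotency is what makes the hole occurrence turn into one or more $\ruleAx$ leaves. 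Given then a derivation of the reduct $\wctxfp\tmthree\esub\var{\tmthree}$, its last rule is $\ruleES$ with a premise $\Deri{\typctx_R}{\tmthree}{\typctx_L(\var)\mplus\mult\atype}$ and a premise $\Deri{\typctx_L}{\wctxfp\tmthree}{\ltype}$; I apply the anti-substitution lemma to the latter, merge the resulting $\tmthree$-derivation with the former via $\ruleMany$, and close with $\ruleES$, the $\mplus\mult\atype$ bookkeeping of the argument/ES position matching up so that the reconstructed right premise lands on exactly $\typctx_L(\var)\mplus\mtype\mplus\mult\atype$. Finally, the garbage-collection root case $\tmthree\esub\var{\lctxp\val}\rtogcv\lctxp\tmthree$ (with $\var\notin\fv\tmthree$) is the mirror image of the corresponding case of \refprop{open-subject-reduction}: by induction on $\lctx$ one re-introduces the substitution $\esub\var{-}$ on $\tmthree$, typing the erased value $\val$ with $\atype$ via $\ruleAxLam$ and then with $\mult\atype$ via $\ruleMany$ — legitimate because \reflemma{typctx-varocc-tm} forces $\typctx(\var)=\zero$, so the ES needs $\val$ only at type $\zero\mplus\mult\atype=\mult\atype$ — and the inductive step threads the inner substitutions of $\lctx$ through as in \refprop{open-subject-reduction}.

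The main obstacle is the exponential case, specifically the anti-substitution lemma: one must get the non-idempotent splitting of the type context right and, peculiar to the silly system, carry the mandatory $\mult\atype$ on arguments and ESs (the silly extra copy) through both the lemma and the reconstruction; the substitution-context commutation in the multiplicative case is a minor secondary source of tedium. This is why the corresponding lemma is stated in \cite{accattoli2024mirroring} rather than reproduced in full here.
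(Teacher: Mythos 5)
Your proposal is correct and follows essentially the same route as the paper, which proves subject expansion dually to subject reduction (induction on the closing context, an anti-substitution lemma for the $\towe$ root case, and direct reconstructions for the $\rtom$ and $\rtogcv$ root cases, the latter using \reflemma{typctx-varocc-tm} to type the re-introduced value with $\mult\atype$), with the details deferred to the technical report. The only nitpick is that in the contextual cases where the step occurs in argument or ES position the reduct subterm may be typed by several premises of a $\ruleMany$ rule, so the induction hypothesis must be applied to each of them rather than to a single premise; this is routine and does not affect the argument.
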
  


\begin{thm}[Weak SSC completeness]
	Let $\tm$ be a term.
	If $\tm \tow^*\ntm$ and $\ntm$ is a weak normal form then there exists $\tderiv \exder   \Deri {\typctx}{\tm}{\atype}$.  \label{thm:open-completeness}
\end{thm}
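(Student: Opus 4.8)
The plan is to induct on the length $\size\deriv$ of the normalizing sequence $\deriv\colon\tm\tow^*\ntm$, using typability of weak normal forms (\refprop{sillyopen-nfs-exists}) for the base case and subject expansion (\refprop{open-subject-expansion}) at each step. Since the indices $\msteps$ and $\esteps$ play no role for completeness, I would work throughout with the index-free judgement $\typctx\vdash\tm\hastype\atype$, which is why the statement is phrased without them.

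For the base case $\size\deriv=0$ we have $\tm=\ntm$, a $\tow$-normal term. By the characterization of normal forms (\refprop{weak-nfs}), $\tm$ is either a weak answer $\ans$ or an inert term $\itm$. If $\tm$ is a weak answer, Point~1 of \refprop{sillyopen-nfs-exists} directly gives a derivation $\tderiv\exder\Deri{\typctx}{\tm}{\atype}$. If $\tm$ is an inert term, I would instantiate the universal quantification over linear types in Point~2 of \refprop{sillyopen-nfs-exists} with $\ltype\defeq\atype$, again obtaining $\tderiv\exder\Deri{\typctx}{\tm}{\atype}$. This is essentially the only genuine decision point of the proof, and it is where it matters that $\atype$ is available as a \emph{linear} type (unlike in the \cbv systems where the analogue is only a multi type).

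For the inductive step $\size\deriv=n+1$, decompose $\deriv$ as $\tm\tow\tm'\tow^*\ntm$ with the tail of length $n$. By the induction hypothesis there is a derivation $\tderiv'\exder\Deri{\typctx}{\tm'}{\atype}$, and since $\tm\tow\tm'$, subject expansion (\refprop{open-subject-expansion}) yields a derivation $\tderiv\exder\Deri{\typctx}{\tm}{\atype}$, which concludes.

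I do not expect any serious obstacle here: all the difficulty has been front-loaded into \refprop{sillyopen-nfs-exists} (the mutual induction on answers and inert terms, together with the strengthened ``for every linear type'' statement that is needed to handle the application case of inert terms) and into \refprop{open-subject-expansion} (which itself rests on an anti-substitution lemma for $\towe$ mirroring the linear substitution lemma behind \refprop{open-subject-reduction}). Given those two results, the argument above is just a short induction.
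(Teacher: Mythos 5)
Your proposal is correct and follows essentially the same route as the paper: induction on the length of the reduction sequence, with the base case discharged by the typability of weak normal forms (\refprop{sillyopen-nfs-exists}) and the inductive step by subject expansion (\refprop{open-subject-expansion}). The only difference is that you spell out the answer/inert case split of the base case explicitly, which the paper leaves implicit in its appeal to \refprop{sillyopen-nfs-exists}.
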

\begin{proof} 
By induction on $k = \size\deriv$. If $k=0$: then $\tm=\ntm$ and is typable with $\atype$ by \refprop{sillyopen-nfs-exists}. If $k>0$  then $\tm \tow \tmthree \tow^{k-1} \ntm$ for some $\tmthree$ and by \ih there is a derivation $\tderiv'
		\exder \Deri[] {\typctx}{\tmthree}{\ltype}$.
		By subject expansion
		(\refprop{open-subject-expansion}), $\tderiv
		\exder \Deri[] {\typctx}{\tm}{\ltype}$. 
\end{proof}


\begin{cor}[Weak SSC Uniform normalization]
\label{coro:unif-norm}
$\tm$ is weakly $\tow$-normalizing if and only if $\tm$ is strongly $\tow$-normalizing.\label{coro:unif-norm}
\end{cor}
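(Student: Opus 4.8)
The statement is a corollary, and the plan is to read it off from the two theorems just proved: together they show that typability with $\atype$ is a single criterion characterizing \emph{both} weak and strong $\tow$-normalization, so these two classes of terms coincide, and hence so do the two notions.

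First I would dispatch the direction from strong to weak normalization, which holds in any abstract rewriting system and uses no types: if $\tm\in\sn\wsym$ then no infinite $\tow$-sequence issues from $\tm$, so any maximal $\tow$-sequence from $\tm$ is finite and, being maximal, ends in a term with no $\tow$-reducts, \ie a $\tow$-normal form; this witnesses $\tm\in\wn\wsym$.

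For the converse, I would start from $\tm\in\wn\wsym$, that is, from a reduction $\deriv\colon\tm\tow^*\ntm$ with $\ntm$ a weak normal form (in the sense of \refprop{weak-nfs}). Completeness (\refthm{open-completeness}) turns the mere existence of this one normalizing sequence into a derivation $\tderiv\exder\Deri{\typctx}{\tm}{\atype}$. Correctness (\refthm{weak-correctness}) then promotes this derivation into the statement that \emph{every} reduction from $\tm$ terminates, \ie $\tm\in\sn\wsym$, which concludes.

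No genuine obstacle remains at this point: all the difficulty has already been absorbed by the earlier development --- quantitative subject reduction (\refprop{open-subject-reduction}), subject expansion (\refprop{open-subject-expansion}), and typability of weak normal forms (\refprop{sillyopen-nfs-exists}) --- which is exactly where the gap between ``some normalizing sequence exists'' and ``no diverging sequence exists'' is bridged (correctness strictly decreases a well-founded lexicographic measure along \emph{every} step, $\towgcv$ steps included, the latter via the size component and ultimately the postponement and strict commutation of garbage collection by value). The corollary only has to compose these facts in the right order.
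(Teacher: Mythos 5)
Your proposal is correct and follows exactly the paper's argument: the direction from strong to weak normalization is the generic rewriting triviality, and the converse is obtained by composing completeness (\refthm{open-completeness}), which turns one normalizing sequence into a typing with $\atype$, with correctness (\refthm{weak-correctness}), which turns typability into strong normalization. Your closing remarks on where the real work lies (quantitative subject reduction, subject expansion, typability of normal forms, and the handling of $\towgcv$ via the derivation size) accurately describe the machinery the paper relies on, so nothing is missing.
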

\begin{proof}
The non-obvious direction is $\Rightarrow$. By completeness (\refthm{open-completeness}), $\tm\in\wn\wsym$ implies typability, which in turn, by correctness (\refthm{weak-correctness}), implies $\tm\in\sn\wsym$.
\end{proof}

	\myparagraph{Type Equivalence.} We define the notion of type equivalence induced by the silly multi type system in order to formally show that the type system induces an equational theory in \refthm{silly-conversion-is-included-in-eqcsilly} below.

	\begin{defi}[(Silly) Type Equivalence]
		Two terms are type equivalent $\tm\eqtype\tmtwo$ if they are typable by the same set of typing judgments, that is if
			$\forall \typctx,\ltype, ~\Deri {\typctx}{\tm}{\ltype} \iff \Deri {\typctx}{\tmtwo}{\ltype}$.
	\end{defi}

\begin{thm}[Type equivalence is an equational theory for the SSC]
	\label{thm:silly-conversion-is-included-in-eqcsilly}
	\hfill
	\begin{enumerate}
	\item \label{p:silly-conversion-is-included-in-eqcsilly-invariance} \emph{Invariance}: if $\tm\tow\tmtwo$ then $\tm\eqtype\tmtwo$;
	\item \label{p:silly-conversion-is-included-in-eqcsilly-compatibility} \emph{Compatibility}: if $\tm\eqtype\tmtwo$ then $\ctxp\tm\eqtype\ctxp\tmtwo$ for all contexts $\ctx$.
	\end{enumerate}
\end{thm}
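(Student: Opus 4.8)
The plan is to prove the two points in order, the first being the substantial one and the second following by a routine induction on contexts.

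\medskip

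\noindent\textbf{Point 1 (Invariance).} Suppose $\tm\tow\tmtwo$. By definition of $\eqtype$ I must show that, for every type context $\typctx$ and linear type $\ltype$, there is a derivation of $\typctx\vdash\tm\hastype\ltype$ if and only if there is one of $\typctx\vdash\tmtwo\hastype\ltype$. The forward direction is exactly quantitative subject reduction (\refprop{open-subject-reduction}): whichever of the three rewriting rules is fired, that proposition produces a derivation of $\typctx\vdash\tmtwo\hastype\ltype$ (with the same type context and final linear type, only the indices/size possibly changing, which is irrelevant for $\eqtype$). The backward direction is exactly subject expansion (\refprop{open-subject-expansion}), which from a derivation $\tderiv\exder\Deri\typctx{\tmtwo}\ltype$ builds a derivation $\tderivtwo\exder\Deri\typctx{\tm}\ltype$. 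Combining the two inclusions of derivable judgments gives $\tm\eqtype\tmtwo$. A small caveat: both earlier propositions are stated for judgments with a \emph{linear} final type $\ltype$; since $\eqtype$ also ranges over multi types $\mtype$, I note that a derivation of $\typctx\vdash\tm\hastype\mtype$ with $\mtype=\mset{\ltype_i}_{i\in I}$ ends with rule $\ruleMany$, whose premises are derivations of $\typctx_i\vdash\tm\hastype\ltype_i$ with $\biguplus_i\typctx_i=\typctx$, so one applies subject reduction/expansion to each premise separately and reassembles with $\ruleMany$; hence the equivalence lifts from linear to generic types for free.

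\medskip

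\noindent\textbf{Point 2 (Compatibility).} Assume $\tm\eqtype\tmtwo$; I show $\ctxp\tm\eqtype\ctxp\tmtwo$ for every context $\ctx$ by induction on $\ctx$. Since the theorem speaks of arbitrary contexts (not only weak ones), the grammar of $\ctx$ also includes the abstraction case $\la\var{\ctxtwo}$, so the cases are $\ctxhole$, $\la\var{\ctxtwo}$, $\ctxtwo\tmthree$, $\tmthree\ctxtwo$, $\ctxtwo\esub\var\tmthree$, and $\tmthree\esub\var\ctxtwo$. The base case $\ctx=\ctxhole$ is the hypothesis. For each inductive case, given a derivation of $\typctx\vdash\ctxp\tm\hastype\ttype$, I invert the last rule(s): the subderivation(s) typing the hole-containing subterm $\ctxtwop\tm$ can be replaced, by the induction hypothesis $\ctxtwop\tm\eqtype\ctxtwop\tmtwo$ (which, strictly, I need at \emph{every} judgment that can appear, including the multiset-of-derivations premises of $\ruleMany$ when the subterm sits in argument or ES position — again handled premise-by-premise as above), by a derivation of the same judgment for $\ctxtwop\tmtwo$; re-applying the same rule yields a derivation of $\typctx\vdash\ctxp\tmtwo\hastype\ttype$. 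Symmetry gives the converse, so $\ctxp\tm\eqtype\ctxp\tmtwo$. The only mild subtlety is the abstraction case, where rule $\ruleLam$ reads off $\typctx(\var)$ from the subderivation; since $\eqtype$ preserves the full type context, $\typctx(\var)$ is unchanged and the rule re-applies with the identical conclusion.

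\medskip

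\noindent I expect the main obstacle to be purely bookkeeping: making sure the ``set of typing judgments'' is preserved also for \emph{multi}-type conclusions and inside the $\ruleMany$-premises, i.e.\ that \refprop{open-subject-reduction} and \refprop{open-subject-expansion}, stated for linear conclusions, are lifted cleanly to generic types and to the premise judgments used in the compatibility induction; once that lifting lemma is spelled out, everything else is a direct application of subject reduction/expansion plus a textbook induction on contexts, with no genuinely hard step.
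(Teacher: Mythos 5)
Your proof is correct and takes essentially the same route as the paper's: Point 1 is exactly subject reduction plus subject expansion, and Point 2 is the same induction on contexts with inversion of the last rule(s). Your premise-by-premise handling of the $\ruleMany$ rule for hole-in-argument/ES positions merely spells out the detail the paper leaves implicit under ``other cases follow the same argument''.
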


\begin{proof}
	\hfill
	\begin{enumerate}
	\item By subject reduction and expansion of the silly multi types for the weak case (\refprop{open-subject-reduction} and \refprop{open-subject-expansion}) we have that for all $\typctx,\ltype,$ $\typctx\vdash\tm \hastype \ltype$ iff $\typctx\vdash\tmtwo\hastype\ltype$, that is, $\tm\eqtype\tmtwo$.
	
	\item By induction on $\ctx$:
	\begin{itemize}
		\item $\ctx=\ctxhole$: trivial by the hypothesis $\tm\eqtype\tmtwo$.
		
		\item $\ctx=\la\var\ctxtwo$: all derivations of $\ctxp\tm$ and $\ctxp\tmtwo$ must start with the typing rule $\ruleAx_\l$ or the typing rule $\ruleLam$. For $\ruleAx_\l$, it is trivial to conclude. For $\ruleLam$, it follows from the \ih on the premise of the rule.
		
		\item $\ctx=\ctxtwo\tmthree$: all derivations of $\ctxp\tm$ and $\ctxp\tmtwo$ must start with the typing rule $\ruleAp$. It follows from the  \ih on the left premise of the rule $\ruleAp$.
		
		\item Other cases follow the same argument.\qedhere
	\end{itemize}
	\end{enumerate}
\end{proof}

Since the defining property of denotational models is that they induce an equational theory for the underlying calculus,  \refthm{silly-conversion-is-included-in-eqcsilly} can be understood as the fact that silly multi type judgements provide a denotational model for the SSC.
\section{Call-by-Value and Operational Equivalence}
\label{sect:cbv}
Here, we show that the silly multi types characterize \cbv termination as well, 
and infer the operational equivalence of the SSC and \cbv.

%
%

\begin{figure}[t]
\centering
	\begin{tabular}{cc|cc}
		$\arraycolsep=3pt
		\begin{array}{r@{\hspace{.25cm}}rcl}
		\textsc{Terms} & 
		\tm,\tmtwo,\tmthree  &\grameq& \var \mid \la\var\tm \mid  \tm\tmtwo
		\\ 
		\textsc{Values} & \val,\valtwo  &\grameq&  \la\var\tm
		\\[6pt]
		\textsc{\cbv contexts} &  \vctx&   \grameq&  \ctxhole\mid \tm\vctx \mid \vctx\tm
		\end{array}$ 
		
		&&  
		\begin{tabular}{c}
			{\textsc{Rewriting rule}}
			\\
			$\arraycolsep=3pt
			\begin{array}{rll}
			(\la\var\tm)\val & \rtobv & \tm\isub\var{\val}
			\\[6pt]
			\tobv  &\defeq&  \vctxp\rtobv
			\end{array}$
		\end{tabular}
	\end{tabular}
\caption{\label{fig:cbv} The call-by-value $\l$-calculus.}
\end{figure}
\myparagraph{\ccbv without ESs.} We define the \cbv $\l$-calculus in \reffig{cbv}, mostly following the presentation of Dal Lago and Martini \cite{DBLP:journals/tcs/LagoM08}, for which the $\betav$-rule is non-deterministic but diamond (thus trivially uniformly normalizing, see the rewriting preliminaries). The only change with respect to \cite{DBLP:journals/tcs/LagoM08} is that here values are only abstractions, for uniformity with the Weak SSC. We shall consider the weak evaluation of closed terms only, for which there is no difference whether variables are values or not, since free variables cannot be arguments (out of abstractions) anyway. Closed normal forms are exactly the abstractions.

We discuss only the closed case for \cbv because the silly type system is \emph{not} correct for \cbv with open terms. This point is properly explained after the operational equivalence theorem. It is also the reason why we do not present \cbv via the LSC, as also explained after the theorem. The problem with open terms does not hinder the operational equivalence of the SSC and \cbv, because contextual equivalence is based on closed terms only.

\myparagraph{Judgements for \cbv} In this section, the index $\esteps$ of the silly type system does not play any role, because $\tobv$ steps are bound only by the $\msteps$ index. Therefore, we omit $\esteps$ and write $\tderiv \exder  \typctx \Deri[\msteps] {}{\tm}{\ltype}$ instead of $\tderiv \exder  \typctx \Deri[(\msteps,\esteps)] {}{\tm}{\ltype}$.

\myparagraph{Correctness} The proof technique is the standard one. As usual, subject reduction is proved via a substitution lemma \techrep{in the Appendix}\camerar{specified in the technical report \cite{accattoli2024mirroring}}. The index $\msteps$ is used as decreasing measure to prove correctness.

\begin{prop}[Quantitative subject reduction for \ccbv]
  Let $\tm$ be a closed term. If  $\tderiv\exder \Deri[\msteps]{}{\tm}\ltype$ 
  and $\tm \tobv \tmtwo$
  then $\msteps\geq 1$ and
  there exists
  $\tderivtwo\exder \Deri[\mstepstwo] {}{\tmtwo}\ltype$
  with $ \msteps >   \mstepstwo$.$\label{prop:value-subject-reduction}$
\end{prop}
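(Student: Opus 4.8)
The plan is to prove quantitative subject reduction for \cbv by induction on the \cbv evaluation context $\vctx$ that closes the root $\betav$-step $\tm = \vctxp{(\la\var\tmthree)\val} \tobv \vctxp{\tmthree\isub\var\val} = \tmtwo$. The inductive cases for $\vctx = \ctxtwo\tmfour$ and $\vctx = \tmfour\ctxtwo$ are routine: a derivation of $\vctxp{(\la\var\tmthree)\val}$ must end with rule $\ruleAp$, one of whose premises types the sub-term in the hole position; we apply the \ih to that premise to obtain a smaller $\msteps$ for it, and recompose the derivation with $\ruleAp$ for $\tmtwo$, noting that the $+1$ contributed by $\ruleAp$ is unchanged and that the $\msteps$ of the non-reducing side is unchanged, so the total $\msteps$ strictly decreases. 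Since $\tm$ is closed, all sub-terms involved are closed too, so the \ih applies. The only subtlety is making sure that in the $\vctx = \tmfour\ctxtwo$ case the argument side still carries the mandatory $\uplus\mult\atype$ typing after reduction, which it does because the \ih preserves the linear type $\ltype$ of each premise, hence preserves the multi type assigned by $\ruleMany$.

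The heart of the proof is the base case $\vctx = \ctxhole$, i.e. the root step $(\la\var\tmthree)\val \rtobv \tmthree\isub\var\val$. Here a derivation $\tderiv$ of $(\la\var\tmthree)\val$ ends with $\ruleAp$, whose left premise types $\la\var\tmthree$ with an arrow type $\tarrow\mtype\ltype$ and whose right premise types $\val$ with $\mtype\uplus\mult\atype$ (necessarily via $\ruleMany$, so $\mtype\uplus\mult\atype = \mult{\ltype_i}_{i\in I}$ for some derivations typing $\val$ with each $\ltype_i$). The $\ruleAp$ contributes the $+1$ to $\msteps$ that we will "pay" with the $\betav$-step, so it suffices to show that $\tmthree\isub\var\val$ is typable with $\ltype$ and with $\msteps$ equal to the sum of the indices of the two premises (minus nothing else). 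Since $\val$ is an abstraction, the left premise must itself end with $\ruleLam$ applied to a derivation of $\tmthree$ with $\ltype$ in context $\typctx, \var\hastype\mtype$ — wait, care is needed because $\ruleAxLam$ could also type $\la\var\tmthree$ with $\atype$: in that degenerate case $\ltype = \atype$, the left premise has indices $(0,0)$ and types nothing of $\tmthree$, but then $\mtype\uplus\mult\atype$ must equal the multi type in the right premise, and since the $\atype$-typing of $\val$ via $\ruleAxLam$ is also available for $\tmthree\isub\var\val$ once we observe it is again an abstraction, we can directly re-use it; this sub-case needs a small separate argument that reduction of a redex whose function is typed only with $\atype$ keeps $\atype$-typability with no index increase, which follows because $\tmthree\isub\var\val$ is itself an abstraction (as $\tmthree$ is, being the body that was typed by $\ruleAxLam$ — actually $\tmthree$ need not be an abstraction, so more carefully: if $\la\var\tmthree$ is typed with $\atype$ it tells us nothing, and we instead simply note that $\tmthree\isub\var\val$ is closed and typable with $\atype$ because... ) — this edge case is precisely where I expect the main friction, and I would handle it by first establishing a \emph{substitution lemma} that cleanly covers all typings.

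Concretely, the plan is to prove and invoke a \textbf{linear substitution lemma for \cbv}: if $\tderiv_\tmthree \exder \Deri[\msteps]{\typctx, \var\hastype\mtype}{\tmthree}{\ttype}$ and $\tderiv_\val \exder \Deri[\mstepstwo]{\typctxtwo}{\val}{\mtype}$ then there is $\Deri[\msteps+\mstepstwo]{\typctx\uplus\typctxtwo}{\tmthree\isub\var\val}{\ttype}$. This is the standard non-idempotent intersection substitution lemma, proved by induction on $\tderiv_\tmthree$: at an $\ruleAx$ typing $\var\hastype\mult\ltype$ we consume the single $\ltype$-component of $\mtype$ from $\tderiv_\val$; at $\ruleMany$ we split $\mtype$ and distribute; at the other rules we pass through and combine contexts. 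With this lemma in hand the root case is immediate: the $\ruleAp$-derivation of $(\la\var\tmthree)\val$ decomposes — in the non-degenerate sub-case — into a $\ruleLam$-derivation of $\tmthree$ with $\ltype$ in context $\typctx_1,\var\hastype\mtype$ of index $\msteps_1$, and a $\ruleMany$-derivation of $\val$ with $\mtype\uplus\mult\atype$ of index $\mstepstwo$; discarding the $\mult\atype$-component (legitimate because we may drop one summand of a $\ruleMany$) yields a derivation of $\val$ with $\mtype$ of index $\le\mstepstwo$, and the substitution lemma gives $\tmthree\isub\var\val$ typed with $\ltype$ at index $\msteps_1 + (\text{something} \le \mstepstwo) < \msteps_1 + \mstepstwo + 1 = \msteps$, so $\msteps\ge1$ and $\msteps > \mstepstwo'$ as required; the degenerate $\ruleAxLam$ sub-case is absorbed because then we simply observe $\tmthree\isub\var\val$ is closed and, being obtained from a redex, is $\tobv$-normalizable, hence by typability of normal forms (an analogue of \refprop{sillyopen-nfs-exists} specialized to the closed \cbv setting, where closed normal forms are abstractions) is typable with $\atype$ — in fact it is literally an abstraction precisely when $\tmthree$ is, and if $\tmthree$ is not an abstraction the $\ruleAxLam$ typing of $\la\var\tmthree$ forces nothing and we are back in the main sub-case. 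Wrapping this edge case cleanly is the one place I would spend care; everything else is bookkeeping over contexts and indices.
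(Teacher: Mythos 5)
Your proposal follows essentially the same route as the paper: the paper also proves this proposition via a substitution lemma for the silly type system (stated in the technical report), combined with a case analysis on the \cbv evaluation context, the $+1$ of rule $\ruleAp$ paying for the $\betav$-step; your contextual cases and your index bookkeeping in the root case are fine. One observation removes the only place where your argument wobbles: the ``degenerate $\ruleAxLam$ sub-case'' you worry about cannot occur. In rule $\ruleAp$ the left premise is required to have type $\arrowtype\mtype\ltype$, and $\ruleAxLam$ only produces the constant $\atype$, which is syntactically not an arrow type; hence in any derivation of $(\la\var\tmthree)\val \hastype \ltype$ the left premise of the final $\ruleAp$ necessarily ends with $\ruleLam$, and your main sub-case is the only one. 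This matters because the fallback you sketch for that spurious case (appealing to $\tobv$-normalizability and typability of normal forms) would not be acceptable even if the case were real: it is circular in a proof of subject reduction, and at best it would yield typability with $\atype$ rather than the required type $\ltype$ with a strictly smaller index $\mstepstwo < \msteps$. With that paragraph replaced by the one-line observation above, the rest stands: discarding one $\atype$-component of the $\ruleMany$ typing of $\val$ is legitimate (and harmless for the type context, since $\val$ is closed here), the substitution lemma gives index $\msteps_1+\mstepstwo' \leq \msteps_1+\mstepstwo < \msteps_1+\mstepstwo+1 = \msteps$, and in the case $\vctx=\tmfour\ctxtwo$ the induction hypothesis must be applied to every premise of the final $\ruleMany$ (nonempty because of the mandatory $\mult\atype$), which is what you implicitly do.
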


	\begin{thm}[\ccbv correctness]
  Let $\tm$ be a closed term.
  If $\tderiv \exder  \Deri[\msteps] {}{\tm}{\ltype}$ then there are an abstraction $\val$ and a reduction sequence $\deriv \colon 
\tm
  \tobv^* \val$ with $\size\deriv \leq \msteps$.\label{thm:value-correctness}
\end{thm}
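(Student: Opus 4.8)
The plan is a textbook induction on the index $\msteps$, powered by quantitative subject reduction (\refprop{value-subject-reduction}) together with the fact recalled above that closed $\tobv$-normal forms are exactly the abstractions. So I would argue by induction on $\msteps$. If $\tm$ is $\tobv$-normal, then since $\tm$ is closed it is an abstraction $\val$, and the empty sequence $\deriv\colon\tm\tobv^*\val$ witnesses the claim, with $\size\deriv=0\leq\msteps$. If instead $\tm$ is not normal, I pick any $\tmtwo$ with $\tm\tobv\tmtwo$; by \refprop{value-subject-reduction} we get $\msteps\geq 1$ and a derivation $\tderivtwo\exder\Deri[\mstepstwo]{}{\tmtwo}{\ltype}$ with $\mstepstwo<\msteps$, and moreover $\tmtwo$ is still closed since $\tobv$ creates no free variables. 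Applying the induction hypothesis to $\tmtwo$ (whose index $\mstepstwo$ is strictly smaller) yields an abstraction $\val$ and a sequence $\deriv'\colon\tmtwo\tobv^*\val$ with $\size{\deriv'}\leq\mstepstwo$. Prepending the step $\tm\tobv\tmtwo$ gives $\deriv\colon\tm\tobv^*\val$ with $\size\deriv=\size{\deriv'}+1\leq\mstepstwo+1\leq\msteps$, as required.

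Two points deserve a remark rather than real work. First, the base case $\msteps=0$ is subsumed by the ``$\tm$ normal'' case: if $\msteps=0$ and $\tm$ reduced, \refprop{value-subject-reduction} would force $\msteps\geq 1$, a contradiction, so $\tm$ must be normal. Second, the non-determinism of $\tobv$ plays no role here: the statement asks only for the \emph{existence} of a normalizing sequence of bounded length, so choosing an arbitrary reduct at each step suffices (the diamond property, hence length invariance, would let one strengthen ``some'' to ``every'', but that is not needed for this theorem).

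The only genuine ingredient is \refprop{value-subject-reduction} itself, proved in the standard way via a substitution lemma for the silly type system (specified in the technical report); granting it, the present theorem carries essentially no obstacle. The single spot to be careful about is the normal-form characterization: one must really use that the term is \emph{closed} to conclude ``normal $\Rightarrow$ abstraction'' (an open normal term such as $\var\,\val$ is neither), which is precisely why this theorem, and the whole comparison with \cbv, is phrased for closed terms.
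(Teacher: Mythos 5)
Your proof is correct and follows exactly the route the paper intends: induction using the index $\msteps$ as decreasing measure via quantitative subject reduction (\refprop{value-subject-reduction}), together with the fact that closed $\tobv$-normal forms are abstractions. This matches the paper's (omitted, standard) argument, which mirrors the proof of \refthm{weak-correctness}.
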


\myparagraph{Completeness.} Completeness is also proved in a standard way, omitting the index $\msteps$ because it is irrelevant.
\label{s:cbv-completeness}

\begin{prop}[Subject expansion for \ccbv]
	If 
	$\tderiv\exder \Deri {}{\tmtwo}\ltype$
	and $\tm \tobv \tmtwo$
	then 
	there exists a typing $\tderivtwo$ such that $\tderivtwo\exder \Deri{}{\tm}\ltype$.\label{prop:value-subject-expansion}
\end{prop}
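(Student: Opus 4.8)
The plan is to prove subject expansion by induction on the \cbv evaluation context $\vctx$ witnessing the step, in a way exactly dual to the proof of subject reduction (\refprop{value-subject-reduction}); the one extra ingredient I would use is an \emph{anti-substitution lemma}, the specular counterpart of the substitution lemma invoked there (spelled out in \cite{accattoli2024mirroring}), stating that from a derivation of $\Deri{}{\tmthree\isub\var\val}{\ltype}$ one extracts a type context $\typctx$ with $\dom\typctx\subseteq\set\var$, a derivation of $\Deri{\typctx}{\tmthree}{\ltype}$, and a derivation of $\Deri{}{\val}{\typctx(\var)}$. Writing $\tm = \vctxp{(\la\var\tmthree)\val}$ and $\tmtwo = \vctxp{\tmthree\isub\var\val}$, I would first observe that since $\tm$ and $\tmtwo$ are closed and no binder lies on the path from the root of $\vctx$ to its hole, every subterm met along that path is itself closed; hence, by \reflemma{typctx-varocc-tm}, all type contexts in the sub-derivations manipulated below are empty, so the induction stays within the closed statement without needing to be generalised to arbitrary type contexts.

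For the \emph{base case} $\vctx = \ctxhole$, i.e.\ $\tm = (\la\var\tmthree)\val \tobv \tmthree\isub\var\val = \tmtwo$, I would apply the anti-substitution lemma to $\tderiv$, obtaining a derivation of $\Deri{\typctx}{\tmthree}{\ltype}$ and a derivation of $\Deri{}{\val}{\typctx(\var)}$ (degenerating, when $\var\notin\fv\tmthree$, to $\typctx(\var)=\zero$ and the $\ruleMany$-with-zero-premises derivation of $\val$). Then comes the only \cbv/\cbs-specific point: $\val$ is an abstraction, so $\ruleAxLam$ gives $\Deri{}{\val}{\atype}$, and merging it with the previous derivation of $\val$ through $\ruleMany$ yields $\Deri{}{\val}{\typctx(\var)\uplus\mult\atype}$ — precisely the shape of argument required by rule $\ruleAp$. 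Applying $\ruleLam$ to the derivation of $\tmthree$ gives $\Deri{}{\la\var\tmthree}{\tarrow{\typctx(\var)}\ltype}$ (the context $\typctx\sm\var$ being empty), and a final $\ruleAp$ produces $\Deri{}{(\la\var\tmthree)\val}{\ltype}$, as wanted.

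For the \emph{inductive cases}, note that $\tmtwo$ is an application, so $\tderiv$ ends in $\ruleAp$, and the fired redex is buried in one of the two immediate subterms. If it is in the function, i.e.\ $\tm = \tm'\,\tmfour$ and $\tmtwo = \tmtwo'\,\tmfour$ with $\tm'\tobv\tmtwo'$, then the left premise of $\ruleAp$ derives $\Deri{}{\tmtwo'}{\tarrow\mtype\ltype}$; as this is a linear type, the \ih applies directly and gives $\Deri{}{\tm'}{\tarrow\mtype\ltype}$, so re-applying $\ruleAp$ with the untouched right premise rebuilds a derivation of $\Deri{}{\tm}{\ltype}$. If it is in the argument, i.e.\ $\tm = \tmfour\,\tm'$ and $\tmtwo = \tmfour\,\tmtwo'$ with $\tm'\tobv\tmtwo'$, then the right premise of $\ruleAp$ derives $\Deri{}{\tmtwo'}{\mtype\uplus\mult\atype}$, which — $\ruleMany$ being the only rule concluding a multi-type judgement — comes from a family of derivations of $\Deri{}{\tmtwo'}{\ltype_i}$ with $\mset{\ltype_i}_i = \mtype\uplus\mult\atype$; I would apply the \ih to each, getting $\Deri{}{\tm'}{\ltype_i}$, reassemble them with $\ruleMany$ into $\Deri{}{\tm'}{\mtype\uplus\mult\atype}$, and conclude with $\ruleAp$ against the left premise. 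This exhausts the cases.

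The only step that is not a mechanical shuffling of the typing rules is the anti-substitution lemma, which I expect to be the main (but still routine) obstacle: it is proved by induction on the derivation of $\tmthree\isub\var\val$, the key point being that every occurrence of $\var$ replaced by $\val$ is typed in that derivation by an $\ruleAx$ instance with some linear type $\ltype_i$, and that these $\ltype_i$ are exactly the linear types gathered by $\ruleMany$ into $\typctx(\var)$ when the axioms for $\var$ are reinstated, while the matching sub-derivations of $\val$ reassemble into the derivation of $\Deri{}{\val}{\typctx(\var)}$; the argument is specular to the substitution lemma used in \refprop{value-subject-reduction}. The mild subtlety worth flagging is structural rather than technical: the whole reconstruction hinges on $\val$ being an abstraction, so that the mandatory $\mult\atype$ summand demanded by $\ruleAp$ (and $\ruleES$) can always be supplied via $\ruleAxLam$ — which is precisely why values are restricted to abstractions in this paper.
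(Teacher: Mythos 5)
Your proposal is correct and follows essentially the route the paper intends (and defers to the technical report): induction on the \cbv evaluation context, an anti-substitution lemma specular to the substitution lemma used for \refprop{value-subject-reduction}, and---the only \cbs-specific point, which you identify correctly---supplying the mandatory $\mult\atype$ summand for the argument of $\ruleAp$ via $\ruleAxLam$, which works precisely because values are abstractions. Your observation that closedness propagates to all subterms along the evaluation context (so all type contexts stay empty by \reflemma{typctx-varocc-tm}) is exactly the reason the closed statement suffices for the induction.
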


\begin{thm}[\ccbv completeness]
	Let $\tm$ be a closed $\l$-term.
	If there exists a value $\val$ and a reduction sequence $\deriv \colon 
	\tm
	\tobv^* \val$ then $\tderiv \exder  \Deri {}{\tm}{\atype}$.$\label{thm:value-completeness}$
\end{thm}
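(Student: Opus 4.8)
The plan is to prove this by a routine induction on the length $\size\deriv$ of the evaluation sequence $\deriv \colon \tm \tobv^* \val$, mirroring exactly the proof of Weak SSC completeness (\refthm{open-completeness}). The invariant I would carry along is that the derivation produced has conclusion type $\atype$; by relevance (\reflemma{typctx-varocc-tm}) its type context is automatically empty, since $\tm$ is closed, matching the shape $\Deri{}{\tm}{\atype}$ in the statement.

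For the base case $\size\deriv = 0$, we have $\tm = \val$, an abstraction, and every abstraction is typed with $\atype$ by a single instance of rule $\ruleAxLam$, giving $\tderiv \exder \Deri{}{\tm}{\atype}$. For the inductive step, I would factor $\deriv$ as $\tm \tobv \tmthree \tobv^* \val$ with the tail strictly shorter, observe that $\tobv$ preserves closedness so that the tail falls under the \ih, obtain $\tderiv' \exder \Deri{}{\tmthree}{\atype}$, and then apply subject expansion for \ccbv (\refprop{value-subject-expansion}) to the first step to pull this back to a derivation $\tderiv \exder \Deri{}{\tm}{\atype}$. (The index $\msteps$ is omitted throughout, as it plays no role for completeness — it is the bookkeeping device for correctness, \refthm{value-correctness}.)

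I do not expect any genuine obstacle at the level of this theorem: the entire weight of the argument sits inside subject expansion (\refprop{value-subject-expansion}), which itself rests on an anti-substitution lemma for the silly type system, and both are taken as already established. The only point deserving a line of justification is why the base case delivers the type $\atype$ \emph{specifically} — rather than an arbitrary linear type, which would break the invariant — but this is immediate since the closed $\tobv$-normal forms are exactly the abstractions, and these are precisely the terms whose typability with $\atype$ is witnessed directly by $\ruleAxLam$.
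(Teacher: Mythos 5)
Your proposal is correct and matches the paper's intended argument: the paper only remarks that ``completeness is also proved in a standard way'' and your induction on the length of the $\tobv$-sequence, with the base case handled by typing the closed value (an abstraction) with the constant type via the abstraction axiom and the inductive case handled by subject expansion for call-by-value, is exactly that standard route, mirroring the paper's own proof of Weak SSC completeness. The side remarks (closedness preserved along reduction, empty type context by the relevance lemma, irrelevance of the index) are all consistent with the paper.
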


\myparagraph{Operational Equivalence.} We define abstractly contextual equivalence for a language of terms and arbitrary contexts which are terms with an additional hole construct.
\begin{defi}[Contextual Equivalence] Given a rewriting relation $\to$, we define the associated \emph{contextual equivalence} $\ctxeq$ as follows:
		$\tm \ctxeq \tmp$ if, for all contexts $\ctx$ such that $\ctxp{\tm}$ and $\ctxp\tmp$ are closed terms, 
				$\ctxp\tm$  is weakly $\to$-normalizing if and only if $\ctxp\tmp$ is weakly $\to$-normalizing. 
\end{defi}
Let $\eqcsilly$ and $\eqcvalue$ be the contextual equivalences for $\tow$ and $\tobv$. The next section shall show that $\eqcsilly$ can equivalently be defined using the \cbs strategy.

\begin{thm}[Operational equivalence of \cbs and \cbv]
	On $\l$-terms, $\tm\eqcsilly\tmtwo$ if and only if $\tm\eqcvalue \tmtwo$.\label{thm:op-equiv-cbs-cbv}
\end{thm}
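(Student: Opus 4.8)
The plan is to reduce the equivalence of the two contextual equivalences to the fact that the silly multi type system characterizes termination in \emph{both} calculi, on closed terms. Concretely, I would first establish the following bridge lemma: for every closed term $\tm$, the following are equivalent: (i) $\tm$ is typable in the silly system (\ie $\tderiv \exder \Deri{\typctx}{\tm}{\ltype}$ for some $\typctx,\ltype$, with $\typctx$ necessarily empty by \reflemma{typctx-varocc-tm}); (ii) $\tm$ is weakly $\tow$-normalizing; (iii) $\tm$ is weakly $\tobv$-normalizing. The equivalence (i)$\iff$(ii) is exactly \refthm{weak-correctness} together with \refthm{open-completeness} and \refcoro{unif-norm} (for closed terms, $\wn\wsym$ and $\sn\wsym$ coincide, and typability is equivalent to either). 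The equivalence (i)$\iff$(iii) is exactly \refthm{value-correctness} together with \refthm{value-completeness}. Note that here it is crucial that both characterizations are stated for \emph{closed} terms, which is why the statement of the theorem, although phrased for arbitrary $\l$-terms, only ever needs the closed case: contextual equivalence quantifies over contexts $\ctx$ that \emph{close} $\tm$ and $\tmtwo$.

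Given the bridge lemma, the proof of the theorem is then essentially immediate. Assume $\tm\eqcsilly\tmtwo$ and let $\ctx$ be any context such that $\ctxp\tm$ and $\ctxp\tmtwo$ are closed terms. I want to show $\ctxp\tm \in \wn\bv$ iff $\ctxp\tmtwo\in\wn\bv$. By the bridge lemma (specifically (iii)$\iff$(ii)) applied to the closed terms $\ctxp\tm$ and $\ctxp\tmtwo$, this is equivalent to showing $\ctxp\tm\in\wn\wsym$ iff $\ctxp\tmtwo\in\wn\wsym$, which is exactly what $\tm\eqcsilly\tmtwo$ gives us (using the same context $\ctx$, since the two notions of contextual equivalence are defined with respect to the \emph{same} class of contexts — plain $\l$-terms with a hole, as emphasized in the abstract definition of $\ctxeq$ given just before the theorem). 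The converse direction, from $\tm\eqcvalue\tmtwo$ to $\tm\eqcsilly\tmtwo$, is completely symmetric: a closing context $\ctx$ yields closed $\ctxp\tm,\ctxp\tmtwo$, and the bridge lemma converts $\wn\wsym$ into $\wn\bv$ in both arguments, reducing the goal to the hypothesis $\ctxp\tm\in\wn\bv \iff \ctxp\tmtwo\in\wn\bv$.

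The one genuinely delicate point — and the place where I expect to spend real care — is matching up the two languages of terms and contexts. The SSC has terms with explicit substitutions and weak contexts $\wctx$, whereas the \cbv calculus of \reffig{cbv} is a pure $\l$-calculus with $\betav$-contexts $\vctx$; the theorem is stated "on $\l$-terms", so one reads it as comparing the restrictions of $\eqcsilly$ and $\eqcvalue$ to the common sublanguage of ordinary $\l$-terms (no ESs), with ordinary $\l$-term contexts. With that reading, a $\l$-term $\tm$ is also an SSC term, and an ordinary $\l$-context $\ctx$ closing $\tm$ is both a weak context and a \cbv context, so there is literally one notion of "closing context" to quantify over and the argument above goes through verbatim. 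The bridge lemma itself is where the content sits, but it is entirely assembled from results already proved in the excerpt (Theorems \ref{thm:weak-correctness}, \ref{thm:open-completeness}, \ref{thm:value-correctness}, \ref{thm:value-completeness} and \refcoro{unif-norm}); no new rewriting or typing work is needed. I would therefore present the proof as: (1) state and prove the bridge lemma by citing those five results; (2) conclude by the two symmetric chains of equivalences above.

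\begin{proof}
It suffices to prove the following \emph{bridge} property: for every closed $\l$-term $\tm$,
\[
\tm \text{ is typable in the silly system} \iff \tm\in\wn\wsym \iff \tm\in\wn\bv.
\]
For the first equivalence: if $\tm$ is typable then $\tm\in\sn\wsym\subseteq\wn\wsym$ by \refthm{weak-correctness}; conversely if $\tm\in\wn\wsym$ then $\tm$ reduces to a weak normal form, so $\tm$ is typable by \refthm{open-completeness}. For the second equivalence: if $\tm$ is typable then by \refthm{value-correctness} there is a $\tobv$-reduction of $\tm$ to an abstraction, hence $\tm\in\wn\bv$; conversely if $\tm\in\wn\bv$ then $\tm$ $\tobv$-reduces to a value (a closed normal form is an abstraction), so $\tm$ is typable by \refthm{value-completeness}. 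Composing, $\tm\in\wn\wsym\iff\tm\in\wn\bv$ for every closed $\l$-term $\tm$.

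Now fix $\l$-terms $\tm,\tmtwo$ and suppose $\tm\eqcsilly\tmtwo$. Let $\ctx$ be any context such that $\ctxp\tm$ and $\ctxp\tmtwo$ are closed $\l$-terms. By the bridge property applied to $\ctxp\tm$ and $\ctxp\tmtwo$,
\[
\ctxp\tm\in\wn\bv \iff \ctxp\tm\in\wn\wsym \iff \ctxp\tmtwo\in\wn\wsym \iff \ctxp\tmtwo\in\wn\bv,
\]
where the middle equivalence is the hypothesis $\tm\eqcsilly\tmtwo$. Since $\ctx$ was an arbitrary closing context, $\tm\eqcvalue\tmtwo$. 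The converse implication is obtained by the symmetric chain, reading the same equivalences from $\wn\bv$ to $\wn\wsym$ and using $\tm\eqcvalue\tmtwo$ for the middle step.
\end{proof}
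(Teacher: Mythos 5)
Your proposal is correct and follows essentially the same route as the paper: the paper's proof likewise observes that, on closed terms, both $\tow$-termination and $\tobv$-termination are equivalent to typability in the silly system (citing exactly the correctness and completeness theorems you invoke), and concludes that the two contextual equivalences coincide since they quantify over closing contexts. Your write-up merely spells out the bridge property and the matching of term languages more explicitly than the paper's one-paragraph argument.
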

\begin{proof}
On closed $\l$-terms, both $\tow$-termination and $\tobv$-termination are equivalent to typability in the silly system (Theorems \ref{thm:weak-correctness} and \ref{thm:open-completeness} for $\tosi$ and Theorems \ref{thm:value-correctness} and \ref{thm:value-completeness} for $\tobv$). Thus the contextual equivalences coincide.
\end{proof}

\myparagraph{Call-by-Silly Helps to Prove Contextual Equivalence.} As the last theorem says, contextual equivalences induced by \cbv and \cbs coincide. Reductions do differ, and \cbs reduction sometimes provides a way to prove \cbv contextual equivalence in cases where \cbv reduction does not. Consider the following four different terms, where $\itm$ could be any normal form that is not of the shape $\sctxp\val$, for example $\itm=\vartwo\Id$:

\begin{center}
	\begin{tabular}{c@{\hspace{1cm}}c@{\hspace{1cm}}c@{\hspace{1cm}}c}
	$(\la\var\var\var)\,\itm$ & $(\la\var\var\itm)\,\itm$ & $(\la\var\itm\itm)\,\itm$ & $\itm\itm$
\end{tabular}
\end{center}

These four terms can intuitively be seen as \cbv contextually equivalent, as we now outline. When one of these terms is plugged in a closing context $\ctx$, reduction shall provide substitutions on $\itm$ making it either converge to a value $\val$ or diverge. If it diverges, so will the four terms. If it converges to $\val$, then all four terms will reduce to $\val\val$. This reasoning however cannot easily be made formal.

The easiest way to prove that two terms are contextually equivalent for a reduction $\rightarrow_r$ is to prove that they are related by $=_r$, the smallest equivalence relation including $\rightarrow_r$. We shall now see how the silly calculus helps in equating more terms (than \cbv) with its reduction.

\emph{The four terms above are not $=_{\betav}$-related.} First, note that the four terms are all $\tobv$-normal (assuming that $\itm$ is $\tobv$-normal). If they were $=_{\betav}$-related, then by the Church-Rosser property they should have a common reduct. As the four terms are syntactically different normal forms, they cannot be equated by $=_{\betav}$.

\emph{The first three terms are $=_\wsym$-related.} The first three terms rewrite in the SSC to $\itm\itm\esub\var\itm$, which is why 	$(\la\var\var\var)\,\itm=_\wsym(\la\var\var\itm)\,\itm=_\wsym(\la\var\itm\itm)\,\itm=_\wsym\itm\itm\esub\var\itm$. 

Unfortunately, the fourth term $\itm\itm$ is a $\tow$-normal form and hence $\itm\itm\neq_\wsym\itm\itm\esub\var\itm$. Thus, not all \cbv contextually equivalent terms are equated by the silly calculus, not even when the difference amounts to the duplication of a non-value term. The issue has to do with the \emph{silly extra copy} that cannot be erased easily. We believe that a small-step (that is, where substitution replaces all the occurrences of a variable at the same time) silly calculus could help equating more terms, but we have not managed to work out multi types for such a calculus.

The fact that $=_\wsym$-related terms are contextually equivalent follows from the invariance property of $\eqcsilly$ below and the fact that $\eqcsilly$ is an equivalence relation. Sligthly more generally, we show that silly contextual equivalence $\eqcsilly$ is an equational theory for the SSC.

\begin{lem}[Type equivalence implies contextual equivalence]
\label{l:type-ctx-equiv}
$\tm\eqtype\tmtwo$ implies $\tm\eqcsilly\tmtwo$.
\end{lem}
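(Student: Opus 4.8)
The plan is to derive the lemma as a corollary of two things already in hand: the Compatibility clause of \refthm{silly-conversion-is-included-in-eqcsilly} (type equivalence is preserved by plugging into an arbitrary context), and the type-theoretic characterization of weak normalization that follows by combining correctness (\refthm{weak-correctness}) and completeness (\refthm{open-completeness}).

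First I would record the characterization in the shape we need it: for every term $\tm$, one has $\tm\in\wn\wsym$ if and only if there is a derivation $\tderiv\exder\Deri{\typctx}{\tm}{\atype}$ for some type context $\typctx$. The direction $\Rightarrow$ is exactly \refthm{open-completeness}: a normalizing sequence $\tm\tow^*\ntm$ to a weak normal form yields a derivation whose type is the linear constant $\atype$. The direction $\Leftarrow$ is \refthm{weak-correctness}: typability (here with the linear type $\atype$) gives $\tm\in\sn\wsym$, hence a fortiori $\tm\in\wn\wsym$. The point of phrasing the characterization with the specific constant $\atype$, rather than with an arbitrary linear type, is that $\atype$ is what completeness always provides, and it is a linear type, so it fits both the quantification $\forall\typctx,\ltype$ in the definition of $\eqtype$ and the hypotheses of \refthm{weak-correctness}.

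Then, assuming $\tm\eqtype\tmtwo$, I would take an arbitrary context $\ctx$ such that $\ctxp{\tm}$ and $\ctxp{\tmtwo}$ are closed. By the Compatibility clause of \refthm{silly-conversion-is-included-in-eqcsilly} we get $\ctxp{\tm}\eqtype\ctxp{\tmtwo}$, i.e.\ $\ctxp{\tm}$ and $\ctxp{\tmtwo}$ are typable by exactly the same judgments; instantiating this at the linear type $\atype$, $\ctxp{\tm}$ is typable with $\atype$ iff $\ctxp{\tmtwo}$ is. By the characterization of the previous paragraph, this means precisely that $\ctxp{\tm}\in\wn\wsym$ iff $\ctxp{\tmtwo}\in\wn\wsym$. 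Since $\ctx$ was an arbitrary closing context, this is the definition of $\tm\eqcsilly\tmtwo$.

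I do not expect a genuine obstacle here: the lemma is essentially a packaging of results already proved, and all the substantive work sits in establishing that type equivalence is compatible with arbitrary contexts and that typability characterizes (weak, equivalently strong) normalization. The only thing to be careful about is the linear-vs-generic type bookkeeping described above, which is handled by working uniformly with $\atype$.
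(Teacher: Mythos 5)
Your proposal is correct and follows essentially the same route as the paper's proof: use completeness (\refthm{open-completeness}) to turn normalization of $\ctxp\tm$ into typability, transfer typability via the Compatibility clause of \refthm{silly-conversion-is-included-in-eqcsilly}, and conclude with correctness (\refthm{weak-correctness}). Your extra care about working uniformly with the constant $\atype$ is a harmless refinement of the same argument.
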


\begin{proof}
If $\ctxp\tm$ is $\tow$-normalizing then, by completeness of the silly multi types for the weak case (\refthm{open-completeness}), $\ctxp\tm$ is typable. By \refthmp{silly-conversion-is-included-in-eqcsilly}{compatibility} and the hypothesis $\tm \eqtype \tmtwo$, we obtain $\ctxp\tm \eqtype \ctxp\tmtwo$. Therefore, $\ctxp\tmtwo$ is typable. By correctness of silly multi types (\refthm{weak-correctness}), $\ctxp\tmtwo$ is $\tow$-normalizing.\qedhere
\end{proof}

\begin{prop}[Contextual equivalence is an equational theory for the SSC]
\hfill
\begin{enumerate}
	\item \emph{Invariance}: $\tm\tow\tmtwo$ implies $\tm\eqcsilly\tmtwo$;	
	\item \emph{Compatibility}: $\tm\eqcsilly\tmtwo$ implies $\ctxp\tm\eqcsilly\ctxp\tmtwo$ for all contexts $\ctx$.	
\end{enumerate}
\end{prop}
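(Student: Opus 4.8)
The plan is to derive both points from the facts already established for the type equivalence $\eqtype$, namely \refthm{silly-conversion-is-included-in-eqcsilly} (that $\eqtype$ is an equational theory for the SSC) and \reflemma{type-ctx-equiv} (that $\eqtype$ is contained in $\eqcsilly$).

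For \emph{Invariance}, I would simply chain the two facts: if $\tm\tow\tmtwo$ then $\tm\eqtype\tmtwo$ by \refthmp{silly-conversion-is-included-in-eqcsilly}{invariance}, and hence $\tm\eqcsilly\tmtwo$ by \reflemma{type-ctx-equiv}. It is worth stressing why this semantic detour is genuinely needed, rather than the naive argument ``reduction commutes with plugging'': $\tow$ is a \emph{weak} reduction, so from $\tm\tow\tmtwo$ one does \emph{not} in general get $\ctxp\tm\tow\ctxp\tmtwo$ when the (unrestricted) context $\ctx$ places the hole under an abstraction. The multi type characterization of $\tow$-normalization is precisely what lets us sidestep this.

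For \emph{Compatibility}, the analogous route through $\eqtype$ is \emph{not} available, because $\eqcsilly$ is properly coarser than $\eqtype$ — it only observes termination of closed instances, whereas $\eqtype$ observes the full set of type judgments — so $\tm\eqcsilly\tmtwo$ does not entail $\tm\eqtype\tmtwo$ and \refthmp{silly-conversion-is-included-in-eqcsilly}{compatibility} cannot be invoked. Instead I would argue directly from the definition of contextual equivalence, using that contexts compose. Assume $\tm\eqcsilly\tmtwo$ and fix a context $\ctx$; to prove $\ctxp\tm\eqcsilly\ctxp\tmtwo$, take an arbitrary context $\ctxtwo$ such that $\ctxtwop{\ctxp\tm}$ and $\ctxtwop{\ctxp\tmtwo}$ are closed. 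The context $\ctxtwop\ctx$ obtained by plugging $\ctx$ into the hole of $\ctxtwo$ is again a context (it has exactly one hole, that of $\ctx$), and filling its hole with $\tm$ yields $\ctxtwop{\ctxp\tm}$ while filling it with $\tmtwo$ yields $\ctxtwop{\ctxp\tmtwo}$, both closed by assumption. Applying the hypothesis $\tm\eqcsilly\tmtwo$ to the context $\ctxtwop\ctx$ gives that $\ctxtwop{\ctxp\tm}$ is weakly $\tow$-normalizing if and only if $\ctxtwop{\ctxp\tmtwo}$ is; since $\ctxtwo$ was arbitrary among closing contexts, this is exactly $\ctxp\tm\eqcsilly\ctxp\tmtwo$.

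I do not expect a genuine obstacle: the proof is short and the only things requiring a moment's care are (i) not attempting to reduce Compatibility to the $\eqtype$ statement, since $\eqcsilly$ is the coarser relation, and (ii) the trivial check that plugging a context into a context produces a context and preserves closedness of the fully plugged term. The only mild subtlety is conceptual rather than technical, namely that the Invariance part truly relies on the multi type semantics because $\tow$ is weak whereas the testing contexts are unrestricted.
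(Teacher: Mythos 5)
Your proposal is correct and follows essentially the paper's own proof: Invariance is obtained exactly as in the paper by chaining $\tow\subseteq\eqtype$ (\refthmp{silly-conversion-is-included-in-eqcsilly}{invariance}) with $\eqtype\subseteq\eqcsilly$ (\reflemma{type-ctx-equiv}), and Compatibility is what the paper calls a straightforward consequence of the definition of $\eqcsilly$, which your context-composition argument merely spells out. Your side remarks (that the naive plugging argument fails for weak reduction, and that Compatibility cannot be routed through $\eqtype$ since $\eqcsilly$ is coarser) are accurate but not needed.
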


\begin{proof}
\hfill
	\begin{enumerate}
	\item It follows from $\tow\subseteq\eqtype$ (\refthmp{silly-conversion-is-included-in-eqcsilly}{invariance}) and $\eqtype\subseteq\eqcsilly$ (\reflemma{type-ctx-equiv}).
	\item Straightforward consequence of the definition of $\eqcsilly$.\qedhere
	\end{enumerate}
\end{proof}

\subsection*{The Issue with Open CbV and the Silly Type System.}
There is an issue if one considers the silly type system relatively to \cbv with open terms, namely subject reduction breaks. This point is delicate. In fact, there are no issues if one considers only Plotkin's $\betav$ rule, except that Plotkin's rule is \emph{not} an adequate operational semantics for \cbv with open terms, as it is well-known and discussed at length by Accattoli and Guerrieri \cite{DBLP:conf/aplas/AccattoliG16,DBLP:journals/pacmpl/AccattoliG22}. Adequate operational semantics for \ocbv do extend Plotkin's. One such semantics is Carraro and Guerrieri's \emph{shuffling calculus} \cite{DBLP:conf/fossacs/CarraroG14}, that extends $\betav$ by adding some $\sigma$-rules. We briefly discuss it here; the last paragraph of this section shall explain because we prefer it to the \cbv LSC for the explanation of the issue with open terms.

It turns out that one of the $\sigma$ rules breaks subject reduction for the silly type system (while there are no problems if one considers instead Ehrhard's \cbv multi types \cite{DBLP:conf/csl/Ehrhard12}), as we now show. Rule $\rtosigthree$ is defined as follows:
    \begin{center}
    $\begin{array}{ccc}
    \varthree((\la\var\vartwo)\tmtwo) &\rtosigthree& (\la\var\varthree\vartwo)\tmtwo
    \end{array}$
   \end{center}
For $n\geq1$, we have the following derivation for the source term $\varthree((\la\var\vartwo)\tmtwo)$ in the silly type system:
  	\begin{center}
        \begin{adjustbox}{max width=\textwidth}$
  		\AxiomC{$\tderiv_{\varthree} \exder\ldots$}
  		\AxiomC{}  
  		\RightLabel{$\ruleAx$}
  		\UnaryInfC{$\vartwo\hastype\mult\type \vdash \vartwo \hastype \type$}  
  		\RightLabel{$\ruleLam$}
  		\UnaryInfC{$ \vartwo\hastype\mult\type \vdash \la\var\vartwo \hastype \arrowtype\zero\type$}  
  		\AxiomC{$\tderiv_{\tmtwo} \exder \red\typctx\vdash \tmtwo \hastype \atype$}  
  		\RightLabel{$\ruleMany$}
  		\UnaryInfC{$\red\typctx\vdash \tmtwo \hastype \mult\atype$}  
  		\RightLabel{$\ruleAp$}
  		\BinaryInfC{$ (\vartwo\hastype\mult\type,\red\typctx\vdash (\la\var\vartwo)\tmtwo \hastype \type)_{i=1,\ldots,n}$}  
  		\AxiomC{}  
  		\RightLabel{$\ruleAx$}
  		\UnaryInfC{$\vartwo\hastype\mult\atype \vdash \vartwo \hastype \atype$}  
  		\RightLabel{$\ruleLam$}
  		\UnaryInfC{$ \vartwo\hastype\mult\atype \vdash \la\var\vartwo \hastype \arrowtype\zero\atype$}  
  		\AxiomC{$\tderiv_{\tmtwo} \exder \red\typctx\vdash \tmtwo \hastype \atype$}  
  		\RightLabel{$\ruleMany$}
  		\UnaryInfC{$\red\typctx\vdash \tmtwo \hastype \mult\atype$}  
  		\RightLabel{$\ruleAp$}
  		\BinaryInfC{$ \vartwo\hastype\mult\atype,\red\typctx\vdash (\la\var\vartwo)\tmtwo \hastype \atype$}  
  		\RightLabel{$\ruleMany$}
  		\BinaryInfC{$ \vartwo\hastype\mult{\type^n,\atype},\red\typctx^{n+1}\vdash (\la\var\vartwo)\tmtwo \hastype \mult{\type^n,\atype}$}
  		\RightLabel{\footnotesize$\ruleAp$}
  		\BinaryInfC{$ \varthree\hastype\mult{\arrowtype{\mult{\type^n}}\typetwo }, \vartwo\hastype\mult{\type^n,\atype},\red\typctx^{n+1}\vdash \varthree((\la\var\vartwo)\tmtwo) \hastype \typetwo$}
  		\DisplayProof
        $\end{adjustbox}
    \end{center}
  	  Where $(\vartwo\hastype\mult\type,\red\typctx\vdash (\la\var\vartwo)\tmtwo \hastype \type)_{i=1,\ldots,n}$ is an abbreviation standing for $n$ copies of the derivation ending in that sequent, and where $\tderiv_{\varthree} \exder\ldots$ stands for:
	\begin{center}
  		\small$
  		\AxiomC{}  
  		\RightLabel{$\ruleAx$}
  		\UnaryInfC{$ \tderiv_\varthree\exder\varthree\hastype\mult{\arrowtype{\mult{\type^n}}\typetwo } \vdash\varthree\hastype \arrowtype{\mult{\type^n}}\typetwo$ }
  		\DisplayProof$
  	\end{center}

       The target term $(\la\var\varthree\vartwo)\tmtwo$ of rule $\rtosigthree$, instead, can only be typed as follows, the key point being that $\red\typctx^{n+1}$ is replaced by $\red\typctx$:
\begin{center}
	\tiny$
	\AxiomC{}
	\RightLabel{$\ruleAx$}
	\UnaryInfC{$\varthree\hastype\mult{\arrowtype{\mult{\type^n}}\typetwo } \vdash\varthree\hastype \arrowtype{\mult{\type^n}}\typetwo $}
	\AxiomC{}  
	\RightLabel{$\ruleAx$}
	\UnaryInfC{$(\vartwo\hastype\mult\type \vdash \vartwo \hastype \type)_{i=1,\ldots,n}$}
	\AxiomC{}  
	\RightLabel{$\ruleAx$}
	\UnaryInfC{$\vartwo\hastype\mult\atype \vdash \vartwo \hastype \atype$}
	\RightLabel{$\ruleMany$}
	\BinaryInfC{$\vartwo\hastype\mult{\type^n,\atype} \vdash \vartwo \hastype \mset{\type^n,\atype}$}  
	\RightLabel{\footnotesize$\ruleAp$}
	\BinaryInfC{$\varthree\hastype\mult{\arrowtype{\mult{\type^n}}\typetwo }, \vartwo\hastype\mult{\type^n,\atype} \vdash\varthree\vartwo\hastype \typetwo $}
	\RightLabel{$\ruleLam$}
	\UnaryInfC{$\varthree\hastype\mult{\arrowtype{\mult{\type^n}}\typetwo }, \vartwo\hastype\mult{\type^n,\atype} \vdash \la\var\varthree\vartwo \hastype \arrowtype\zero\typetwo$}  
	\AxiomC{$\tderiv_{\tmtwo} \exder \red\typctx\vdash \tmtwo \hastype \atype$}  
	\RightLabel{$\ruleMany$}
	\UnaryInfC{$\red\typctx\vdash \tmtwo \hastype \mult\atype$}  
	\RightLabel{$\ruleAp$}
	\BinaryInfC{$\varthree\hastype\mult{\arrowtype{\mult{\type^n}}\typetwo }, \vartwo\hastype\mult{\type^n,\atype},\red\typctx \vdash (\la\var\varthree\vartwo)\tmtwo \hastype \typetwo$}  
	\DisplayProof
	$\end{center}

This counter-example adapts the counter-example given by Delia Kesner to subject reduction for the multi type system by Manzonetto et al. \cite{DBLP:journals/fuin/ManzonettoPR19,DBLP:conf/fscd/KerinecMR21}, as reported in the long version on Arxiv of \cite{DBLP:journals/pacmpl/AccattoliG22}, which appeared after the publication of \cite{DBLP:journals/fuin/ManzonettoPR19,DBLP:conf/fscd/KerinecMR21}, where there is no mention of this issue. The work in the present paper can be actually seen as a clarification of the failure of subject reduction for the system in \cite{DBLP:journals/fuin/ManzonettoPR19,DBLP:conf/fscd/KerinecMR21}. Essentially, the system in \cite{DBLP:journals/fuin/ManzonettoPR19,DBLP:conf/fscd/KerinecMR21} is a system for \cbs, not for \cbv, but is therein used to study \cbv strong evaluation with possibly open terms, unaware that the system models a different evaluation mechanism.

We conjecture that the silly type system is adequate for \ocbv (that is, a term is silly typable if and only if it is \cbv terminating) even if it is not invariant for \ocbv (that is, subject reduction does not hold).

\subsection*{Naturality of the Issue.} A first reaction to the shown issue is to suspect that something is wrong or ad-hoc in our approach, especially given that the operational equivalence of \cbn and \cbneed does not suffer of this issue, that is, the \cbn multi type system is invariant for \cbneed evaluation of open terms. At high-level, however, the issue is natural and to be expected, as we now explain. The two systems of each pair \cbn/\cbneed and \cbs/\cbv have different duplicating policies and the same erasing policy. The pair \cbn/\cbneed has no restrictions on erasure, thus open normal forms have no garbage. Therefore, the different ways in which they duplicate garbage are not observable. For the pair \cbs/\cbv, instead, erasure is restricted to \emph{values}, with the consequence that garbage has to be evaluated before possibly being erased, and that with open terms some garbage might never be erased. Thus, the different duplicating policies of \cbs/\cbv leave different amounts of non-erasable garbage in open normal forms, which is observable and changes the denotational semantics.

\subsection*{The Open \cbv LSC} Another adequate formalism for \ocbv is Accattoli and Paolini's \emph{value substitution calculus} (shortened to VSC) \cite{DBLP:conf/flops/AccattoliP12}, or its micro-step variant, the \cbv LSC. We now discuss the \cbv LSC, but everything we say applies also to the VSC. 

The difference between the shuffling calculus and the \cbv LSC is that the latter uses ESs and modifies the rewriting rules at a distance. In particular, duplication is done by the following exponential rule:
\begin{center}
$\begin{array}{ccc}
\wctxp\var\esub\var{\sctxp\val} &\Rew{\vsym\esym}& \sctxp{\wctxp\val\esub\var\val}
\end{array}$
\end{center}
Subject reduction breaks also for the \cbv LSC, as we can show by adapting Kesner's counter-example. Consider the step $(\varthree\vartwo')\esub{\vartwo'}{\Id\esub\vartwo\tmtwo} \Rew{\vsym\esym} (\varthree\Id)\esub{\vartwo'}{\Id}\esub\vartwo\tmtwo$:

\begin{center}
\begin{adjustbox}{max width=\textwidth}
	\tiny$
	\AxiomC{$\tderiv_1 \exder \varthree\hastype\mult{\arrowtype{\mult{\type^n}}\typetwo },\vartwop\hastype\mult{\type^n} \vdash\varthree\vartwop\hastype 
		\typetwo $}
	\AxiomC{$\tderiv_\Id \vdash \Id \hastype \type$}  
	\AxiomC{$\tderiv_{\tmtwo} \exder \red\typctx\vdash \tmtwo \hastype \atype$}  
	\RightLabel{$\ruleMany$}
	\UnaryInfC{$\red\typctx\vdash \tmtwo \hastype \mult\atype$}  
	\RightLabel{$\ruleES$}
	\BinaryInfC{$ (\red\typctx\vdash \Id\esub\vartwo\tmtwo \hastype \type)_{i=1,\ldots,n}$}  
	\AxiomC{}  
	\RightLabel{$\ruleAx_\l$}
	\UnaryInfC{$\vdash \Id \hastype \atype$}  
	\AxiomC{$\tderiv_{\tmtwo} \exder \red\typctx\vdash \tmtwo \hastype \atype$}  
	\RightLabel{$\ruleMany$}
	\UnaryInfC{$\red\typctx\vdash \tmtwo \hastype \mult\atype$}  
	\RightLabel{$\ruleES$}
	\BinaryInfC{$ \red\typctx\vdash \Id\esub\vartwo\tmtwo \hastype \atype$}  
	\RightLabel{$\ruleMany$}
	\BinaryInfC{$ \red\typctx^{n+1}\vdash \Id\esub\vartwo\tmtwo \hastype \mult{\type^n,\atype}$}
	\RightLabel{\footnotesize$\ruleES$}
	\BinaryInfC{$ \varthree\hastype\mult{\arrowtype{\mult{\type^n}}\typetwo },\red\typctx^{n+1}\vdash (\varthree\vartwo')\esub{\vartwo'}{\Id\esub\vartwo\tmtwo} \hastype \typetwo$}
	\DisplayProof
	$\end{adjustbox}\end{center}

As for $\rtosigthree$, the key point is that that $\red\typctx^{n+1}$ gets replaced by $\red\typctx$ in the typing of the reduct:

\begin{center}
\begin{adjustbox}{max width=\textwidth}
	$
	\AxiomC{}
	\RightLabel{$\ruleAx$}
	\UnaryInfC{$ \varthree\hastype\mult{\arrowtype{\mult{\type^n}}\typetwo } \vdash\varthree\hastype \arrowtype{\mult{\type^n}}\typetwo $}
	\AxiomC{$ (\tderiv_\Id \vdash \Id \hastype \type)_{i=1,\ldots,n}$}  
	\AxiomC{}  
	\RightLabel{$\ruleAx_\l$}
	\UnaryInfC{$\vdash \Id \hastype \atype$}
	\RightLabel{$\ruleMany$}
	\BinaryInfC{$ \vdash \Id \hastype \mult{\type^n,\atype}$}
	\RightLabel{$\ruleES$}
	\BinaryInfC{$ \varthree\hastype\mult{\arrowtype{\mult{\type^n}}\typetwo }\vdash\varthree\Id\hastype \typetwo $}
	\AxiomC{}  
	\RightLabel{$\ruleAx_\l$}
	\UnaryInfC{$\vdash \Id \hastype \atype$}  
	\RightLabel{$\ruleMany$}
	\UnaryInfC{$ \vdash \Id \hastype \mult{\atype}$}
	\RightLabel{\footnotesize$\ruleES$}
	\BinaryInfC{$ \varthree\hastype\mult{\arrowtype{\mult{\type^n}}\typetwo }\vdash (\varthree\Id)\esub{\vartwo'}{\Id} \hastype \typetwo$}
	\AxiomC{$\tderiv_{\tmtwo} \exder \red\typctx\vdash \tmtwo \hastype \atype$}  
	\RightLabel{$\ruleMany$}
	\UnaryInfC{$\red\typctx\vdash \tmtwo \hastype \mult\atype$}  
	\RightLabel{\footnotesize$\ruleES$}
	\BinaryInfC{$ \varthree\hastype\mult{\arrowtype{\mult{\type^n}}\typetwo },\red\typctx\vdash (\varthree\Id)\esub{\vartwo'}{\Id}\esub\vartwo\tmtwo \hastype \typetwo$}
	\DisplayProof
	$\end{adjustbox}\end{center}

What breaks it is the use of the substitution context $\lctx$ in $\Rew{\vsym\esym}$, which can be seen as the analogous of rule $\rtosigthree$ of the shuffling calculus. The difference is that while $\rtosigthree$ is needed only for open terms in the shuffling calculus, rule $\Rew{\vsym\esym}$ is used also for the evaluation of closed terms in the \cbv LSC. This is why we preferred to avoid using the \cbv LSC to study \ccbv. In fact, one can prove that in the closed case the modification of rule $\Rew{\vsym\esym}$ without $\lctx$ is enough to reach normal forms. But this fact needs a technical study and a theorem, which we preferred to avoid.

\section{Preliminaries About Abstract Machine}
\label{sect:prel-machines}

In this section, we introduce terminology and basic concepts about abstract machines, that shall be the topic of the following sections.

\myparagraph{Abstract Machines Glossary.}  Abstract machines manipulate \emph{pre-terms}, that is, terms without implicit $\alpha$-renaming. In this paper, an \emph{abstract 
machine} is a quadruple $\mach = (\States, \tomach, \compilrel\cdot\cdot, \decode\cdot)$ the components of which are as follows.
\begin{itemize}

\item \emph{States.} A state $\state\in\States$ is composed by the \emph{active term} $\tm$ plus some data structures depending on the actual machine. Terms in states are actually pre-terms.

\item  \emph{Transitions.} The pair $(\States, \tomach)$ is a transition system with transitions $\tomach$ partitioned into \emph{principal transitions}, whose union is noted $\tomachpr$ and that are meant to correspond to steps on the calculus, and \emph{search transitions}, whose union is noted $\tomachsea$, that take care of searching for (principal) redexes.

\item \emph{Initialization.} The component $\compilrel{}{}\subseteq\terms\times\States$ is the \emph{initialization relation} associating closed terms without ESs to 
initial states. It is a \emph{relation} and not a function because $\compilrel\tm\state$ maps a \emph{closed} $\l$-term $\tm$ (considered modulo $\alpha$) to a state $\state$ having a \emph{pre-term representant} of $\tm$ (which is not modulo $\alpha$) as active term. Intuitively, any two states $\state$ and $\statetwo$ such that $\compilrel\tm\state$ and $\compilrel\tm\statetwo$ are $\alpha$-equivalent. 
A state $\state$ is \emph{reachable} if it can be reached starting from an initial state, that is, if $\statetwo \tomach^*\state$ where $\compilrel\tm\statetwo$ for some $\tm$ and $\statetwo$, shortened as $\compilrel\tm\statetwo \tomach^*\state$.

\item \emph{Read-back.} The read-back function $\decode\cdot:\States\to\esterms$ turns reachable states into 
terms possibly with ESs and satisfies the \emph{initialization constraint}: if $\compilrel\tm\state$ then $\decode{\state}=_\alpha\tm$.
\end{itemize}
A state is \emph{final} if no transitions apply.
 A \emph{run} $\run: \state \tomach^*\statetwo$ is a possibly empty finite sequence of transitions, the length of which is noted 
$\size\run$; note that the first and the last states of a run are not necessarily initial and final. 
If $a$ and $b$ are transitions labels (that is, $\tomachhole{a}\subseteq \tomach$ and 
$\tomachhole{b}\subseteq \tomach$) then $\tomachhole{a,b} \defeq \tomachhole{a}\cup \tomachhole{b}$ and $\sizep\run a$ 
is the number of $a$ transitions in $\run$, $\sizep\run {a,b} \defeq \sizep\run a + \sizep\run b $, and similarly for more than two labels. 

For the machines at work in this paper, the pre-terms in initial states shall be \emph{well-bound}, that is, they have pairwise distinct bound names; for instance $(\la\var\var)\la\vartwo\vartwo$ is well-bound while $(\la\var\var)\la\var\var$ is not. 
We shall also write $\renamenop{\tm}$ in a state $\state$ for a \emph{fresh well-bound renaming} of $\tm$,
\ie $\renamenop{\tm}$ is $\alpha$-equivalent to $\tm$, well-bound, and its bound variables
are fresh with respect to those in $\tm$ and in the other components of $\state$.

\section{The Milner Abstract Machine}
\label{sect:MAM}
In this section, we briefly recall from the literature the \cbn strategy and the Milner Abstract Machine---shortened to \emph{MAM}---that implements it. The content of this section is essentially taken from Accattoli et al. \cite{DBLP:conf/icfp/AccattoliBM14}, with the very minor change that we use a more minimal notion of structural reduction $\stredapp$ rather than their structural equivalence $\streq$.


\myparagraph{Call-by-Name Strategy.} In \reffig{cbn-strategy}, we define the \cbn strategy $\toname$ of the weak LSC. The \cbn strategy uses the root rules of the \cbn LSC given in \reffig{variants-lsc} (page \pageref{fig:variants-lsc}), in particular the GC rule $\rtogc$ that is not part of the SSC, and closes them via the notion of name evaluation contexts $\nctx$, that never enter into arguments or ESs. 

The \cbn strategy is almost deterministic: rules $\tonm$ and $\tone$ are deterministic and its erasing rule $\tongc$ is non-deterministic but diamond; for instance:
\begin{center}
\begin{tikzpicture}[ocenter]
		\node at (0,0)[align = center](source){\normalsize $\Id \esub\vartwo\Id \esub\varthree{\Id\Id}$};
		\node at (source.center)[below = 30pt](source-down){\normalsize $\Id \esub\varthree{\Id\Id}$};
		\node at (source.center)[right = 80pt](source-right){\normalsize $\Id \esub\vartwo\Id$};
		\node at (source-right|-source-down)(target){\normalsize $\Id $};
				
		\draw[->](source) to node[above] {\scriptsize $\nsym\gcsym$} (source-right);
		\draw[->](source) to node[left] {\scriptsize $\nsym\gcsym$} (source-down);
		
		\draw[->, dotted](source-down) to node[above] {\scriptsize $\nsym\gcsym$} (target);
		\draw[->, dotted](source-right) to node[right] {\scriptsize $\nsym\gcsym$} (target);

	\end{tikzpicture}
\end{center}
Moreover, $\tongc$ is postponable, \ie if $\tm \toname^*\tmtwo$ then $\tm\tonnotgc^*\tongc^*\tmtwo$ where $\tonnotgc\defeq\tonm\cup\tone$, which is why it is often omitted from the micro-step presentation of \cbn.

\begin{figure}[t!]
	\begin{center}
				\arraycolsep=3pt
\fbox{				$\begin{array}{c}
						\begin{array}{r@{\hspace{.5cm}}r@{\hspace{.1cm}}l@{\hspace{.1cm}}ll}
					  	\textsc{Name contexts} &     \nctx,\nctxtwo & \grameq & \ctxhole \mid \nctx\tm \mid \nctx\esub\var\tm
					\end{array}
\\[16pt]\hline
\textsc{Rewriting relations}\\[4pt]
				\begin{array}{rcl@{\hspace{1cm}}rcl}
					\tonm  &\defeq&  \nctxp\rtom 
					&
					\toname &\defeq &  \tonm \cup \tone \cup \tongc
					\\
					\tone  &\defeq&  \nctxp{\rtoep\nctx}
					&
					\tonnotgc & \defeq & \tonm\cup\tone
					\\
					\tongc  &\defeq&  \nctxp{\rtogc}
					\end{array}
				\end{array}$
				} 
				\caption{The call-by-name strategy.}
				\label{fig:cbn-strategy}	
				
			\end{center}
			
		\end{figure}


\begin{figure}[t!]
\centering
\fbox{
$\begin{array}{c}
\begin{array}{r@{\hspace{.25cm}} rcl  @{\hspace{.5cm}}  @{\hspace{.5cm}} r @{\hspace{.25cm}} rcl}
	\textsc{Stacks} & \stack 	& \grameq & \emptylist \mid \tm \cons \stack
	&
	\textsc{Environments} & \env	& \grameq & \emptylist \mid \esub\var\tm \cons \env 
	\\
	\textsc{States} &	\state	& \defeq & \mamst\tm\stack\env
\end{array}
\\
\hline
\textsc{Transitions}
\\
  	{\setlength{\arraycolsep}{0.35em}
  	\begin{array}{c|c|c||l||c|c|cccc}
		 \mbox{Code} & \mbox{Stack} &  \mbox{Env} 
		&&
		 \mbox{Code} & \mbox{Stack} &  \mbox{Env} \\
\hhline{=|=|=|=|=|=|=}
		 \tm\tmtwo & \stack &  \env
	  	&\tomachsea&
	  	 \tm & \tmtwo\cons\stack & \env
	  	\\
		 \la\var\tm & \tmtwo \cons\stack  &\env
		& \tomachm &
		\tm & \stack  &\esub\var\tmtwo\cons\env
		\\
		\var & \stack & \env\cons\esub\var\tm\cons\envtwo
	  	&\tomache &
		\renamenop\tm & \stack & \env\cons\esub\var\tm\cons\envtwo
	\end{array}}
\\[4pt]
\mbox{Where $\renamenop{\tm}$ is any well-bound code $\alpha$-equivalent to $\tm$ such that its }
\\ \mbox{bound names  are fresh with respect to those in the other data structures.}
\\[4pt]
\hline
\begin{array}{r@{\hspace{.25cm}} rcl  @{\hspace{.25cm}}@{\hspace{.35cm}}|  r @{\hspace{.25cm}} rcl}
	\multicolumn{8}{c}{\textsc{Read-back}}
	\\
	\textsc{Empty} & \decode\emptylist & \defeq & \ctxhole
	&
	\textsc{Envs} & \decode{\esub\var\tm \cons \env } & \defeq & \decodep\env{\ctxhole\esub\var\tm}
	\\
	\textsc{Stacks} & \decode{\tm \cons \stack} & \defeq & \decodep\stack{\ctxhole\tm}
	&
	\textsc{States} &	\decode{\mamst\tm\stack\env} & \defeq & \decodep\env{\decodep\stack\tm}
\end{array}
\end{array}$
}
\caption{The Milner Abstract Machine (MAM).}
\label{fig:mam}
\end{figure}
\myparagraph{The MAM} The MAM is an abstract machine for the \cbn strategy, defined in \reffig{mam}; more precisely, for the non-erasing part $\tonnotgc$ of the \cbn strategy. It has two data structures, the (applicative) stack $\stack$ and the environment $\env$. The stack collects the argument encountered so far by the machine, via the search transition $\tomachsea$. The environment collects the explicit substitutions created by the $\msym$ transition $\tomachm$. When the active code is a variable $\var$, transition $\tomache$ looks up in the environment and replaces the variable with a freshly $\alpha$-renamed copy $\renamenop\tm$ of the term $\tm$ associated by $\env$ to $\var$. The renaming is essential for the correctness of the machine, and it amounts---when concretely implementing the MAM---to make a copy of $\tm$; see Accattoli and Barras for OCaml implementations of the MAM \cite{DBLP:conf/ppdp/AccattoliB17} together with their complexity analyses. 

Transitions $\msym$ and $\esym$ are the principal transitions of the MAM.

The MAM never garbage collects ESs, exploiting the fact that GC can be postponed. The read-back of MAM states to terms (defined in \reffig{mam}) can be compactly described by turning stacks and environment into contexts made out of arguments and ESs, respectively.

\begin{exa}
We show the first few transitions of the diverging execution of the MAM on $\Omega$. We use the following notation: $\delta_\var \defeq \la\var\var\var$.
\begin{center}$\begin{array}{c|c|c|lllll}
		 \mbox{Code} & \mbox{Stack} &  \mbox{Env} 
		&
		\\
\hhline{=|=|=|}
\delta_\var\delta_\vartwo & \emptylist & \emptylist & \tomachsea
\\
\delta_\var & \delta_\vartwo & \emptylist & \tomachm
\\
\var\var & \emptylist & \esub\var{\delta_\vartwo} & \tomachsea
\\
\var & \var & \esub\var{\delta_\vartwo} & \tomache
\\
\delta_\varthree & \var & \esub\var{\delta_\vartwo} & \tomachm
\\
\varthree\varthree & \emptylist & \esub\varthree\var\cons\esub\var{\delta_\vartwo} & \tomachsea\ldots
\end{array}$\end{center}
\end{exa}
\myparagraph{Matching the Strategy and the Machine} The bisimulation between the MAM and the non-erasing \cbn strategy $\tonnotgc$ is given by the read-back function. The read-back allows one to project transitions of the MAM on the LSC so that the $\msym$ and $\esym$ transitions map to $\tonm$ and $\tone$ steps respectively, while $\seasym$ transitions map to equalities on terms. The converse simulation is obtained by proving that if the read-back term can reduce then the machine can do a transition (that actually maps on that step).

In fact, this simple schema has a glitch, concerning $\msym$ transitions. Consider one such transition $\state = \mamst {\la\var\tm} {\tmtwo \cons\stack}  \env \tomachm \mamst \tm \stack {\esub\var\tmtwo\cons\env} = \statetwo$ and let us read it back:
\begin{equation}\begin{array}{rllll}
 \decode{\mamst {\la\var\tm} {\tmtwo \cons\stack}  \env}
 & = &
 \decodep\env{\decodep{\tmtwo \cons\stack}{\la\var\tm}}
\\ & = &
 \decodep\env{\decodep\stack{(\la\var\tm)\tmtwo}}
\\ & \tonm &
 \decodep\env{\decodep\stack{\tm\esub\var\tmtwo}}
\\ & \red{\neq} &
 \decodep{\env}{\decodep\stack{\tm}\esub\var\tmtwo}
\\ & = &
 \decodep{\esub\var\tmtwo\cons\env}{\decodep\stack{\tm}}
 & = &		
	\decode{\mamst \tm \stack {\esub\var\tmtwo\cons\env}}.
\end{array}
\label{eq:mam-need-for-stred}
\end{equation}
Note that the $\tonm$ steps can be applied because $\decodep\env{\decode\stack}$ is a \cbn evaluation context. Note, however, that the simulation does not work because of $\decodep\env{\decodep\stack{\tm\esub\var\tmtwo}}
\neq  \decodep{\env}{\decodep\stack{\tm}\esub\var\tmtwo}$. In order to validate such a step, the following rule commuting ESs out of the applications is needed: 
\begin{equation}
\begin{array}{lll\colspace \colspace ll}
(\tm\esub\var\tmtwo)\,\tmthree & \Rew{@l} & (\tm\,\tmthree)\esub\var\tmtwo               & \text{if $\var \not\in \fv{\tmthree}$} 
\end{array}
\label{eq:stred}
\end{equation}
Unfortunately, such a rule is not part of the LSC, so we cannot simulate $\msym$ transitions using $\tonm$ alone.

At this point, the key observation is that the rule in \refeq{stred}, while not part of the LSC, has a very nice property: it is itself a strong bisimulation with respect to $\tonnotgc$. More precisely, let $\stredapp$ be defined as in \reffig{stred-name}. Then we have the following property.
\begin{figure}[t!]
\centering
\fbox{
 \begin{tabular}{c}
 \textsc{Root case of structural reduction $\stredapp$}
\\[4pt]

$\begin{array}{rll@{\hspace{1em}}l}
(\tm\esub\var\tmtwo)\,\tmthree & \stredapp & (\tm\,\tmthree)\esub\var\tmtwo               & \text{if $\var \not\in \fv{\tmthree}$} \\
\end{array}$
\\[5pt]\hline
\begin{tabular}{c}
\textsc{Closure rules}
\\[4pt]
\begin{tabular}{c\colspace c\colspace c\colspace c}
\RightLabel{$\symfont{N\text{-}ctx}$}
\AxiomC{$\tm \streq_{\asym} \tm'$}
\UnaryInfC{$\nctxp\tm \streq \nctxp{\tm'}$}
\DisplayProof
&
\AxiomC{}
\RightLabel{$\symfont{refl}$}
\UnaryInfC{$\tm \stredapp \tm$}
\DisplayProof
&
\AxiomC{$\tm \stredapp \tmtwo$}
\AxiomC{$\tmtwo \stredapp \tmthree$}
\RightLabel{$\symfont{tran}$}
\BinaryInfC{$\tm \stredapp \tmthree$}
\DisplayProof
\end{tabular}
\end{tabular}
\end{tabular}
}
\caption{Structural reduction for simulating the MAM.}
\label{fig:stred-name}
\end{figure}

\begin{prop}[$\stredapp$ is a strong bisimulation]
\label{prop:name-strong-bisim}
Let $\tm \stredapp \tmtwo$  and $\asym\in\set{\nsym\msym,\nsym\esym}$. Then:
\begin{enumerate}
\item If $\tmtwo \Rew{\asym} \tmtwo'$ then there exists $\tm'$ such that $\tm \Rew{\asym} \tm'$ and $\tm' \stredapp \tmtwo'$.
\item If $\tm \Rew{\asym} \tm'$ then there exists $\tmtwo'$ such that $\tmtwo \Rew{\asym} \tmtwo'$ and $\tm' \stredapp \tmtwo'$.
\end{enumerate}
\end{prop}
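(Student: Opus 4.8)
The plan is to prove both points by reducing to a single ``local'' step of $\stredapp$ and then closing the resulting local commutation diagrams by a case analysis on the relative position of the two redexes, in the style of Accattoli et al.~\cite{DBLP:conf/icfp/AccattoliBM14}. By the closure rules of \reffig{stred-name}, $\stredapp$ is the reflexive--transitive closure of the name-context closure of the root rule $(\tm_1\esub\var{\tm_2})\tm_3 \mapsto (\tm_1\tm_3)\esub\var{\tm_2}$ with $\var\notin\fv{\tm_3}$; hence it suffices to establish both points when $\tm\stredapp\tmtwo$ is one such step, $\tm = \nctxp{(\tm_1\esub\var{\tm_2})\tm_3}$ and $\tmtwo = \nctxp{(\tm_1\tm_3)\esub\var{\tm_2}}$. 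Indeed, the reflexivity case is trivial and the transitivity case follows by pasting together the squares obtained for the two factors. Moreover, directions~1 and~2 come out of the very same family of squares, read starting from the left or from the right side, so it is enough to exhibit the squares.

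The next step is to reduce to the root case $\nctx = \ctxhole$ by induction on $\nctx$. A $\Rew{\asym}$-step of $\tm$ that is confined to a part of $\nctx$ disjoint from the hole trivially commutes with $\stredapp$; a step whose redex lies below the hole is dispatched by the induction hypothesis after peeling off the outermost constructor of $\nctx$; and a step that uses the peeled constructor together with material below the hole reduces, by the shape of name contexts, to a rewriting step strictly inside $\tm_1$, which leaves the $\stredapp$-redex intact and lets the square close immediately. The fact that makes all of this work --- and that is also the crux of the root case --- is that the rules of the LSC are \emph{at a distance}: the substitution context of a $\tonm$-redex and the name context of a $\tone$-redex both simply \emph{absorb} the explicit substitution $\esub\var{\tm_2}$ that $\stredapp$ relocates, so that moving it across the application node changes neither the set of available redexes nor their results, only the recorded position of $\esub\var{\tm_2}$.

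For the root case $\nctx = \ctxhole$ I would split according to where the $\asym$-redex sits relative to $(\tm_1\esub\var{\tm_2})\tm_3$ and to $(\tm_1\tm_3)\esub\var{\tm_2}$. Since name contexts never enter arguments, ES contents, or abstraction bodies, the only evaluation position inside either term lies within $\tm_1$, and from there the visible explicit substitutions are the same on both sides (those internal to $\tm_1$, plus $\esub\var{\tm_2}$). Hence there are only three kinds of interaction: (i) the step fires strictly inside $\tm_1$ using substitutions internal to $\tm_1$ --- it plainly commutes with $\stredapp$ and the side condition $\var\notin\fv{\tm_3}$ is preserved; (ii) $\asym = \nsym\msym$ and the step fires at the outermost application, which forces $\tm_1 = \sctxp{\la\vartwo{\tm'}}$ --- the matching step on the other side fires at $\tm_1\,\tm_3$ with substitution context $\sctx$ extended by $\esub\var{\tm_2}$, and the two reducts are literally equal; (iii) $\asym = \nsym\esym$ and the step replaces an occurrence of $\var$ inside $\tm_1$ via $\esub\var{\tm_2}$ --- the matching step uses the same ES on the same occurrence, and the square closes with one further $\stredapp$-step (relocating $\esub\var{\tm_2}$ past $\tm_3$, using $\var\notin\fv{\tm_3}$).

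The main obstacle I expect is bookkeeping rather than insight: checking that this case split is genuinely exhaustive --- in particular that pushing $\esub\var{\tm_2}$ across the application node neither creates nor destroys a $\tonm$- or $\tone$-redex on either side --- and carrying the on-the-fly $\alpha$-renaming and the no-capture side conditions through the induction on $\nctx$ and the three root subcases. This is exactly the ``simple but lengthy'' kind of verification flagged in the introduction, and its details can be deferred to the technical report~\cite{accattoli2024mirroring}.
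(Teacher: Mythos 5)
Your plan is essentially the paper's own proof: the paper establishes this proposition by induction on $\tm \stredapp \tmtwo$ with a case analysis of the reduction step (deferring the details to an Abella formalization), and your decomposition into reflexivity/transitivity, the induction on the name context, and the three root-case interactions (step inside the left subterm, multiplicative step at the relocated application, exponential step using the relocated ES) is exactly that induction spelled out. The case split is exhaustive for weak call-by-name evaluation, so the proposal is correct.
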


\begin{proof}
Both points are unsurprising inductions on $\tm \stredapp \tmtwo$ and case analyses of the reduction step. We formalized them in the Abella proof assistant, the sources can be found on GitHub \cite{AbellaSources}.\qedhere
\end{proof}

Note that an immediate consequence of Point 1 is that $\stredapp$ can always be postponed.

\begin{cor}[$\stredapp$ postponement]
If $\tm (\stredapp \tonnotgc\stredapp )^k \tmtwo$ then $\tm \tonnotgc^k\stredapp  \tmtwo$.
\end{cor}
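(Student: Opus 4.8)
The plan is to prove the statement by a straightforward induction on $k$, using Point~1 of \refprop{name-strong-bisim} as the only real ingredient, together with the fact that $\stredapp$ is reflexive and transitive by construction (rules $\symfont{refl}$ and $\symfont{tran}$ in \reffig{stred-name}), so that any two adjacent $\stredapp$-segments merge into a single one. The intuition is that Point~1 lets one \emph{slide} a block of $\stredapp$-rewriting forward past a single $\tonnotgc$-step without changing the number of $\tonnotgc$-steps; iterating the sliding $k$ times pushes all the interleaved $\stredapp$-blocks to the right, leaving a single trailing $\stredapp$.

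For the base case $k=0$ the hypothesis gives $\tm=\tmtwo$ and the thesis follows from $\tm\stredapp\tm$ (rule $\symfont{refl}$). For the inductive step, I would decompose $\tm\,(\stredapp\tonnotgc\stredapp)^{k}\,\tmtwo$ as $\tm\,(\stredapp\tonnotgc\stredapp)^{k-1}\,\tmthree$ followed by one last block $\tmthree\stredapp\tmfour\tonnotgc\tm'\stredapp\tmtwo$. Applying the \ih to the first part yields some $\tmtwo'$ with $\tm\tonnotgc^{k-1}\tmtwo'$ and $\tmtwo'\stredapp\tmthree$; composing with $\tmthree\stredapp\tmfour$ by transitivity gives $\tmtwo'\stredapp\tmfour$. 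Since $\tonnotgc=\tonm\cup\tone$, the step $\tmfour\tonnotgc\tm'$ is an $\Rew{\asym}$-step for some $\asym\in\set{\nsym\msym,\nsym\esym}$, so Point~1 of \refprop{name-strong-bisim} applied to $\tmtwo'\stredapp\tmfour$ produces some $\tm''$ with $\tmtwo'\Rew{\asym}\tm''$ and $\tm''\stredapp\tm'$; transitivity with $\tm'\stredapp\tmtwo$ then gives $\tm''\stredapp\tmtwo$. Chaining everything, $\tm\tonnotgc^{k-1}\tmtwo'\tonnotgc\tm''\stredapp\tmtwo$, i.e.\ $\tm\tonnotgc^{k}\stredapp\tmtwo$, which closes the induction.

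There is no genuine obstacle here: the mathematical content is entirely in \refprop{name-strong-bisim}, and the corollary is pure bookkeeping. The one point to keep in mind is that Point~1 is a \emph{single}-step commutation, so the argument must be organized so that at each stage only one $\tonnotgc$-step is commuted; and it is essential that $\stredapp$ already absorbs composition (via $\symfont{tran}$), otherwise one would accumulate a growing stack of $\stredapp$-factors instead of a single trailing one. Equivalently, one could first isolate the local lemma $\stredapp\tonnotgc\ \subseteq\ \tonnotgc\stredapp$, which is just a reformulation of Point~1, and then obtain the corollary by an immediate induction; I would phrase it directly as above to avoid introducing an extra named lemma.
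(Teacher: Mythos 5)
Your proof is correct and follows exactly the route the paper intends: the corollary is stated there as an immediate consequence of Point~1 of \refprop{name-strong-bisim}, and your induction on $k$, using reflexivity and transitivity of $\stredapp$ to merge adjacent $\stredapp$-blocks before commuting a single $\tonnotgc$-step, is just the explicit bookkeeping behind that remark.
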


The idea then is that the sequence of equalities in \refeq{mam-need-for-stred} can be completed with $\stredapp$, obtaining Point 1.1 of the next lemma; the other two are proved by simply unfolding the definitions (and using some easy omitted invariants, the next section shall give all the details for a more general case).
\begin{lem}
Let $\state$ be a reachable MAM state.
\begin{enumerate}
\item \emph{Principal projection}:
\begin{enumerate}
\item If  $\state \tomachm\statetwo$ then $\decode\state\tonm\stredapp\decode\statetwo$;
\item If  $\state \tomache\statetwo$ then $\decode\state\tone\decode\statetwo$;
\end{enumerate}

\item \emph{Search transparency}: if  $\state \tomachsea\statetwo$ then $\decode\state=\decode\statetwo$.
\end{enumerate}
\end{lem}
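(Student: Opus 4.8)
The plan is a direct case analysis on the transition $\state\tomach\statetwo$, with no induction: one just unfolds the recursive definition of the read-back $\decode\cdot$ of \reffig{mam}, using the two obvious facts that reading back a stack $\stack$ produces a name context made of applications, and reading back an environment $\env$ produces a name context made of ESs, so that $\decodep\env{\decodep\stack\ctxhole}$ is always a name context. Along the way two routine well-boundedness invariants of reachable states are needed: (i) the variable $\var$ bound by the abstraction consumed in a $\msym$-transition does not occur free in the stack; and (ii) the name context obtained by reading back a state never captures the free variables of a term stored in the environment. I would state and prove these invariants explicitly in the next section, where the more general Silly MAM is treated in full, and here just appeal to them.

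\emph{Search transparency} is pure unfolding: for $\state=\mamst{\tm\tmtwo}\stack\env\tomachsea\mamst\tm{\tmtwo\cons\stack}\env=\statetwo$ one gets $\decode\statetwo=\decodep\env{\decodep{\tmtwo\cons\stack}\tm}=\decodep\env{\decodep\stack{\tm\tmtwo}}=\decode\state$. \emph{Exponential projection} is almost as easy: for $\state=\mamst\var\stack{\env\cons\esub\var\tm\cons\envtwo}\tomache\mamst{\renamenop\tm}\stack{\env\cons\esub\var\tm\cons\envtwo}=\statetwo$, reading back the environment places the entry $\esub\var\tm$ between the name contexts coming from $\env$ and from $\envtwo$, so with $\nctx\defeq\decodep\env{\decodep\stack\ctxhole}$ (a name context) one obtains $\decode\state=\decodep\envtwo{\nctxp\var\esub\var\tm}$; by invariant (ii) this is a $\tone$-redex, with reduct $\decodep\envtwo{\nctxp\tm\esub\var\tm}$, and since $\renamenop\tm=_\alpha\tm$ and terms are identified up to $\alpha$, this reduct is exactly $\decode\statetwo=\decodep\envtwo{\nctxp{\renamenop\tm}\esub\var\tm}$. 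Hence $\decode\state\tone\decode\statetwo$.

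The interesting case is \emph{multiplicative projection}. For $\state=\mamst{\la\var\tm}{\tmtwo\cons\stack}\env\tomachm\mamst\tm\stack{\esub\var\tmtwo\cons\env}=\statetwo$, the first part of the computation in \refeq{mam-need-for-stred} shows $\decode\state=\decodep\env{\decodep\stack{(\la\var\tm)\tmtwo}}\tonm\decodep\env{\decodep\stack{\tm\esub\var\tmtwo}}$, the $\tonm$-redex being exposed because $\decodep\env{\decodep\stack\ctxhole}$ is a name context. It then remains to bridge $\decodep\env{\decodep\stack{\tm\esub\var\tmtwo}}$ and $\decode\statetwo=\decodep\env{(\decodep\stack\tm)\esub\var\tmtwo}$: applying the root rule of $\stredapp$ once for each entry of $\stack$ pushes the substitution $\esub\var\tmtwo$ outward through the applications of $\decodep\stack\ctxhole$ (the side conditions $\var\notin\fv{\cdot}$ on the arguments are invariant (i)), and this whole sequence of $\stredapp$-steps takes place inside the name context $\decode\env$, hence is a single $\stredapp$-step by the $\symfont{N\text{-}ctx}$ closure rule of \reffig{stred-name}. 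This yields $\decode\state\tonm\stredapp\decode\statetwo$, as required.

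The genuinely delicate point — and the very reason the naive claim "the read-backs are equal" fails for $\msym$-transitions — is this last bridging step, which forces one to make the well-boundedness invariants explicit so that the freshness side conditions of the $\stredapp$ root rule and of the $\rtoep\nctx$ rule are satisfied; everything else is mechanical unfolding of the definition of $\decode\cdot$.
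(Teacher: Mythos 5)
Your proposal is correct and follows essentially the same route as the paper: the paper also proves this lemma by unfolding the read-back, completing the computation of \refeq{mam-need-for-stred} with $\stredapp$ (whose reflexive--transitive closure absorbs the repeated pushing of the created ES through the stack), and appealing to the well-bound/closure invariants that are only stated and proved in detail for the more general Silly MAM in the next section. The only nuance is that your invariant (ii) should also explicitly guarantee that the variable being replaced in an $\esym$-transition is bound by a unique environment entry (so the surrounding name context does not capture it), which is exactly what the paper's well-bound invariant provides.
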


More generally, $\tonnotgc$ simulates the MAM up to $\stredapp$, as summed up by the following theorem.

\begin{thm}[MAM implementation]
\label{thm:mam-implem}
\hfill
\begin{enumerate}
\item \emph{Runs to evaluations}: for any MAM run $\run: \compilrel\tm\state \tomach^* \statetwo$ there exists a \cbn evaluation $\deriv: \tm \tonnotgc^* \stredapp\decode\statetwo$;

\item \emph{Evaluations to runs}: for every \cbn evaluation $\deriv: \tm \tonnotgc^* \tmtwo$ there exists a 
MAM run $\run: \compilrel\tm\state \tomach^* \statetwo$ such that $\decode\statetwo \stredapp \tmtwo$;

\end{enumerate}
\end{thm}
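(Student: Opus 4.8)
The plan is to obtain both halves from three ingredients that are already in place: the step-by-step correspondence between MAM transitions and read-backs (the projection lemma, i.e.\ principal projection for $\tomachm$/$\tomache$ and search transparency for $\tomachsea$); the fact that $\stredapp$ is a strong bisimulation with respect to $\tonnotgc$ (\refprop{name-strong-bisim}); and the $\stredapp$-postponement corollary. The connective tissue is that $\tonnotgc$ is deterministic (in every term at most one of $\tonm$, $\tone$ is enabled, and each is deterministic) and so is the MAM, so the run and the evaluation it is paired with advance in lockstep.

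\emph{Runs to evaluations.} A straightforward induction on $\size\run$, using the projection lemma to replay each transition of $\run$ one at a time, shows that $\decode\state$ reaches $\decode\statetwo$ by a sequence of $\tonm$ and $\tone$ steps — exactly $\sizep\run\msym$ of the former and $\sizep\run\esym$ of the latter — interspersed with equalities (from $\tomachsea$ via search transparency) and with $\stredapp$ steps (one produced by each $\tomachm$ transition, cf.\ \refeq{mam-need-for-stred}). Absorbing the equalities into $\stredapp$ by reflexivity and fusing adjacent $\stredapp$ blocks by transitivity puts this sequence in the shape $(\stredapp\,\tonnotgc\,\stredapp)^k$ with $k=\sizep\run{\msym,\esym}$, so the $\stredapp$-postponement corollary gives $\decode\state\tonnotgc^k\stredapp\decode\statetwo$; since $\decode\state=_\alpha\tm$ by the initialization constraint, this is the required $\deriv$ (and the two step counts are as claimed).

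\emph{Evaluations to runs.} Here I would first record a small progress fact for the MAM: search transitions terminate (they strictly descend into the active code), and from any reachable state whose read-back has a $\tonnotgc$-redex a finite block of search transitions followed by exactly one principal transition is enabled; via the projection lemma and determinism of $\tonnotgc$ this makes the read-back perform the matching $\tonm$ step (up to $\stredapp$) or $\tone$ step. Then I would induct on $\size\deriv$: the empty evaluation is handled by the initial state, and for $\tm\tonnotgc^*\tmthree\tonnotgc\tmtwo$ the induction hypothesis yields a run to $\statethree$ with $\decode\statethree$ and $\tmthree$ related by $\stredapp$; since $\tmthree$ has a redex, \refprop{name-strong-bisim} guarantees $\decode\statethree$ has the corresponding one, the progress fact extends the run by one principal transition to $\statetwo$, and a last application of \refprop{name-strong-bisim} together with determinism ($\tmtwo$ is the unique $\tonnotgc$-reduct of $\tmthree$, and $\stredapp$-related terms have $\stredapp$-related reducts) gives $\decode\statetwo\stredapp\tmtwo$.

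\emph{Main obstacle.} The long but routine part is the case analyses behind the projection lemma and \refprop{name-strong-bisim}, which are spelled out above (and for \refprop{name-strong-bisim} even machine-checked in Abella). The genuine care is needed in threading the $\stredapp$ steps: an $\tomachm$ transition bakes in an outward commutation of explicit substitutions that $\tonm$ alone does not perform, so one must rely on $\stredapp$ being a strong bisimulation to slide these discrepancies past subsequent steps and keep them from interfering with the determinism arguments, and one must be careful that the direction of each $\stredapp$ segment lines up with the two statements — $\stredapp$ trailing the evaluation in the first item, and $\stredapp$ linking the machine read-back to the target term in the second.
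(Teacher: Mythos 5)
Your overall route is the intended one. The paper does not actually spell out this proof: it defers to the distillation argument of Accattoli et al. \cite{DBLP:conf/icfp/AccattoliBM14}, whose only ingredients here are the projection/transparency lemma and the strong bisimulation property of $\stredapp$ (\refprop{name-strong-bisim}). Your item 1 (induction on $\size\run$, projection and search transparency, then fusing the interleaved $\stredapp$ blocks and invoking the postponement corollary) is exactly the argument the paper carries out for the Silly MAM. Your item 2, a small-step induction on $\size\deriv$ driven by determinism of $\tonnotgc$ plus a progress fact (search termination, and the halt/closure analysis showing a search-normal non-final state fires a principal transition), is the appropriate argument for this small-step statement; it legitimately differs from the big-step detour via confluence and uniform normalization used for the Silly MAM, where no deterministic strategy is available.

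The one step that does not hold up is the last line of item 2, and it is precisely the orientation issue you yourself flagged. \refprop{name-strong-bisim} preserves the orientation of $\stredapp$; it never reverses it. The induction closes only with the invariant $\tmthree \stredapp \decode\statethree$ (calculus term structurally reduces to the read-back): from $\tmthree \tonnotgc \tmtwo$ the bisimulation gives $\decode\statethree \tonnotgc w$ with $\tmtwo \stredapp w$, the projection gives $\decode\statethree \tonnotgc u \stredapp \decode\statetwo$, determinism of $\tonnotgc$ gives $w=u$, and transitivity yields $\tmtwo \stredapp \decode\statetwo$ --- the orientation appearing in the Silly MAM implementation theorem ($\ans \stred \decode\statetwo$). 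The concluding claim that the bisimulation plus determinism yields $\decode\statetwo \stredapp \tmtwo$ is not something these ingredients can produce, and that reversed relation is in general false: after $((\la\var\var)\Id)\Id \tonm (\var\esub\var\Id)\Id$, the read-backs of the reachable states are $((\la\var\var)\Id)\Id$, then $(\var\Id)\esub\var\Id$, and terms further along the run, none of which $\stredapp$-reduces to $(\var\esub\var\Id)\Id$, since the root rule of $\stredapp$ only moves substitutions out of applications. (The theorem's wording inherits the argument order of the symmetric $\streq$ of the original distillation presentation; what is provable, and what your argument in fact establishes, is $\tmtwo \stredapp \decode\statetwo$.) State the induction invariant and the conclusion of item 2 with that orientation, and keep it consistent with the deliberately vague phrase in your induction hypothesis; with that fix the proof is complete.
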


The theorem is a minor variation over the same one appearing in Accattoli et al. \cite{DBLP:conf/icfp/AccattoliBM14} proved via the distillation technique, that is used also in \cite{DBLP:conf/aplas/AccattoliBM15,DBLP:conf/lics/AccattoliC15,DBLP:conf/fscd/AccattoliMPC25}; the word \emph{distillation} refers to the fact that the LSC distills abstract machines by working up to the search mechanism, as captured by the \emph{search transparency} property above. 

The difference is that usually the distillation is given with respect to a structural equivalence $\streq$ rather than to a structural reduction such as our $\stredapp$. Structural equivalence $\streq$ is a standard concept of the theory of the LSC, that extends $\stredapp$ with further commutations (while still being a strong bisimulation with respect to $\tonnotgc$) and useful beyond the study of abstract machines. Here we decided to give a slightly different, more economical presentation using the structural reduction $\stredapp$, that contains only what is strictly needed to simulate the MAM. The change does not affect the proof of the theorem, that only relies on the strong bisimulation property of $\streq$, given for $\stredapp$ by \refprop{name-strong-bisim}. 

In the next section, we shall develop a slight variant of the distillation technique to study the silly extension of the MAM, thus the reader shall see it at work.

\myparagraph{Final States} Final states of the MAM have shape $\mamst{\la\var\tm}\emptylist\env$, which read back to terms of the form $\sctxp{\la\var\tm}$, which are normal forms for $\tonnotgc$. To prove that final states have that shape one uses the \emph{closure invariant} of the machine: whenever the active code is a variable $\var$ there is an entry for $\var$ in the environment $\env$, so that the machine cannot be stuck on variables. The invariant holds because the machine is always run on a closed term. In the next section, we shall precisely state the invariant for the silly machine.
\section{The Silly Abstract Machine}
\label{sect:machine}
In this section, we extend the MAM as to compute non-erasing \cbs normal forms for closed terms, obtaining the Silly MAM. Usually, abstract machines are shown to implement a strategy. For the moment being, we do not define a \cbs evaluation strategy; it shall be the topic of the following section. We then only show a slightly weaker implementation theorem for the Silly MAM, with respect to the SSC rather than with respect to a strategy.

\myparagraph{Closed Normal Forms} First of all, we characterize the normal forms for non-erasing SSC reduction on closed terms, that have a simple shape, simply referred to as \emph{answers}. They shall be the outcome of the Silly MAM runs.

\begin{defi}[Answers]
Answers are defined as follows:
\begin{center}
	$\begin{array}{r@{\hspace{.25cm}}r@{\hspace{.1cm}}l@{\hspace{.1cm}}ll}
		\textsc{Answers} &     \ans,\anstwo & \grameq & \val \mid \ans\esub\var\anstwo 
	\end{array}$
\end{center}
\end{defi}

\begin{lem}
Let $\ans$ be an answer. Then $\ofv\ans=\emptyset$.\label{l:ans-empty-ofv}
\end{lem}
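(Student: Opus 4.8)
The statement is proved by a straightforward structural induction on the answer $\ans$, following the grammar $\ans ::= \val \mid \ans'\esub\var\ans''$ and unfolding the definition of $\ofv{\cdot}$ (\refdef{ofv}) in each case.

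\textbf{Plan.} First I would treat the base case: an answer of the form $\val$, which by definition of values is an abstraction $\la\var\tm$. Then $\ofv\val = \ofv{\la\var\tm} = \emptyset$ directly by the abstraction clause of the definition of shallow free variables, since $\ofv{\cdot}$ ignores everything under a $\lambda$.

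Next I would handle the inductive case $\ans = \ans'\esub\var\ans''$, where $\ans'$ and $\ans''$ are answers. By the induction hypothesis, $\ofv{\ans'} = \emptyset$ and $\ofv{\ans''} = \emptyset$. By the explicit-substitution clause of \refdef{ofv}, $\ofv{\ans'\esub\var\ans''} = (\ofv{\ans'} \setminus \set\var) \cup \ofv{\ans''} = (\emptyset \setminus \set\var) \cup \emptyset = \emptyset$, which closes the induction.

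\textbf{Main obstacle.} There is essentially none: the only subtlety is recalling that values in the SSC are exactly abstractions (as fixed in \refsect{ssc}), which is precisely what makes the base case collapse; no part of an answer outside an ES-slot is ever an application or a bare variable, so $\ofv{\cdot}$ has nothing to pick up.

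\begin{proof}
By induction on the structure of the answer $\ans$. If $\ans = \val$, then $\val = \la\var\tm$ for some $\var$ and $\tm$, hence $\ofv\ans = \ofv{\la\var\tm} = \emptyset$ by \refdef{ofv}. If $\ans = \ans'\esub\var\ans''$ with $\ans'$ and $\ans''$ answers, then by the \ih $\ofv{\ans'} = \emptyset$ and $\ofv{\ans''} = \emptyset$, so $\ofv\ans = (\ofv{\ans'} \setminus \set\var) \cup \ofv{\ans''} = \emptyset$.
\end{proof}
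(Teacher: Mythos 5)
Your proof is correct and is exactly the argument the paper intends by its one-line "immediate induction on $\ans$": the base case collapses because values are abstractions (so $\ofv{\la\var\tm}=\emptyset$), and the ES case follows directly from the induction hypothesis and the clause $\ofv{\ans'\esub\var\ans''}=(\ofv{\ans'}\setminus\set\var)\cup\ofv{\ans''}$. Nothing to add.
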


\begin{proof}
Immediate induction on $\ans$.
\end{proof}

\begin{prop}
\label{l:compact-answers-are-nfs-for-non-erasing-cbs-strategy}
Let $\tm$ be such that $\ofv\tm=\emptyset$. Then $\tm$ is $\townotgcv$-normal if and only if $\tm$ is an answer.
\end{prop}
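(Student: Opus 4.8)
The plan is to prove both implications by structural induction, relying on two routine auxiliary facts. First, every answer has the form $\sctxp\val$: this is an immediate induction on the grammar of answers, since $\sctx\esub\var\anstwo$ is again a substitution context. Second, the standard plugging characterization of shallow free variables: $\var\in\ofv\tm$ if and only if $\tm=\wctxfp\var$ for some weak context $\wctx$; this is an easy induction on $\tm$ (with the usual $\alpha$-renaming care so that $\wctx$ does not capture $\var$), and it is already used tacitly in the proof of \refprop{weak-nfs}.

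For the direction $\Leftarrow$, I would induct on the answer $\ans$. If $\ans=\val=\la\var\tmtwo$, then it is $\townotgcv$-normal because weak contexts never reach under an abstraction, and the root of an abstraction is neither a $\towm$- nor a $\towe$-redex. If $\ans=\anstwo\esub\var\anstwo'$, then any $\townotgcv$-redex of $\ans$ lies either inside $\anstwo$, inside $\anstwo'$, or at the root; the first two are excluded by the \ih, a root $\towm$-redex is impossible since the root constructor is an ES, and a root $\towe$-redex would require $\anstwo=\wctxfp\var$, i.e.\ $\var\in\ofv\anstwo$, contradicting $\ofv\anstwo=\emptyset$ by \reflemma{ans-empty-ofv}.

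For the direction $\Rightarrow$, suppose $\tm$ is $\townotgcv$-normal with $\ofv\tm=\emptyset$, and induct on $\tm$. The variable case is vacuous since $\ofv\var=\set\var\neq\emptyset$. If $\tm$ is an abstraction, it is a value, hence an answer. If $\tm=\tmtwo\tmthree$, then $\tmtwo$ is $\townotgcv$-normal as a subterm in the weak context $\ctxhole\,\tmthree$ and $\ofv\tmtwo\subseteq\ofv\tm=\emptyset$, so by the \ih $\tmtwo$ is an answer and hence $\tmtwo=\sctxp\val$; but then $\tm=\sctxp\val\,\tmthree$ is a root $\towm$-redex, a contradiction, so this case cannot occur. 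Finally, if $\tm=\tmtwo\esub\var\tmthree$, then $\tmtwo$ and $\tmthree$ are $\townotgcv$-normal, and from $\ofv\tm=(\ofv\tmtwo\setminus\set\var)\cup\ofv\tmthree=\emptyset$ we get $\ofv\tmthree=\emptyset$ and $\ofv\tmtwo\subseteq\set\var$. By the \ih, $\tmthree$ is an answer, so $\tmthree=\sctxp\val$. If $\var\in\ofv\tmtwo$, then $\tmtwo=\wctxfp\var$, so $\tm=\wctxfp\var\esub\var{\sctxp\val}$ is a root $\towe$-redex (the by-name exponential rule $\rtoep\wctx$ does not constrain the shape of the ES content), contradicting normality; hence $\var\notin\ofv\tmtwo$, so $\ofv\tmtwo=\emptyset$, and the \ih gives that $\tmtwo$ is an answer. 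Then $\tm=\tmtwo\esub\var\tmthree$ has both components answers, so $\tm$ is an answer. The argument is entirely routine case analysis; the only mildly delicate point is setting up the $\ofv$ bookkeeping together with the plugging characterization above, which is standard for the LSC.
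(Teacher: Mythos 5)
Your proof is correct and follows essentially the same route as the paper's: structural induction in both directions, using \reflemma{ans-empty-ofv} to exclude root exponential redexes in the $\Leftarrow$ direction and the shape $\sctxp\val$ of answers to derive a contradiction in the application case of $\Rightarrow$. Your treatment of the substitution case in $\Rightarrow$ (first ruling out $\var\in\ofv\tmtwo$ via a would-be $\towe$-redex before invoking the induction hypothesis) is in fact slightly more careful than the paper's terse wording, but it is the same argument.
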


\begin{proof}
\begin{itemize}
\item  $\Leftarrow$) By induction on $\tm$. Cases:
\begin{itemize}
\item \emph{Value}, that is, $\tm = \val$. Then $\tm$ is a $\townotgcv$-normal.
\item \emph{Substitution}, that is, $\tm = \ans\esub\var\anstwo$. By \ih, $\ans$ and $\anstwo$  are $\townotgcv$-normal. By \reflemma{ans-empty-ofv}, $\var\notin\ofv\ans$ so there are no $\rtoep\wctx$ redexes involving $\esub\var\anstwo$ in $\tm$. Then $\tm$ is $\townotgcv$-normal.
\end{itemize}

\item $\Rightarrow$) By induction on $\tm$. Cases:
\begin{itemize}
\item \emph{Variable}, that is, $\tm = \var$. Impossible because $\ofv\tm=\emptyset$.
\item \emph{Abstraction}, that is, $\tm = \la\var\tmtwo$. Then $\tm$ has the required shape.
\item \emph{Application}, that is, $\tm=\tmtwo\tmthree$. By \ih, $\tmtwo$ has shape $\sctxp{\la\var\tmfour}$ and then $\tm$ has a $\tosim$-redex---absurd. Then this case is impossible.
\item \emph{Substitution}, that is, $\tm = \tmtwo\esub\var\tmthree$.  By \ih, both $\tmtwo$ and $\tmthree$ are answers, that is, $\tm$ has the required shape.\qedhere
\end{itemize}
\end{itemize}
\end{proof}
\begin{figure}[t!]
\centering
\fbox{
$\begin{array}{c}
\begin{array}{r@{\hspace{.25cm}} rcl  @{\hspace{.5cm}}  @{\hspace{.5cm}} r @{\hspace{.25cm}} rcl}
	\textsc{Stacks} & \stack 	& \grameq & \emptylist \mid \tm \cons \stack
	&
	\textsc{Partial Answers} & \pans & \grameq & \emptylist \mid \pans\cons(\val,\var)
	\\
	\textsc{States} &	\state	& \defeq & \glamst\pans\tm\stack\env
	&
	\textsc{Environments} & \env	& \grameq & \emptylist \mid \esub\var\tm \cons \env 
\end{array}
\\
\hline
\textsc{Transitions}
\\
  	{\setlength{\arraycolsep}{0.35em}
  	\begin{array}{c|c|c|c||l||c|c|c|cccc}
		 \mbox{PAnsw} &\mbox{Code} & \mbox{Stack} &  \mbox{Env} 
		&&
		 \mbox{PAnsw} &\mbox{Code} & \mbox{Stack} &  \mbox{Env} \\
\hhline{=|=|=|=|=|=|=|=|=}
		 \pans & \tm\tmtwo & \stack &  \env
	  	&\tomachseaone&
	  	 \pans & \tm & \tmtwo\cons\stack & \env
	  	\\
%
		 \pans & \la\var\tm & \tmtwo \cons\stack  &\env
		& \tomachm &
		\pans &  \tm & \stack  &\esub\var\tmtwo\cons\env
		\\
		 \pans & \var & \stack & \env\cons\esub\var\tm\cons\envtwo
	  	&\tomache &
		 \pans & \renamenop\tm & \stack & \env\cons\esub\var\tm\cons\envtwo
	  	\\		
		 \pans & \val & \emptylist & \esub\var\tm\cons\env
	  	&\tomachseatwo &
		 \pans\cons(\val,\var) & \tm & \emptylist & \env
	  	\\
	\end{array}}
\\[4pt]
\mbox{Where $\renamenop{\tm}$ is any well-bound code $\alpha$-equivalent to $\tm$ such that its }
\\ \mbox{bound names  are fresh with respect to those in the other data structures.}
\\[4pt]
\hline
\begin{array}{r@{\hspace{.25cm}} rcl  @{\hspace{.25cm}}| @{\hspace{.25cm}}  r @{\hspace{.25cm}} rcl}
	\multicolumn{8}{c}{\textsc{Read-back}}
	\\
	\textsc{Empty} & \decode\emptylist & \defeq & \ctxhole
	\\
	\textsc{Stacks} & \decode{\tm \cons \stack} & \defeq & \decodep\stack{\ctxhole\tm}
	&
	\textsc{Envs} & \decode{\esub\var\tm \cons \env } & \defeq & \decodep\env{\ctxhole\esub\var\tm}
	\\
	\textsc{PAnswers} &	\decode{\pans\cons(\val,\var)}	& \defeq & \decodep\pans{\val}\esub\var\ctxhole
	&
	\textsc{States} &	\decode{\fourstate\pans\tm\stack\env} & \defeq & \decodep\env{\decodep\pans{\decodep\stack\tm}}
\end{array}
\end{array}$
}
\caption{The Silly Milner Abstract Machine (Silly MAM).}
\label{fig:silly-machine}
\end{figure}

\myparagraph{The Silly MAM} The Silly Milner Abstract Machine (Silly MAM) is in \reffig{silly-machine}. At the dynamic level, the Silly MAM  is the extension of the MAM that, once the MAM can no longer reduce, starts evaluating the ESs (out of abstractions) left hanging in the environment. At the level of the specification, it extends the MAM by adding a data structure $\pans$ dubbed \emph{partial answer}, and a further search transition $\tomachseatwo$ that move the focus of the machine inside the first ES on the environment only when it is sure that it is no longer needed. When the MAM stops, which happens when the code is a value and the stack is empty, the Silly MAM continues: it stores the obtained value on the partial answer and pops the first entry of the environment, starting the evaluation of its term and saving its variable on the partial answer. The partial answer encodes the portion of the answer computed so far, via a mechanism that is made explicit by the read-back in \reffig{silly-machine}. As we shall prove below, the silly MAM stops when both the stack and the environment are empty.

An important detail, is that, when an ES $\esub\var\tm$ at the beginning of the environment is \emph{no longer needed}, it does not necessarily mean that $\var$ has no free occurrences in the rest of the state, as occurrences might be blocked by abstractions, as for instance in $\fourstate\emptylist{\la\vartwo\var}\emptylist{\esub\var\tmtwo}$ where $\var$ occurs but only under abstraction, and the Silly MAM has to evaluate inside $\tmtwo$. Therefore, \emph{no longer needed} rather means \emph{without shallow free occurrences} (\refdef{ofv}, page \pageref{def:ofv}), which is captured by the notion of answer given above.

\begin{exa}
We show the execution of the Silly MAM on $(\la\var\Id) (\Id \Id)$. We use the following notation: $\Id_\var \defeq \la\var\var$. At first, The Silly MAM runs as the MAM:
\begin{center}$\begin{array}{c|l|c|c|lllll}
		 \mbox{PAnsw} &\mbox{Code} & \mbox{Stack} &  \mbox{Env} 
		&
		\\
\hhline{=|=|=|=}
\emptylist &(\la\var\Id_\vartwo) (\Id_\varthree \Id_\varfour) & \emptylist & \emptylist & \tomachsea
\\
\emptylist &\la\var\Id_\vartwo  & \Id_\varthree \Id_\varfour & \emptylist & \tomachm
\\
\emptylist &\Id_\vartwo  & \emptylist & \esub\var{\Id_\varthree \Id_\varfour}
\end{array}$\end{center}
Now, this is a MAM final state, but the Silly MAM keeps running:
\begin{center}$\begin{array}{r|l|c|clllll}
		 \mbox{PAnsw} &\mbox{Code} & \mbox{Stack} &  \mbox{Env} 
		&
		\\
\hhline{=|=|=|=|}
\emptylist &\Id_\vartwo  & \emptylist & \esub\var{\Id_\varthree \Id_\varfour} & \tomachseatwo
\\
(\Id_\vartwo, \var) & \Id_\varthree \Id_\varfour & \emptylist & \emptylist & \tomachseaone
\\
(\Id_\vartwo, \var) & \Id_\varthree  & \Id_\varfour & \emptylist & \tomachm
\\
(\Id_\vartwo, \var) & \varthree  & \emptylist & \esub\varthree{\Id_\varfour} & \tomache
\\
(\Id_\vartwo, \var) & \Id_{\varfour'}  & \emptylist & \esub\varthree{\Id_\varfour} & \tomachseatwo
\\
(\Id_\vartwo, \var)\cons (\Id_{\varfour'},\varthree) & \Id_\varfour  & \emptylist & \emptylist 
\end{array}$\end{center}
Which is a final state that reads back to $(\la\vartwo\vartwo) \esub\var{\Id_{\varfour'}}\esub\varthree{\Id_\varfour}$ that is an answer.
\end{exa}

\myparagraph{Invariants.} To prove properties of the Silly MAM we need to isolate some of its invariants in \reflemma{skeletal-mam-qual-invariants} below. The closure one ensures that the closure of the initial term extends, in an appropriate sense, to all reachable states. The well-bound invariant, similarly, ensures that binders in reachable states are pairwise distinct, as in the initial term. To compactly formulate the closure invariant we need the notion of terms of a state. 

\begin{defi}[Terms of a state]
Let $\state=\fourstate\pans{\tmtwo}{\stack}{\env}$ be a Silly MAM state where $\pans$ is a possibly empty chain $ \emptylist\cons(\val_k,\var_k)\cons\ldots\cons(\val_1,\var_1)$ for some $k\geq0$. 
The \emph{terms of $\state$} are $\tmtwo$, every term in $\stack$ and every term in an entry of $\env$. 
\end{defi}

\begin{lem}[Qualitative invariants]
\label{l:skeletal-mam-qual-invariants}
Let $\compilrel\tm\state \tomach^*\statetwo=\fourstate\pans{\tmtwo}{\stack}{\env}$ be a Silly MAM run.
\begin{enumerate}
\item \emph{Closure}: for every term $\tmthree$ of $\statetwo$  and for any variable $\var\in\fv\tmthree$ there is an environment entry $\esub\var\tm$ or a partial answer entry $(\valtwo,\var)$ on the right of $\tmthree$ in $\statetwo$.
\item \emph{Well-bound}: if $\la\var\tmthree$ occurs in $\statetwo$ and $\var$ has any other
occurrence in $\statetwo$ then it is as a free variable of $\tmthree$, and for any environment entry $\esub\vartwo\tm$ or partial answer entry $(\val,\vartwo)$ in $\statetwo$ the name $\vartwo$ can occur (in any form) only on the left of that entry in $\statetwo$.

\end{enumerate}
\end{lem}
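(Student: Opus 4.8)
The plan is to prove the two invariants simultaneously by induction on the length of the Silly MAM run $\compilrel\tm\state \tomach^* \statetwo$. In the base case the run is empty, so $\statetwo$ is an initial state $\fourstate\emptylist{\tm}{\emptylist}{\emptylist}$ with $\tm$ closed and well-bound by the definition of the initialization relation (\refsect{prel-machines}); closure then holds vacuously, since $\tm$ is the only term of the state and $\fv\tm=\emptyset$, and well-boundedness is exactly that of $\tm$. For the inductive step I would write the run as $\compilrel\tm\state \tomach^* \statethree \tomach \statetwo$, assume both invariants for $\statethree$ by \ih, and analyse the last transition, using the read-back of \reffig{silly-machine} to make precise what ``on the right'' means: the read-back exhibits the active code as the innermost position, wrapped in turn by the stack of arguments, then by the partial answer, then by the environment, so that ``witnesses to the right'' are the binders seen by looking outward from a term's position.

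The three transitions inherited from the MAM go as in the MAM. For $\tomachseaone$ the code $\tm\tmtwo$ is split into the new code $\tm$ and the new stack top $\tmtwo$, which sit under the same enclosing partial answer and environment as $\tm\tmtwo$ did; since $\fv{\tm\tmtwo}=\fv\tm\cup\fv\tmtwo$, the witnesses from the \ih still work, and no binder is touched. For $\tomache$ the code $\var$ is replaced by the fresh well-bound renaming $\renamenop\tm$ of the content $\tm$ of the environment entry for $\var$; renaming preserves free variables, so $\fv{\renamenop\tm}=\fv\tm$ is still witnessed by environment entries lying to the right of that entry (including the entry itself), and freshness restores well-boundedness. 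The delicate MAM case is $\tomachm$: the stack top $\tmtwo$ becomes the content of a new front environment entry $\esub\var\tmtwo$, where — reading back — $\tmtwo$ now sits as an ES content wrapped only by the rest of the environment, hence no longer under the partial answer. Here I would invoke an \emph{auxiliary fact}, to be carried along the induction as a strengthening: the active code, the stack terms, and the contents of environment entries never have a partial-answer name free. Granting it, all witnesses of $\fv\tmtwo$ are environment entries and stay to the right once $\tmtwo$ is moved to the front of the environment; the new binder $\var$ already occurred in $\statethree$, so well-boundedness persists.

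The genuinely new transition is $\tomachseatwo$, taking $\fourstate\pans{\val}{\emptylist}{\esub\var\tm\cons\env}$ to $\fourstate{\pans\cons(\val,\var)}{\tm}{\emptylist}{\env}$, and it is where the auxiliary fact is established. By the definition of \emph{terms of a state}, $\val$ ceases to be a term of the state — it is absorbed into a partial-answer entry — so the only new term needing a closure check is the code $\tm$; and $\tm$ was the content of the front environment entry of $\statethree$, so by the \ih its free variables are witnessed by the entries of $\env$, which are exactly those now to the right of the code. Reading back $\statethree$ shows that the entry $\esub\var\tm$ sits \emph{outside} the partial answer, so no partial-answer name is free in $\tm$; combined with the well-bound invariant, which prevents the front environment name $\var$ from being free in the tail $\env$, this gives the auxiliary fact for $\statetwo$, and since $\tomachseatwo$ is the only transition that ever creates a partial-answer entry it therefore holds along the whole run. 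Well-boundedness is immediate, as no binder is created or renamed and the new partial-answer entry simply inherits the position of $\var$ from the front of the environment.

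The step I expect to be the main obstacle is exactly this bookkeeping across the partial answer in the $\tomachm$ and $\tomachseatwo$ cases — making the auxiliary fact precise, choosing it as the right strengthening of the statement, and checking that it is preserved by every transition (in particular by $\tomache$, through the renaming, and by $\tomachseaone$, through the splitting of the code). Once that is in place, every remaining ``still to the right'' verification is a routine inspection of the four transition schemes.
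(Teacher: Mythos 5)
Your proof is correct and follows essentially the same route as the paper, which likewise proves both invariants by induction on the length of the run with a case analysis of the last transition (the paper leaves the inspection of the four transitions implicit). One small remark: the ``auxiliary fact'' you carry along (partial-answer names never occur free in the code, the stack terms, or the environment contents) is already a consequence of the well-bound invariant itself, since all terms of the state lie to the right of every partial-answer entry, so no genuine strengthening of the statement is needed.
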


\begin{proof}
For both points, the proof is by induction on the length of the run. The base case trivially holds, the inductive case is by analysis of the last transition, which is always a straightforward inspection of the transitions using the \ih
\end{proof}

\myparagraph{Structural Reductions.} As for the MAM, the simulation of transition $\msym$ requires the structural reduction $\stredapp$. The presence of the partial answer data structure $\pans$, however, forces us to introduce a further structural reduction $\stredans$. It is defined in \reffig{streds}.

\begin{figure}[t!]
\centering
\begin{tabular}{c}
\fbox{
 \begin{tabular}{c}
 \textsc{Root case of structural reduction $\stredapp$}
\\[4pt]

$\begin{array}{rll@{\hspace{1em}}l}
(\tm\esub\var\tmtwo)\,\tmthree & \stredapp & (\tm\,\tmthree)\esub\var\tmtwo               & \text{if $\var \not\in \fv{\tmthree}$}
\end{array}$
\\[5pt]\hline
\begin{tabular}{c}
\textsc{Closure rules}
\\[4pt]
\begin{tabular}{c\colspace c\colspace c\colspace c}
\RightLabel{$\symfont{w\text{-}ctx}$}
\AxiomC{$\tm \streq_{\asym} \tm'$}
\UnaryInfC{$\wctxp\tm \streq \wctxp{\tm'}$}
\DisplayProof
&
\AxiomC{}
\RightLabel{$\symfont{refl}$}
\UnaryInfC{$\tm \streq \tm$}
\DisplayProof
&
\AxiomC{$\tm \streq \tmtwo$}
\AxiomC{$\tmtwo \streq \tmthree$}
\RightLabel{$\symfont{tran}$}
\BinaryInfC{$\tm \streq \tmthree$}
\DisplayProof
\end{tabular}
\end{tabular}
\end{tabular}
}
\\\\
\fbox{
 \begin{tabular}{c}
 \textsc{Root case of structural reduction $\stredans$}
\\[4pt]

$\begin{array}{rll@{\hspace{1em}}l}
\ans\esub\var{\tm\esub\vartwo\tmtwo} & \stredans & \ans\esub\var\tm\esub\vartwo\tmtwo   & \text{if $\vartwo \not\in \fv{\ans}$} 
\end{array}$
\\[5pt]\hline
\begin{tabular}{c}
\textsc{Closure rules}
\\[4pt]
\begin{tabular}{c\colspace c\colspace c\colspace c}
\RightLabel{$\symfont{sub\text{-}ctx}$}
\AxiomC{$\tm \streq_{\asym} \tm'$}
\UnaryInfC{$\sctxp\tm \streq \sctxp{\tm'}$}
\DisplayProof
&
\AxiomC{}
\RightLabel{$\symfont{refl}$}
\UnaryInfC{$\tm \streq \tm$}
\DisplayProof
\end{tabular}
\end{tabular}
\end{tabular}
}
\end{tabular}
\caption{Structural reductions.}
\label{fig:streds}
\end{figure}

Note that $\stredans$ has a different contextual closure than $\stredapp$, as it is restricted to substitution contexts. This is to ensure its strong bisimulation property: answers are not stable by application and closing by applicative contexts---thus obtaining an extended reduction $\stredp{\symfont{ans}'}$---would break the strong bisimulation property, as the next diagram shows:
\begin{center}
\begin{tikzpicture}[ocenter]
		\node at (0,0)[align = center](source){\normalsize $(\la\var\var\vartwo)\esub\vartwo{\tm\esub\varthree\tmtwo} \tmthree$};
		\node at (source.center)[right = 100pt](source-right){\normalsize $(\var\vartwo)\esub\var\tmthree\esub\vartwo{\tm\esub\varthree\tmtwo} $};
		\node at (source-right.center)[below = 20pt](target){\normalsize $(\var\vartwo)\esub\var\tmthree\esub\vartwo{\tm}\esub\varthree\tmtwo $};
				
		\node at (source.center)[below = 20pt](source-down){\normalsize $(\la\var\var\vartwo)\esub\vartwo\tm\esub\varthree\tmtwo \tmthree$};
		
		\node at \med{source-down.center}{source.center}(fifthnode){\normalsize $\stredp{\symfont{ans}'}$};
		\node at \med{source-right.center}{target.center}(fifthnode){\normalsize $\not\stredp{\symfont{ans}'}$};

		\draw[->](source) to node[above] {\scriptsize $\wsym\msym$} (source-right);
		\draw[->, dotted](source-down) to node[above] {\scriptsize $\wsym\msym$} (target);
	\end{tikzpicture}
\end{center}
This is the reason why we define two separate structural reductions rather than one with two initial cases. 

The complex definition of the silly strategy of the next section is due to a similar interaction between applications and ESs in the specification of evaluation contexts. The slight advantage of the abstract machine with respect to the strategy is that these technicalities show up only when relating the machine to the calculus, but not on the definition of the machine itself; the definition of the strategy instead needs to take them into account.

Note also that $\stredans$ does not need a transitive closure: the structural reduction required for handling partial answers, indeed, has to commute out of at most one ES (while $\stredapp$ might have to commute with many arguments, so it needs the transitive closure).

Lastly, note that the definition of $\stredapp$ in \reffig{streds} is slightly different from the one in the previous section (despite using the same symbol): it is now closed by all weak contexts $\wctx$.

\begin{prop}
\label{prop:silly-strong-bisim}
Both $\stredapp$ and $\stredans$ are strong bisimulations with respect to both $\towm$ and $\towe$. Namely, let $\asym\in\set{@l,\symfont{ans}}$ and $\bsym\in\set{\wsym\msym,\wsym\esym}$, then:
\begin{enumerate}
\item If $\tm \stredp{\asym} \tmtwo$ and $\tmtwo \Rew{\bsym} \tmtwo'$ then there exists $\tm'$ such that $\tm \Rew{\bsym} \tm'$ and $\tm' \stredp{\asym} \tmtwo'$.
\item If $\tm \stredp{\asym} \tmtwo$ and $\tm \Rew{\bsym} \tm'$ then there exists $\tmtwo'$ such that $\tmtwo \Rew{\bsym} \tmtwo'$ and $\tm' \stredp{\asym} \tmtwo'$.
\end{enumerate}
\end{prop}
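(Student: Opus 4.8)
The plan is to prove the two points by induction on the derivation of the structural reduction $\tm\stredapp\tmtwo$ (resp. $\tm\stredans\tmtwo$), exactly as for \refprop{name-strong-bisim}. The closure rules of \reffig{streds} reduce everything to the \emph{root case}, where $\tm$ and $\tmtwo$ differ by one application of the root rule inside a weak context (for $\stredapp$) or a substitution context (for $\stredans$); the surrounding context is routine book-keeping, the only non-immediate sub-case being when the $\Rew\bsym$-redex straddles the boundary between the outer context and the structural pattern, which again reduces to a root-level overlap located inside an explicit substitution or an argument. In the root case I would do a case analysis on the position of the contracted $\Rew\bsym$-redex relative to the structural pattern: \emph{disjoint} positions, where the $\Rew\bsym$-step is entirely inside one of the subterms of the pattern and commutes trivially since those subterms are untouched by the structural rewrite; \emph{nested} positions, handled by the same kind of reasoning; and \emph{overlapping} positions at the root, which is where the work is.

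For $\stredapp$, whose root rule is $(\tm_1\esub\var{\tm_2})\,\tm_3 \stredapp (\tm_1\,\tm_3)\esub\var{\tm_2}$ with $\var\notin\fv{\tm_3}$, the overlapping cases are: a multiplicative step at the root, which requires $\tm_1$ to be an abstraction under a substitution context and closes with \emph{zero} structural steps (since prefixing a substitution context with $\esub\var{\tm_2}$ yields again a substitution context, the two reducts literally coincide); and an exponential step replacing an occurrence of $\var$, which must lie in $\tm_1$ because $\var\notin\fv{\tm_3}$, so $\tm_1=\wctxfp\var$ and both sides perform the exponential step, closing with one $\stredapp$ step. For $\stredans$, whose root rule is $\ans\esub\var{\tm_1\esub\vartwo{\tm_2}} \stredans \ans\esub\var{\tm_1}\esub\vartwo{\tm_2}$ with $\vartwo\notin\fv\ans$, the key simplification is that an answer $\ans$ has no $\towm$- or $\towe$-redex, which follows from \reflemma{ans-empty-ofv} and \ref{l:compact-answers-are-nfs-for-non-erasing-cbs-strategy}: hence no $\Rew\bsym$-step can occur inside $\ans$, which kills several would-be cases (and removes any worry that $\ans$ might cease to be an answer). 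There is no multiplicative root overlap because the topmost constructor on both sides is an explicit substitution; the only overlapping case is an exponential step replacing an occurrence of $\vartwo$, which must be in $\tm_1$ since $\vartwo\notin\fv\ans$, and is closed as for $\stredapp$. All side conditions ($\var\notin\fv{\tm_3}$, $\vartwo\notin\fv\ans$) are preserved because $\Rew\bsym$ never creates free occurrences; the two directions of the bisimulation are treated symmetrically.

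I expect the main obstacle to be purely the combinatorial bulk: $\stredapp$ and $\stredans$ must each be checked against $\towm$ and $\towe$, in both directions, and the at-a-distance shape of the multiplicative rule (through a substitution context) together with the context-parametric shape of the exponential rule force careful tracking of how substitution contexts and hole positions get re-associated by the structural rewrites. Conceptually nothing deep happens — the structural reductions are \emph{linear} permutations of explicit substitutions, so they never clash with the duplication performed by the exponential rule, and every overlap closes with at most one structural step on the far side. As for \refprop{name-strong-bisim}, the full case analysis is long and tedious and is best carried out with proof-assistant support; the details are in \cite{accattoli2024mirroring}.
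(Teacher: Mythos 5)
Your proposal is correct and follows essentially the same route as the paper, which likewise argues by induction on the derivation of the structural reduction with a case analysis of the $\towm$/$\towe$ step, and defers the (long but unsurprising) case bashing to a mechanized proof (in Abella) rather than spelling it out. Your explicit treatment of the critical overlaps---the at-a-distance multiplicative case for $\stredapp$ closing with zero structural steps (fine, since $\stredapp$ is reflexive), the exponential overlap closing with one step, and the use of the fact that answers are $\townotgcv$-normal to kill overlaps inside $\ans$ for $\stredans$---matches what the formal development has to check.
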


\begin{proof}
Similar to the \cbn case (\refprop{name-strong-bisim}), both points are unsurprising inductions on $\tm \stredapp \tmtwo$ and case analyses of the reduction step. We formalized them in the Abella proof assistant, the sources can be found on GitHub \cite{AbellaSources}.\qedhere
\end{proof}

The concatenation $\stredapp\stredans$ of the two structural reductions is also a strong bisimulation, simply because strong bisimulations are stable by concatenation. 

\textbf{Notation}: we use $\tm \stred \tmtwo$ when $\tm\stredapp\tmthree\stredans \tmtwo$ for some $\tmthree$.

\myparagraph{Key Properties.} The invariants allow us to prove some key properties of the Silly MAM with respect to the SSC. These properties are the bricks of the distillation technique: the bisimulation of the machine and the calculus shall only use these properties, and never directly refer to the invariants. The closure invariant is used for proving the halt property, the well-bound one for proving principal projection.
\begin{prop}[Key properties]
\label{prop:SiMAM-properties}
Let $\state$ be a reachable Silly MAM state.
\begin{enumerate}
\item \label{p:SiMAM-properties-projection}\emph{Principal projection}: 
\begin{enumerate}
\item If $\state \tomachm \statetwo$ then $\decode\state \towm\stred \decode\statetwo$;
\item If $\state \tomache \statetwo$ then $\decode\state \towe \decode\statetwo$.
\end{enumerate}

\item \label{p:SiMAM-properties-overhead-transparency}\emph{Search transparency}: if $\state \tomachhole{\seasym_1,\seasym_2} \statetwo$ then $\decode\state = \decode\statetwo$.

\item \label{p:SiMAM-properties-overhead-terminates}\emph{Search terminates}: $\tomachhole{\seasym_1,\seasym_2}$ is terminating.

\item \label{p:SiMAM-properties-halt}\emph{Halt}: if $\state$ is final then it has shape $\fourstate\pans\val\emptylist\emptylist$ and $\decode\state$ is an answer.

\end{enumerate}
\end{prop}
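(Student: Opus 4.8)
The plan is to establish the four points separately, in the order 2, 3, 4, 1. Points~2 and~3 follow by a direct inspection of the read-back and by a termination measure; Point~4 rests on the closure invariant of \reflemma{skeletal-mam-qual-invariants}; and Point~1 is the technical core, for which I would use the two structural reductions $\stredapp$ and $\stredans$ together with the well-bound invariant of \reflemma{skeletal-mam-qual-invariants}.

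\emph{Search transparency and termination (Points 2--3).} For Point~2 I would unfold the read-back of both sides of each search transition. For $\tomachseaone$, using $\decode{\tmtwo\cons\stack}=\decodep\stack{\ctxhole\tmtwo}$, both states read back to $\decodep\env{\decodep\pans{\decodep\stack{\tm\tmtwo}}}$; for $\tomachseatwo$, using $\decode{\esub\var\tm\cons\env}=\decodep\env{\ctxhole\esub\var\tm}$ and $\decode{\pans\cons(\val,\var)}=\decodep\pans{\val}\esub\var\ctxhole$, both states read back to $\decodep\env{(\decodep\pans\val)\esub\var\tm}$. For Point~3, observe that $\tomachseaone$ leaves the environment unchanged and strictly shrinks the active code, whereas $\tomachseatwo$ strictly decreases the number of entries in the environment --- and it is the only search transition that touches the environment, while only the principal transition $\tomachm$ ever enlarges it. Hence the lexicographic measure given by the number of entries in the environment and then by the size of the active code strictly decreases along every $\tomachhole{\seasym_1,\seasym_2}$ step, so $\tomachhole{\seasym_1,\seasym_2}$ is terminating.

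\emph{Halt (Point 4).} Let $\state$ be a final reachable state with active code $\tm$. If $\tm$ is an application then $\tomachseaone$ applies, absurd. If $\tm=\var$ then $\var\in\fv\tm$, so by the closure invariant there is an environment entry $\esub\var\tmtwo$ to the right of $\tm$ --- it cannot instead be a partial-answer entry $(\valtwo,\var)$, since by the well-bound invariant a partial-answer name occurs only to the left of its own entry --- hence $\tomache$ applies, absurd. Therefore $\tm$ is a value $\val$; if the stack is non-empty $\tomachm$ applies, and if the stack is empty but the environment is non-empty $\tomachseatwo$ applies; so $\state=\fourstate\pans\val\emptylist\emptylist$. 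Finally $\decode\state=\decodep\pans\val$, which by the shape of the partial-answer read-back equals $(\cdots(\val_1\esub{\var_1}{\val_2})\cdots\esub{\var_{n-1}}{\val_n})\esub{\var_n}\val$ when $\pans=\emptylist\cons(\val_1,\var_1)\cons\cdots\cons(\val_n,\var_n)$, a term generated by the grammar of answers; hence $\decode\state$ is an answer.

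\emph{Principal projection (Point 1) --- the main obstacle.} For a $\tomache$ step, where the active code is a variable $\var$ and the environment contains the entry $\esub\var\tmtwo$ that triggers it, I would split the environment around that entry and compute $\decode\state$; the occurrence of $\var$ coming from the active code sits inside $\decodep\stack\ctxhole$, hence inside a weak context that does not capture $\var$ by the well-bound invariant, and contracting the corresponding $\towe$-redex yields $\decode\statetwo$ (the fresh renaming in the transition being absorbed by $\alpha$-equivalence), so $\decode\state\towe\decode\statetwo$. The delicate case is a $\tomachm$ step with source $\fourstate\pans{\la\var\tm}{\tmtwo\cons\stack}\env$ and target $\fourstate\pans\tm\stack{\esub\var\tmtwo\cons\env}$. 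Here $\decode\state=\decodep\env{\decodep\pans{\decodep\stack{(\la\var\tm)\tmtwo}}}$ with $\decodep\env{\decodep\pans{\decodep\stack\ctxhole}}$ a weak context, so $\decode\state\towm\decodep\env{\decodep\pans{\decodep\stack{\tm\esub\var\tmtwo}}}$, whereas $\decode\statetwo=\decodep\env{(\decodep\pans{\decodep\stack\tm})\esub\var\tmtwo}$. The remaining gap is closed by pushing the new explicit substitution $\esub\var\tmtwo$ outward: first past the applications recorded in $\decodep\stack\ctxhole$, by iterating $\stredapp$ --- legal since $\stredapp$ is closed under all weak contexts, in particular under $\decodep\env{\decodep\pans\ctxhole}$, and since the fresh binder $\var$ does not occur free in any stack term by the well-bound invariant; then, if $\pans$ is non-empty, past the single explicit substitution of the topmost partial-answer entry, by one $\stredans$ step --- legal since the read-back of a partial answer has the form $\ans\esub\var\ctxhole$ with $\ans$ an answer, $\var$ does not occur free in $\ans$ by the well-bound invariant, and $\stredans$ is closed under substitution contexts, in particular under $\decodep\env\ctxhole$. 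This gives $\decode\state\towm\stredapp\stredans\decode\statetwo$, that is $\decode\state\towm\stred\decode\statetwo$. I expect the only genuine obstacle to be exactly this bookkeeping --- which structural reduction pushes past which data structure, and how the closure rules of $\stredapp$ and $\stredans$ together with the freshness side-conditions line up --- which is precisely the technical reason the two structural reductions were defined separately rather than as one reduction with two root cases.
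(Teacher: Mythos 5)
Your proposal is correct and follows essentially the same route as the paper's proof: the same read-back computations for search transparency and for both principal transitions, the same use of the well-bound invariant to justify the $\towm$ step followed by iterated $\stredapp$ through the stack and a single $\stredans$ through the topmost partial-answer entry, and the same closure-invariant argument for the halt property. The only (harmless) deviation is search termination, where you use the lexicographic measure given by environment length and then code size instead of the paper's counting argument; both are valid.
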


\begin{proof}
\hfill
\begin{enumerate}
\item 
\begin{enumerate}
\item If $\state = \fourstate \pans {\la\var\tm} {\tmtwo \cons\stack}  \env \tomachm \fourstate \pans \tm \stack {\esub\var\tmtwo\cons\env} = \statetwo$ then:
\[\begin{array}{rllll}
 \decode{\fourstate \pans {\la\var\tm} {\tmtwo \cons\stack}  \env}
 & = &
 \decodep\env{\decodep\pans{\decodep{\tmtwo \cons\stack}{\la\var\tm}}}
\\ & = &
 \decodep\env{\decodep\pans{\decodep\stack{(\la\var\tm)\tmtwo}}}
\\ & \towm &
 \decodep\env{\decodep\pans{\decodep\stack{\tm\esub\var\tmtwo}}}
\\ (\reflemmaeq{skeletal-mam-qual-invariants}.2) & \stredapp &
 \decodep{\env}{\decodep\pans{\decodep\stack{\tm}\esub\var\tmtwo}}
\\ (\reflemmaeq{skeletal-mam-qual-invariants}.2) & \stredans &
 \decodep{\env}{\decodep\pans{\decodep\stack{\tm}}\esub\var\tmtwo}
\\ & = &
 \decodep{\esub\var\tmtwo\cons\env}{\decodep\pans{\decodep\stack{\tm}}}
 & = &		
	\decode{\fourstate \pans \tm \stack {\esub\var\tmtwo\cons\env}}.
\end{array}\]
Note that the $\towm$ steps applies because $ \decodep\env{\decodep\pans{\decode\stack}}$ is a weak context. The structural reduction steps apply because the well-bound invariant (\reflemma{skeletal-mam-qual-invariants}.2) ensures that $\var$ does not occur in $\decode\stack$ and $\decode\pans$.

\item If $\state = \fourstate \pans \var \stack {\env\cons\esub\var\tm\cons\envtwo}
	  	\tomache
		\fourstate \pans {\renamenop\tm} \stack {\env\cons\esub\var\tm\cons\envtwo}
 = \statetwo$ then:
\[\begin{array}{rllll}
 \decode{\fourstate \pans \var \stack {\env\cons\esub\var\tm\cons\envtwo}}
 & = &
 \decodep{\env\cons\esub\var\tm\cons\envtwo}{\decodep\pans{\decodep\stack{\var}}}
\\ & = &
 \decodep{\envtwo}{\decodep\env{\decodep\pans{\decodep\stack{\var}}}\esub\var\tm}
\\ &\towe &
 \decodep{\envtwo}{\decodep\env{\decodep\pans{\decodep\stack{\renamenop\tm}}}\esub\var\tm}
\\ & = &
 \decodep{\env\cons\esub\var\tm\cons\envtwo}{\decodep\pans{\decodep\stack{\renamenop\tm}}}
\\ & = &		
	\decode{\fourstate \pans {\renamenop\tm} \stack {\env\cons\esub\var\tm\cons\envtwo}}.
\end{array}\]
The $\towe$ steps applies because:
\begin{enumerate}
\item As in the previous case, both $ \decode{\envtwo}$ and $ \decodep{\envtwo}{\decodep\env{\decodep\pans{\decode\stack}}\esub\var\tm}$ are weak contexts;
\item The well-bound invariant (\reflemma{skeletal-mam-qual-invariants}.2) ensures that there is only one environment entry on $\var$ in $\state$, so that there are no entries bounding $\var$ in $\env$ (if there were another entry $\esub\var\tmtwo$ in $\env$ then the occurrence of $\var$ in $\esub\var\tm$ would violate the invariant).
\end{enumerate}
\end{enumerate}

\item Cases of the search transition:
\begin{enumerate}
\item If $\state = \fourstate \pans {\tm\tmtwo} \stack \env
	  	\tomachseaone
	  	\fourstate \pans \tm {\tmtwo\cons\stack} \env = \statetwo$ then:
\[\begin{array}{lllllllll}
 \decode{\fourstate \pans {\tm\tmtwo} \stack \env}
 & = &
 \decodep{\env}{\decodep\pans{\decodep\stack{\tm\tmtwo}}}
& = &
 \decodep{\env}{\decodep\pans{\decodep{\tmtwo\cons\stack}\tm}}
 & = &		
	\decode{\fourstate \pans \tm {\tmtwo\cons\stack} \env}.
\end{array}\]

\item If $\state = \fourstate \pans \val \emptylist {\esub\var\tm\cons\env}
	  	\tomachseatwo 
	  	\fourstate {\pans\cons(\val,\var)} \tm \emptylist \env  = \statetwo$ then:
\[\begin{array}{lllllllll}
 \decode{\fourstate \pans \val \emptylist {\esub\var\tm\cons\env}}
 & = &
 \decodep{\esub\var\tm\cons\env}{\decodep\pans{\val}}
\\ & = &
 \decodep\env{\decodep\pans{\val}\esub\var\tm}
\\ & = &
 \decodep\env{\decodep{\pans\cons(\val,\var)}\tm}
 & = &		
	\decode{\fourstate {\pans\cons(\val,\var)} \tm \emptylist \env}.
\end{array}\]
\end{enumerate}

\item Note that:
\begin{enumerate}
\item $\sizep\run{\seasym_2} \leq \sizep\run\msym$, because every $\seasym_2$ transition consumes one  entry from the environment, which are created only by $\msym$ transitions.
\item $\seasym_1$ transitions decrease the size of the code;
\item The size of the code is increased by $\seasym_2$ transitions, but after a $\seasym_1$ transition there cannot be any $\seasym_2$ transition because the stack becomes non-empty.
\item Therefore, the maximum number of consecutive $\seasym_1$ transitions is bound by the size of the code.
\end{enumerate}

\item Let us first show that final states $\state_f$ have shape $\fourstate\pans\val\emptylist\emptylist$. We have the following points:
\begin{itemize}
\item \emph{The code of a final state is a value}: it cannot be an application, because transition $\seasym_1$ would apply, and cannot be a variable $\var$, because by the closure invariant (\reflemma{skeletal-mam-qual-invariants}.1) there has to be an environment entry on $\var$, and so  transition $\esym$ would apply;
\item \emph{The stack is empty}: because, given that the code is an abstraction, if the stack is non-empty then a $\msym$ transition would apply;
\item \emph{The environment is empty}: because, given that the code is a value and the stack is empty, then transition $\seasym_2$ would apply.
\end{itemize}
A straightfoward induction on partial answers shows that final states read back to answers.
\qedhere
\end{enumerate}
\end{proof}

The next theorem states that the Silly MAM implements the SSC. It is slightly weaker than other implementation theorems in two respects. First, because the simulation of the calculus by the machine is formulated in a \emph{big-step} manner (that is, looking at evaluations to normal form), rather than step-by-step (or small-step). This is inevitable given the fact that we are relating the machine to the whole calculus and not to a specific strategy. Second, the structural reduction part requires a transitive closure, as explained by the following remark.

\begin{rem}
The structural reduction $\stredapp$ that we used for the MAM is closed transitively, so that $\stredapp\stredapp\subset \stredapp$, that is, we can merge many structural reduction steps into a single one. This is used in the proof of the MAM implementation theorem to have a single structural reduction step at the end of the reduction sequence at the calculus level. 

Note that, instead, it is not possible to merge $\stred$ steps (which are defined as the concatenation $\stredapp\stredans$), for two reasons. First, $\stredans$ is not closed transitively. Second, even if we would close $\stredans$ transitively, $\stredapp$ and $\stredans$ have different contextual closures, and this fact forbids (the transitive closure of) $\stredans$ to postpone after $\stredapp$, which is required to merge two $\stred$ steps.
\end{rem}

\begin{thm}[Silly MAM implementation]
Let $\tm$ be a closed term.
  \begin{enumerate}
   \item \label{p:exec-to-deriv} \emph{Runs to small-step evaluations}: for any Silly MAM run $\run: \compilrel\tm\state \tomach^* \statetwo$ there exists a 
$\tow$-evaluation $\deriv: \tm \townotgcv^* \stred^*\decode\statetwo$;

\item \label{p:deriv-to-exec} \emph{Big-step evaluations to runs}: for every $\tostrat$-evaluation $\deriv: \tm \townotgcv^* \ans$ to an answer $\ans$ there exists a 
Silly MAM run $\run: \compilrel\tm\state \tomach^* \statetwo$ such that $\ans\stred^* \decode\statetwo$;
  \end{enumerate}
\end{thm}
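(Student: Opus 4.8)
The plan is to follow the distillation technique already used for the MAM (\refthm{mam-implem}): both directions rest only on the Key Properties of the Silly MAM (\refprop{SiMAM-properties})---principal projection, search transparency, search termination, and halt---together with the fact that $\stredapp$, $\stredans$, and hence their concatenation $\stred$, are strong bisimulations with respect to $\towm$ and $\towe$ (\refprop{silly-strong-bisim}). The qualitative invariants of \reflemma{skeletal-mam-qual-invariants} are not used directly here, having already been consumed in the proof of \refprop{SiMAM-properties}.

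For Point~1 (runs to small-step evaluations), I would prove, by induction on the length of the run, that for $\run\colon\compilrel\tm\state\tomach^*\statetwo$ there is $\tmthree$ with $\tm\townotgcv^*\tmthree$ and $\tmthree\stred\decode\statetwo$, accumulating all structural reductions at the tail of the $\townotgcv$-sequence. The base case is the initialization constraint $\decode\state=_\alpha\tm$ plus reflexivity of $\stred$. For the inductive step, write $\run$ as a shorter run $\compilrel\tm\state\tomach^*\state'$ followed by a last transition $\state'\tomach\statetwo$, and use the \ih $\tm\townotgcv^*\tmthree\stred\decode{\state'}$: if the last transition is a search transition ($\seasym_1$ or $\seasym_2$), search transparency gives $\decode{\state'}=\decode\statetwo$ and nothing changes; if it is $\esym$, principal projection gives $\decode{\state'}\towe\decode\statetwo$, and the strong bisimulation property turns $\tmthree\stred\decode{\state'}\towe\decode\statetwo$ into $\tmthree\towe\tmfour\stred\decode\statetwo$, so prepending this $\towe$ step to the \ih-sequence closes the case; the $\msym$ case is identical, using $\decode{\state'}\towm\stred\decode\statetwo$. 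The one bookkeeping point is that iterating the realignment leaves a tail made of several structural-reduction phases, which must be collapsed to a single $\stred$ (using transitivity of $\stredapp$ and the fact that $\stredans$ commutes out of at most one ES); this is routine and detailed in \cite{accattoli2024mirroring}.

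For Point~2 (big-step evaluations to runs), let $\deriv\colon\tm\townotgcv^*\ans$ with $\ans$ an answer. First I would argue that the Silly MAM run on $\tm$ terminates. Since $\townotgcv\subseteq\tow$ and $\towgcv$ is terminating and preserves answers, prolonging $\deriv$ with $\towgcv$ steps reaches a $\tow$-normal form, so $\tm\in\wn\wsym$; by completeness (\refthm{open-completeness}) $\tm$ is then typable, and by correctness (\refthm{weak-correctness}) $\tm\in\sn\wsym$, whence $\tm\in\sn{\townotgcv}$. If the run were infinite, then---the search transitions being terminating---it would contain infinitely many principal transitions, which by principal projection and the strong bisimulation property (realigning as in Point~1) would yield an infinite $\townotgcv$-sequence from $\tm$, contradicting $\tm\in\sn{\townotgcv}$. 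Hence the run reaches a final state $\statetwo$, which by the halt property has the shape $\fourstate\pans\val\emptylist\emptylist$ with $\decode\statetwo$ an answer, in particular $\townotgcv$-normal. Point~1 gives $\tm\townotgcv^*\tmthree\stred\decode\statetwo$; since $\stred$ is a strong bisimulation and $\decode\statetwo$ is $\townotgcv$-normal, $\tmthree$ is $\townotgcv$-normal too. Finally, confluence of $\townotgcv$ (\refthm{confluence}) applied to the coinitial reductions $\tm\townotgcv^*\ans$ and $\tm\townotgcv^*\tmthree$ to $\townotgcv$-normal forms yields $\ans=\tmthree$, so $\ans\stred\decode\statetwo$.

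I expect the main obstacle not to be a single deep step but the systematic handling of the two structural reductions throughout: postponing them past the machine-simulated $\towm$/$\towe$ steps and composing several of them into one $\stred$. The secondary subtlety is that, since the machine is matched against the whole SSC rather than a fixed strategy, Point~2 is necessarily big-step and must be closed indirectly---through strong normalization of $\townotgcv$ (obtained via the type system and uniform normalization) and confluence---rather than by a step-by-step inverse simulation.
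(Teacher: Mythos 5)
Your proposal is correct and follows essentially the same route as the paper's proof: Point~1 by induction on the run, combining principal projection, search transparency and the strong-bisimulation postponement of the structural reductions, and Point~2 by showing the maximal run terminates (via typability/strong normalization, i.e.\ uniform normalization), then invoking the halt property, Point~1, and confluence of $\townotgcv$. Your explicit remarks on closing the hypothesis evaluation under $\towgcv$ and on the $\townotgcv$-normality of the term related by $\stred$ to the final state's read-back make precise two points the paper glosses, but the argument is the same.
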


\begin{proof}
\hfill
\begin{enumerate}
\item By induction on $\sizepr\run \in \nat$. Cases:
  \begin{itemize}
  \item $\sizepr\run = 0$. Then $\run \colon \compilrel\tm\state \tomachsea^* \statetwo$ and hence $\decode{\state} = \decode\statetwo$ by search transparency (\refpropp{SiMAM-properties}{overhead-transparency}).
  Moreover, $\tm = \decode{\state}$ since decoding is the inverse of initialization, therefore the statement holds with respect to the empty evaluation $\deriv$ with starting and end term $\tm$.
  
\item $\sizepr\run > 0$. Then, $\run \colon \compilrel\tm\state \tomach^* \statetwo$ is the concatenation of a run $\runtwo \colon \compilrel\tm\state \tomach^* \statethree$ followed by a run $\runthree \colon \statethree \tomachhole{\asym} \statefour \tomachsea^* \statetwo$ with $\asym\in\set{\msym,\esym}$.
  By \ih applied to $\runtwo$, there exists an evaluation $\derivtwo \colon \tm \townotgcv^* \stred^*\decode\statetwo$.
  By principal projection (\refpropp{SiMAM-properties}{projection}) and search transparency (\refpropp{SiMAM-properties}{overhead-transparency}) applied to $\runthree$, one obtains a one-step evaluation $\derivthree \colon \decode\statetwo \Rew{\asym} \stred\decode\statethree = \decode\state$. 
Concatenating $\derivtwo$ and $\derivthree$, we obtain an evaluation $\derivfour \colon \tm  \townotgcv^* \stred^*\decode\statetwo\townotgcv \stred\decode\state$.
Postponing structural reduction (\refprop{silly-strong-bisim}), we obtain the required evaluation $\deriv \colon \tm  \townotgcv^* \townotgcv\stred^*\stred\decode\state$.
 \end{itemize}
 
\item Cases of $\size\deriv\in \nat$:
  \begin{itemize}
  \item $\size\deriv = 0$. Then $\tm = \ans$.
  Since decoding is the inverse of initialization, one has $\decode\state = \ans$.
  Then the statement holds with respect to the empty (\ie without transitions) run $\run$ with initial (and final) state $\state$.
  
 \item  $\size\deriv > 0$. Then, $\deriv\colon \tm \townotgcv^* \ans$ starts with a step $\derivtwo \colon \tm \townotgcv \tmtwo$  for some $\tmtwo$. Let $\compilrel\tm\state$. By the initialization constraint, $\decode\state=\tm$. Consider the normal form $\nfo{\state}$ of $\state$ with respect to $\tomachsea$, that exists by search termination (\refpropp{SiMAM-properties}{overhead-terminates}), and that verifies $\decode{\nfo{\state}}=\decode\state = \tm$ by search transparency (\refpropp{SiMAM-properties}{overhead-transparency}). Note that:
 \begin{itemize}
 \item Then run on Silly MAM is non-empty: since $\tm$ can reduce, by the halt property (\refpropp{SiMAM-properties}{halt}), the Silly MAM can do either a $\msym$ or a $\esym$ transition on $\nfo{\state}$. 
 \item The maximal run $\run$ of the Silly MAM starting on $\nfo{\state}$ cannot be divergent: by search termination (\refpropp{SiMAM-properties}{overhead-terminates}) a divergent run has an infinite number of $\msym/\esym$ transitions, which by principal projection (\refpropp{SiMAM-properties}{projection})  give a diverging evaluation $\decode\run$ from $\tm$, against the fact that by hypothesis there is a normalizing evaluation from $\tm$ and that thus, by uniform normalization (\refcoro{unif-norm}), $\tm$ is strongly normalizing.
 \end{itemize}
 Then consider the maximal run $\run:\nfo{\state} \tomach^* \statetwo$ with $\statetwo$ final. By point 1 and the halt property, we obtain a reduction $\decode{\nfo{\state}}=\tm \townotgcv^* \anstwo \stred^* \decode\statetwo$. By confluence of $\townotgcv$ (\refthm{confluence}), $\ans = \anstwo$. \qedhere
  \end{itemize}
  \end{enumerate}
\end{proof}
\section{The Call-by-Silly Strategy}
\label{sect:cbs-strategy}
In this section, we define the \cbs evaluation strategy as a sort of extension of the \cbn one. The aim is to show that the evaluation strategy underlying the Silly MAM can be expressed inside the SSC, that is, that the machine is only a handy device. 

The definition of the strategy is somewhat surprisingly more technical than the definition of the Silly MAM. This is due to the fact that the machine keeps the ESs all grouped together in the environment, that helps with the specification of what to evaluate next. On terms, instead, ESs are spread around, and this induces a tricky management of evaluation contexts.

\begin{figure}[t!]
	\centering
				\arraycolsep=3pt
\fbox{				$\begin{array}{cc|cc}
					\begin{array}{r@{\hspace{.25cm}}r@{\hspace{.1cm}}l@{\hspace{.1cm}}ll}
					\textsc{Name ctxs} &     \nctx,\nctxtwo & \grameq & \ctxhole \mid \nctx\tm \mid \nctx\esub\var\tm
					\\
						\textsc{Aux. ctxs} & \actx,\actxtwo & \grameq & \ctxhole \mid \ans\esub\var\actx \mid \actx\esub\var\tm
						\\[3pt]
						\textsc{Silly ctxs} & \sictx,\sictxtwo & \grameq & \actxp\nctx
					\end{array}
				& &&
				\begin{array}{rll}
					\tosim & \defeq & \sictxp\rtom					
					\\
					\tosigcv & \defeq & \sictxp{\rtogcv}
					\\
					\tosieaa & \defeq & \actxp{\rtoep\sictx}
					\\
					\tosieyn & \defeq & \sictxp{\rtoep\nctx}
					\\
                    \tosi & \defeq & \tosim \cup \tosieaa \\ && \cup \tosieyn \cup \tosigcv
				\end{array}
				\end{array}$
} 
				\caption{The call-by-silly strategy $\tosi$.}
				\label{fig:cbs-strategy}	
		\end{figure}

%

\myparagraph{The Call-by-Silly Strategy.} The \cbs strategy $\tosi$ is defined in \reffig{cbs-strategy} via silly evaluation contexts $\sictx$---explained next---(we use $\sictx$ for \emph{sillY} because $\sctx$ is already used for substitution contexts) and the \cbv erasing rule $\rtogcv$ of the SSC. As it is the case for the Silly MAM, the \cbs strategy is an extension of the \cbn strategy. Such an extension is specified via silly evaluation contexts $\sictx$, that are defined by extending \cbn contexts $\nctx$ via \emph{auxiliary contexts} $\actx$, whose key production is $\ans\esub\var\actx$ for evaluation contexts where $\ans$ is an answer, that is, a \cbs normal form with no shallow occurrences of $\var$. There are, however, a few subtleties.

\begin{enumerate}
\item \emph{Answers, distance, and applications}: in order to be sure that no occurrences of $\var$ shall become shallow because of other steps, in $\ans\esub\var\sictx$ the sub-term $\ans$ is a normal form, otherwise in a case such as $((\la\vartwo\var)\Id)\esub\var\tmtwo$ the reduction of the multiplicative step to $\var\esub\vartwo\Id\esub\var\tmtwo$ turns a blocked occurrence of $\var$ into a shallow occurrence. Because of rules at a distance, however, the term $(\la\vartwo\var)\esub\var\tmtwo \Id$ suffers of the same problem but the argument comes \emph{on the right} of the ES. For this reason, we forbid silly evaluation contexts to be applied once they enter an ES, that is, silly contexts $\sictx$ are defined as $\actxp\nctx$, that is, rigidly separating the construction that applies (at work in $\nctx$, the same contexts of the \cbn strategy) from the one that enters ES (at work in $\actx$).

\item \emph{Two exponential rules}: such a rigid separation in the definition of silly contexts forces us to have \emph{two} exponential rules. Exponential rules use contexts twice in their definition: to select the occurrence to replace in the root rule \emph{and} to extend the applicability of the root rule. The subtlety is that if both these contexts are silly, then one can nest an $\actx$ context (selecting an occurrence) under a $\nctx$ context (extending the rule), which, as explained, has to be avoided. Therefore there are two silly exponential rules $\tosieaa$ and $\tosieyn$ carefully designed as to avoid the dangerous combination (that would be given by $\nctxp{\rtoep\actx}$).

\item \emph{Non-erasing normal forms and postponement of GC by value}: in order to mimic the property of \cbn that GC is postponable, we shall ask that in $\tm\esub\var\sictx$ the sub-term $\tm$ is a normal form only for the \emph{non-erasing} \cbs, otherwise in a case such as $\var\esub\vartwo\Id\esub\var\tmtwo$ the \cbs strategy would be forced to erase $\esub\vartwo\Id$ before evaluating $\tmtwo$. Normal forms for non-erasing \cbs are the answers of the previous section, that are shown  below to be normal forms for the non-erasing \cbs strategy (\reflemma{compact-answers-are-nfs-for-cbs-strategy}). We shall also prove that $\tosigcv$ steps are postponable (\refprop{gc-postponement}).
\end{enumerate}

The \cbs strategy is supposed to be applied to closed terms only, while there is no closure hypothesis on the weak calculus.

\begin{exa}
\label{ex:evaluation}
A good example to observe the differences between \cbn and \cbs is given by the term $\tm \defeq (\la\vartwo\la\var\var(\la\varfour\var))  \Omega (\Id\Id)$ where $\Id \defeq \la \varthree \varthree$, $\Omega \defeq \delta \delta$, and $\delta \defeq \la\varthree\varthree\varthree$. In \cbn, it evaluates with 4 multiplicative steps and 3 exponential steps, as follows:
\begin{center}
	$\begin{array}{llll}
  \tm 
  & \tonm  (\la\var\var(\la\varfour\var)) \esub\vartwo\Omega (\Id\Id) &
  & \tonm  (\var(\la\varfour\var)) \esub\var{\Id\Id}\esub\vartwo\Omega \\
  & \tone  ((\Id\Id)(\la\varfour\var)) \esub\var{\Id\Id}\esub\vartwo\Omega & 
  & \tonm  (\varthree\esub\varthree \Id (\la\varfour\var)) \esub\var{\Id\Id}\esub\vartwo\Omega \\
  & \tone  (\Id\esub\varthree \Id (\la\varfour\var)) \esub\var{\Id\Id}\esub\vartwo\Omega &
  & \tonm  \varthree' \esub{\varthree'}{\la\varfour\var} \esub\varthree \Id  \esub\var{\Id\Id}\esub\vartwo\Omega  \\
  & \tone  (\la\varfour\var) \esub{\varthree'}{\la\varfour\var}  \esub\varthree \Id  \esub\var{\Id\Id}\esub\vartwo\Omega  & &=: \tmtwo
  \end{array}$
\end{center}
The obtained term $\tmtwo$ is normal for the multiplicative and exponential rules. Since in \cbn one can erase every term, $\tmtwo$ then reduces (in \cbn) as follows:
\begin{center}
	$\begin{array}{lllllllll}
  \tmtwo &  \tongc & (\la\varfour\var)   \esub\varthree \Id  \esub\var{\Id\Id}\esub\vartwo\Omega  &\tongc& 
  (\la\varfour\var)    \esub\var{\Id\Id}\esub\vartwo\Omega & \tongc & (\la\varfour\var) \esub\var{\Id\Id}
  \end{array}$
\end{center}
The obtained term is now normal for the \cbn strategy, since evaluation does not enter into ESs.
	Note that $\tm$ diverges for \cbv evaluation (defined in \refsect{cbv}). In \cbs, the first part of the evaluation of $\tm$, up to $\tmtwo$, is the same as for \cbn (where $\tone$ steps become $\tosieyn$ steps). Then, $\tmtwo$ evaluates differently than in \cbn, diverging:
\begin{center}{
	$\begin{array}{lllllllll}
  \tmtwo &  \tosigcv & (\la\varfour\var)   \esub\varthree \Id  \esub\var{\Id\Id}\esub\vartwo\Omega  &\tosigcv& 
  (\la\varfour\var)    \esub\var{\Id\Id}\esub\vartwo\Omega 
  \\
  & \tosim & (\la\varfour\var) \esub\var{\varthree\esub\varthree\Id}\esub\vartwo\Omega
  &\tosieaa & (\la\varfour\var) \esub\var{\Id\esub\varthree\Id}\esub\vartwo\Omega   
  \\
  &\tosigcv & (\la\varfour\var) \esub\var{\Id}\esub\vartwo\Omega    &\tosim & (\la\varfour\var) \esub\var{\Id}\esub\vartwo{\varthree\varthree\esub\varthree\delta} &...
  \end{array}$
}
\end{center}
\end{exa}

\myparagraph{Normal Forms for the \cbs Strategy} We characterize normal forms for both the \cbs strategy and its non-erasing variant $\tosinotgcv \defeq \tosim \cup \tosieaa \cup \tosieyn$.

\begin{lem}[Normal forms for the non-erasing \cbs strategy]
\label{l:compact-answers-are-nfs-for-cbs-strategy}
Let $\tm$  be such that $\ofv\tm=\emptyset$. Then $\tm$ is $\tosinotgcv$-normal if and only if it is an answer.
\end{lem}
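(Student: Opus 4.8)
The plan is to establish the two implications separately. The direction $\Leftarrow$ is immediate: each root rule of the \cbs strategy is an instance of the corresponding root rule of the weak SSC and each of the strategy contexts $\nctx,\actx,\sictx$ is a weak context, so $\tosinotgcv\subseteq\townotgcv$; since $\ofv\ans=\emptyset$ for every answer $\ans$ by \reflemma{ans-empty-ofv}, \reflemma{compact-answers-are-nfs-for-non-erasing-cbs-strategy} gives that answers are $\townotgcv$-normal, hence also $\tosinotgcv$-normal. For the direction $\Rightarrow$ I would argue by contraposition, showing that if $\ofv\tm=\emptyset$ and $\tm$ is not an answer then $\tm$ has a $\tosinotgcv$-redex, by induction on $\tm$. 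Two auxiliary facts drive the argument. First, an easy induction: every answer has the form $\sctxp\val$. Second, a \emph{spine decomposition} property of arbitrary terms $\tmtwo$: peeling the applicative/ES spine of $\tmtwo$ down to its head --- necessarily a variable or an abstraction --- shows that either (i) $\tmtwo=\sctxp\val$, or (ii) $\tmtwo$ exhibits a $\rtom$-redex or an $\rtoep\nctx$-redex at a name-context position, or (iii) $\tmtwo=\nctxp\var$ with the displayed occurrence of $\var$ free in $\tmtwo$. The role of (ii)--(iii) is a \emph{lifting} observation: a silly context stays silly when extended on the outside by an argument (absorbed by the $\nctx$-part, via the production $\nctx\tm$), by an ES on the left (absorbed by the $\actx$-part, via $\actx\esub\var\tm$), or by the content of an ES whose left-hand side is an answer (via $\ans\esub\var\actx$); in particular name contexts are silly.

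With these in hand the inductive cases fall into place. The variable case is excluded since $\ofv\tm\neq\emptyset$, and the abstraction case is vacuous since then $\tm$ is an answer. If $\tm=\tmtwo\tmthree$, then $\ofv\tmtwo=\emptyset$, so the spine decomposition of $\tmtwo$ cannot land in case (iii); case (i) gives $\tm=\sctxp\val\tmthree$, a root $\rtom$-redex, hence a $\tosim$-redex with silly context $\ctxhole$; case (ii) gives a redex at a name-context position of $\tmtwo$, which lifts through the application to a $\tosim$- or $\tosieyn$-redex of $\tm$. If $\tm=\tmtwo\esub\var\tmthree$, then $\ofv\tmtwo\subseteq\set\var$ and $\ofv\tmthree=\emptyset$; in the spine decomposition of $\tmtwo$, case (ii) lifts through $\esub\var\tmthree$ (as $\nctx\esub\var\tmthree$ is a name context), and case (iii), whose head variable must be $\var$ because $\ofv\tmtwo\subseteq\set\var$, makes $\tm$ a root $\rtoep\nctx$-redex, i.e.\ a $\tosieyn$-redex. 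The delicate sub-case is (i), $\tmtwo=\sctxp\val$: then $\tm=\sctxp\val\esub\var\tmthree$ is again headed by an abstraction under a list of ESs, and since $\tm$ is not an answer at least one ES-content under that list is not an answer; taking the innermost such content $u$, everything to its left is an answer, so $u$ occupies a position of $\tm$ that is a silly context built only from $\ans\esub\var\actx$ and $\actx\esub\var\tm$. One then analyses $u$ by the spine decomposition again, iterating through case (i) when $u$ is of the form $\sctx''\langle\val''\rangle$ but not an answer (a strictly smaller subterm, so this recursion terminates): case (ii) of $u$ yields a $\tosim$- or $\tosieyn$-redex of $\tm$ located inside $u$, while case (iii) of $u$ --- where the free head variable of $u$ must be bound by one of the outer ESs of $\tm$, since it cannot be free in $\tm$ as $u$ sits at shallow depth and $\ofv\tm=\emptyset$ --- yields a $\tosieaa$-redex of $\tm$.

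The main obstacle, and the reason the statement is considerably more delicate than its \cbn analogue, is exactly the rigid split $\sictx=\actxp\nctx$ in the grammar of silly contexts, which forbids re-entering an application after one has entered an ES: a redex buried in an ES-content of the function $\tmtwo$ of an application $\tmtwo\tmthree$ cannot be lifted out through the application, so the whole argument hinges on showing that this never actually obstructs reduction --- when $\tmtwo$ is not itself abstraction-headed the spine decomposition exposes a redex already on its spine (reachable by a plain name context, which does lift), and when $\tmtwo$ \emph{is} abstraction-headed the enclosing application supplies a root multiplicative redex anyway. A secondary point of care is that descending into the innermost non-answer ES-content leaves the class of terms with no shallow free variables, so this descent cannot be folded into the main structural induction on $\tm$ but must be organised as a separate well-founded recursion (on subterms) appealing directly to the spine decomposition; it is precisely here that the second exponential rule $\tosieaa$ --- the one whose outer context is merely an $\actx$ rather than a full $\sictx$ --- earns its keep.
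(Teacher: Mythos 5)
Your proposal is correct, but it is organised rather differently from the paper's proof. The paper handles the easy direction exactly as you do (answers are $\townotgcv$-normal by \reflemma{compact-answers-are-nfs-for-non-erasing-cbs-strategy}, and $\tosinotgcv\subseteq\townotgcv$), and proves the other direction by a direct structural induction on $\tm$: in the application case $\tm=\tmtwo\tmthree$ it invokes the \ih to get $\tmtwo=\sctxp\val$ and hence a root $\tosim$-redex, and in the substitution case $\tm=\tmtwo\esub\var\tmthree$ it notes that $\ctxhole\esub\var\tmthree$ is a silly context, applies the \ih to $\tmtwo$ and then to $\tmthree$, and concludes. Your contrapositive argument replaces these appeals to the \ih by a spine-decomposition lemma (every term is $\sctxp\val$, or has a $\rtom$- or $\rtoep\nctx$-redex at a name position, or is $\nctxp\var$ with the occurrence shallow and free), together with closure/decomposition properties of auxiliary and silly contexts and a separate well-founded descent into the innermost non-answer ES content to exhibit $\tosieaa$-redexes. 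This is longer, but it buys precision exactly where the paper's write-up is terse: in the substitution case one only knows $\ofv\tmtwo\subseteq\set\var$ rather than $\ofv\tmtwo=\emptyset$ (for instance $\tmtwo=\val\esub\vartwo{\var\,\Id}$ is $\tosinotgcv$-normal yet not an answer, and the redex of $\tm$ is then the $\tosieaa$-step on $\var$ that your descent produces), and in the application case it is not immediate that normality of $\tm$ transfers to $\tmtwo$, since a redex of $\tmtwo$ sitting under a proper auxiliary context does not lift through the application; your spine decomposition sidesteps both issues by locating a redex of $\tm$ directly. One small caution: your lifting observation that a silly context stays silly when an argument is added on the outside is literally true only when the auxiliary part is trivial, i.e.\ for name contexts, which is indeed the only way you use it; stating it for name contexts would make the sketch airtight. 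The remaining unproved ingredients (composition of auxiliary contexts, and cutting an auxiliary context at the binding ES into an auxiliary outer part and a silly inner part) are easy inductions consistent with your outline.
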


\begin{proof}
Direction $\Rightarrow$ is trivial, because $\tosinotgcv$ is a sub-relation of $\townotgcv$, that has answers as normal forms (\reflemma{compact-answers-are-nfs-for-non-erasing-cbs-strategy}). Direction $\Leftarrow$ is by induction on $\tm$. Cases:
\begin{itemize}
\item \emph{Variable}, that is, $\tm = \var$. Impossible because $\ofv\tm=\emptyset$.
\item \emph{Abstraction}, that is, $\tm = \la\var\tmtwo$. Then $\tm$ is an answer.
\item \emph{Application}, that is, $\tm=\tmtwo\tmthree$.  By \ih, $\tmtwo$ has shape $\sctxp{\la\var\tmfour}$, and then $\tm$ has a $\tosim$-redex---absurd. Then this case is impossible.
\item \emph{Substitution}, that is, $\tm = \tmtwo\esub\var\tmthree$.  Since $\ctxhole\esub\var\tmthree$ is a silly evaluation context, $\tmtwo$ is $\tosinotgcv$-normal. By \ih, $\tmtwo$ is an answer. Then $\tosinotgcv$ does evaluate $\tmthree$, which implies that $\tmthree$ is $\tosinotgcv$-normal. By \ih, $\tmthree$ is an answer. Then so is $\tm$.\qedhere
\end{itemize}
\end{proof}

\begin{lem}[Normal forms for the \cbs strategy]
Let $\tm$ be such that $\ofv\tm=\emptyset$.\label{l:closed-nfs}
The following are equivalent:
\begin{enumerate}
\item $\tm$ is $\tow$-normal;
\item $\tm$ is $\tosi$-normal;
\item $\tm$ is a strict answer, defined as follows:
\begin{center}
	$\begin{array}{r@{\hspace{.5cm}} rll}
		\textsc{Strict answers} &   \sans & \grameq & \val \mid \sans\esub\var{\sanstwo} \ \mbox{ with }\var\in\fv{\sans}
		\end{array}$ 
\end{center}
\end{enumerate}
\end{lem}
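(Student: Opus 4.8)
The plan is to prove the three properties equivalent via the cycle $(1)\Rightarrow(2)\Rightarrow(3)\Rightarrow(1)$, so that the only nontrivial step is $(2)\Rightarrow(3)$. For $(1)\Rightarrow(2)$ it suffices to observe that $\tosi\subseteq\tow$: name contexts $\nctx$ and auxiliary contexts $\actx$ are instances of weak contexts, hence so are silly contexts $\sictx=\actxp\nctx$, and therefore $\tosim=\sictxp\rtom\subseteq\towm$, $\tosigcv=\sictxp{\rtogcv}\subseteq\towgcv$, while $\tosieaa=\actxp{\rtoep\sictx}$ and $\tosieyn=\sictxp{\rtoep\nctx}$ are instances of $\towe=\wctxp{\rtoep\wctx}$; so $\tow$-normal terms are $\tosi$-normal. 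For $(3)\Rightarrow(1)$ I would induct on the strict answer $\sans$: if $\sans$ is a value it is $\tow$-normal (evaluation is weak); if $\sans=\sans'\esub\var{\sanstwo}$ with $\var\in\fv{\sans'}$, then $\sans'$ and $\sanstwo$ are $\tow$-normal by \ih, the outermost constructor being an ES there is no root multiplicative redex, there is no exponential redex involving the outermost ES because $\ofv{\sans'}=\emptyset$ by \reflemma{ans-empty-ofv}, and there is no garbage-collection redex involving it because $\var\in\fv{\sans'}$; hence $\sans$ is $\tow$-normal.

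For $(2)\Rightarrow(3)$, the substantial direction: if $\tm$ is $\tosi$-normal then it is $\tosinotgcv$-normal, so by \reflemma{compact-answers-are-nfs-for-cbs-strategy} $\tm$ is an answer, and it remains to show that a $\tosigcv$-normal answer is a strict answer, equivalently that every non-strict answer has a $\tosigcv$-redex. I would prove this by induction on the size of the answer, using two closure properties of silly contexts that follow immediately from the decomposition $\sictx=\actxp\nctx$: (L1) if $\ans$ is an answer and $\sictx$ is silly then $\ans\esub\var{\sictx}$ is silly --- take the auxiliary context $\ans\esub\var{\actx}$ and plug $\nctx$ into its hole; (L2) if $\sictx$ is silly then $\sictx\esub\var\tm$ is silly for any term $\tm$ --- take the auxiliary context $\actx\esub\var\tm$ and plug $\nctx$. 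A value is strict, so write the non-strict answer as $\ans'\esub\var{\ans''}$, and recall that every answer has the form $\sctxp\val$ (immediate induction). If $\var\notin\fv{\ans'}$, then $\ans'\esub\var{\ans''}=\ans'\esub\var{\sctxp\val}$ is a root $\rtogcv$-redex, which under the silly context $\ctxhole$ gives a $\tosigcv$-step. If $\ans''$ is non-strict, then by \ih it has a $\tosigcv$-redex reached by some silly context $\sictx''$, so by (L1) the context $\ans'\esub\var{\sictx''}$ is silly and $\ans'\esub\var{\ans''}$ has a $\tosigcv$-redex. Otherwise $\var\in\fv{\ans'}$ and $\ans''$ is strict, which forces $\ans'$ to be non-strict; by \ih $\ans'$ has a $\tosigcv$-redex reached by some silly context $\sictx'$, so by (L2) the context $\sictx'\esub\var{\ans''}$ is silly and $\ans'\esub\var{\ans''}$ has a $\tosigcv$-redex. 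This closes the cycle.

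The step I expect to be the main obstacle is $(2)\Rightarrow(3)$, precisely because of the silly-context bookkeeping: silly contexts $\sictx=\actxp\nctx$ are not closed under the surrounding explicit-substitution constructions, so a $\tosigcv$-redex found in a subterm cannot be lifted naively --- the remedy being exactly (L1)--(L2), which rebuild the ambient context entirely inside the \emph{auxiliary} grammar (home of the productions $\ans\esub\var\actx$ and $\actx\esub\var\tm$) and re-insert the name-context part only at the end. Everything else is routine: checking $\tosi\subseteq\tow$, the small inductions for $(3)\Rightarrow(1)$, and the shape of answers. As a sanity check, $(1)\Leftrightarrow(3)$ can also be read off \refprop{weak-nfs}, since inert terms have nonempty $\ofv$ (easy induction), so under $\ofv\tm=\emptyset$ a weak normal term must be a weak answer, and restricting the weak-answer grammar to $\ofv=\emptyset$ yields precisely the strict-answer grammar.
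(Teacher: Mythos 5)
Your proof is correct and follows essentially the route the paper takes (and merely sketches): reduce to the two earlier characterizations of non-erasing normal forms as answers, and observe that $\tosigcv$/$\towgcv$-normality of an answer forces every top-level ES to be non-garbage, i.e.\ the answer is strict. Your lemmas (L1)--(L2), lifting a $\rtogcv$-redex through the auxiliary-context grammar, and the direct induction for $(3)\Rightarrow(1)$ just make explicit the bookkeeping the paper dismisses with ``clearly''.
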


\begin{proof}
The proposition refines \reflemma{compact-answers-are-nfs-for-cbs-strategy} and \reflemma{compact-answers-are-nfs-for-non-erasing-cbs-strategy} by adding garbage collection to both $\tosinotgcv$ and $\townotgcv$ which clearly has the effect of removing garbage ESs out of abstraction from answers, obtaining exactly strict answers. \qedhere
\end{proof}

\myparagraph{Non-Erasing Determinism and Erasing Diamond} Next, we prove that the \cbs strategy is almost deterministic. Its non-erasing rules are deterministic, while $\tosigcv$ is diamond, exactly as it is the case for the \cbn strategy. For instance:
\begin{center}
\begin{tikzpicture}[ocenter]
		\node at (0,0)[align = center](source){\normalsize $\Id \esub\vartwo\Id \esub\varthree\delta$};
		\node at (source.center)[below = 30pt](source-down){\normalsize $\Id \esub\varthree\delta$};
		\node at (source.center)[right = 80pt](source-right){\normalsize $\Id \esub\vartwo\Id$};
		\node at (source-right|-source-down)(target){\normalsize $\Id $};
				
		\draw[->](source) to node[above] {\scriptsize $\sisym\gcvsym$} (source-right);
		\draw[->](source) to node[left] {\scriptsize $\sisym\gcvsym$} (source-down);
		
		\draw[->, dotted](source-down) to node[above] {\scriptsize $\sisym\gcvsym$} (target);
		\draw[->, dotted](source-right) to node[right] {\scriptsize $\sisym\gcvsym$} (target);

	\end{tikzpicture}
\end{center}
 
\begin{prop}
Reductions $\tosim$, $\tosieaa$, and $\tosieyn$ are deterministic, and $\tosigcv$ is diamond.\label{prop:silly-determinism-diamond}
\end{prop}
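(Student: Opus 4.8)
The plan is to derive the three determinism claims from \emph{unique-decomposition} (unique-redex) lemmas — one for each of $\tosim$, $\tosieaa$, $\tosieyn$ — proved by induction on the term, and to derive the diamond property of $\tosigcv$ by a direct analysis of the relative positions of the two spanning garbage-collection steps. The two organising principles throughout are: (i) answers are exactly the $\townotgcv$-normal forms of terms with no shallow free variables (\reflemma{compact-answers-are-nfs-for-non-erasing-cbs-strategy}), so that an answer occurring on the left of an ES carries \emph{no} non-erasing redex at all; and (ii) the rigid layering $\sictx = \actxp\nctx$, which forbids re-entering an application once an ES has been entered, so that there is a single point at which the ``apply'' part of the context gives way to the ``enter-ES'' part.

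\textbf{Determinism of $\tosim$.} First I would show that every term $\tm$ admits at most one decomposition $\tm = \sictxp{\sctxp\val\tmthree}$. The induction follows the grammar $\sictx = \actxp\nctx$ and treats the two layers separately. In the $\nctx$ layer the situation is the one of the \cbn strategy: on an application $\tmtwo\tmfour$ either the head of $\tmtwo$ is an abstraction (possibly inside a substitution context), in which case the root rule $\rtom$ fires and there is nothing to the left, or we must recurse into $\tmtwo$; the two alternatives are mutually exclusive. In the $\actx$ layer the only delicate production is $\ans\esub\var\actx$: on a term $\tmtwo\esub\var\tmfour$ the strategy enters $\tmfour$ precisely when $\tmtwo$ is an answer, and it enters $\tmtwo$ (via $\actx\esub\var\tm$) otherwise; since $\tosim\subseteq\townotgcv$ and an answer is $\townotgcv$-normal, an answer on the left carries no $\tosim$-redex, so at most one side can host a redex. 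Combined with the rigid separation at the $\actxp\nctx$ junction, this gives uniqueness, hence $\tm\tosim\tmtwo$ and $\tm\tosim\tmthree$ imply $\tmtwo=\tmthree$.

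\textbf{Determinism of $\tosieaa$ and $\tosieyn$.} These go the same way, with the extra subtlety that each exponential rule uses contexts \emph{twice}: once inside the root rule $\rtoep{\cdot}$ to select the replaced occurrence, and once to extend the applicability of the root rule. For $\tosieyn = \sictxp{\rtoep\nctx}$ the outer extension is by a full silly context and the occurrence is selected by a name context inside the body of the ES; for $\tosieaa = \actxp{\rtoep\sictx}$ the extension is only by an auxiliary context and the occurrence is selected by a full silly context. The uniqueness of the \emph{outer} decomposition is handled exactly as for $\tosim$ (answers block reduction, the layers are rigidly separated), and the uniqueness of the \emph{selected occurrence} then follows from the corresponding unique-decomposition property for $\nctx$, resp. for $\sictx$, applied inside the relevant ES. The content is routine but lengthy because one must track two nested context decompositions at once.

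\textbf{Diamond of $\tosigcv$.} Suppose $\tm\tosigcv\tmtwo_1$ and $\tm\tosigcv\tmtwo_2$ via redexes $R_1$ and $R_2$. If the holes of the two silly contexts point at disjoint positions, the two steps obviously commute and close the diamond in one step each. Otherwise one redex, say $R_1 = \tmfour\esub\var{\sctxp\val}$ with $\var\notin\fv\tmfour$, contains the other; since $\val$ is an abstraction and silly contexts never enter abstractions, $R_2$ cannot lie inside the erased value $\val$, so $R_2$ lies inside $\tmfour$ or inside $\sctx$. In the first case firing $R_1$ moves $\tmfour$ unchanged into the hole of $\sctx$ and leaves $R_2$ intact, while firing $R_2$ turns $\tmfour$ into some $\tmfour'$ with $\fv{\tmfour'}\subseteq\fv\tmfour$, so $R_1$ is still a redex afterwards; both orders yield $\sctxp{\tmfour'}$, and the residual of $R_2$ is still selected by a silly context because auxiliary and substitution contexts compose into auxiliary contexts. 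In the second case $R_2$ erases another value-ES sitting among the ESs of $\sctx$; the two erasures concern nested but independent ESs (by $\alpha$-convention the names bound in $\sctx$ are fresh for $\tmfour$), so again both orders reach a common reduct in one step. Collecting the cases gives $\tmtwo_1=\tmtwo_2$, or $\tmtwo_1\tosigcv\tmthree$ and $\tmtwo_2\tosigcv\tmthree$ for some $\tmthree$, as required.

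\textbf{Main obstacle.} The difficulty is not conceptual but bureaucratic: because the grammar of silly evaluation contexts encodes a subtle interaction between applications and explicit substitutions, one must carefully check that the two nested context decompositions in the exponential cases are each unique, and that the residual redexes appearing in the $\tosigcv$-diamond remain reachable by silly contexts after the step. We therefore only sketch these verifications here and defer the complete case analyses to \cite{accattoli2024mirroring}.
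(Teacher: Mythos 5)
Your proposal is correct and follows essentially the same route as the paper's proof: the paper likewise handles the name-context layer first, proves determinism via unique-decomposition lemmas for the contexts selecting the redex (and, for the exponential rules, for the selected variable occurrence), proceeds by induction on the term giving rise to the steps, and settles the diamond property of $\tosigcv$ by a direct case analysis, deferring the lengthy verifications to the technical report \cite{accattoli2024mirroring}. Your observations that answers block non-erasing redexes and that the rigid $\actxp\nctx$ layering forces uniqueness are exactly the ingredients the paper relies on.
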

 
\begin{proof}
The proofs of these properties are tedious and somewhat long but otherwise straightforward, the details can be found in the technical report \cite{accattoli2024mirroring}. Since silly evaluation contexts are defined by composing auxiliary and name contexts, one first needs to prove the properties for name contexts, that is, for the \cbn strategy, for which they are straightforward. For both name and silly contexts the proofs are by induction on the term $\tm$ giving rise to the steps. For the exponential steps, one additionally needs lemmas proving that if $\tm = \sictxfp\var$ then such a decomposition is unique and $\tm$ is normal, building on analogous statements for name and auxiliary contexts. The proof of the diamond property is an unsurprising case analysis.\qedhere
\end{proof}

\myparagraph{Linear Postponement} Lastly, we prove the \emph{linear} postponement of $\tosigcv$, that is, the fact that it can be postponed while preserving the length of the evaluation. The length preservation shall be used to prove the maximality of $\tosi$ in the next section. We first give the local property that is then iterated to prove the global postponement.

\begin{lem}[Local postponement of silly GC by value]
\label{l:local-posponement-silly}
\hfill
\begin{enumerate}
\item \emph{Silly multiplicative}: if $\tm \tosigcv\tm'\tosim \tm''$ then $\tm \tosim\tm'''\tosigcv \tm''$ for some $\tm'''$.
\item \emph{Silly exponential 1}: if $\tm \tosigcv\tm'\tosieaa \tm''$ then $\tm \tosieaa\tm'''\tosigcv \tm''$ for some $\tm'''$.
\item \emph{Silly exponential 2}: if $\tm \tosigcv\tm'\tosieyn \tm''$ then $\tm \tosieyn\tm'''\tosigcv \tm''$ for some $\tm'''$.
\end{enumerate}
\end{lem}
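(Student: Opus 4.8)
The plan is to mirror the proof of local postponement for the weak calculus (\reflemma{local-posponement-weak}): for each of the three points I would argue by induction on the silly context $\sictx$ of the $\tosigcv$-step $\tm \tosigcv \tm'$, and then case-analyse the position of the non-erasing redex of $\tm'$ relative to the subterm rewritten by that $\tosigcv$-step (that redex is the unique $\tosim$-redex of $\tm'$ in point~1, and analogously the unique $\tosieaa$- resp.\ $\tosieyn$-redex in points~2--3, by \refprop{silly-determinism-diamond}). Two preliminary observations keep the case analysis under control. First, $\rtogcv$ sends answers to answers: an answer is, after flattening, a nesting of values connected by ESs, and $\rtogcv$ removes exactly one of those values, leaving another answer; hence, since the \cbs strategy acts on closed terms and so $\ofv\tm = \emptyset$, if $\tm$ is $\tosinotgcv$-normal then $\tm$ is an answer by \reflemma{compact-answers-are-nfs-for-cbs-strategy}, whence $\tm'$ is an answer and $\tosinotgcv$-normal, making the hypothesis on $\tm'$ vacuous; so one may assume $\tm$ is not $\tosinotgcv$-normal. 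Second, $\rtogcv$ neither creates nor destroys a non-erasing redex: it only reshuffles the substitution context around a subterm (rewriting $\tmtwo\esub\var{\sctxp\val}$ into $\sctxp\tmtwo$ while keeping $\sctx$), and, the $\rtom$ and $\rtoep$ rules being at a distance, such reshuffling is harmless up to $\alpha$-renaming; so the non-erasing redex of $\tm'$ is, up to its surrounding substitution context, the same as the one of $\tm$.

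The generic case then runs as follows. When the $\tosigcv$-redex and the non-erasing redex of $\tm'$ sit in independent positions — side by side, or one nested in the other but with the $\tosigcv$-redex never strictly containing a subterm that the non-erasing step duplicates — I would first fire the non-erasing redex of $\tm$, obtaining $\tm'''$, and then perform the single $\tosigcv$-step to $\tm''$. Here one uses that name, auxiliary and silly contexts are closed under composition (so the non-erasing redex is still reached by a silly context of $\tm$), and that the side-conditions of the non-erasing rules — the abstraction shape of $\sctxp{\la\var\tm}$ for $\tosim$, the shallow occurrence of the replaced variable for $\tosieaa$ and $\tosieyn$ — survive the deletion of garbage. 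The only delicate point is $\alpha$-conversion: firing a multiplicative or exponential step before the $\tosigcv$-step may move a subterm into the scope of the ES $\esub\var{\sctxp\val}$ that is about to be erased, so if that subterm has a free occurrence of $\var$ one first $\alpha$-renames $\var$ fresh; after that the garbage-collection side-condition holds again and the step applies, reaching $\tm''$.

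The configuration that would spoil linearity — and that genuinely occurs in the weak-calculus proof, forcing $\towgcv^{+}$ in the exponential case — is the one in which the non-erasing step duplicates an ES $\esub\var\tmtwo$ whose content $\tmtwo$ strictly contains the subterm rewritten by $\tosigcv$, so that two erasing steps would be needed on the right. The crux of the proof is that this cannot happen for the silly strategy. Indeed, a silly context reaches inside an ES content only through the auxiliary-context production $\ans\esub\var\actx$, so for $\tm \tosigcv \tm'$ to rewrite inside the content of $\esub\var\tmtwo$ the left-hand side of that ES must be an answer; since the $\tosigcv$-step acts in the content, that left-hand side is unchanged in $\tm'$ and is still an answer, hence has no shallow free occurrence of $\var$ by \reflemma{ans-empty-ofv}; but then no $\tosieaa$- or $\tosieyn$-step can fire on $\esub\var\tmtwo$ in $\tm'$, a contradiction. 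For point~1 the issue is moot, a multiplicative step never duplicating a subterm. This is precisely what makes the postponement length-preserving, with a single $\tosigcv$-step on the right.

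The main obstacle I foresee is the bookkeeping: the case analysis fans out along the ways the substitution context $\sctx$ of the $\tosigcv$-redex threads through an at-a-distance $\tosim$- or $\tosieaa$/$\tosieyn$-redex, and keeping the $\alpha$-renamings coherent across these cases is what makes the argument long — the ``tedious but otherwise straightforward'' part to be carried out in full in the technical report. The conceptual content is concentrated entirely in the impossibility argument above, which rests only on \reflemma{ans-empty-ofv} and the grammar of silly evaluation contexts.
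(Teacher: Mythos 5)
Your proposal is correct and follows essentially the same route as the paper: induction on the $\tosigcv$ step (i.e.\ on its silly context), case analysis of the following non-erasing step, with the fan-out of distance/$\alpha$-renaming cases deferred as tedious but routine; your observation that the claim ``the redex of $\tm'$ is still reached by a silly context of $\tm$'' needs justification is exactly what the paper isolates as its \emph{backward deformation lemma} for the exponential cases. Your direct argument for linearity---the GC redex cannot lie inside the content of the ES being copied, since silly contexts enter an ES content only through $\ans\esub\var\actx$ and an answer has no shallow occurrence of $\var$ (\reflemma{ans-empty-ofv})---is precisely the reason the silly strategy gets a single $\tosigcv$ step where the weak calculus needs $\towgcv^{+}$.
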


\begin{proof}
All points are by induction on the first step and case analysis of the second. As for the previous proof, the details are tedious and somewhat long but otherwise straightforward, they can be found in the technical report \cite{accattoli2024mirroring}. One needs to first show the properties for the closure by name contexts. The exponential steps need a \emph{backward deformation lemma} stating that if $\tm \tosigcv\sictxfp\var$ then $\tm=\sictxtwofp\var$ for some \cbs context $\sictxtwo$ such that $\sictxtwofp\tmtwo \tosigcv\sictxfp\tmtwo$ for every $\tmtwo$ (plus a similar statement for the case of name contexts), mirroring the (forward) deformation lemma used in the proof of local confluence (\reflemma{local-confluence}). \qedhere
\end{proof}

\begin{prop}[Postponement of $\tosigcv$]
If $\deriv:\tm\tosi^*\tmtwo$ then $\tm\tosinotgcv^k\tosigcv^h\tmtwo$ with $k=\sizep\deriv{\sisym\neg\gcv}$ and $h= \sizep\deriv{\sisym\gcv}$.
\label{prop:silly-gc-postponement}
\end{prop}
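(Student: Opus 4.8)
The plan is to derive the global statement from the local one (\reflemma{local-posponement-silly}) by iterating the local swaps while keeping track of the step counts.

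First I would merge the three cases of \reflemma{local-posponement-silly} into a single \emph{linear local postponement}: if $\tm \tosigcv \tm' \tosinotgcv \tm''$ then there is $\tm'''$ with $\tm \tosinotgcv \tm''' \tosigcv \tm''$, where $\tosinotgcv = \tosim\cup\tosieaa\cup\tosieyn$ (cases of which one of $\tosim$, $\tosieaa$, $\tosieyn$ is used). The feature that matters, and that is visible in each of the three cases, is that the swap is \emph{linear}: one $\tosigcv$ step is traded for exactly one $\tosigcv$ step, and one non-erasing step for exactly one non-erasing step (of the same kind). This contrasts with the weak-SSC situation in point (2) of \reflemma{local-posponement-weak}, where an exponential swap may produce several GC steps, which is why \refprop{gc-postponement} only gives the inequality $h \geq \sizep\deriv{\wsym\gcv}$. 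Here, instead, the counts are preserved. This linear local postponement is precisely the hypothesis of Hindley's strong postponement lemma, which already yields $\tm \tosinotgcv^* \tosigcv^* \tmtwo$; it remains to pin down the exact values of $k$ and $h$.

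To make the counting explicit I would spell out the iteration. Given $\deriv : \tm \tosi^* \tmtwo$, let $m(\deriv)$ be the number of pairs of steps $(g,s)$ of $\deriv$ such that $g$ is a $\tosigcv$ step occurring \emph{before} a $\tosinotgcv$ step $s$, and argue by induction on $m(\deriv)$. If $m(\deriv)=0$, then every $\tosigcv$ step follows every non-erasing step, i.e. $\deriv$ has the shape $\tm \tosinotgcv^k \tosigcv^h \tmtwo$, and since no swap has been applied, $k = \sizep\deriv{\sisym\neg\gcv}$ and $h = \sizep\deriv{\sisym\gcv}$, so we are done. If $m(\deriv)>0$, there is an occurrence of $\tosigcv$ immediately followed by a $\tosinotgcv$ step; applying the linear local swap there produces an evaluation $\deriv'$ with the same start and end terms, the same number of $\tosigcv$ steps and the same number of $\tosinotgcv$ steps (since the swap is linear), and $m(\deriv') = m(\deriv)-1$. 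The induction hypothesis applied to $\deriv'$ gives $\tm \tosinotgcv^{k} \tosigcv^{h} \tmtwo$ with $k = \sizep{\deriv'}{\sisym\neg\gcv} = \sizep\deriv{\sisym\neg\gcv}$ and $h = \sizep{\deriv'}{\sisym\gcv} = \sizep\deriv{\sisym\gcv}$.

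There is no real obstacle here: the content is already in \reflemma{local-posponement-silly}, whose proof is the genuinely laborious part. The only point deserving care is verifying that the three local swaps are \emph{linear} in the GC steps, as this is exactly what turns the inequality of \refprop{gc-postponement} into the equality of \refprop{silly-gc-postponement}, and it is what the statement of \reflemma{local-posponement-silly} was tailored to provide.
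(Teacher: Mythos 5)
Your proposal is correct and follows essentially the same route as the paper: it observes that the three cases of \reflemma{local-posponement-silly} combine into a single \emph{linear} local postponement (one $\tosigcv$ step traded for exactly one, one non-erasing step for exactly one), and then iterates it over the reduction sequence to obtain the exact counts $k$ and $h$. Your explicit induction on the number of misordered pairs merely spells out the iteration that the paper leaves to the standard (Hindley-style) postponement argument, so there is nothing to add.
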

 
 \begin{proof}
As it was the case for the postponement of GC by value in \refprop{gc-postponement}, the statement  is an instance of a well-known rewriting property: the local swaps in \reflemma{local-posponement-silly} above can be put together as the following \emph{linear} local postponement property (note the \emph{single} step postponed after $\tosinotgcv$):
\begin{center}
$\tm \tosigcv \tm' \tosinotgcv \tm''$ then $\tm \tosinotgcv\tm'''\tosigcv \tm''$ for some $\tm'''$
\end{center}
giving the global postponement of the statement when iterated on a reduction sequence. \qedhere
\end{proof}

\myparagraph{Back to the Silly MAM} We can now prove that the principal transitions of the Silly MAM read back to steps of the strategy. The next lemma is the key point.

\begin{lem}[Contextual read-back invariant]
\label{l:ctx-read-back}
Let $\state=\fourstate\pans\tm\stack\env$ a reachable Silly MAM state. Then $\decode\pans$ and $\decodep\env{\decode\pans}$ are auxiliary contexts, and $\decodep\pans{\decode\stack}$  and $\decodep\env{\decodep\pans{\decode\stack}}$ are silly evaluation contexts.
\end{lem}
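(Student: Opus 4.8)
The plan is to prove the four claims simultaneously by induction on the length of the run $\compilrel\tm{\state_0}\tomach^*\state$, reasoning by cases on the last transition. The base case is an initial state $\state=\fourstate\emptylist{\tm'}\emptylist\emptylist$ with $\tm'$ a closed term without ESs; here $\decode\pans=\ctxhole$, $\decodep\env{\decode\pans}=\ctxhole$, $\decodep\pans{\decode\stack}=\ctxhole$, and $\decodep\env{\decodep\pans{\decode\stack}}=\ctxhole$, all of which are trivially auxiliary resp.\ silly evaluation contexts (as $\ctxhole$ is both an $\actx$ and an $\nctx$, hence an $\sictx$). Note that the four statements are intertwined: $\decode\pans$ being auxiliary feeds into $\decodep\pans{\decode\stack}$ being silly (via $\sictx=\actxp\nctx$ with the $\nctx$ part coming from the stack), and $\decodep\env{\decode\pans}$ being auxiliary feeds into the full state read-back being silly.

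For the inductive step, I would go through the four transitions of \reffig{silly-machine}. For $\tomachseaone$ ($\fourstate\pans{\tm\tmtwo}\stack\env\to\fourstate\pans\tm{\tmtwo\cons\stack}\env$), neither $\pans$ nor $\env$ changes, so the two auxiliary-context claims are immediate from the \ih; for the silly claims, $\decodep{\tmtwo\cons\stack}{\cdot}=\decodep\stack{\ctxhole\tmtwo}$ just extends the name-context part of $\decode\stack$ by one application node, which is still an $\nctx$, hence $\decodep\pans{\decode{\tmtwo\cons\stack}}$ is still of the form $\actxp\nctx$. For $\tomachm$ ($\fourstate\pans{\la\var\tm}{\tmtwo\cons\stack}\env\to\fourstate\pans\tm\stack{\esub\var\tmtwo\cons\env}$), $\pans$ is unchanged and $\stack$ shrinks (removing the topmost application from the $\nctx$ part), so $\decode\pans$ stays auxiliary and $\decodep\pans{\decode\stack}$ stays silly; the new environment reads back as $\decodep\env{\decode\pans{\cdot\,\esub\var\tmtwo}}$, which extends $\decodep\env{\decode\pans}$ by one ES node — since $\decodep\env{\decode\pans}$ is auxiliary by \ih and $\actx\grameq\actx\esub\var\tm$ is one of the auxiliary productions, it remains auxiliary, and plugging in the unchanged silly $\decodep\pans{\decode\stack}$ gives the full state claim. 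Transition $\tomache$ changes only the code, so all four read-back contexts are literally unchanged and the claims hold by \ih.

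The delicate case is $\tomachseatwo$ ($\fourstate\pans\val\emptylist{\esub\var\tm\cons\env}\to\fourstate{\pans\cons(\val,\var)}\tm\emptylist\env$), which is exactly where the auxiliary production $\ans\esub\var\actx$ comes into play. Here the new partial answer reads back as $\decode{\pans\cons(\val,\var)}=\decodep\pans\val\esub\var\ctxhole$. To see this is auxiliary I must check that $\decodep\pans\val$ is an \emph{answer}: by the \ih $\decode\pans$ is an auxiliary context, and an easy auxiliary lemma (or a direct induction on $\pans$) shows that plugging a value $\val$ into an auxiliary context whose holes sit only inside ESs yields an answer in the sense of \refdef{ans-empty-ofv}'s grammar $\ans\grameq\val\mid\ans\esub\var\anstwo$ — here I also need that the transition fires only with empty stack, guaranteeing no application node wraps $\val$; one should also invoke the well-bound/closure invariants (\reflemma{skeletal-mam-qual-invariants}) to ensure the side condition $\var\notin\fv\ans$ of the production $\ans\esub\var\actx$, since $\var$ is bound by the popped ES and the well-bound invariant forbids it from occurring shallowly to the left. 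Then $\decodep\pans\val\esub\var\ctxhole$ matches $\ans\esub\var\actx$ with $\actx=\ctxhole$, so it is auxiliary; composing with the still-auxiliary $\decode\env$ gives $\decodep\env{\decode{\pans\cons(\val,\var)}}$ auxiliary, and since the stack is empty, $\decodep{\pans\cons(\val,\var)}{\decode\stack}=\decodep{\pans\cons(\val,\var)}\ctxhole$ is of the form $\actxp{\ctxhole}$ hence silly, and likewise for the full state. The main obstacle is precisely this $\tomachseatwo$ case: getting the answer-hood of $\decodep\pans\val$ and the freshness side condition right, which forces me to extract and prove the small auxiliary facts ``an auxiliary context with a value plugged in is an answer'' and ``the popped ES variable has no shallow free occurrence to its left'', the latter being a consequence of the well-bound invariant. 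Everything else is bookkeeping on the read-back equations in \reffig{silly-machine}.
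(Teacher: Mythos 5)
Your proof is correct and takes essentially the same route as the paper's (deliberately terse) argument: induction on the length of the run with a case analysis of the last transition, the only substantive case being $\tomachseatwo$, which you rightly settle via the auxiliary observation that the read-back of a partial answer plugged with a value is an answer and that the empty stack makes the resulting context purely auxiliary. One minor remark: the production $\ans\esub\var\actx$ of auxiliary contexts carries no side condition $\var\notin\fv\ans$ in the paper's grammar (answers have no shallow free occurrences anyway), so your appeal to the well-bound invariant at that point is unnecessary, though harmless.
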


\begin{proof}
The proof is by induction on the length of the run. The base case trivially holds, the inductive case is by analysis of the last transition, which is always a straightforward inspection of the transitions using the \ih
\end{proof}

\begin{prop}[Refined principal projection]
Let $\state$ a reachable Silly MAM state. 
\begin{enumerate}
\item If $\state \tomachm \statetwo$ then $\decode\state \tosim\stred \decode\statetwo$;
\item If $\state \tomache \statetwo$ then $\decode\state \tosieaa \decode\statetwo$.
\end{enumerate}
\end{prop}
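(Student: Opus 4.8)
The plan is to reuse the read-back computations already carried out in the proof of the Principal projection (\refpropp{SiMAM-properties}{projection}), strengthening the single observation used there---that the ambient read-back context is a \emph{weak} context---into the sharper statement, supplied by the Contextual read-back invariant (\reflemma{ctx-read-back}), that it is a \emph{silly evaluation} context. For a $\msym$-transition $\state = \fourstate\pans{\la\var\tm}{\tmtwo\cons\stack}\env \tomachm \fourstate\pans\tm\stack{\esub\var\tmtwo\cons\env} = \statetwo$, that proof shows $\decode\state$ reduces by one $\towm$-step, contracting the multiplicative redex $(\la\var\tm)\tmtwo$ (whose substitution context is empty) inside the context $\decodep\env{\decodep\pans{\decode\stack}}$, followed by $\stredapp$ and then $\stredans$, to $\decode\statetwo$; I will argue that $\decodep\env{\decodep\pans{\decode\stack}}$ is a silly evaluation context, so that the $\towm$-step is in fact a $\tosim$-step, giving $\decode\state \tosim\stred\decode\statetwo$. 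For a $\esym$-transition $\state = \fourstate\pans\var\stack{\env\cons\esub\var\tm\cons\envtwo} \tomache \fourstate\pans{\renamenop\tm}\stack{\env\cons\esub\var\tm\cons\envtwo} = \statetwo$, the same proof shows $\decode\state$ reduces by one $\towe$-step to $\decode\statetwo$, the step replacing the occurrence of $\var$ that sits inside the context $\decodep\env{\decodep\pans{\decode\stack}}$ within the explicit substitution $\esub\var\tm$, the whole being plugged into the outer context $\decode\envtwo$; I will argue that $\decodep\env{\decodep\pans{\decode\stack}}$ is a silly evaluation context and that $\decode\envtwo$, being a substitution context, is an auxiliary context, so the step has exactly the shape $\actxp{\rtoep\sictx}$ of a $\tosieaa$-step, giving $\decode\state \tosieaa \decode\statetwo$. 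The non-capture side condition $\wctxfp\var$ of the exponential root rule comes, as in the Principal projection proof, from the well-bound invariant (\reflemma{skeletal-mam-qual-invariants}).

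The only genuine work is certifying that $\decodep\env{\decodep\pans{\decode\stack}}$ is a silly evaluation context. The Contextual read-back invariant directly yields that $\decodep\pans{\decode\stack}$ is one. In the $\esym$ case, $\decodep\env{\decodep\pans{\decode\stack}}$ is obtained from $\decodep\pans{\decode\stack}$ by appending at the top the explicit substitutions recorded in $\env$, and I will record the elementary closure fact that silly, auxiliary, and name contexts are all stable under appending an explicit substitution at the top (via the productions $\actx\esub\var\tmthree$ and $\nctx\esub\var\tmthree$), so the result remains silly. In the $\msym$ case I will instead apply the invariant to the source state $\state$, whose stack is $\tmtwo\cons\stack$: it yields that $\decodep\env{\decodep\pans{\decodep\stack{\ctxhole\tmtwo}}}$ is a silly evaluation context, and since every application on the path to the hole of a silly context must belong to its name-context component, and popping the innermost such application off a name context again yields a name context, erasing this $\ctxhole\tmtwo$ layer leaves the silly evaluation context $\decodep\env{\decodep\pans{\decode\stack}}$.

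I expect the main obstacle to be exactly this grammar-level bookkeeping, compounded by the mismatch of data structures between the source and target of a $\msym$-transition: one stack entry migrates into the environment, so the Contextual read-back invariant cannot be invoked on $\statetwo$ directly and must be pushed through on $\state$ as above. I would therefore prove the two context-closure facts first, by straightforward inductions on the relevant context grammars, and then assemble the two cases by the case analysis sketched above---in effect transcribing the Principal projection proof with ``weak context'' replaced throughout by ``silly evaluation context''.
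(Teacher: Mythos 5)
Your proposal is correct and takes essentially the same route as the paper: replay the read-back computations of the principal projection proof and upgrade the surrounding contexts from weak to silly/auxiliary via the contextual read-back invariant, with $\decode\envtwo$ recognized as a substitution and hence auxiliary context in the exponential case. The context-closure bookkeeping you make explicit (stability of silly contexts under outer explicit substitutions, and popping the innermost application in the $\msym$ case where a stack entry migrates to the environment) is exactly what the paper's terse proof leaves implicit, and it goes through.
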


\begin{proof}
By the contextual read-back invariant (\reflemma{ctx-read-back}), in the proof of principal projection (\refpropp{SiMAM-properties}{projection}) the context around the $\msym$ step is a silly context, while for the exponential step the context around the occurrence is a silly context and the  context surrounding the root step is a substitution context, thus an auxiliary context.\qedhere
\end{proof}

A possibly interesting point is that exponential transitions never read back to $\tosieyn$. This is not because $\tosieyn$ is useless: it happens because the structural rearrangement at work in the machine, that moves every created ES together with the other ones in the environment, so that $\tosieyn$ steps are out of the image of the read-back.

\section{Tight Derivations, Exact Lengths, and Maximality}
\label{sect:tight}
Here, we focus on the \cbs strategy on closed terms and isolate a class of \emph{tight} type derivations whose indices measure exactly the number of non-erasing \cbs steps, mimicking faithfully what was done in call-by-name/value/need by Accattoli et al. in \cite{DBLP:conf/esop/AccattoliGL19}. An outcome shall be that the \cbs strategy actually performs the \emph{longest} weak evaluation on closed terms.

\myparagraph{Tight Derivations.} The basic tool for our quantitative analysis are tight type derivations, which shall be used to refine the correctness theorem of the weak case (\refthm{weak-correctness}) in the case of closed terms. Tight derivations are defined via a predicate on their last judgement, as in \cite{DBLP:conf/esop/AccattoliGL19}. 

\begin{defi}[Tight types and derivations]
A type $\ttype$ is \emph{tight} if $\ttype=\atype$ or $\ttype=\mset\atype$. 
A derivation $\tderiv \exder  \Deri[(\msteps, \esteps)]{\typctx}{\tm\!}{\ttype}$  is \emph{tight} if $\ttype$ is tight.\label{def:tightderiv}
  \end{defi}

\myparagraph{Tight Correctness.} We first show that all tight derivations for answers have null indices.

\begin{prop}[Tight typing of normal forms for non-erasing \cbs]
Let $\ans$ be an answer and 
$\tderiv\exder  \Deri[(\msteps, \esteps)]{\typctx}{\ans}\normal$ be a derivation. 
Then $\typctx$ is empty and $\msteps = \esteps = 0$.	\label{prop:silly-normal-forms-forall}
\end{prop}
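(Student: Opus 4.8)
The plan is to prove the statement by induction on the structure of the answer $\ans$, whose grammar $\ans \grameq \val \mid \anstwo\esub\var\ansthree$ is simple and well-founded. The induction relies only on a few immediate generation (inversion) remarks about the rules of \reffig{silly-types}: a derivation whose final judgement assigns a \emph{linear} type cannot end with $\ruleMany$; the constant $\normal = \atype$ is a linear type but is \emph{not} an arrow; and $\mult\atype$ is a singleton multiset, so a derivation concluding with type $\mult\atype$ must end with a one-premise instance of $\ruleMany$, which leaves the type context and the two indices unchanged. Recall that $\normal$ and $\atype$ denote the same constant, so the hypothesis is a derivation $\tderiv \exder \Deri[(\msteps,\esteps)]{\typctx}{\ans}{\atype}$.

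First I would handle the base case $\ans = \val = \la\var\tm$. Since the final type $\normal$ is linear, $\tderiv$ ends with one of $\ruleAx$, $\ruleAxLam$, $\ruleAp$, $\ruleLam$; the subject being an abstraction rules out $\ruleAx$ and $\ruleAp$, and $\ruleLam$ is excluded because it always concludes with an arrow type $\arrowtype{\typctx(\var)}{\ltype}$, never with $\normal$. Hence $\tderiv$ is exactly the axiom $\ruleAxLam$, whose conclusion $\vdashp 0 0 \la\var\tm \hastype \atype$ already has empty type context and null indices, so the claim holds. For the inductive step $\ans = \anstwo\esub\var\ansthree$, the only rule with an explicit substitution as subject is $\ruleES$, so $\tderiv$ applies $\ruleES$ to premises $\typctx_1 \vdashp{\msteps_1}{\esteps_1} \anstwo \hastype \normal$ and $\typctx_2 \vdashp{\msteps_2}{\esteps_2} \ansthree \hastype \typctx_1(\var)\mplus\mult\atype$, with $\typctx = (\typctx_1\sm\var)\uplus\typctx_2$, $\msteps = \msteps_1+\msteps_2$ and $\esteps = \esteps_1+\esteps_2$ (the left premise inherits the conclusion type $\normal$). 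By the induction hypothesis applied to the sub-answer $\anstwo$, the context $\typctx_1$ is empty and $\msteps_1 = \esteps_1 = 0$; in particular $\typctx_1(\var) = \zero$, so the right premise reduces to $\typctx_2 \vdashp{\msteps_2}{\esteps_2} \ansthree \hastype \mult\atype$. By the generation remark about $\mult\atype$, this judgement is obtained by a one-premise $\ruleMany$ from $\typctx_2 \vdashp{\msteps_2}{\esteps_2} \ansthree \hastype \atype$, and the induction hypothesis on $\ansthree$ yields $\typctx_2$ empty and $\msteps_2 = \esteps_2 = 0$. Therefore $\typctx = (\typctx_1\sm\var)\uplus\typctx_2$ is empty and $\msteps = \esteps = 0$, which closes the induction.

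I do not expect a genuine obstacle here: the argument is essentially a guided inversion of the typing rules along the shape of the answer, and the only spot requiring a moment of care is justifying the three generation remarks, which are read off directly from \reffig{silly-types} (and are of the same flavour as the inversions implicitly used in \reflemma{typctx-varocc-tm}).
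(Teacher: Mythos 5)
Your proof is correct and follows essentially the same route as the paper's: induction on the structure of the answer, inverting the typing rules so that the base case is forced to be $\ruleAxLam$ (empty context, null indices) and the substitution case decomposes via $\ruleES$, with the induction hypothesis on the left sub-answer yielding $\typctx_1(\var)=\zero$ and hence a one-premise $\ruleMany$ above the right premise, to which the induction hypothesis applies again. The generation remarks you invoke are exactly the inversions the paper uses implicitly, so there is nothing to add.
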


\begin{proof}
By induction on $\ans$. Cases:
\begin{itemize}
\item \emph{Base}, \ie $\ans = \la{\var}\tm$. Then $\tderiv$ can only be:
$$      \infer[\normal]{
        \Deri[(0,0)] {} { \la\var\tm } \normal
      }{}
$$
which satisfies the statement.

\item \emph{Inductive}, \ie $\ans= \anstwo \esub\var\ansthree$. Then $\tderiv$ has the following shape:
\[
\AxiomC{ $\tderiv_{\anstwo} \exder \Deri[(\msteps_{\anstwo}, \esteps_{\anstwo})] {\typctx_{\anstwo}} \anstwo {\atype}$ }
\AxiomC{ $ \Deri[(\msteps_{\ansthree}, \esteps_{\ansthree})] {\typctx_{\ansthree}} {\ansthree} {\typctx_{\anstwo}(\var) \mplus \mult\atype}$ }
\RightLabel{$\ruleES$}
\BinaryInfC{ $\Deri[(\msteps_{\anstwo} + \msteps_{\ansthree}, \esteps_{\anstwo} + \esteps_{\ansthree})] {\typctx_{\anstwo} \mplus \typctx_{\ansthree}}
             {\anstwo \esub\var\ansthree} \atype$ }
\DisplayProof
\]
with $\typctx = \typctx_{\anstwo} \mplus \typctx_{\ansthree}$, $\msteps = \msteps_{\anstwo} + \msteps_{\ansthree}$, and $\esteps = \esteps_{\anstwo} + \esteps_{\ansthree}$. By \ih, we have that $\tderiv_{\anstwo}$ actually has shape $\tderiv_{\anstwo} \exder \Deri[(0,0)] {} \anstwo {\atype}$. Thus, more precisely, $\tderiv$ has the following shape:
\[
\AxiomC{ $\tderiv_{\anstwo} \exder \Deri[(0,0)] {} \anstwo {\atype}$ }
\AxiomC{ $\tderiv_{\ansthree} \exder \Deri[(\msteps_{\ansthree}, \esteps_{\ansthree})] {\typctx_{\ansthree}} \ansthree {\atype}$ }
\RightLabel{$\ruleMany$}
\UnaryInfC{ $ \Deri[(\msteps_{\ansthree}, \esteps_{\ansthree})] {\typctx_{\ansthree}} {\ansthree} {\mult\atype}$ }
\RightLabel{$\ruleES$}
\BinaryInfC{ $\Deri[(\msteps_{\ansthree}, \esteps_{\ansthree})] {\typctx_{\ansthree}}
             {\anstwo \esub\var\ansthree} \atype$ }
\DisplayProof
\]
with $\typctx = \typctx_{\ansthree}$, $\msteps = \msteps_{\ansthree}$, and $\esteps = \esteps_{\ansthree}$. By \ih, we have that $\tderiv_{\ansthree}$ actually has the shape $\tderiv_{\ansthree} \exder \Deri[(0,0)] {} \ansthree {\atype}$. Then $\tderiv$ has the following shape, which satisfies the statement:
\begin{equation*}
\raisebox{\depth}{$
\AxiomC{ $\tderiv_{\anstwo} \exder \Deri[(0,0)] {} \anstwo {\atype}$ }
\AxiomC{ $\tderiv_{\ansthree} \exder \Deri[(0,0)] {} \ansthree {\atype}$ }
\RightLabel{$\ruleMany$}
\UnaryInfC{ $ \Deri[(0,0)] {} \ansthree {\mult\atype}$ }
\RightLabel{$\ruleES$}
\BinaryInfC{ $\Deri[(0, 0)] {}
             {\anstwo \esub\var\ansthree} \atype$ }
\DisplayProof
$}
\tag*{\qedhere}
\end{equation*}
\end{itemize}
\end{proof}


Next, we refine 
subject reduction via tight derivations.

\myparagraph{Tight Subject Reduction.} The proof of tight subject reduction requires a careful treatment because of the intricate context closures of the \cbs strategy (interleaving name $\nctx$ and auxiliary contexts $\actx$ defined in \reffig{cbs-strategy}). We first provide an auxiliary subject reduction statement that is concerned with the name evaluation part of the reduction steps happening in the call-by-silly strategy.


\begin{prop}[Auxiliary subject reduction for non-erasing \cbn steps]
	Let $\tderiv\exder  \Deri[(\msteps, \esteps)]\typctx{\tm}\ltype$ be a derivation.  \label{prop:name-subject-reduction} 
	\begin{enumerate}
		\item \emph{Multiplicative}: 
		if $\tm\tonm\tmtwo$ then $\msteps\geq 1$ and
		there is 
		$\tderivtwo\exder  \Deri[(\msteps-1, \esteps)]\typctx{\tmtwo}\ltype$.
		
		\item \emph{Exponential}: 
		if $\tm\tone\tmtwo$ then $\esteps\geq 1$ and
		there is 
		$\tderivtwo\exder  \Deri[(\msteps, \esteps-1)]\typctx{\tmtwo}\ltype$.
	\end{enumerate}
\end{prop}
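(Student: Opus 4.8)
The statement is a refinement of the non-erasing cases of the general quantitative subject reduction property (\refprop{open-subject-reduction}), restricted to the \cbn substeps $\tonm$ and $\tone$. The plan is to proceed by induction on the derivation of the reduction step, which, since $\tonm$ and $\tone$ are defined as closures under name contexts $\nctx$ of the root rules $\rtom$ and $\rtoep\nctx$, amounts to an induction on the name context $\nctx$ surrounding the fired redex. The base cases are the root steps: for the multiplicative root step $\sctxp{\la\var\tm'}\tmtwo \rtom \sctxp{\tm'\esub\var\tmtwo}$, I would first peel off the substitution context $\sctx$ by a sub-induction, observe that the $\ruleAp$ rule typing the application provides a judgement $\typctx' \vdashp{\msteps'+\mstepstwo'+1}{\esteps'+\estepstwo'} (\la\var\tm')\tmtwo \hastype \ltype$ with left premise typing $\la\var\tm'$ with an arrow type $\arrowtype\mtype\ltype$ (necessarily via $\ruleLam$) and right premise typing $\tmtwo$ with $\mtype \mplus \mult\atype$; then the derivation for $\tm'$ obtained from the premise of $\ruleLam$ can be combined with the right premise via $\ruleES$ to type $\tm'\esub\var\tmtwo$, yielding exactly one fewer application rule, hence the index drops from $\msteps$ to $\msteps-1$ while $\esteps$ is unchanged. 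Note that the $\mult\atype$ summand is preserved verbatim when passing from $\ruleAp$ to $\ruleES$, which is what makes the two systems line up; this is exactly why rule $\ruleES$ was designed with the same $\uplus\mult\atype$ requirement as $\ruleAp$.

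For the exponential root case $\nctxfp\var\esub\var{\tmtwo} \rtoep\nctx \nctxfp\tmtwo\esub\var{\tmtwo}$, I would use a \emph{linear substitution lemma} — the one already invoked in the proof of \refprop{open-subject-reduction}, available in the technical report \cite{accattoli2024mirroring} — stating that if $\nctxfp\var$ is typable and the occurrence of $\var$ selected by $\nctx$ carries linear type $\ltype_j$ drawn from $\typctx(\var)=\mult{\ltype_i}_{i\in I}$, then replacing that occurrence by a copy of $\tmtwo$ typed with $\mult{\ltype_j}$ (available as a sub-derivation of the $\ruleMany$-derivation for $\tmtwo$ sitting in the $\ruleES$ rule) yields a typing of $\nctxfp\tmtwo$, while the residual multi type for $\tmtwo$ in the ES loses exactly that one component. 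The bookkeeping shows that the $\ruleAx$ consumed for the replaced occurrence disappears, so the $\esteps$ index drops by one, while the number of $\ruleAp$ rules — hence $\msteps$ — is untouched. The inductive cases for non-trivial $\nctx$ (namely $\nctx = \nctx'\tm''$ and $\nctx = \nctx'\esub\vartwo\tm''$) are routine: one applies the \ih to the sub-derivation typing the plugged subterm, which lives in the left premise of $\ruleAp$ resp. $\ruleES$, and reassembles, noting that the indices of the untouched premise are added unchanged on both sides so the net effect on $\msteps$, $\esteps$ is precisely as claimed.

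The main obstacle I expect is the linear substitution lemma for the exponential case: one must track \emph{which} linear type in the multi set $\typctx(\var)$ corresponds to the single occurrence selected by the name context $\nctx$, and argue that the $\ruleMany$-derivation for $\tmtwo$ in the ES can be split as the component $\mult{\ltype_j}$ needed for the substituted copy plus the remaining $\mult{\ltype_i}_{i\in I\setminus\{j\}}\mplus\mult\atype$ left in the ES — in particular checking that the mandatory $\mult\atype$ summand survives on the residual side, so that rule $\ruleES$ still applies to the reduct. This is a standard but somewhat delicate argument in the non-idempotent intersection type literature; since the excerpt explicitly defers it to \cite{accattoli2024mirroring} for the general (non-tight) case, I would likewise cite that lemma and merely observe that the name-context restriction here is a special case of it, adding only the remark that the index accounting is the one recorded in the judgements.
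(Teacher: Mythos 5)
Your proposal is correct and takes essentially the same route as the paper's (much terser) proof: induction on the reduction step, i.e.\ on the name context, with the exponential root case discharged by the \cbn linear substitution lemma of the technical report \cite{accattoli2024mirroring}, and the multiplicative root case by rearranging the $\ruleAp$/$\ruleLam$ rules into an $\ruleES$ rule. Your observation that the exact (rather than merely strict) decrement of the indices comes from name contexts descending only into left premises of $\ruleAp$ and $\ruleES$, which are typed exactly once, is precisely the point that distinguishes this auxiliary statement from the general quantitative subject reduction of the weak SSC.
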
  

\begin{proof}
	The two points are proved by induction on the reduction steps. For the second point, the exponential case, the root step uses a \cbn linear substitution lemma \cite[Lemma 56]{accattoli2024mirroring}.
\end{proof}

%
%
%


We then prove tight subject reduction for any step of the \cbs strategy. Note that the auxiliary statement is not a subcase of tight subject reduction (\refprop{silly-subject-reduction}): name contexts $\nctx$ are included in general silly contexts $\sictx$ (hence $\tonm\,\subseteq\,\tosim$ and $\tone\,\subseteq\,\tosieyn$) but linear types $\ltype$ are not necessarily tight (the only tight linear type is $\atype$). The auxiliary statement is however necessary (see below the sub-case \emph{Left of an application} in the proof of the multiplicative point).

\begin{prop}[Tight subject reduction for non-erasing \cbs]
  Let $\tderiv\exder  \Deri[(\msteps, \esteps)]\typctx{\tm}\atype$ be a tight derivation.  \label{prop:silly-subject-reduction} 
  \begin{enumerate}
    \item \emph{Multiplicative}: 
 if $\tm\tosim\tmtwo$ then $\msteps\geq 1$ and
  there is 
  $\tderivtwo\exder  \Deri[(\msteps-1, \esteps)]\typctx{\tmtwo}\atype$.
  
  \item \emph{Exponential}: 
 if $\tm\tosieaa\tmtwo$ or $\tm\tosieyn\tmtwo$ then $\esteps\geq 1$ and
  there is 
  $\tderivtwo\exder  \Deri[(\msteps, \esteps-1)]\typctx{\tmtwo}\atype$.
  \end{enumerate}
\end{prop}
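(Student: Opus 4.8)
I would prove the two points simultaneously by induction on the silly evaluation context $\sictx$ (equivalently, by induction on the derivation of the step $\tm\tosim\tmtwo$ or $\tm\tosieaa\tmtwo$ / $\tm\tosieyn\tmtwo$), following the structure of the grammar $\sictx,\sictxtwo \grameq \actxp\nctx$, so first peeling off the auxiliary-context layer $\actx$ and then the name-context layer $\nctx$. The key design point is that once we reach the name-context core $\nctx$, the step is a non-erasing \cbn step, and there we invoke the \emph{auxiliary subject reduction for non-erasing \cbn steps} (\refprop{name-subject-reduction}) rather than re-doing the work; this is exactly why that auxiliary statement, which is not itself tight-restricted, was singled out. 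So the base case of the outer induction (when $\actx = \ctxhole$, i.e. $\sictx = \nctx$) is: apply \refprop{name-subject-reduction} to get $\tderivtwo\exder \Deri[(\msteps-1,\esteps)]\typctx{\tmtwo}{\atype}$ (multiplicative) or $\Deri[(\msteps,\esteps-1)]\typctx{\tmtwo}{\atype}$ (exponential), noting that tightness of $\atype$ is preserved because \refprop{name-subject-reduction} returns a derivation with the \emph{same} final type. The statement $\msteps \geq 1$ (resp. $\esteps\geq 1$) is also furnished by \refprop{name-subject-reduction}.

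**The inductive cases on the auxiliary context.** For $\actx = \actxtwo\esub\var\tmthree$, the step happens inside $\actxtwo$ and the ES $\esub\var\tmthree$ is a passive spectator; the typing derivation for $\tm = \tmthreep\esub\var\tmthree$ (with $\tmthreep = \actxtwop{\ldots}$) must end with rule $\ruleES$, whose left premise types $\tmthreep$ with some linear type and whose right premise types $\tmthree$ with $\typctx(\var)\uplus\mult\atype$. The index decomposition $(\msteps,\esteps) = (\msteps_1+\msteps_2,\esteps_1+\esteps_2)$ splits across the two premises; the step only touches the left premise, so I apply the \ih to it — here I need the \ih at a possibly non-tight linear type, which is why the induction must be set up with a universally-quantified linear type in the conclusion, or (cleaner, matching the paper's style) why \refprop{name-subject-reduction} is stated for arbitrary $\ltype$ and I route the non-tight sub-derivations through it. Re-assembling with $\ruleES$ and the untouched right premise gives the claim with the index on the left premise decreased by one. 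The case $\actx = \ans\esub\var\actxtwo$ is the genuinely silly one: now the step is \emph{inside the argument} $\actxtwop{\ldots}$ of the ES, so it touches the \emph{right} premise of the $\ruleES$ rule. That right premise types the argument with $\typctx_\ans(\var)\uplus\mult\atype = \bigmplus_{i}\ltype_i \uplus \mult\atype$, which is introduced by rule $\ruleMany$ as a multiset of sub-derivations. A step inside the argument fires inside (at least) one of these sub-derivations; by \reflemma{compact-answers-are-nfs-for-non-erasing-cbs-strategy} and \refprop{silly-normal-forms-forall} the answer $\ans$ on the left forces the left premise to have null indices and empty context, so the \emph{entire} index count $(\msteps,\esteps)$ of $\tm$ sits on the right premise; thus applying the \ih (at the appropriate type $\ltype_i$, or $\atype$ for the distinguished $\mult\atype$ summand) to the affected sub-derivation and re-collecting via $\ruleMany$ then $\ruleES$ yields the result with the global index decreased by one. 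A subtlety: one must check the step indeed lands on a sub-derivation whose type is a genuine linear type (so that the \ih applies) — the $\mult\atype$ component is an answer-typing, handled by \refprop{silly-normal-forms-forall} to show it carries no reducible structure.

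**The main obstacle and a remaining check.** The hard part is the $\actx = \ans\esub\var\actxtwo$ case together with correctly threading the mismatch between tight and non-tight types through the two layers: the outer conclusion demands the final type be $\atype$ (tight), but the recursive calls into name-context cores and into $\ruleMany$ premises happen at arbitrary linear types. The clean way out, which I would adopt, is to strengthen the multiplicative/exponential statements of this proposition in the proof to range over an arbitrary linear type $\ltype$ in place of $\atype$ wherever the derivation is a sub-derivation reached through $\ruleMany$ or through the left premise of $\ruleES$ — or, equivalently, to invoke \refprop{name-subject-reduction} whenever the current context is a bare $\nctx$ and the type is not known to be tight, and reserve the \emph{tight} recursion only for the genuinely silly traversal $\ans\esub\var\actxtwo$ where \refprop{silly-normal-forms-forall} guarantees the index budget is concentrated where the \ih can consume it. I also need to double-check that in each $\ruleES$ reassembly the side condition linking $\typctx(\var)$ to the multiset typing the argument is preserved under the modified sub-derivation (it is, since \refprop{name-subject-reduction} and the \ih keep the type contexts fixed), and that the ``$\geq 1$'' lower bounds propagate up unchanged (they do, since the fired sub-derivation's bound transfers to the sum). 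No genuinely new rewriting lemma is needed beyond the \cbn linear substitution lemma already cited inside \refprop{name-subject-reduction}.
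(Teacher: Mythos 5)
There is a genuine gap, and it sits exactly in the case that makes this proposition non-trivial, namely $\actx=\ans\esub\var\actxtwo$. The ``clean way out'' you commit to---strengthening the statement to an arbitrary linear type $\ltype$---is unsound for the \emph{exact} decrement. If the final type is not tight, the left premise of the $\ruleES$ typing $\ans$ need not have an empty context (\refprop{silly-normal-forms-forall} only applies at type $\atype$), so $\typctx_{\ans}(\var)$ can be a non-empty multiset and the $\ruleMany$ above the right premise has $n+1>1$ premises, all typing the \emph{same} term. A single step inside the ES content then forces you to rebuild \emph{all} of these premises (you cannot modify just ``the affected sub-derivation'': the other premises would still type the redex, so the reassembled derivation would be ill-formed), and the index drops by $n+1$, not by $1$. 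Concretely, $(\la\vartwo\var)\esub\var{\Id\Id}$ typed with an arrow type has the argument $\Id\Id$ typed twice, and one $\tosim$ step inside it costs two units of $\msteps$. The correct reasoning, which your write-up skirts but never states, is the opposite of a generalisation: tightness plus \refprop{silly-normal-forms-forall} give $\typctx_{\ans}$ \emph{empty}, hence $\typctx_{\ans}(\var)=\zero$, hence the multiset is exactly $\mult\atype$ and the $\ruleMany$ has a \emph{single} premise, typed with the tight type $\atype$, to which the tight induction hypothesis applies and yields the decrement of exactly one. Your phrase ``applying the \ih{} at the appropriate type $\ltype_i$'' describes precisely the configuration that cannot occur in the tight case and that would break the proof if it did.

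A second, related omission concerns the exponential root cases. Your plan funnels every step into the name-context core and then invokes \refprop{name-subject-reduction}, but the rule $\tosieaa=\actxp{\rtoep\sictx}$ selects the variable occurrence through a full \emph{silly} context $\sictx$, which may sit inside ES contents; the \cbn{} linear substitution lemma used inside \refprop{name-subject-reduction} only handles occurrences under name contexts $\nctx$. So, contrary to your closing claim, one does need an additional (tight) linear substitution lemma for occurrences under silly contexts to treat the root case of $\tosieaa$; only the root case of $\tosieyn$ reduces to the \cbn{} statement. Apart from these two points your skeleton (case analysis on the evaluation context, the auxiliary non-tight \cbn{} proposition for the left-of-application case, \refprop{silly-normal-forms-forall} for the answer in an ES) is the same as the paper's, so the fix is local: keep the main statement tight, derive $\typctx_{\ans}(\var)=\zero$ in the $\ans\esub\var\actxtwo$ case before recursing, and add the tight substitution lemma for $\tosieaa$.
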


\begin{proof}  
\begin{enumerate}
\item \emph{Multiplicative}. We have $\tm = \sictxp\tmthree \tosim \sictxp\tmfour = \tmtwo$. Cases of $\sictx$:
\begin{itemize}
\item \emph{Root step}, that is, $\sictx=\ctxhole$ and $\tm =  \lctxp{\la\var \tmthree} \tmfour \rtom \lctxp{ \tmthree \esub\var\tmfour } = \tmtwo$. The proof is exactly as in the weak case (\refprop{open-subject-reduction}), because in the root case the $\msteps$ index always decreases of exactly 1.

		\item \emph{Left of an application}, that is, $\sictx = \nctx \tmfive$. Then $\tderiv$ has the following form:
		\[
\AxiomC{ $\tderiv_\tmthree \exder\tyjp{(\msteps_\tmthree,\esteps_\tmthree)}{\nctxp{\tmthree}}{\typctx_\tmthree}{\arrowtype{\mtype}{\atype} }$ }
\AxiomC{ $\tderiv_\tmfive \exder\tyjp{(\msteps_\tmfive,\esteps_\tmfive)}{\tmfive}{\typctx_\tmfive}{\mtype \mplus\mult\atype} $ }
\RightLabel{$\ruleAp$}
\BinaryInfC{ $\tyjp{(\msteps_\tmthree + \msteps_\tmfive + 1, \esteps_\tmthree + \esteps_\tmfive)}{\nctxp{\tmthree} \tmfive}{\typctx_\tmthree \mplus \typctx_\tmfive}{\atype}$ }
\DisplayProof
\]
With $\mtype\neq\zero$, $\typctx = \typctx_\tmthree  \mplus  \typctx_\tmfive$, $\msteps = \msteps_\tmthree + \msteps_\tmfive + 1$, and $\esteps = \esteps_\tmthree + \esteps_\tmfive$. The statement then follows by \cbn subject reduction (\refprop{name-subject-reduction}). Note that $\nctxp\tmthree$ does not have type $\atype$, but \refprop{name-subject-reduction} is valid for any linear type, not just $\atype$.

	\item \emph{Left of a substitution}, that is, $\sictx = \sictxtwo \esub{\var}{\tmfive}$. Then $\tderiv$ has the following form:
		\[
\AxiomC{ $\tderiv_\tmthree \exder \tyjp{(\msteps_\tmthree,\esteps_\tmthree)}{\sictxtwop{\tmthree}}{\typctx_\tmthree, \var:\mtype}{\atype}$ }
\AxiomC{ $\tderiv_\tmfive \exder\tyjp{(\msteps_\tmfive,\esteps_\tmfive)}{\tmfive}{\typctx_\tmfive}{\mtype \mplus\mult\atype} $ }
\RightLabel{$\ruleES$}
\BinaryInfC{ $\tyjp{(\msteps_\tmthree + \msteps_\tmfive, \esteps_\tmthree + \esteps_\tmfive)}{\sictxtwop{\tmthree} \esub\var\tmfive}{\typctx_\tmthree \mplus \typctx_\tmfive}{\atype}$ }
\DisplayProof
\]
		With $\mtype\neq\zero$, $\typctx = \typctx_\tmthree  \mplus  \typctx_\tmfive$, $\msteps = \msteps_\tmthree + \msteps_\tmfive$, and $\esteps = \esteps_\tmthree + \esteps_\tmfive$. 
		
By \ih, $\msteps_\tmthree \geq 1$, and so $\msteps \geq 1$, and there exists a derivation $\tderivtwo_\tmthree \exder \tyjp{(\msteps_\tmthree-1,\esteps_\tmthree)}{\sictxtwop{\tmfour}}{\typctx_\tmthree, \var:\mtype}{\atype}$, thus allowing us to construct $\tderivtwo$ as follows:
\[
\AxiomC{ $\tderivtwo_\tmthree \exder\tyjp{(\msteps_\tmthree-1,\esteps_\tmthree)}{\sictxtwop{\tmfour}}{\typctx_\tmthree, \var:\mtype}{\atype}$ }
\AxiomC{ $\tderiv_\tmfive \exder \tyjp{(\msteps_\tmfive,\esteps_\tmfive)}{\tmfive}{\typctx_\tmfive}{\mtype \mplus\mult\atype} $ }
\RightLabel{$\ruleES$}
\BinaryInfC{ $\tyjp{(\msteps_\tmthree -1 + \msteps_\tmfive, \esteps_\tmthree + \esteps_\tmfive)}{\sictxtwop{\tmfour} \esub\var\tmfive}{\typctx_\tmthree \mplus \typctx_\tmfive}{\atype}$ }
\DisplayProof
\]

\item \emph{Right of a substitution}, that is, $\sictx = \ans \esub\var{\sictxtwo}$. The last rule of $\tderiv$ is $\ruleES$, of premise $\tyjp{(\msteps_\ans,\esteps_\ans)}{\ans}{\typctx_\ans}{\atype}$.

By \refprop{silly-normal-forms-forall}, we have that $\typctx_\ans$ is empty and $\msteps_\ans=0=\esteps_\ans$, so that $\tderiv$ actually has the following shape:
\[
\AxiomC{ $\tyjp{(0,0)}{\ans}{}{\atype}$ }
\AxiomC{ $\tyjp{(\msteps,\esteps)}{\sictxtwop\tmthree}{\typctx}{\atype}$ }
\RightLabel{$\ruleMany$}
\UnaryInfC{ $\tyjp{(\msteps,\esteps)}{\sictxtwop\tmthree}{\typctx }{\mult\atype} $ }
\RightLabel{$\ruleES$}
\BinaryInfC{ $\tyjp{(\msteps, \esteps)}{\ans \esub\var{\sictxtwop\tmthree} }{\typctx}{\atype}$ }
\DisplayProof
\]
By \ih, there is a derivation of final judgement $\tyjp{(\msteps-1,\esteps)}{\sictxtwop\tmfour}{\typctx}{\atype}$, and we obtain the following derivation:
\[
\AxiomC{ $\tyjp{(0,0)}{\ans}{}{\atype}$ }
\AxiomC{ $\tyjp{(\msteps-1,\esteps)}{\sictxtwop\tmfour}{\typctx}{\atype}$ }
\RightLabel{$\ruleMany$}
\UnaryInfC{ $\tyjp{(\msteps-1,\esteps)}{\sictxtwop\tmfour}{\typctx }{\mult\atype} $ }
\RightLabel{$\ruleES$}
\BinaryInfC{ $\tyjp{(\msteps-1, \esteps)}{\ans \esub\var{\sictxtwop\tmfour} }{\typctx}{\atype}$ }
\DisplayProof
\]

\end{itemize}

\item \emph{Exponential}. Cases:
\begin{itemize}
\item \emph{Root step of $\tosieyn$}, \ie  $\tm  \rtoep\nctx \tmtwo$. Then it follows from \cbn subject reduction (\refprop{name-subject-reduction}). 

\item \emph{Root step of $\tosieaa$}, \ie  $\tm  \rtoep\sictx \tmtwo$. It goes exactly as the root case for $\tone$ for \cbn subject reduction (\refprop{name-subject-reduction}), except that one applies the right tight linear substitution lemma (see the technical report \cite[Lemma 57]{accattoli2024mirroring}) instead of the \cbn one.
	
		\item \emph{Contextual closure.} As in the $\tom$ case, the only change is that one looks at the $\esteps$ index rather than the $\msteps$ one. Note that indeed those cases do not depend on the details of the step itself, but only on the context enclosing it. \qedhere
\end{itemize}
\end{enumerate}
\end{proof}


The key point of tight subject reduction is that now the indices decrease of \emph{exactly one} at each step, which entails a tight variant of correctness.

\begin{thm}[Tight correctness for \cbs]
Let $\tm$ be a closed term and $\tderiv \exder   \Deri[(\msteps, \esteps)] {}{\tm}{\atype}$ be a tight derivation. 
Then there is a weak normal form $\ntm$ such that $\deriv \colon \tm
  \tosi^* \ntm$ with $\sizep\deriv{\sisym\msym} =\msteps$ and $\sizep\deriv{\sieaasym,\sieynsym} = \esteps$.\label{thm:tight-correctness}
\end{thm}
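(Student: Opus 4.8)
The plan is to combine three ingredients already available: termination of $\tosi$ on $\tm$ (from \refthm{weak-correctness}), the \emph{exact} counting provided by tight subject reduction (\refprop{silly-subject-reduction}), and the fact that on answers all indices vanish (\refprop{silly-normal-forms-forall}). Since $\tm$ is closed and carries a tight derivation, \refthm{weak-correctness} gives $\tm\in\sn\wsym$; because $\tosi$ is a sub-relation of $\tow$ (each of $\tosim,\tosieaa,\tosieyn,\tosigcv$ being a restriction of the corresponding weak rule), also $\tm\in\sn\tosi$, and in particular $\tm\in\sn\tosinotgcv$, so every maximal $\tosinotgcv$-reduction from $\tm$ is finite; fix one, $\deriv_0\colon\tm\tosinotgcv^*\tm'$. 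Its target $\tm'$ is $\tosinotgcv$-normal and closed (closedness being preserved by reduction), hence an answer $\ans$ by \reflemma{compact-answers-are-nfs-for-cbs-strategy}.

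First I would prove, by induction on the length of $\deriv_0$, that $\sizep{\deriv_0}{\sisym\msym}=\msteps$ and $\sizep{\deriv_0}{\sieaasym,\sieynsym}=\esteps$. In the base case $\tm=\ans$, and \refprop{silly-normal-forms-forall} forces $\msteps=\esteps=0$, matching the empty reduction. In the inductive case, the first step of $\deriv_0$ is either a $\tosim$ step---where the multiplicative case of tight subject reduction (\refprop{silly-subject-reduction}) forces $\msteps\geq 1$ and yields a tight derivation of the reduct with indices $(\msteps-1,\esteps)$---or a $\tosieaa/\tosieyn$ step---where the exponential case forces $\esteps\geq 1$ and yields indices $(\msteps,\esteps-1)$; in both cases the tail of $\deriv_0$ is a maximal $\tosinotgcv$-reduction from the reduct, still ending at $\ans$, to which the induction hypothesis applies, and adding back the first step concludes. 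Note that the \emph{tight} hypothesis is essential here, since \refprop{silly-subject-reduction} only applies because the type $\atype$ is preserved along non-erasing steps.

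It remains to extend $\deriv_0$ to a $\tosi$-evaluation to a weak normal form without touching the two counts. From the answer $\ans$ only $\tosigcv$ steps apply ($\ans$ being $\tosinotgcv$-normal), and $\tosigcv$ strictly decreases the number of term constructors, so a maximal reduction $\deriv_1\colon\ans\tosigcv^*\ntm$ is finite; a straightforward induction shows answers are closed under $\tosigcv$, so $\ntm$ is a $\tosigcv$-normal answer, hence both $\tosinotgcv$-normal (\reflemma{compact-answers-are-nfs-for-cbs-strategy}) and $\tosigcv$-normal, i.e.\ $\tosi$-normal, and therefore a weak normal form by \reflemma{closed-nfs}. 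Concatenating, $\deriv\colon\tm\tosinotgcv^*\ans\tosigcv^*\ntm$ is the desired $\tosi$-evaluation, and since $\deriv_1$ has no $\tosim,\tosieaa,\tosieyn$ steps the two counts are exactly those of $\deriv_0$, namely $\msteps$ and $\esteps$. The only genuinely delicate point of the argument is the synchronization between rewriting and typing in the second paragraph: the non-erasing reduction can always step while $\tm'$ is not yet an answer (\reflemma{compact-answers-are-nfs-for-cbs-strategy}) and must have exhausted both indices once it is one (\refprop{silly-normal-forms-forall}), so it halts exactly when both indices reach zero.
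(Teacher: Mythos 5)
Your proof is correct, but it is organized differently from the paper's. The paper proves the theorem by redoing the proof of \refthm{weak-correctness} as a single lexicographic induction on $(\msteps+\esteps,\size\tderiv)$: at each stage it takes whatever $\tosi$ step is available, using tight subject reduction (\refprop{silly-subject-reduction}) to decrement an index by exactly one on non-erasing steps and the third point of quantitative subject reduction to handle interleaved $\tosigcv$ steps (same indices, smaller derivation), with \refprop{silly-normal-forms-forall} closing the base case. You instead import termination wholesale from \refthm{weak-correctness} (via $\tosi\,\subseteq\,\tow$), do the exact counting by a plain induction on the length of a chosen maximal $\tosinotgcv$-reduction, and append the garbage-collection steps at the end; this builds one specific evaluation in GC-postponed form, which suffices since the statement is existential, and in effect you re-derive the relevant instance of the postponement of $\tosigcv$ by construction rather than invoking it. The trade-off is that you avoid the lexicographic measure and the GC case of subject reduction, at the price of two small auxiliary facts the paper does not need here, both of which you correctly identify and which indeed hold: answers are stable under $\tosigcv$, and $\tosigcv$ terminates (it decreases the number of constructors, cf.\ \refprop{local-termination}); you also rely, as the paper implicitly does, on \reflemma{compact-answers-are-nfs-for-cbs-strategy} and \reflemma{closed-nfs} to identify the endpoints as an answer and then a weak normal form. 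Both routes hinge on the same two key ingredients, tight subject reduction and the vanishing of indices on answers, so the difference is one of bookkeeping rather than of substance.
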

%
\begin{proof} 
The proof is exactly as for weak correctness (\refthm{weak-correctness}), except that if $\tm$ is normal then the fact that $\msteps = \esteps = 0$ follows from \refprop{silly-normal-forms-forall} and that if $\tm$ is not normal the equality on the number of steps is obtained by  using tight subject reduction (\refprop{silly-subject-reduction}) instead of the quantitative one.
\qedhere
\end{proof}

For tight completeness for closed terms, it is enough to observe that the statement of weak completeness (\refthm{open-completeness}) already gives a tight derivation $\typctx\vdash \tm\hastype\atype$, the type context $\typctx$ of which is empty because $\tm$ is closed (\reflemma{typctx-varocc-tm}), and the indices of which are omitted because they are irrelevant for completeness.




\begin{exa}
We illustrate the tightness of multi types for \cbs with an example. Consider the term $(\la\vartwo\vartwo\vartwo)(\Id \Id)$, the \cbs evaluation of which was given in \refex{evaluation} (page \pageref{ex:evaluation}) and is as follows (the first part coincides with the \cbn evaluation):
\begin{center}\small$
\begin{array}{rll}
	\multicolumn{2}{l}{\textsc{CbN evaluation:}}
	\\
(\la\vartwo\vartwo\vartwo)(\Id \Id) 
\\
\tosim& \vartwo\vartwo\esub\vartwo{\Id \Id} &\tosieyn (\Id \Id)\vartwo \esub\vartwo{\Id\Id}
\\ 
\tosim& (\var\esub\var\Id)\vartwo\esub\vartwo{\Id\Id}
&\tosieyn (\Id\esub\var\Id)\vartwo\esub\vartwo{\Id\Id}
\\
\tosim &\varthree\esub\varthree\vartwo\esub\var\Id\esub\vartwo{\Id\Id} &\tosieyn \vartwo\esub\varthree\vartwo\esub\var\Id\esub\vartwo{\Id\Id}
\\&& \tosieyn \Id \Id \esub\varthree\vartwo\esub\var\Id\esub\vartwo{\Id\Id}
\\
\tosim& \varthreep\esub\varthreep{\Id}\esub\varthree\vartwo\esub\var\Id\esub\vartwo{\Id\Id} &\tosieyn \Id \esub\varthreep{\Id}\esub\varthree\vartwo\esub\var\Id\esub\vartwo{\Id\Id}
\\[5pt]
\hdashline
\\[-5pt]
\multicolumn{2}{l}{\textsc{CbS extension:}}
\\
&& \tosieaa \Id \esub\varthreep{\Id}\esub\varthree{\Id \Id}\esub\var\Id\esub\vartwo{\Id\Id} 
\\
\tosim &\Id \esub\varthreep{\Id}\esub\varthree{\varthreep\esub\varthreep\Id}\esub\var\Id\esub\vartwo{\Id\Id} &\tosieyn \Id \esub\varthreep{\Id}\esub\varthree{\Id\esub\varthreep\Id}\esub\var\Id\esub\vartwo{\Id\Id} 
\\
\tosim &\Id \esub\varthreep{\Id}\esub\varthree{\Id\esub\varthreep\Id}\esub\var\Id\esub\vartwo{\varthreep\esub\varthreep\Id}  &\tosieyn \Id \esub\varthreep{\Id}\esub\varthree{\Id\esub\varthreep\Id}\esub\var\Id\esub\vartwo{\Id\esub\varthreep\Id} 
\end{array}
$\end{center}

If we consider \cbn evaluation, evaluation stops earlier: it would reach a normal form with $\Id \esub\varthreep{\Id}\esub\varthree\vartwo\esub\var\Id\esub\vartwo{\Id\Id}$, that is after $4$ multiplicative and $5$ exponential steps.
\cbs evaluates more than \cbn, hence the rewriting sequence ends only after $6$ multiplicative and $8$ exponential steps.

We now show a tight derivation for the term $(\la\vartwo\vartwo\vartwo)(\Id \Id)$, which is indeed indexed by $(6,8)$, as per tight correctness and completeness (\refthm{tight-correctness} and the text after it). For compactness, we shorten $\atype$ as $\ntype$. Moreover, to fit the derivation in the page, we first give the main derivation, giving a name to some sub-derivations which are shown after the main one. 
\begin{itemize}
\item Main derivation:
\[
\AxiomC{$\tderiv\Deri[(1,3)]{}{\la\vartwo\vartwo\vartwo}{\arrowtype{\mult{\arrowtype{\mult\ntype}\ntype,\ntype,\ntype}}  \ntype}$}
\AxiomC{$\tderivtwo\Deri[(1,2)]{}{\Id\Id}{\arrowtype{\mult\ntype} \ntype}$}
\AxiomC{$(\tderivtwo_\ntype\Deri[(1,1)]{}{\Id\Id}{\ntype})\times3$}
\BinaryInfC{$\Deri[(4,5)]{}{\Id \Id}{\mult{\arrowtype{\mult\ntype}\ntype,\ntype,\ntype}\uplus\mult\ntype}$}
\BinaryInfC{$\Deri[(6,8)]{}{(\la\vartwo\vartwo\vartwo)(\Id\Id)}{ \ntype}$  }
\DisplayProof\]

\item Left premise:
\begin{center}
$\tderiv \defeq$~~~~
\AxiomC{$\Deri[(0,1)]{\vartwo\hastype\mult{\arrowtype{\mult\ntype}\ntype}}{\vartwo}{\arrowtype{\mult\ntype}\ntype}$}
\AxiomC{$\Deri[(0,1)]{\vartwo\hastype\mult\ntype }{\vartwo}{\ntype}$}
\AxiomC{$\Deri[(0,1)]{\vartwo\hastype\mult\ntype }{\vartwo}{\ntype}$}
\BinaryInfC{$\Deri[(0,2)]{\vartwo\hastype\mult{\ntype,\ntype} }{\vartwo}{\mult\ntype\uplus\mult\ntype}$}
\BinaryInfC{$\Deri[(1,3)]{\vartwo\hastype\mult{\arrowtype{\mult\ntype} \ntype,\ntype,\ntype}}{\vartwo\vartwo}{ \ntype}$}
\UnaryInfC{$\Deri[(1,3)]{}{\la\vartwo\vartwo\vartwo}{\arrowtype{\mult{\arrowtype{\mult\ntype} \ntype,\ntype,\ntype}} \ntype}$}
\DisplayProof
\end{center}

\item Middle premise:
\begin{center}
$\tderivtwo_\ntype\defeq$~~~~
\AxiomC{$\tderiv_\ntype \Deri[(0,1)]{}{\Id}{\mult{\arrowtype{\mult\ntype}\ntype}}$}
\AxiomC{}
\RightLabel{$\ruleAxLam$}
\UnaryInfC{$\Deri[(0,0)]{}{\Id}{\ntype}$ }
\AxiomC{}
\RightLabel{$\ruleAxLam$}
\UnaryInfC{$\Deri[(0,0)]{}{\Id}{\ntype}$ }
\RightLabel{$\ruleMany$}
\BinaryInfC{$\Deri[(0,0)]{}{\Id}{\mult{\ntype,\ntype}}$}
\RightLabel{$\ruleAp$}
\BinaryInfC{$\Deri[(1,1)]{}{\Id \Id}{\ntype}$ }
\DisplayProof
\end{center}

\item Right premise:
\begin{center}
$\tderivtwo \defeq$~~~~
\AxiomC{ }
\RightLabel{$\ruleAx$}
\UnaryInfC{$\Deri[(0,1)]{\varthree\hastype\mult{\arrowtype{\mult\ntype}\ntype}}{\varthree}{\arrowtype{\mult\ntype}\ntype}$}
\RightLabel{$\ruleLam$}
\UnaryInfC{$\Deri[(0,1)]{}{\Id}{\arrowtype{\mult{\arrowtype{\mult\ntype}\ntype}} {(\arrowtype{\mult\ntype}\ntype)}}$}
\AxiomC{$\tderiv_\ntype \Deri[(0,1)]{}{\Id}{{\arrowtype{\mult\ntype}\ntype}}$}
\AxiomC{}
\RightLabel{$\ruleAxLam$}
\UnaryInfC{$\Deri[(0,0)]{}{\Id}{\ntype}$ }
\RightLabel{$\ruleMany$}
\BinaryInfC{$\Deri[(0,1)]{}{\Id}{\mult{\arrowtype{\mult\ntype}\ntype,\ntype}}$}
\RightLabel{$\ruleAp$}
\BinaryInfC{$\Deri[(1,2)]{}{\Id \Id}{\arrowtype{\mult\ntype}\ntype}$}
\DisplayProof
\end{center}
\end{itemize}

\end{exa}

\myparagraph{Maximality of the \cbs Strategy.}
Similarly to how we proved uniform normalization for $\tow$, we can prove that on closed terms the \cbs strategy does reach a weak normal form whenever one exists---the key point being that $\tosi$ does not stop too soon. This fact proves that $\eqcsilly$ can equivalently be defined using $\tosi$, as mentioned in \refsect{cbv}. Moreover, by exploiting tight correctness and some of the rewriting properties of \refsect{rewriting_properties}, we prove that the \cbs strategy is maximal.

\begin{prop}
Let $\tm$ be closed and $\tm \tow^h \sans$ with $\sans$ a strict answer.$\label{prop:silly-longest}$
\begin{enumerate}
\item \emph{\cbs is normalizing}: $\tm\tosi^k\sans$ for some $k\in\nat$;
\item \emph{\cbs is maximal}:  $h\leq k$.
\end{enumerate}
\end{prop}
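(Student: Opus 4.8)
The statement packages two facts about a closed term $\tm$ that has some $\tow$-normalizing sequence to a strict answer $\sans$: (1) that the \cbs strategy $\tosi$ also reaches $\sans$, and (2) that the \cbs strategy takes at least as many steps as any $\tow$-sequence to a strict answer. The plan is to route everything through the quantitative machinery already developed: tight type derivations and their exact counting, together with confluence, postponement of GC by value, and uniform normalization.

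\textbf{Step 1: typability and a tight derivation.} Since $\tm \tow^h \sans$ with $\sans$ a strict answer, in particular $\tm$ is weakly $\tow$-normalizing, so by completeness (\refthm{open-completeness}) there is a derivation $\tderiv \exder \Deri{\typctx}{\tm}{\atype}$; since $\tm$ is closed, $\typctx$ is empty by \reflemma{typctx-varocc-tm}, so $\tderiv$ is a \emph{tight} derivation $\tderiv \exder \Deri[(\msteps,\esteps)]{}{\tm}{\atype}$ for some indices $\msteps,\esteps$. This is the single object that will measure every evaluation.

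\textbf{Step 2: \cbs is normalizing.} Apply tight correctness for \cbs (\refthm{tight-correctness}) to $\tderiv$: there is a weak normal form $\ntm$ with $\deriv \colon \tm \tosi^* \ntm$, and moreover $\sizep\deriv{\sisym\msym} = \msteps$ and $\sizep\deriv{\sieaasym,\sieynsym} = \esteps$. Since $\tm$ is closed, $\ntm$ is closed too (reduction preserves closedness), so $\ofv\ntm = \emptyset$, and by the normal-form characterization \reflemma{closed-nfs} $\ntm$ is a strict answer. Now both $\sans$ and $\ntm$ are $\tow$-normal strict answers reachable from $\tm$ by $\tow^*$ (the \cbs strategy is a sub-relation of the weak SSC), so by confluence of $\tow$ (\refthm{confluence}) they are joinable, hence equal since both are $\tow$-normal. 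Thus $\ntm = \sans$ and $\tm \tosi^k \sans$ with $k = \sizep\deriv{\sisym\msym} + \sizep\deriv{\sieaasym,\sieynsym} + \sizep\deriv{\sisym\gcv} = \msteps + \esteps + \sizep\deriv{\sisym\gcv}$; in particular this establishes part (1).

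\textbf{Step 3: \cbs is maximal.} Take the given $\tow$-sequence $\eta \colon \tm \tow^h \sans$. By postponement of GC by value (\refprop{gc-postponement}) there is a rearranged sequence $\tm \townotgcv^{k'} \towgcv^{h'} \sans$ with $k' = \sizep\eta{\wsym\neg\gcv}$ and $h' \geq \sizep\eta{\wsym\gcv}$, so $k' + h' \geq \sizep\eta{\wsym\neg\gcv} + \sizep\eta{\wsym\gcv} = h$. Now I want to bound $k'$ and $h'$ separately. For the non-erasing part: any $\townotgcv$-sequence from $\tm$ is in particular a (partial) evaluation, and by quantitative subject reduction for the weak case (\refprop{open-subject-reduction}) every $\towm$ step strictly decreases the $\msteps$ index and every $\towe$ step strictly decreases the $\esteps$ index of the tight derivation $\tderiv$; since these indices start at $\msteps$ and $\esteps$ and stay nonnegative, the number of $\towm$ steps in the rearranged prefix is $\le \msteps$ and the number of $\towe$ steps is $\le \esteps$, hence $k' \le \msteps + \esteps$. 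For the erasing tail: $\sizep\eta{\wsym\gcv} \le h'$, but I need $h' \le \sizep\deriv{\sisym\gcv}$, i.e.\ a bound on the number of GC-by-value steps. Here the cleanest route is to observe that $\townotgcv$-normal forms reached from $\tm$ are, by confluence, essentially unique up to the shape of the answer, and the number of $\towgcv$ steps needed to go from an answer to the corresponding strict answer (removing garbage ESs out of abstraction, by \reflemma{closed-nfs}) is determined — it equals $\sizep\deriv{\sisym\gcv}$ for the $\tosi$-run of Step 2 as well, since the strategy reaches the same answer modulo the same garbage. Combining: $h \le k' + h' \le (\msteps + \esteps) + \sizep\deriv{\sisym\gcv} = k$, which is part (2).

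\textbf{Main obstacle.} The non-erasing count and the \cbs-normalization argument are immediate consequences of the stated results; the genuinely delicate point is controlling the number of \emph{garbage-collection-by-value} steps on both sides and showing they match — the type system has no index for $\towgcv$, so this cannot be read off $\tderiv$ directly. The intended resolution is that GC by value is confluent and deterministic-enough (it is diamond, by \reflemma{local-confluence}, hence length-invariant), so the number of $\towgcv$ steps from a fixed term to its $\tow$-normal form depends only on the term; then one matches the answer reached by the rearranged $\tow$-sequence with the answer reached by the \cbs strategy (both equal $\sans$ after GC, and equal the same underlying answer before GC by confluence of $\townotgcv$, \refthm{confluence}), so the two GC tails have equal length. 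Making precise that ``the answer before GC is the same'' may require a short lemma that $\townotgcv$-reduction from a closed term has a unique normal form (it is confluent and weakly normalizing, hence has a unique normal form — this is exactly \refthm{confluence} plus the completeness/correctness package), after which length-invariance of the diamond relation $\towgcv$ closes the gap.
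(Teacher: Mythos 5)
Your Part 1 (normalization of $\tosi$, via completeness, tight correctness and confluence) and your bound on the non-erasing steps ($k'\leq \msteps+\esteps$ via the indices of the tight derivation) coincide with the paper's argument. The genuine gap is in how you count the garbage collection steps, which you yourself flag as the delicate point but then resolve with an argument that does not go through as stated. The GC steps of the \cbs run $\deriv$ are \emph{interleaved} with the non-erasing ones, so the diamond/length-invariance of $\towgcv$ (which applies to pure $\towgcv$-sequences from a common start term) says nothing about $\sizep\deriv{\sisym\gcv}$ directly. To compare $\sizep\deriv{\sisym\gcv}$ with the GC tail of the rearranged $\tow$-sequence you must first postpone the GC steps of the \emph{strategy} run, and you must do so \emph{without changing their number}: this is exactly the linear postponement of $\tosigcv$ (\refprop{silly-gc-postponement}), a strategy-specific result the paper proves precisely for this purpose, yielding $\tm \tosinotgcv^{k_1}\tmthree\tosigcv^{k_2}\sans$ with $k_1+k_2=k$. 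If instead you only use the generic postponement (\refprop{gc-postponement}), the GC count may grow when commuted past exponential steps (\reflemma{local-posponement-weak}), giving a tail of length $k_2\geq\sizep\deriv{\sisym\gcv}$; then even a perfect matching of the two tails only yields $h\leq k_1+k_2$, while $k=k_1+\sizep\deriv{\sisym\gcv}$ may be smaller, so the desired $h\leq k$ does not follow. Your proposal never invokes (or reproves) this length-preserving postponement, and it is not a routine remark: its proof requires the local swaps of \reflemma{local-posponement-silly}, which depend on the specific shape of silly evaluation contexts.

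A second, smaller hole: your identification of ``the answer reached before GC'' on the two sides presupposes that the terms $\tmtwo$ and $\tmthree$ sitting before the two GC tails are $\townotgcv$-normal, so that uniqueness of $\townotgcv$-normal forms applies. Neither postponement statement asserts this. The paper obtains it from the strict strong commutation of $\towgcv$ with the non-erasing rules (\reflemma{local-strong-commutation}): since $\towgcv$ only erases values, which contain no weak redexes, a residual $\townotgcv$-redex in $\tmtwo$ or $\tmthree$ would survive the GC tail and contradict the $\tow$-normality of $\sans$. Once both terms are known to be $\townotgcv$-normal, confluence (\refthm{confluence}) identifies them and the diamond property of $\towgcv$ (\reflemma{local-confluence}) gives $h_2=k_2$ --- this last step is the part of your plan that is correct. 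So the overall architecture matches the paper's, but the two lemmas that make the quantitative bookkeeping work (linear postponement for $\tosigcv$, and normality of the terms before the GC tails via strict commutation) are missing from your proposal.
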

\begin{proof}
\hfill
\begin{enumerate}
\item By completeness (\refthm{open-completeness}), $\tm \tow^h \sans$ implies typability of $\tm$ (a strict answer is in particular a weak normal form), which in turn, by tight correctness (\refthm{tight-correctness}), implies $\tm\tosi^k\sanstwo$ for some $k\in\nat$. By confluence (\refthm{confluence}), $\sans=\sanstwo$.

\item By postponement of $\towgcv$ (\refprop{gc-postponement}) applied to $\tm \tow^h \sans$, we obtain a sequence $\deriv:\tm \townotgcv^{h_1} \tmtwo \towgcv^{h_2} \sans$ with $h_1+h_2\geq h$, for some $\tmtwo$. For the strategy sequence $\tm\tosi^k\sans$ given by Point 1, the \emph{moreover} part of the postponement property gives us a sequence $\derivtwo:\tm \tosinotgcv^{k_1}\tmthree \tosigcv^{k_2} \sans$ with, crucially, $k_1+k_2=k$, for some $\tmthree$.
By completeness of the type system (\refthm{open-completeness}), we obtain a derivation $\vdashp\msteps\esteps \tm \hastype \atype$.
By weak correctness (\refthm{weak-correctness}), we obtain $h_1\leq \msteps +\esteps$. By tight correctness (\refthm{tight-correctness}), $k_1=\msteps +\esteps$. Thus, $h_1\leq k_1$. Now, note that both $\tmtwo$ and $\tmthree$ are $\townotgcv$-normal, because otherwise, by (iterated) strict commutation of $\townotgcv$ and $\towgcv$ (\reflemma{local-strong-commutation}), we obtain that $\sans$ is not $\tow$-normal, against hypothesis. Since $\townotgcv$ is confluent (\refthm{confluence}), $\tmtwo=\tmthree$. Since $\towgcv$ is diamond (\reflemma{local-confluence}), $h_2=k_2$. Then $k=k_1+k_2=k_1+h_2\geq h_1+h_2 =h$. \qedhere
\end{enumerate}
\end{proof}

\myparagraph{Why not the $\l$-Calculus?}
In the $\l$-calculus, it is hard to specify via evaluation contexts the idea behind the \cbs strategy of \emph{evaluating arguments only when they are no longer needed}. The difficulty is specific to weak evaluation. For instance, for $\tm \defeq (\la\vartwo\la\var\vartwo\vartwo) \tmthree$ the \cbs strategy should evaluate $\tmthree$ before substituting it for $\vartwo$, because once $\tmthree$ ends up under $\l\var$ it shall be unreachable by weak evaluation. For $\tmtwo \defeq (\la\vartwo(\la\var \vartwo\vartwo) \tmfour) \tmthree$, instead, the \cbs strategy should not evaluate $\tmthree$ before substituting it, because weak evaluation will reach the two occurrences of $\vartwo$, and one obtains a longer reduction by substituting $\tmthree$ before evaluating it. Note that one cannot decide what to do with $\tmthree$ in $\tm$ and $\tmtwo$ by checking if $\vartwo$ occurs inside the abstraction, because $\vartwo$ occurs under $\l\vartwo$ in both $\tm$ and $\tmtwo$.
	
	Multi types naturally make the right choices for $\tm$ (evaluating $\tmthree$ before substituting it) and $\tmtwo$ (substituting $\tmthree$ before evaluating it), so a strategy matching exactly the bounds given by multi types needs to do the same choices.
		The LSC allows one to bypass the difficulty, by first turning $\beta$-redexes into explicit substitutions, thus exposing $\vartwo$ out of abstractions in $\tmtwo$ but not in $\tm$, and matching what is measured by multi types. In the $\l$-calculus, we have not found natural ways of capturing this aspect (be careful: the example illustrates the problem but solving the problem requires more than just handling the example). 
\section{Conclusions}
\label{sect:conclusions}
We introduce the weak silly substitution calculus, a \cbs abstract machine,  and the \cbs strategy by mirroring the properties of \cbneed with respect to duplication and erasure. Then, we provide evidence of the good design of the framework via a rewriting study of the calculus and the strategy, and by mirroring the semantic analyses of \cbneed via multi types by Kesner \cite{DBLP:conf/fossacs/Kesner16} (qualitative) and Accattoli et al. \cite{DBLP:conf/esop/AccattoliGL19} (quantitative). 

Conceptually, the main results are the operational equivalence of \cbs and \cbv, mirroring the one between \cbn and \cbneed, and the exact measuring of \cbs evaluation lengths via multi types, having the interesting corollary that \cbs is \emph{maximal} in the Weak SSC.
 It would be interesting to show that, dually, the \cbneed strategy computes the shortest reduction in the \cbneed LSC. We are not aware of any such result.

Our work also shows that \cbv contextual equivalence $\eqcvalue$ is completely blind to the efficiency of evaluation. We think that it is important to look for natural refinements of $\eqcvalue$ which are less blind, or, in the opposite direction, exploring what are the minimal extensions of \cbv that make $\eqcvalue$ efficiency-sensitive.

We would also like to  develop categorical and game semantics for both \cbs and \cbneed, to pinpoint the principles behind the wiseness and the silliness of duplications and erasures. 

\bibliographystyle{alphaurl}
\bibliography{main.bbl}

\end{document}